    \pgfplotsset{compat=newest}
\newcommand{\E}{\mathbb{E}}
\newcommand{\V}{\mathbb{V}}
\newcommand{\Hyp}{\mathbb{H}}
\newcommand{\indicator}[1]{\mathbbm{1}\{#1\}}
\newcommand{\T}{\mathcal{T}}
\newcommand{\W}{\mathcal{W}}
\newcommand{\STAB}[1]{\begin{tabular}{@{}c@{}}#1\end{tabular}}
\DeclareMathOperator*{\plim}{plim}
\DeclareMathOperator*{\argmin}{argmin}
\newtheorem{assumption}{Assumption}
\newtheorem{proposition}{Proposition}
\newtheorem{definition}{Definition}
\newtheorem{lemma}{Lemma}
\newtheorem{theorem}{Theorem}
\newtheorem{corollary}{Corollary}
\newtheorem{remark}{Remark}
\newtheorem{example}{Example}
\newenvironment{namedassumption}[1]
  {\inneruassumption}
  {\endinneruassumption}
\newenvironment{namedproposition}[1]
  {\inneruproposition}
  {\endinneruproposition}
\newenvironment{namedtheorem}[1]
  {\innerutheorem}
  {\endinnerutheorem}
\newenvironment{namedcorollary}[1]
  {\innerucorollary}
  {\endinnerucorollary}
\crefname{figure}{Figure}{Figures}
\crefname{table}{Table}{Tables}
\crefname{assumption}{Assumption}{Assumptions}
\crefname{inneruassumption}{Assumption}{Assumptions}
\crefname{proposition}{Proposition}{Propositions}
\crefname{condition}{Condition}{Conditions}
\crefname{lemma}{Lemma}{Lemmata}
\crefname{example}{Example}{Examples}
\crefname{remark}{Remark}{Remarks}
\numberwithin{lemma}{section}
\numberwithin{remark}{section}
\numberwithin{table}{section}
\numberwithin{condition}{section}
\numberwithin{equation}{section}
    \def\independenT#1#2{\mathrel{\setbox0\hbox{$#1#2$}%
    \copy0\kern-\wd0\mkern4mu\box0}}
\title{Difference-in-differences with as few as two cross-sectional units\\ -- A new perspective to the democracy–growth debate}
\author{Gilles Boevi Koumou\footnote{Chaire Desjardins en Finance Responsable, École de Gestion, Université de Sherbrooke, 2500 Boulevard de l'Université, Sherbrooke, Québec, J1K 2R1, Canada. Email: \href{mailto:nettey.boevi.gilles.koumou@usherbrooke.ca}{nettey.boevi.gilles.koumou@usherbrooke.ca.}, Orcid: \url{https://orcid.org/0000-0003-0466-3904}} \and Emmanuel Selorm Tsyawo\footnote{Corresponding author, email: \href{mailto:estsyawo@gmail.com}{estsyawo@gmail.com}, Department of Economics, Finance and Legal Studies, Culverhouse College of Business, University of Alabama}}
\begin{document}

\maketitle

\abstract{\noindent Pooled panel analyses often mask heterogeneity in unit-specific treatment effects. This challenge, for example, crops up in studies of the impact of democracy on economic growth, where findings vary substantially due to differences in country composition. To address this challenge, this paper introduces the Temporal Difference-in-Differences (T-DiD) estimator that leverages temporal variation in the data to estimate unit-specific average treatment effects on the treated (ATT) with as few as two cross-sectional units. Under asymptotic parallel trends, limited anticipation, and temporal dependence conditions, the proposed DiD estimator is shown to be asymptotically normal. Provided at least two control units are available, the method is further complemented with an identification test that, unlike pre-trends tests, is more powerful and can detect violations of parallel trends in post-treatment periods. Empirical results using the DiD estimator suggest Benin's economy would have been 6.4\% smaller on average over the 1993-2018 period had she not democratised.}

\bigskip

\bigskip

\bigskip

\bigskip

\bigskip

\noindent \textbf{JEL Codes:} C21, C22, P16

\bigskip

\noindent \textbf{Keywords:} Treatment effects, asymptotic parallel trends, near-epoch dependence, over-identifying restrictions, identification test

\vspace{250pt}

\linespread{1.5}

\pagebreak

\begin{refsection}

\section{Introduction}

Pooled regression analyses are commonly employed to assess the impact of interventions. However, when unit-specific effects exhibit substantial heterogeneity, estimates from pooled regressions become uninformative. Moreover, pooled analyses are often impractical when only a few cross-sectional units are available. Such analyses also heighten the risk of inadvertently violating identification conditions—such as the parallel trends assumption in the standard Difference-in-Differences (DiD) framework—by failing to carefully select appropriate control units. In certain empirical settings, including the one considered in this paper, there may be as few as a single ideal control unit. Under this two-unit baseline setting, existing methods prove inadequate. The Synthetic Control (SC) approach—the most closely related method—requires multiple candidate control units. Furthermore, unit-level placebo inference, a workhorse for SC-based inference and robustness checks, becomes infeasible with only one control unit or suffers from severe imprecision when the number of controls is small. This paper proposes a Difference-in-Differences estimator (hereafter, T-DiD) that exploits temporal variation in the data to estimate unit-specific average treatment effects on the treated (ATT) with as few as two cross-sectional units. The novelty of this paper lies not in the Difference-in-Differences estimator \emph{per se}, but in the analysis of its theoretical properties—specifically, asymptotic identification and asymptotic normality—together with an identification test in a fixed-$N$, large-$T$ setting.

The DiD quantity, with as few as two units (one treated and one control) and two time periods (one pre-treatment and one post-treatment), can at best be unbiased but not consistent, since there are only two units. However, by leveraging temporal variation in the data, i.e., by taking the (weighted) average across all possible pre- and post-treatment pairs, one obtains the proposed T-DiD. This paper shows that identification holds under asymptotic parallel trends and limited anticipation conditions. The T-DiD framework allows temporal dependence in the outcome series and arbitrary cross-sectional dependence between the treated and control units without any distributional assumptions on the outcome variables themselves. Identification can hold up to an asymptotic bias, which does not interfere with inference under asymptotic parallel trends and limited anticipation assumptions. Under near-epoch dependence (NED) and standard regularity conditions, the T-DiD is asymptotically normal.\footnote{Informally, NED structures allow dependence between time series observations that decays with the time interval -- \Cref{Def:near_epoch} provides a formal definition.} Further, this paper proposes a valid \emph{over-identifying restrictions test} of identification, which has desirable properties such as detecting a vast array of violations of identification, unlike pre-trends tests, when at least two candidate controls are available. For instance, the proposed test detects violations of identification in the post-treatment period, of which pre-tests are completely incapable. Although inference using the T-DiD is robust to cross-sectional dependence, such dependence is not \emph{sine qua non}, unlike for factor and SC models; identification in the T-DiD framework does not require correlations among cross-sectional units driven by, e.g., common factors -- see \citet{hsiao-ching-wan-2012,ferman-pinto-2021-synthetic}. While DiD estimators in general are inconsistent under non-trivial multiplicative factors, the SC generally fails without them. Existing robust methods---such as those by \citet{xu-2017,arkhangelsky-etal-2021,callaway-karami-2023treatment}---require large $N$, a constraint our small-$N$ T-DiD approach avoids. The T-DiD does not involve regressing the outcome of a unit on that of other units; it thus avoids concerns with spurious regressions or non-standard inference in the presence of co-integration -- see, e.g., \citet{masini2022counterfactual,Li-2020-statistical} for discussions.

Thanks to a linear regression-based formulation of the DiD estimator, the T-DiD accommodates observed time-varying heterogeneity through the inclusion of covariates and handles complications often encountered in time series analyses, e.g., auto-regressive processes, unit root processes, moving average processes, potentially unbounded common trends, and idiosyncratic deterministic time trends in untreated potential outcomes. However, the proposed T-DiD requires several pre- and post-treatment periods, unlike the SC, which only requires at least one post-treatment period in typical cases, and the C-DID, which needs as few as two periods. Moreover, exploring cross-sectional heterogeneity by estimating unit-specific effects using the T-DiD involves the trade-off of collapsing temporal heterogeneity through (weighted) averages, in contrast to approaches like the C-DiD and SC, which are better suited for analysing the temporal heterogeneity of treatment effects. In this sense, the T-DiD reverses the roles of time and units relative to the C-DiD: just as an extra pre-treatment period provides a testable implication (pre-test) in the C-DiD framework, an extra control unit provides a testable implication (the proposed test) in the T-DiD framework. Therefore, the T-DiD is correctly viewed as complementary to the SC and C-DiD.

The paper’s empirical application revisits the long-debated question of whether democracy fosters economic growth, a timely issue given current global democratic backsliding \citep{Coppedge2022, Foa2020}. Previous studies yield conflicting evidence --- positive effects, e.g., \citet{Acemoglu2019}, negative effects, e.g., \citet{Gerring2005}, or insignificant effects, e.g., \citet{Murtin2014} --- due to modelling differences, data limitations, and heterogeneity in regime performance, especially among autocracies. To illustrate the T-DiD framework, the paper focuses on Benin, which has undergone significant democratisation since the 1990s, and considers neighbouring countries in West Africa as potential controls. While Burkina Faso and Niger have partially democratic histories and Nigeria’s economy diverges markedly from Benin’s, Togo emerges as the only feasible control in West Africa due to its shared colonial legacy, cultural and institutional similarities, and membership in the same monetary union.\footnote{Further discussion is available in \Cref{App_Sect:Emp_Supplement}.} The empirical results show that Benin's economy would have been 6.4\% smaller on average over the 1993-2018 post-treatment window had she not democratised. This effect is both statistically and economically significant.

The T-DiD estimator is not limited to the empirical example of interest in this paper; it is applicable in contexts with as few as a single treated unit and a single control unit, provided there are many pre- and post-treatment periods. Such settings are common when studying heterogeneous effects, like unit-specific effects. Relevant empirical examples include the North and South Korea experiment within the ensuing democracy-growth debate, the effect of government policy on company share prices \citep{masini2022counterfactual}, the impact of anti-tax evasion laws on macroeconomic indicators \citep{carvalho-Masini-Ricardo-2018arco}, Hong Kong's gross domestic product (GDP) growth after sovereignty reversion \citep{hsiao-ching-wan-2012}, the economic impact of International Monetary Fund (IMF) bailouts \citep{lee-shin-2008imf}, the impact of opening physical showrooms on the sales of a firm \citep{Li-2020-statistical}, and the transition from federal to state management of the Clean Water Act \citep{marcus-santanna-2021role,grooms-2015-enforcing}. See, e.g., \citet[Sect. 1.3]{carvalho-Masini-Ricardo-2018arco} for additional examples.

For the remainder of the paper, \Cref{Sect:Ident} defines the parameter of interest and outlines sufficient conditions for its asymptotic identification. \Cref{Sect:Estimation} introduces the T-DiD estimator, while \Cref{Sect:Asym} develops the asymptotic theory. \Cref{Sect:Tests} proposes a T-DiD-based over-identifying restrictions test of identification, and \Cref{Sect:Emp} applies the method to estimate the effect of democracy on Benin's economic growth. Finally, \Cref{Sect:Con} concludes the paper. The supplementary material contains proofs of technical results, model extensions, supplementary discussions, and a simulation exercise.

\paragraph{Notation:} $ Y_t(1)$ and $Y_t(0)$ denote the treated and untreated potential outcome at period $t$, respectively, $Y_{d,t}$ denotes the outcome $Y_t$ for a unit with treatment status $D=d$. These outcomes are defined on a complete probability space $(\Omega, \mathcal{F}, P)$ and behave as stochastic processes that are Near-Epoch Dependent (NED) on a strictly stationary $\alpha$-mixing underlying process; all conditional expectations are understood to be defined \emph{almost surely}. $||W||_p:= (\E[|W|^p])^{1/p}, \ p\geq 1 $ denotes the $L_p$-norm of the random variable $W$, and $\V[W]$ denotes its variance. $a_{1,n} \lesssim a_{2,n}$ means $a_{1,n} \leq c\,a_{2,n}$ for some finite $ c>0 $, and $a_{1,n} \asymp a_{2,n}$ means both $a_{1,n}\lesssim a_{2,n}$ and $a_{2,n} \lesssim a_{1,n}$ where $\{a_{1,n}:n\geq 1\}$ and $\{a_{2,n}:n\geq 1\}$ are sequences of non-negative numbers. $a\wedge b := \min\{a,b\} $, $a\vee b := \max\{a,b\} $, and $\rho_{\mathrm{min}}(\Sigma)$ denotes the minimum eigenvalue of the positive semi-definite matrix $\Sigma$. $[T]:= \{1,\ldots, T \} $, while $[-\T]:= \{-\T,\ldots,-1\} $. Finally, let $\epsilon > 0$ denote a small constant, the value of which may vary across occurrences without loss of generality.

\section{The Two-unit Baseline Setting}\label{Sect:Ident}
In the baseline case considered in this paper, the researcher has one treated unit and one control unit with large $\T$ periods pre-treatment and large $T$ periods post-treatment. Pre-treatment periods are labelled as $\tau \in [-\T]$, and post-treatment periods as $t \in [T]$. The period $t=0$ is used to denote any period in the intervening transition window between pre-treatment and post-treatment.\footnote{\url{https://freedomhouse.org}, for example, reports 1990-1991 as the transition period to democracy in Benin.} The binary variable $D$ denotes the treatment status of a unit. 

\subsection{Parameter of interest}\label{Sub_Sect:Par_Interest}
The goal of this sub-section is to first define (in the population), the parameter of interest before establishing its asymptotic identification. The ATT at period $t$ is given by 
\[
ATT(t) := \E[Y_t(1)-Y_t(0)\mid D=1], \ t \in [T]
\]
\noindent where potential outcomes $ \big\{ \big(Y_t(1),Y_t(0)\big), \; t \in [T] \big\} $ are random for a single treated unit. A fundamental identification problem is that $Y_t(1)$ and $Y_t(0)$ are not observed simultaneously for each $t\in [T]$. For example, unlike $Y_t(1)$, $Y_t(0)$ is not observed for the treated unit. Thus, a crucial first step in the identification of $ATT(t)$ is the identification of  $\E[Y_t(0)\mid D=1], \, t \geq 1 $. Fix a pre-treatment period $\tau \in [-\T] $, then consider a standard parallel trends assumption $\E[Y_t(0)-Y_\tau(0)\mid D=1]=\E[Y_t(0)-Y_\tau(0)\mid D=0], \ t\in [T] $, e.g., \citet[Assumption 1]{roth-SantAnna-Bilinski-Poe-2023} and a no anticipation assumption $ Y_\tau(1) = Y_\tau(0) $, e.g., \citet[Assumption 2]{roth-SantAnna-Bilinski-Poe-2023}. The identification of $ \big\{ ATT(t), \ t\in [T] \big\} $, namely $ ATT(t) = ATT_{\tau,t} $, where $ATT_{\tau,t} := \E[Y_{1,t}-Y_{1,\tau}] - \E[Y_{0,t}-Y_{0,\tau}]$ denotes the DiD quantity for $(\tau,t) \in [-\T] \times [T] $, follows from both standard identification conditions. There are four random variables in the expression of the identified $ATT(t)$: $Y_{1,t},\ Y_{1,\tau}, \ Y_{0,t}$, and $ Y_{0,\tau}$. In the two-unit baseline setting, one has at most four realisations of data thus there can be only one summand in an estimator of $ATT(t)$, namely $\widehat{ATT}_{\tau,t}:= (Y_{1,t} - Y_{1,\tau}) - (Y_{0,t}-Y_{0,\tau})$. Under the aforementioned standard C-DiD identification conditions, $\widehat{ATT}_{\tau,t}$ is unbiased for each $ATT(t), \ t\in [T] $ for a fixed $\tau \in [-\T] $. However, $\widehat{ATT}_{\tau,t}$ cannot be consistent, let alone lead to any meaningful inference procedure for each $t \in [T] $ since the number of cross-sectional units is fixed. Identifying variation from cross-sectional units, as in C-DiD settings with fixed $\T,T$ and a large number of both treated and control units, cannot be exploited in the setting considered by this paper. 

One can, instead, use averages of $\widehat{ATT}_{\tau,t} $ across all pairwise combinations of pre-treatment and post-treatment periods, namely
\[
\widehat{ATT}_{1,T} = \frac{1}{\T T}\sum_{t=1}^T\sum_{-\tau=1}^{\T} \widehat{ATT}_{\tau,t},
\]to target the parameter
\[
ATT_{1,T} = \frac{1}{T} \sum_{t=1}^T ATT(t).
\]Leveraging temporal variation in the data under weak dependence conditions, the above estimand can be consistently estimated, and one can obtain an asymptotically normal distribution of the estimator. Moreover, the aggregation scheme enables the parallel trends and the limited anticipation identification conditions to only hold on aggregate.

The preceding discussion focuses on the uniformly weighted estimand \( ATT_{1,T} \). In general, however, the treatment effect may vary across post-treatment periods, yielding heterogeneous \( ATT(t) \), \( t \geq 1 \), which are aggregated through averaging over \( t \geq 1 \). As previously argued, such period-specific effects \( ATT(t) \) cannot be consistently estimated individually in the present two-unit setting. To explore potential effect heterogeneity within this framework, the paper introduces a generalisation of \( ATT_{1,T} \) to a class of estimands defined as convex-weighted averages of \(\{ ATT(t), \ t \in [T] \}\), where the researcher specifies the convex weighting scheme \(\{ w_T(t), \ t \in [T] \}\):
\begin{equation}\label{eqn:ATT_estimand}
    ATT_{\omega,T} = \sum_{t=1}^T w_T(t) ATT(t),
\end{equation}
\noindent
with weights satisfying \(\sum_{t=1}^T w_T(t) = 1\) and \(w_T(t) \geq 0\). $ATT(t)$ varies over time due to dynamic treatment effects, meaning that homogeneous treatment effects across time are not assumed. Although similar $ATT_{\omega,T}$-type parameters are explored in the literature (e.g., \citet[eqn. 2.6]{masini2022counterfactual}, \citet[eqn. 2]{carvalho-Masini-Ricardo-2018arco}, \citet[p. 2]{chernozhukov-Wuthrich-Zhu-2024t} and \citet[Sect. 2]{Li-2020-statistical}), identification within the DiD framework and the corresponding asymptotic theory, as developed in this paper, appear novel. Additionally, while the parameters in the cited references rely on the synthetic control (SC) framework, which requires multiple control units, the DiD approach presented here requires only a single control unit. Moreover, the SC framework typically employs sharp null hypotheses (e.g., $\Hyp_o': ATT(t) = 0 \ \forall t \in [T]$), but testing a weaker hypothesis, such as $ATT_{\omega,T} = a$ for some constant $a$, may be more relevant \citep[p. 2]{chernozhukov-Wuthrich-Zhu-2024t}. Similar to \citet{chernozhukov-Wuthrich-Zhu-2024t}, this paper's approach makes a test of the latter straightforward using standard $t$-tests.

Convex weighting schemes have the form $w_T(t) = w(t)/\sum_{t=1}^{T}w(t) $ for some non-negative function $w: \mathbb{R} \mapsto \mathbb{R}_+ $. There are interesting aggregation schemes that a researcher may want to use. For example, a researcher may be interested in assigning more weight to $ATT(t), \ t\geq 0$ for post-treatment periods closer to the treatment window and less weight to those farther from it. Alternatively, the researcher might want to assign zero weight to periods immediately following treatment where treatment effects are expected to be null. As the estimand \eqref{eqn:ATT_estimand} depends on the particular weighting scheme, the results in this paper, e.g., identification, asymptotic normality, and identification testing ought to hold uniformly in a suitable (sub-)class of non-stochastic convex weighting schemes. Let $w(\cdot)$ belong to a general class of non-negative functions $\W $, then the class of weighting schemes is given by
\begin{align}\label{eqn:dfn_W}
    \W := \bigg\{ w: \frac{w(t)}{\sum_{t'=1}^{T}w(t')} =:w_T(t) \geq 0 \text{ and } \max_{t \in [T]} w_T(t)\lesssim T^{-1} \quad \text{ uniformly in }T \geq 1 \bigg\}.
\end{align}

A leading example is the uniform weighting scheme: $w_T(t) = 1/T$. Another example allows linearly decreasing weighting, which puts greater weight on $ATT(t)$ closer to the period of treatment $t=0$: $w_T(t) = \frac{T - 2at}{ \sum_{t'=1}^T (T-2at')} = \frac{1}{1-a}\frac{1 - (2a/T)t}{ T-a/(1-a)}, \ a \in [0,1/2) $. Estimation of $ATT_{\omega,T}$ in this paper exploits both pre-treatment and post-treatment temporal variation for consistency and asymptotic inference. It is thus crucial that the contribution of any $ATT(t), \, t\in [T] $ to $ATT_{\omega,T}$ be asymptotically negligible otherwise inference on the sequence of $ATT_{\omega,T}$ parameters based on the T-DiD estimator cannot be valid. This paper applies the notational convention $w(0)=0$ throughout.

\subsection{Asymptotic identification}
The counterfactual component $\sum_{t=1}^T w_T(t)\E[Y_t(0)\mid D=1] $ of $ATT_{\omega,T}$ in \eqref{eqn:ATT_estimand} is unobservable. Its identification guarantees that of $ATT_{\omega,T}$ since $\sum_{t=1}^T w_T(t)\E[Y_t(1)\mid D=1] = \sum_{t=1}^T w_T(t)\E[Y_{1,t}] $ is identified from the sampling process. Although sufficient, standard parallel trends and no-anticipation assumptions such as those mentioned in \Cref{Sub_Sect:Par_Interest} are stronger than necessary for (asymptotically) identifying $ATT_{\omega,T}$. This paper opts for an asymptotic parallel trends assumption that allows deviations away from standard parallel trends. For the baseline result presented in this section, it is maintained that potential outcomes are \emph{suitably adjusted} for, e.g., time-varying heterogeneity via the inclusion of time-varying covariates, unit roots via first differences, and deterministic trends via detrending. Details on implementation follow in \Cref{SubSect:Estimator}, \Cref{subsec:non_stationarity}, and \Cref{App_Sect:Det_Trend}. The adjustment of potential outcomes ensures the identification conditions \Cref{ass:parallel_trends,ass:limited_anticip} below are robust to such complications often encountered in time series as well as ensuring the conditions hold conditional on observable non-overlapping time-varying characteristics.

\begin{assumption}[Asymptotic Parallel Trends]\label{ass:parallel_trends}

\[
\frac{1}{\T T}\sum_{-\tau=1}^{\T }\sum_{t=1}^T \Big(\E[Y_t(0)-Y_\tau(0)\mid D=1] - \E[Y_t(0) - Y_\tau(0)\mid D=0]\Big) = \mathcal{O}\big((\T \wedge T)^{-(1/2+\gamma)}\big), \ \gamma > 0.
\]

\end{assumption}

\noindent \Cref{ass:parallel_trends} allows some extent of mean dependence of the paths of average untreated potential outcomes on treatment status $D$. Thus, unlike standard parallel trends assumptions which impose \( \E[Y_t(0)-Y_\tau(0)\mid D=1] - \E[Y_t(0) - Y_\tau(0)\mid D=0] = 0 \) for all $(\tau,t) \in [-\T]\times [T] $, \Cref{ass:parallel_trends}  allows for some, albeit controlled, violations. Moreover, \Cref{ass:parallel_trends} applies to the average over all $(\tau,t)$-pairs and \emph{not} each pair.
 
One could replace \Cref{ass:parallel_trends} with an assumption of \emph{average unbiasedness}---for example, \`a la \citet[Assumption 1]{botosaru-giacomini-weidner-2023forecasted},  namely, 
\[
\frac{1}{\T T}\sum_{-\tau=1}^{\T }\sum_{t=1}^T \Big( \E[Y_t(0)-Y_\tau(0)\mid D=1] - \E[Y_t(0)-Y_\tau(0)\mid D=0] \Big) = 0.
\]  
This condition, while weaker than standard parallel trends, can still be restrictive. For example, such a parallel trends assumption may require some form of stationarity in untreated potential outcomes for plausibility. This paper appears to be the first to use an asymptotic form of the parallel trends assumption.

To further shed light on the asymptotic parallel trends assumption, consider the following specification of expected untreated potential outcomes: $ \E[Y_t(0) \mid D=d] = \nu_{0}(t,d;\eta) $ where
\[
\nu_{0}(t,d;\eta) :=
\begin{cases}
\cos(t), & d=0, \\[6pt]
0.5 + \cos(t) \;+\; 0.25\,\bigl|1+0.5\sin(t)\bigr| \cdot \operatorname{sign}(t)\,|t|^{-\eta}, & d=1,\;\;|t|\geq 1.
\end{cases}
\] 
\noindent The expected untreated potential outcomes (against time) for both the treated and untreated units are plotted in \Cref{Fig:PT_illus} with $T=\T=5$ and $T=\T=25$ at $\eta=0.6$. As can be seen, the expected untreated potential outcomes are not parallel in the traditional C-DiD sense. That is, for any pairwise combination of pre- and post-treatment periods, expected untreated potential outcomes are not parallel. However, the T-DiD parallel trends condition, namely \Cref{ass:parallel_trends} is satisfied.

\begin{figure}[!htbp]
\centering 
\caption{Asymptotic Parallel Trends}
\begin{subfigure}{0.38\textwidth} 
\centering
\includegraphics[width=1\textwidth]{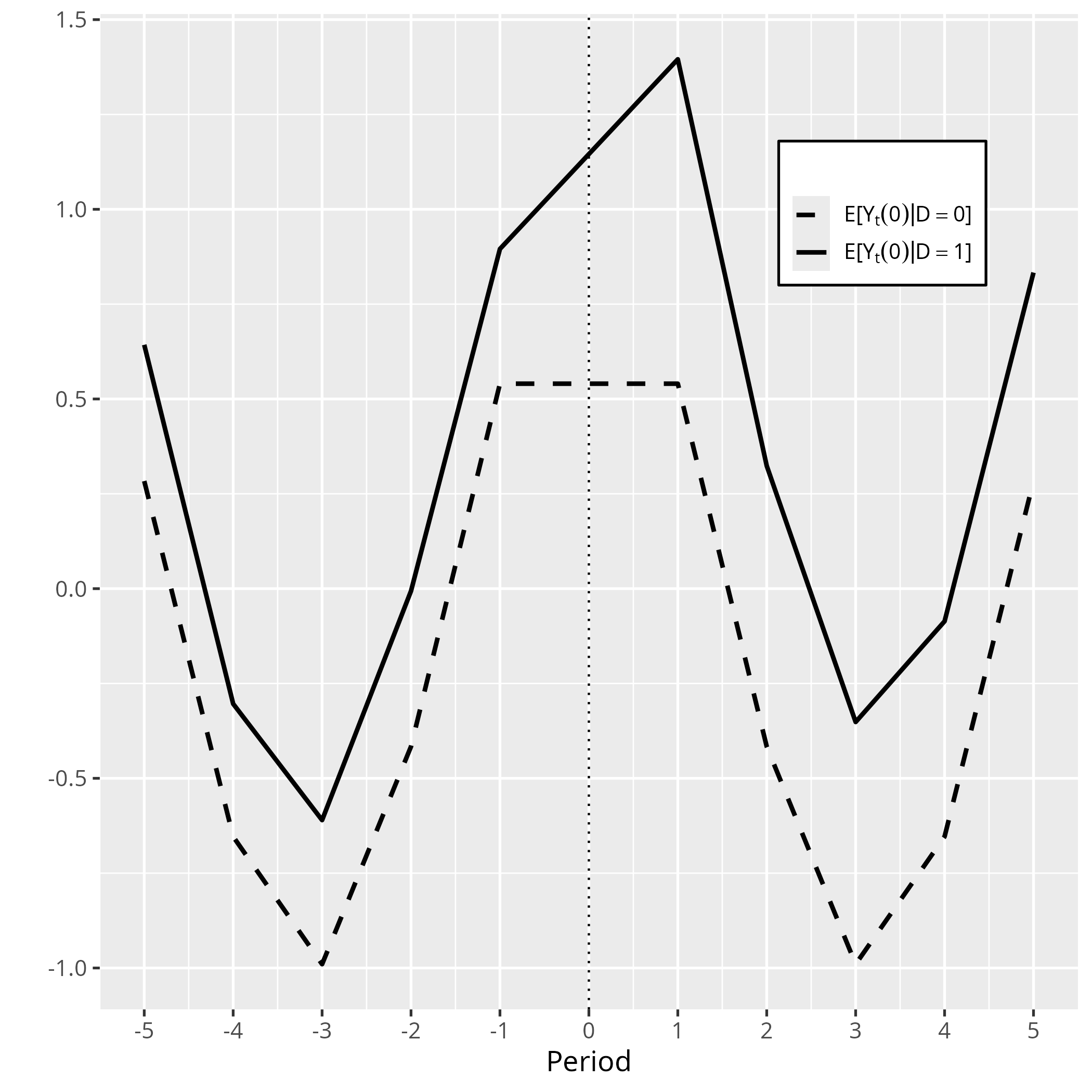} 
\caption{ }
\end{subfigure}
\hspace{0.05\textwidth} 
\begin{subfigure}{0.38\textwidth} 
\centering
\includegraphics[width=1\textwidth]{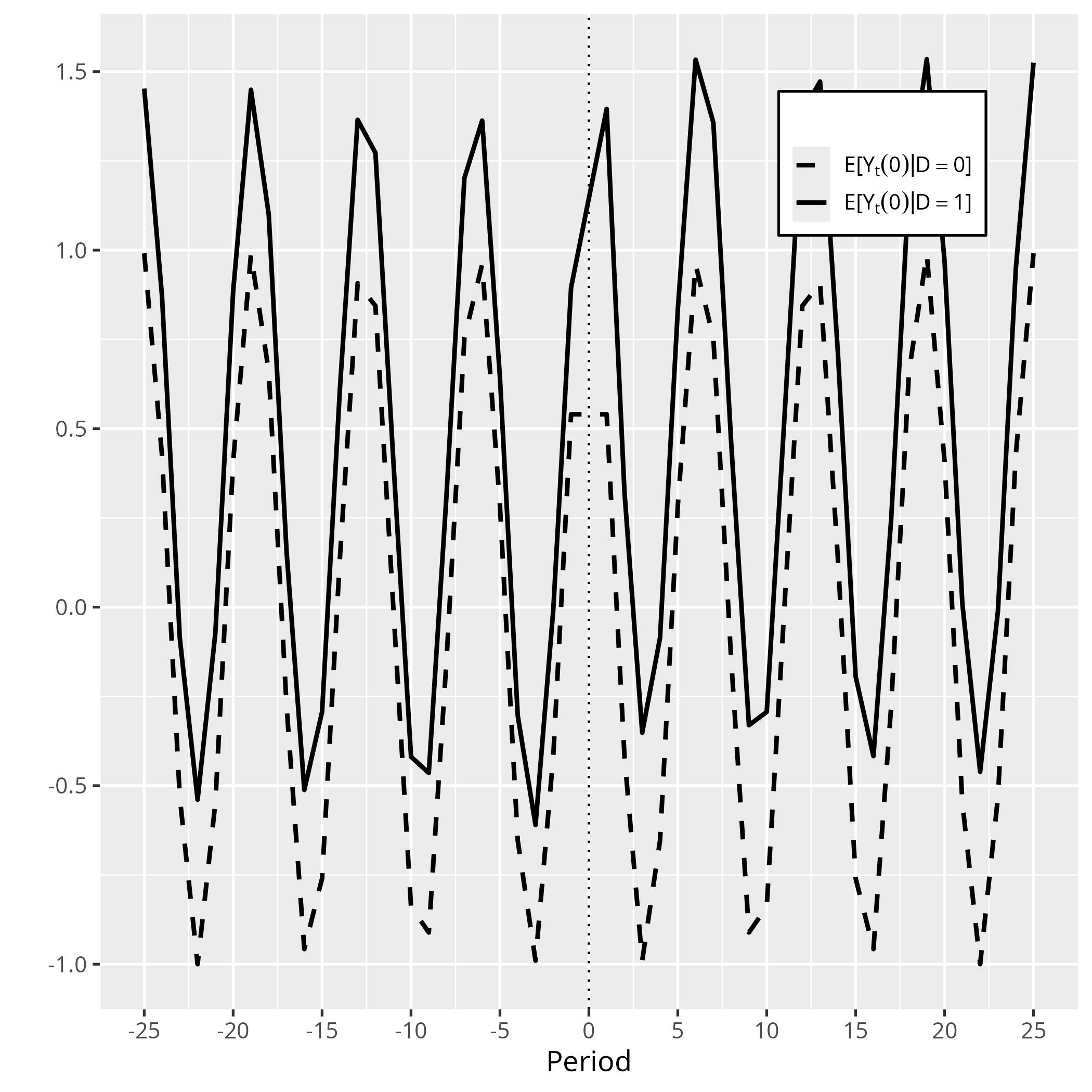} 
\caption{ }
\end{subfigure}
\label{Fig:PT_illus}
\begin{justify}
\footnotesize The plots depict the paths of untreated potential outcomes from \eqref{eqn:PT_DGP1} for the control and treated units with pre- and post-treatment time horizons (a) $\T=T=5$ and (b) $\T=T=25$.
\end{justify}
\end{figure}

\begin{figure}[!htbp]
\centering 
\caption{Asymptotic Parallel Trends}
\begin{subfigure}{0.3\textwidth}
\centering
\includegraphics[width=1\textwidth]{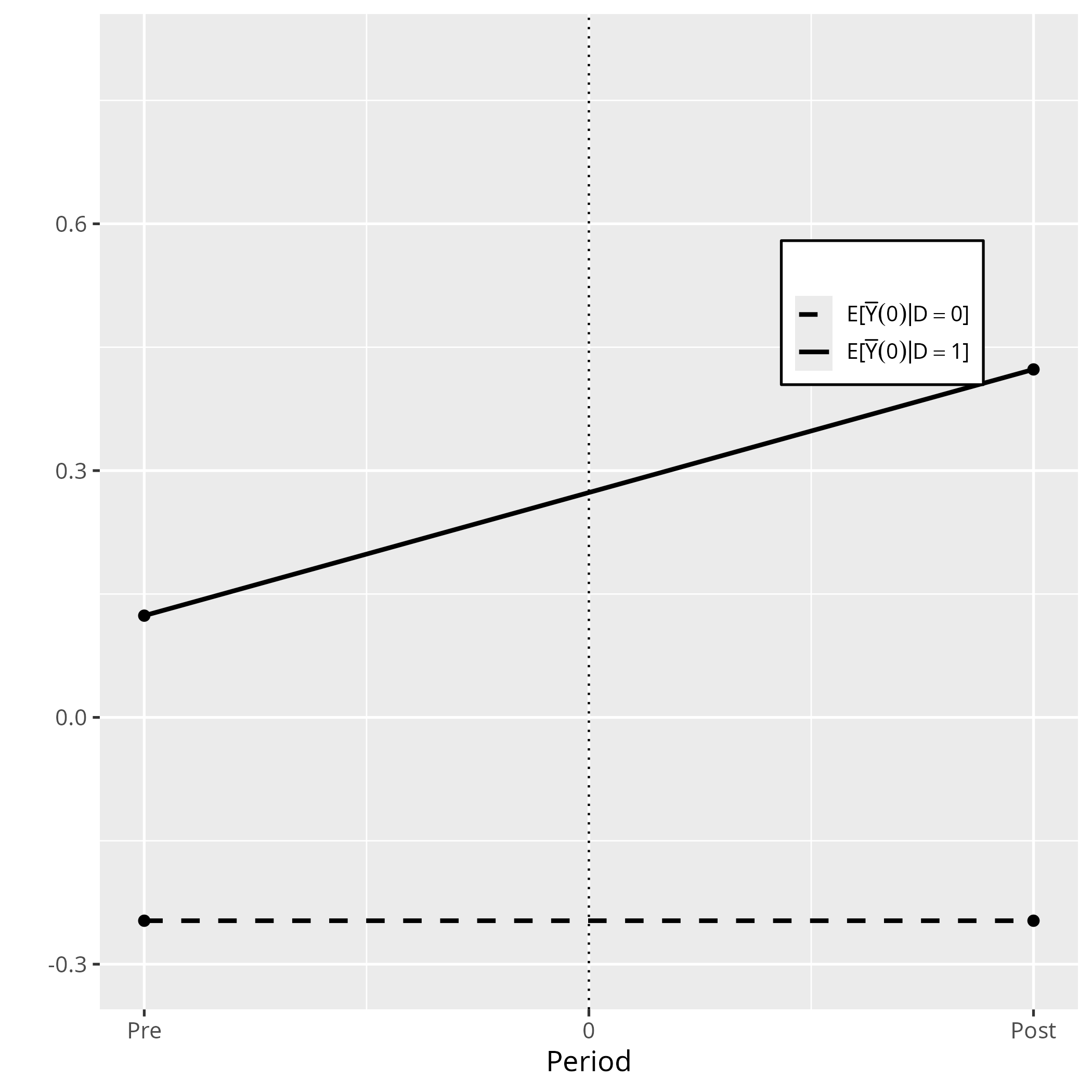}
\caption{ $T=\mathcal{T}=5$ }
\end{subfigure}
\begin{subfigure}{0.3\textwidth}
\centering
\includegraphics[width=1\textwidth]{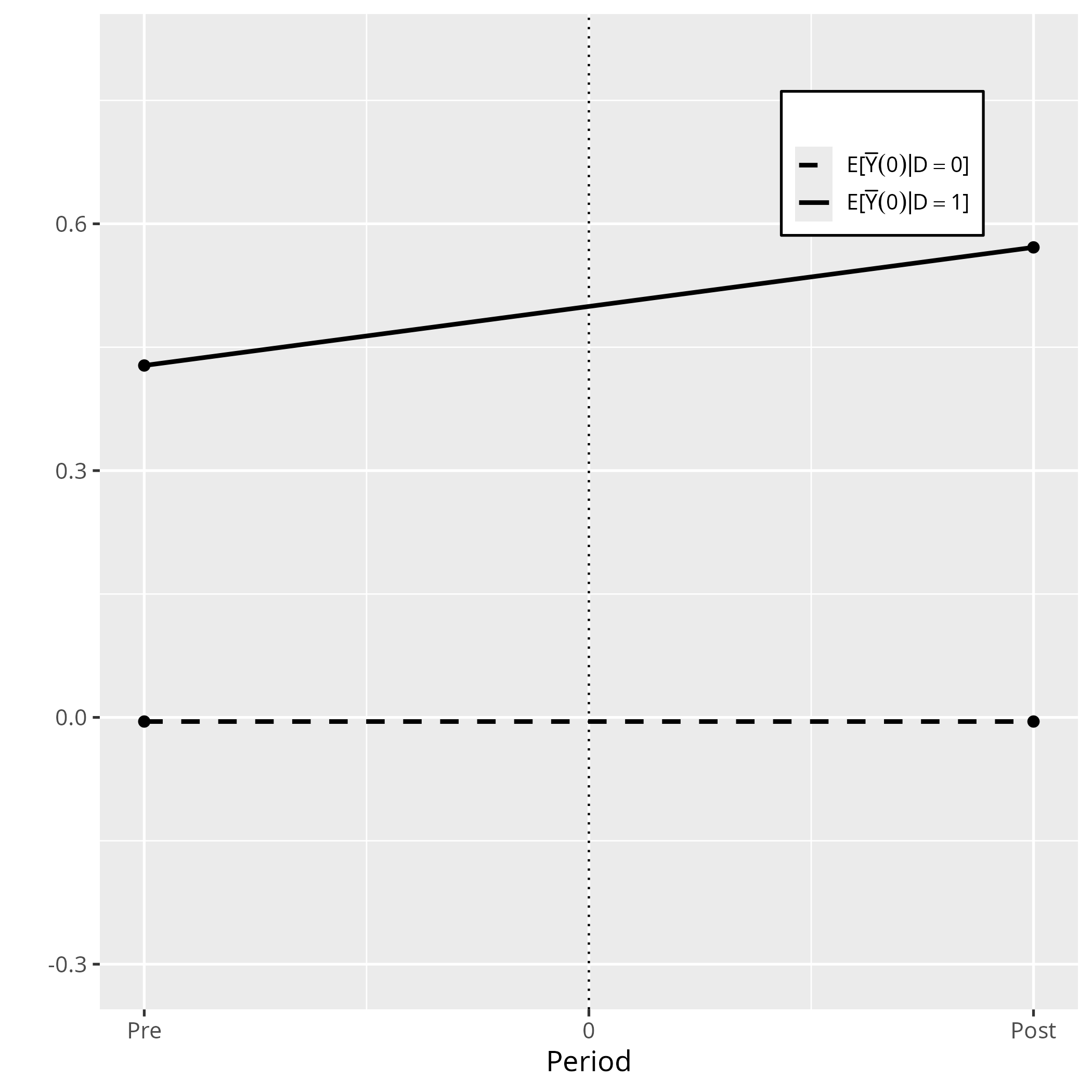}
\caption{ $T=\mathcal{T}=25$ }
\end{subfigure}
\begin{subfigure}{0.3\textwidth}
\centering
\includegraphics[width=1\textwidth]
{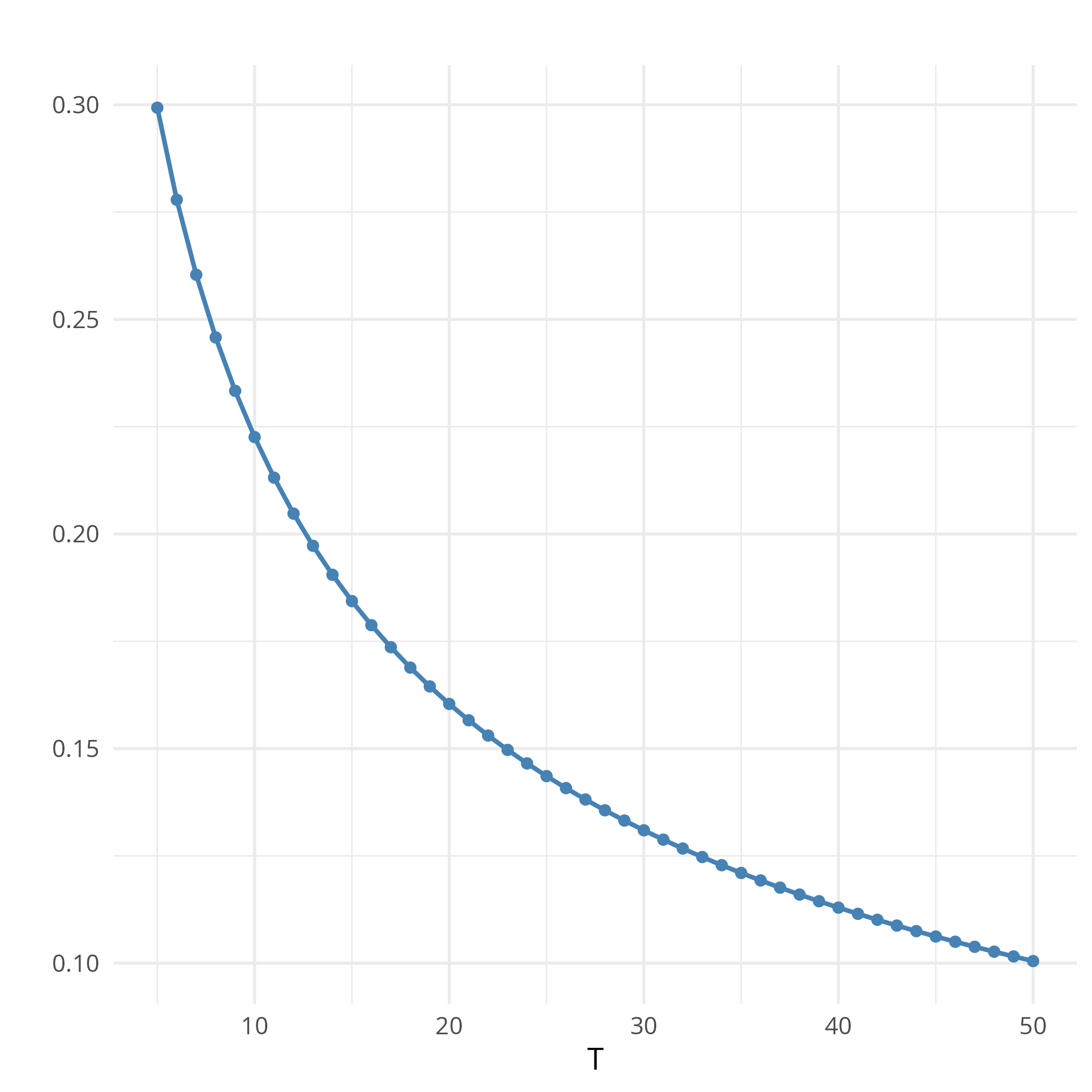}
\caption{Violation of Asymptotic Parallel Trends, $T=\mathcal{T}$ }
\end{subfigure}
\label{Fig:PT_illus2}
\begin{justify}
\footnotesize Plots (a) and (b) depict the paths of pre- and post-treatment averages of untreated potential outcomes from \eqref{eqn:PT_DGP1} for the control and treated units with pre- and post-treatment time horizons (a) $\T=T=5$ and (b) $\T=T=25$. Plot (c) plots violations of the parallel trends condition (Assumption \Cref{ass:parallel_trends}) as a function of the post-treatment time horizon $T$.
\end{justify}
\end{figure}

The first and second panels of \Cref{Fig:PT_illus2} plot the paths of pre-post averages of expected untreated potential outcomes. These are the paths in untreated potential outcomes that matter for the T-DiD. At $T=\mathcal{T}=25$ in Panel (b), relative to $T=\mathcal{T}=5$ in Panel (a), the pre-post paths of average untreated potential outcomes are closer to being parallel. Panel (c) shows that the deviation from parallel paths is decreasing in $T=\T$.\footnote{This argument extends to $T\wedge \mathcal{T}$ in light of \Cref{ass:TTratio_lambda_n} below. } \Cref{ass:parallel_trends} says that although the paths of pre-post averages of expected untreated potential outcomes may not be exactly parallel in finite samples, the violation is shrinking to zero sufficiently fast in $T\wedge\T$. 

Consider the following Data Generating Process (DGP) of untreated potential outcomes
\begin{equation}\label{eqn:PT_DGP1}
     Y_t(0) = \alpha_0 + (\alpha_1-\alpha_0)D + \varphi(t) + \nu_{0}(t,D;\eta)  + e_t,
 \end{equation}where $(\alpha_0,\alpha_1)$ are constant, $ \varphi(t) $ is some possibly unbounded function of $t$, and $ \E[e_t \mid D = d] = 0, \ d\in \{0,1\} $. The following result shows that untreated potential outcomes following \eqref{eqn:PT_DGP1} satisfy \Cref{ass:parallel_trends}.
 
\begin{proposition}\label{Prop:PT_DGP}
\Copy{Key:Prop:PT_DGP}{ Let $\varphi(\cdot)$ be some possibly unbounded function $\varphi: [-\T]\cup [T] \rightarrow \mathbb{R}$ and $\eta > 1/2 $, then untreated potential outcomes following \eqref{eqn:PT_DGP1} satisfy \Cref{ass:parallel_trends}. }
 \end{proposition}
 
 \begin{remark}\label{Rem:Prop_1_DGP}
 First, temporal dependence in $\{ e_t, -\T  \leq t \leq T \}$, e.g., MA processes, is not ruled out by \Cref{ass:parallel_trends}. Second, with no bound restrictions on $\varphi(\cdot)$ in \eqref{eqn:PT_DGP1}, non-stationary common shocks in individual untreated potential outcomes $Y_t(0)$ conditional on $D=d$ do not violate \Cref{ass:parallel_trends}. Third, \eqref{eqn:PT_DGP1} leaves unrestricted the cross-sectional dependence between treated and control unit untreated potential outcomes. Fourth, the restriction on $ \eta > 1/2 $ does not require that $\E[\nu_{0}(t,D;\eta)\mid D=d] =:\nu_{0}(t,d;\eta), \, d\in \{0,1\} $ or some average thereof over time be zero; thus \eqref{eqn:PT_DGP1} effectively allows violations of standard parallel trends assumptions.
 \end{remark}

The third point in \Cref{Rem:Prop_1_DGP} above, in effect, allows correlated shocks to both the treated and control units. For example, neighbouring West African countries Ghana, Togo, and Benin, were all exposed to the same wave of democratisation in the early 1990s. It is thus reasonable not to rule out correlations between country outcomes over time. The T-DiD, however, does not rely on such ``common factors" in pre-treatment outcomes for constructing counterfactuals unlike, e.g., the SC and factor models -- cf. \citet{abadie-gardeazabal-2003,hsiao-ching-wan-2012,ferman-pinto-2021-synthetic,Sun-Ben-Michael-Feller-2023-using}.

\begin{example}[Economic Interpretation of \Cref{ass:parallel_trends}]
    With the outcome defined as $Y_t = \log(GDP_t)$, \Cref{ass:parallel_trends} says that, had Benin not democratised, her average pre- to post-democratisation economic growth rate would have differed ``negligibly" from that of Togo. The $\mathcal{O}\big((\T \wedge T)^{-(1/2+\gamma)}\big)$ term in \Cref{ass:parallel_trends} translates the magnitude of the ``negligibility" required of the violations of the canonical parallel trends assumption.
\end{example}

\noindent To conclude the discussion of \Cref{ass:parallel_trends}, it is worth noting that, although one might argue that requiring a large number of post-treatment periods \( T \) constitutes a restrictive condition, this concern is less pertinent in the context considered here. Crucially, this paper focuses on settings in which the condition of a large \( T \) is credible. For instance, regime changes—such as transitions between autocracy and democracy—often entail significant costs in terms of time and human lives, thereby lending plausibility to the large-\( \T \wedge T \) asymptotics adopted in this paper.

Rational expectations of economic agents, e.g., foreign investors, can increase the uncertainty around the timing of treatment, i.e., when treatment \emph{actually} begins. Per the running example, the anticipation of democratisation by investors can already induce a change in foreign direct investment and private domestic investment before the documented period of transition to democratisation. While a standard no-anticipation assumption, e.g., $Y_\tau(1)=Y_\tau(0)$ or the weaker $\E[Y_\tau(1) - Y_\tau(0)\mid D=1] = 0$ for $\tau \in [-\T] $, together with \Cref{ass:parallel_trends}, should suffice for the asymptotic identification of $ATT_{\omega,T}$, it is not necessary. The assumption below allows for deviations away from standard no-anticipation assumptions.

\begin{assumption}[Asymptotically Limited Anticipation] \label{ass:limited_anticip}

\[
    \frac{1}{\T }\sum_{-\tau=1}^{\T } \E\big[Y_{\tau}(1) - Y_{\tau}(0)\mid D=1 \big] = \mathcal{O}(\T ^{-(1/2+\delta)}), \ \delta > 0.
\]

\end{assumption}
\noindent By definition, \( \E\big[Y_{\tau}(1) - Y_{\tau}(0)\mid D=1 \big] = ATT(\tau), \ \tau \leq -1 \). \Cref{ass:limited_anticip} requires that the average anticipatory effect of treatment on the treated unit is asymptotically negligible. The following illustration draws on the running example to provide further intuition.

\begin{example}[Economic interpretation of \Cref{ass:limited_anticip}]
    In general, economic analysts and prospective investors use past democratisation experiences, among other indicators, to form expectations of a country's future political stability. Thus, the anticipation that the wave of democratisation sweeping across African countries may eventually materialise in Benin could bolster investor confidence. Such a boost in confidence can subsequently lead to a cautious increase in investment before the actual transition to democracy begins. \Cref{ass:limited_anticip} says that a boost in investment (and GDP for that matter) owing to anticipations of democratisation is \emph{negligible} (up to some $\mathcal{O}(\T ^{-(1/2+\delta)})$ term).
\end{example}

\noindent Like \citet{callaway-santanna-2021}, \Cref{ass:limited_anticip} has, as a special case, limited treatment anticipation during a fixed number of periods before treatment. This special case simply involves dropping those periods out of the sample. This is further supplemented by \Cref{ass:limited_anticip}, thus providing a buffer against residual anticipatory effects.

The following remark is useful in explaining the gains from choosing suitable control units for a treated unit.
\begin{remark}\label{Rem:PT_NA_cond_Z}
    Consider a control unit, which shares a set of characteristics $Z_t$ at time $t$. For example, Togo shares several characteristics with Benin: a long border, a similar colonial history, the same monetary policy, the same currency, very similar climate and weather patterns, very similar baskets of export and import goods, and natural resources. Thus, imposing \Cref{ass:parallel_trends,ass:limited_anticip} in the running example implies they hold conditional on shared characteristics.
\end{remark}
\noindent Choosing controls that share characteristics $Z_t$ (both observable and unobservable) with the treated unit serves to control for the shared characteristics and therefore weakens both parallel trends and limited anticipation conditions (\Cref{ass:parallel_trends,ass:limited_anticip}) to conditional ones (on $Z_t$) -- cf. \citet[Assumption 3 and 4]{callaway-santanna-2021} and \citet[Assumption 6]{roth-SantAnna-Bilinski-Poe-2023}. A main takeaway from \Cref{Rem:PT_NA_cond_Z} is that the greater the overlap in characteristics between the treated and control unit, the weaker and more credible are \Cref{ass:parallel_trends,ass:limited_anticip}. Moreover, the rates of decay in \Cref{ass:parallel_trends,ass:limited_anticip} allow controlled degrees of violation of standard parallel trends and no anticipation conditions conditional on shared characteristics.

Fixing a pre-treatment ``base" period $\tau \leq -1 $,
\[
ATT_{\tau,t} = \E[Y_t(1)-Y_\tau(1)\mid D=1] - \E[Y_t(0) - Y_\tau(0)\mid D=0] = \E\big[Y_t - Y_\tau \mid D = 1 \big] - \E\big[Y_t - Y_\tau \mid D = 0 \big]
\]
for a post-treatment period $t\geq 1$ where the second equality holds because treated (resp. untreated) potential outcomes are observed for the treated (resp. untreated) unit. From the proof of \Cref{Theorem:Identification} below,
\begin{align*}
    ATT(t) - ATT_{\tau,t} =&: -R_{\tau,t}\\
    =& -\Big\{\underbrace{ \E[Y_t(0)-Y_\tau(0)\mid D=1] - \E[Y_t(0)-Y_\tau(0)\mid D=0] }_{\text{Trend Bias}(\tau,t)} - \underbrace{\E[Y_\tau(1)-Y_\tau(0)\mid D=1]}_{\text{Anticipation Bias}(\tau)}\Big\}.
\end{align*}
\noindent Although $ATT_{\tau,t}$ is identified, $ATT(t)$ is not identified because the difference between the trend and anticipation bias terms, i.e., $R_{\tau,t}$, is not identified from the data sampling process. To exploit temporal variation from pre-treatment periods, another convex weighting scheme on the same class $ \W $ is introduced, namely, $\psi \in \W  $ with $\psi_{\T}(-\tau) \geq 0$ for all $\tau\leq -1$. Under the uniform weighting scheme, for example, where each period is equally weighted post- and pre-treatment, $ w_T(t) = 1/T $ and $ \psi_{\T}(-\tau) = 1/\T $. Define the effective estimand 
\[
\widetilde{ATT}_n^{w,\psi}:= \sum_{t=1}^{T}\sum_{-\tau=1}^{\T } \psi_{\T }(-\tau) w_T(t)ATT_{\tau,t}.
\]

Let $n: = \T  + T$ and $\lambda_n = T/n$. The following regularity condition controls the ratio $\lambda_n$.
\begin{assumption}\label{ass:TTratio_lambda_n}
    Uniformly in $n$, $ \lambda_n \in [\epsilon, \, (1-\epsilon) ] $ for some constant $ \epsilon \in (0,1/2] $.
\end{assumption}

\noindent \Cref{ass:TTratio_lambda_n} is essential for theoretical results under large $T,\T $ asymptotics -- cf. \citet[Assumption E(iii)]{chan2021pcdid} and \citet[357]{carvalho-Masini-Ricardo-2018arco}. It ensures that the ratio of pre- and post-treatment periods to the total number of periods does not vanish even in the limit while also accommodating the possibility that the limit does not exist.\footnote{For example, $ \lambda_n = \epsilon + (1 - 2\epsilon) \cdot \frac{1 + \sin(n)}{2} \in [\epsilon, \, (1-\epsilon)], \quad \text{for all } n \geq 1 $.} 

The following provides the (asymptotic) identification result.

\begin{theorem}[Asymptotic Identification]\label{Theorem:Identification}
\Copy{Key:Theorem:Identification}{Let \Cref{ass:parallel_trends,ass:limited_anticip,ass:TTratio_lambda_n} hold, then 

\noindent (a) $\displaystyle \sup_{\{w,\psi\} \in \W^2 } |ATT_{\omega,T} - \widetilde{ATT}_n^{w,\psi}| = \mathcal{O}\big(n^{-(1/2+\gamma\wedge \delta)}\big) $ and 

\noindent (b) $\displaystyle \sup_{ \{\psi,\phi\} \in \W^2  } |\widetilde{ATT}_n^{w,\psi} - \widetilde{ATT}_n^{w,\phi}| =  \mathcal{O}\big(n^{-(1/2+\gamma\wedge \delta)}\big) $ uniformly in $w\in\W$, where the constants $\gamma>0$ and $\delta>0$ are defined in \Cref{ass:parallel_trends,ass:limited_anticip}, respectively.}
\end{theorem}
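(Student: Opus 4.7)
My plan is to prove part (a) by a direct decomposition that isolates the trend-bias and anticipation-bias terms introduced by \Cref{ass:parallel_trends,ass:limited_anticip}, and then obtain part (b) as a triangle-inequality corollary of part (a). For part (a), because $\sum_{-\tau=1}^{\T}\psi_{\T}(-\tau)=1$, I would rewrite $ATT_{\omega,T}=\sum_{t=1}^{T}\sum_{-\tau=1}^{\T}\psi_{\T}(-\tau)\,w_T(t)\,ATT(t)$ and subtract to obtain
\[
ATT_{\omega,T}-\widetilde{ATT}_n^{w,\psi} = \sum_{t=1}^{T}\sum_{-\tau=1}^{\T}\psi_{\T}(-\tau)\,w_T(t)\bigl(ATT(t)-ATT_{\tau,t}\bigr).
\]
Substituting the identity already recorded in the text, $ATT(t)-ATT_{\tau,t} = -\bigl(\E[\nu_{\tau,t}|D=1]-\E[\nu_{\tau,t}|D=0]\bigr)+\E[v_\tau|D=1]$, decomposes the error into a trend-bias piece and an anticipation-bias piece.

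Using $\sum_t w_T(t)=1$ and $\psi_\T(-\tau)\lesssim 1/\T$ from the definition of $\W$, the anticipation piece collapses to $\sum_{-\tau=1}^{\T}\psi_\T(-\tau)\E[v_\tau|D=1]$, which is bounded in absolute value by a constant multiple of $\T^{-1}\sum_{-\tau=1}^{\T}|\E[v_\tau|D=1]|$ and is therefore $\mathcal{O}(\T^{-(1/2+\delta)})$ by \Cref{ass:limited_anticip}. Analogously, factoring $w_T(t)\psi_\T(-\tau)\lesssim (T\T)^{-1}$ and invoking \Cref{ass:parallel_trends} bounds the trend piece by $\mathcal{O}((\T\wedge T)^{-(1/2+\gamma)})$. \Cref{ass:TTratio_lambda_n} gives $\T\wedge T\asymp \T\asymp n$, so taking the slower of the two rates yields the stated $\mathcal{O}(n^{-(1/2+\gamma\wedge\delta)})$ bound. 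Since neither upper bound depends on the particular choice of $(w,\psi)$, the conclusion is uniform over $\W^2$.

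Part (b) follows by the triangle inequality. For any $w\in\W$ and $\psi,\phi\in\W$,
\[
\bigl|\widetilde{ATT}_n^{w,\psi}-\widetilde{ATT}_n^{w,\phi}\bigr| \le \bigl|\widetilde{ATT}_n^{w,\psi}-ATT_{\omega,T}\bigr| + \bigl|ATT_{\omega,T}-\widetilde{ATT}_n^{w,\phi}\bigr|,
\]
so applying part (a) to each summand and then taking suprema over $(\psi,\phi)\in\W^2$ delivers the same $\mathcal{O}(n^{-(1/2+\gamma\wedge\delta)})$ rate, uniformly in $w\in\W$.

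The main technical obstacle is the passage from the uniform-average rates in \Cref{ass:parallel_trends,ass:limited_anticip} to rates for the $(w,\psi)$-weighted sums indexed over $\W^2$. The resolution hinges precisely on the $\W$-normalization $T\max_t w_T(t)\lesssim 1$ (and its $\psi_\T$ analogue), which certifies that the $(w,\psi)$-weighted measure is dominated by a constant multiple of the uniform measure; this lets the assumed rates pass through with only a constant-factor loss. Were the weights allowed to place $\mathcal{O}(1)$ mass on a single period, neither rate would transfer and the uniform conclusion would fail.
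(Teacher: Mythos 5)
Your proposal is correct and follows essentially the same route as the paper's own proof: the identical decomposition $ATT(t)-ATT_{\tau,t}=-R_{\tau,t}$ into trend and anticipation biases, the same use of the $\W$-normalization $T\max_t w_T(t)\lesssim 1$ to transfer the uniform-average rates of \Cref{ass:parallel_trends,ass:limited_anticip} to arbitrary weights, the same combination of the two rates via \Cref{ass:TTratio_lambda_n}, and a triangle-inequality argument for part (b) that is equivalent to the paper's identity $\widetilde{ATT}_n^{w,\psi}-\widetilde{ATT}_n^{w,\phi}=R_n^{w,\psi}-R_n^{w,\phi}$. The only caveat is one you share with the paper itself: passing from the weighted sum to the uniform average of the bias terms implicitly treats the rate conditions in \Cref{ass:parallel_trends,ass:limited_anticip} as controlling averages of absolute values rather than signed averages, which is how the paper also reads them.
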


\noindent Part (a) of \Cref{Theorem:Identification} shows that although identification may not hold exactly in finite samples, it does asymptotically as $n \rightarrow \infty $ subject to \Cref{ass:TTratio_lambda_n}. Precisely, it says $ATT_{\omega,T}$ is identified up to a $o\big(n^{-1/2}\big)$ term uniformly in $\W^2 $. Part (b) is particularly important as it shows that identification of $ATT_{\omega,T}$ is asymptotically invariant to the weighting scheme $\psi\in\W $ applied to pre-treatment outcomes. This is reasonable because pre-treatment weighting is simply a methodological device and not an intrinsic part of the $ATT_{\omega,T}$ parameter. Thus, the \emph{effective estimand} $\widetilde{ATT}_n^{w,\psi}$ is asymptotically invariant to the pre-treatment weighting scheme $\psi \in \W $.\footnote{Recall that the post-treatment weighting scheme $w$ is an effective part of the definition of $ATT_{\omega,T}$.} 

The following decomposition 
\begin{equation}\label{eqn:Decomp_ATTn}
    ATT_{\omega,T} = \widetilde{ATT}_n^{w,\psi} - R_n^{w,\psi}
\end{equation}
is used in the proof of \Cref{Theorem:Identification} where $R_n^{w,\psi}$ is the difference between the convex-weighted averages of the trend and anticipation biases, i.e., $\displaystyle R_n^{w,\psi}:=\sum_{-\tau=1}^{\T }\sum_{t=1}^T \psi_{\T }(-\tau) w_T(t)R_{\tau,t}$.

\begin{remark}\label{rem:weak_suff_idcond}
    The identification result in \Cref{Theorem:Identification} is equivalent to $R_n^{w,\psi}=o(n^{-1/2})$ uniformly in $\W^2$. \Cref{ass:parallel_trends,ass:limited_anticip} are imposed separately following conventional DiD identification arguments, mostly for economic interpretability and clarity. They are thus jointly sufficient but not necessary for \Cref{Theorem:Identification} as it suffices that $R_n^{w,\psi}$ be $o(n^{-1/2})$ uniformly in $\W^2$. The latter condition, which is both sufficient and necessary for asymptotic identification while allowing for an asymptotically normal estimator, comes at the cost of less transparency and intuition. Although the rates in \Cref{ass:parallel_trends,ass:limited_anticip} are sufficient for identification, weaker rates that deliver $R_n^{w,\psi}=o(1)$ are also sufficient for asymptotic identification.
\end{remark}

\section{Estimation}\label{Sect:Estimation}
\subsection{The T-DiD estimator}\label{SubSect:Estimator}
In view of \Cref{Theorem:Identification}, the DiD estimator of $ATT_{\omega,T}$, which is the T-DiD, can be stated as $\widehat{ATT}_{\omega,T} = \sum_{-\tau=1}^{\T }\sum_{t=1}^T \psi_{\T }(-\tau) w_T(t)\widehat{ATT}_{\tau,t}$ where $\widehat{ATT}_{\tau,t}:= (Y_{1,t} - Y_{1,\tau}) - (Y_{0,t}-Y_{0,\tau}) $. This suggests the T-DiD can be alternatively expressed as 
\begin{equation}\label{eqn:DiD_estimator}
    \widehat{ATT}_{\omega,T} = \sum_{t=1}^{T}w_T(t)(Y_{1,t} - Y_{0,t}) - \sum_{-\tau=1}^{\T }\psi_{\T }(-\tau)(Y_{1,\tau} - Y_{0,\tau}).
\end{equation} 

\noindent The above expression clearly shows that $\widehat{ATT}_{\omega,T}$ is a difference of post-treatment and pre-treatment differences in the outcomes of the treated and control units; it is thus a \emph{bona fide} difference-in-differences estimator -- cf. \citet[eqn. 7 and 8]{arkhangelsky-etal-2021}. While the above estimator may seem obvious, a formal treatment under identification and sampling conditions (as done in this paper) does not appear to have been done.

\begin{remark}
The main difference between \eqref{eqn:DiD_estimator} and a conventional DiD estimator lies in the variation exploited for identification, estimation, and inference. In a conventional DiD setting with short panels or repeated cross-sections, the number of cross-sectional units (or clusters)—typically independently sampled—in both treated and control groups is large, while the number of time periods remains fixed. By contrast, the framework considered in this paper requires a large number of pre- and post-treatment periods, but a fixed number of cross-sectional units in the treated and control groups.
\end{remark}

From a practical point of view, a regression-based estimator is desirable as existing routines for estimation and heteroskedasticity- and auto-correlation-robust standard errors are readily applicable without modification. Define the random variable $X_t:= Y_{1,t} - Y_{0,t}$. As arbitrary cross-sectional dependence between the treated and the untreated unit is possible, one can cast the sequence of identified parameters $\widetilde{ATT}_n^{w,\psi}$ in the following (weighted) linear model: 
\begin{equation}\label{eqn:ATT_lin_mod}
    X_t = \beta_0 + \widetilde{ATT}_n^{w,\psi}  \indicator{t\geq 1} + U_t
\end{equation}
\noindent using the set of \emph{non-negative weights} $\{\widetilde{w}_n(t), t \in [-\T] \cup [T] \} $ where $\widetilde{w}_n(t):= w_T(t)\indicator{t \geq 1} + \psi_{\T}(t)\indicator{t\leq -1} $.  \Cref{Prop:ATT_equiv_wreg} in the supplement shows the numerical equivalence of \eqref{eqn:DiD_estimator} to a regression-based one using \eqref{eqn:ATT_lin_mod}, namely $ \widehat{ATT}_{\omega,T} = \widehat{B} $ where $\displaystyle (\widehat{\beta}_0,\widehat{B})' = \argmin_{(\beta_0,B)'} S_{\widetilde{w},n}(\beta_0,B)$ and $S_{\widetilde{w},n}(\beta_0,B): = \sum_{t=-\T}^T \widetilde{w}_n(t)(X_t - \beta_0 - B  \indicator{t\geq 1})^2.$ The differencing in $X_t$ eliminates \emph{arbitrarily unbounded} common trends or common shocks in $Y_{1,t}$ and $Y_{0,t}$. 

\Cref{ass:parallel_trends,ass:limited_anticip} hold for suitably transformed potential outcomes. The regression-based form in \eqref{eqn:ATT_lin_mod} renders such adjustments feasible in a straightforward manner by (1) accounting for observed uncommon time-varying heterogeneity: $X_t = \beta_0 + \widetilde{ATT}_n^{w,\psi}  \indicator{t\geq 1} + (\dot{Z}_{1,t}-\dot{Z}_{0,t})\bm{\beta} + U_t$ where $\dot{Z}_{d,t}$ comprises flexible transformations, e.g., polynomials, of observed uncommon time-varying covariates of unit $d \in \{0,1\} $, (2) mitigating persistence in the errors $U_t$ in order to improve efficiency, e.g., $X_t = \beta_0 + \widetilde{ATT}_n^{w,\psi}  \indicator{t\geq 1} + \sum_{l=1}^p X_{t-l}\beta_l + U_t$, (3) taking first differences of $X_t$ to remove unit roots: $(X_t - X_{t-1}) = \beta_0 + \widetilde{ATT}_n^{w,\psi}  \indicator{t\geq 1} + U_t$, (4) fitting a moving average process in \eqref{eqn:ATT_lin_mod} to mitigate serial correlation in $U_t$, and (5) controlling for idiosyncratic time trends in $X_t$ to remove non-stationarity that can otherwise hurt asymptotic identification and inference -- details follow in \Cref{subsec:non_stationarity}. These adjustments are useful. For example, adjusting for time-varying heterogeneity reduces the risk of omitted variable bias in the estimation of $\widetilde{ATT}_n^{w,\psi}$, thereby broadening the applicability of the T-DiD.\footnote{Maintaining the $\widetilde{ATT}_n^{w,\psi}$ notation in the above extensions is to avoid notational clutter.}

\subsection{Relation to alternative estimators}
The Synthetic Control (SC) method is, in many respects, the most closely related to the T-DiD. Other comparable approaches include the Synthetic DiD and Before--After (BA) estimators. This section provides a brief overview of these methods and discusses their connection to the T-DiD.

\subsubsection{Synthetic Controls}
With a single control unit, the convex pre-treatment SC weight is trivially 1. Thus, the SC estimator adapted to the baseline case considered in this paper is simply $ \widehat{ATT}^{SC} = \sum_{t=1}^{T} w_T(t) X_t = \sum_{t=1}^{T} \omega_n(t) X_t $ where $\omega_n(t):= w_T(t)\indicator{t\geq 1} - \psi_{\T }(t)\indicator{t\leq -1} $. Observe for instance that, unlike DiD estimators in general, the $\widehat{ATT}^{SC}$ imposes a stronger condition of a zero mean-difference on the post-treatment untreated potential outcomes, i.e., $\displaystyle \sum_{t=1}^{T} w_T(t)\E[Y_t(0)\mid D=1] = \sum_{t=1}^{T} w_T(t) \E[Y_{0,t}] $. One observes from $\widehat{ATT}^{SC}$ above that no use is made of pre-treatment observations; this suggests that under this strong assumption, no pre-treatment observations are needed.

A clear advantage of the T-DiD estimator emerges. First, the convex-weighted SC is valid under a strong assumption that post-treatment untreated potential outcomes have equal means (uniformly in $\W$). This underlying assumption can fail, and the SC suffers from identification failure as a result. Granted that \Cref{ass:parallel_trends,ass:limited_anticip} are satisfied, the identification of $ATT_{\omega,T}$ holds asymptotically and inference thereon using the T-DiD framework is valid under standard regularity conditions outlined in the next section.

\citet{ferman-pinto-2021-synthetic,tian-lee-panchenko-2024-synthetic} advocate demeaning outcomes before pre-treatment fitting.\footnote{A similar approach is discussed in \citet{doudchenko-imbens-2016}, where an intercept term is included in the pre-treatment fit.} The demeaned convex-weighted SC with a single control unit and convex-weighted ATTs leads to a form of the T-DiD -- see \Cref{App_Sub_Sect:TDiD_SC} for the derivation. However, the \( T \)-DiD framework, taken \emph{en bloc} as a unified approach to identification, estimation, inference, and identification testing, is not a special case of the demeaned synthetic controls in \citet{tian-lee-panchenko-2024-synthetic,ferman-pinto-2021-synthetic}, as those methods require multiple control units and/or multiple outcomes for feasible SC inference.

The synthetic DiD (S-DiD) estimator introduced by \citet{arkhangelsky-etal-2021} seeks to merge desirable features of the SC and the C-DiD using panel data. The estimator is cast in a weighted two-way fixed effects regression framework:
\begin{equation*}
    (\widehat{ATT}^{sdid},\widehat{\mu}_Y,\widehat{a},\widehat{b}) = \argmin_{ATT,\mu_Y,a,b} \Big\{ \sum_{i=1}^N \sum_{t=-\T}^T \big(Y_{it} - \mu_Y - a_i - b_t - D_i\indicator{t\geq 1} ATT \big)^2\hat{\omega}_i^{sdid}\lambda_t^{sdid} \Big\}
\end{equation*}
where the weights are data-driven. Unit-specific weights $\{\hat{\omega}_i^{sdid}, \ i\in [N] \}$ ensure that the average outcome for treated units aligns closely with the weighted average for control units in the pre-treatment periods, while period-specific weights $\{\lambda_t^{sdid}, \ t \in [-\T] \cup [T] \}$ ensure that the average post-treatment outcome for control units differs from the weighted pre-treatment average of the same control units by a constant. In this sense, the S-DiD borrows from the SC the idea of re-weighting control units to approximate pre-treatment trends, and from the DiD the robustness to additive unit-level shifts --- the unit fixed effects $a_i$ absorb permanent level differences that the SC weights need not eliminate exactly. The S-DiD is not suitable for the two-unit baseline setting of primary interest in this paper as it requires several control units; see \citet[Assumption~2(i)]{arkhangelsky-etal-2021}. More fundamentally, its asymptotic theory is driven by a large cross-section: both the number control units and pre-treatment periods ($N_{co}$ and $T_{pre}$, respectively) must diverge at comparable rates, so that identification and inference lean on cross-sectional variation rather than the temporal variation exploited in this paper.

\subsubsection{Before-After (BA)}
Another interesting estimator that emerges in the present context is the before-after estimator \citep[p. 366]{carvalho-Masini-Ricardo-2018arco}. Unlike the SC above which uses post-treatment outcomes of the untreated unit to construct the counterfactual, the BA uses pre-treatment outcomes of the treated unit to construct the counterfactual. The expression of the BA estimator is given by $\widehat{ATT}^{BA}:= \sum_{t=1}^T w_T(t) Y_{1,t} - \sum_{-\tau=1}^{\T } \psi_{\T}(\tau) Y_{1,\tau} = \sum_{t=-\T}^T \omega_n(t) Y_{1,t}$. The BA estimator effectively uses the pre-treatment mean of the outcome of the treated unit to impute the mean post-treatment untreated potential outcome of the treated unit. 

Denote
$ \widehat{ATT}_d^{BA}:= \sum_{t=1}^T w_T(t) Y_{d,t} - \sum_{-\tau=1}^{\T } \psi_{\T}(\tau) Y_{d,\tau} $ for $d\in \{0,1\}$ and note that $\widehat{ATT}_{\omega,T} = \widehat{ATT}_1^{BA} - \widehat{ATT}_0^{BA} = \widehat{ATT}^{BA} - \widehat{ATT}_0^{BA} $. Thus, the T-DiD is the difference between the BA estimators of the treated and control units, while the BA estimator ignores the second term. The BA is therefore not robust to non-negligible correlated or common shocks to which both treated and control units may be exposed. Unlike the T-DiD, the BA is not invariant to the pre-treatment weighting scheme. Thus, using non-uniform weights for the BA estimator may not be meaningful.

\section{Asymptotic Theory}\label{Sect:Asym}

\subsection{Baseline - Two units}\label{Sub_Sect:Baseline_Theory}
For the treatment of the asymptotic theory of the T-DiD estimator \eqref{eqn:DiD_estimator}, it is notationally convenient to opt for an alternative formulation:
\begin{equation}\label{eqn:DiD_estimator2}
    \widehat{ATT}_{\omega,T} = \sum_{t=-\T }^{T} \omega_n(t)X_t,
\end{equation}where $\omega_n(t):= w_T(t)\indicator{t\geq 1} - \psi_{\T }(t)\indicator{t\leq -1} $ and $\omega_n(0)=0$ by notational convention. This paper allows heterogeneity and weak dependence of quite general forms in the sampling process of the data $\{(Y_{1,t},Y_{0,t}), t \in [-\T] \cup [T] \}$. For example, $ \big \{X_t:= Y_{1,t}-Y_{0,t}, \ t \in [-\T] \cup [T] \big \}$ are allowed to be auto-correlated and heterogeneous, thereby accommodating various interesting forms of dependence. To this end, the following statement supplies the definition of near-epoch dependence (NED), which is flexible enough to accommodate several empirically relevant forms of temporal dependence and heterogeneity.

\begin{definition}[Near-Epoch Dependence -- \citet{davidson2021stochastic} Definition 18.2]\label{Def:near_epoch}
    Let $ \{\{V_{nt}\}_{t=-\infty}^{\infty}\}_{n=1}^{\infty} $ be a possibly vector-valued stochastic array defined on the probability space $ (\Omega,\mathcal{F},P) $ and $\mathcal{F}_{n,t-m}^{t+m}:= \sigma(V_{n,t-m},\ldots,V_{n,t+m}) $ where $\sigma(\cdot)$ denotes a sigma-algebra. An integrable array $ \{\{U_{nt}\}_{t=-\infty}^{+\infty}\}_{n=1}^{\infty} $ is $L_p$-NED on $ \{ V_{nt} \} $ if it satisfies $||U_{nt} - \E[U_{nt}|\mathcal{F}_{n,t-m}^{t+m}]||_p \leq d_{nt}\nu_m $ where $\nu_m \rightarrow 0 $ and $ \{d_{nt}\} $ is an array of positive constants.
\end{definition}

\noindent \citet[Chapter 18]{davidson2021stochastic} provides some examples of NED processes: (1) linear processes with absolutely summable coefficients \citep[Example 18.3]{davidson2021stochastic}, (2) bi-linear auto-regressive moving average processes \citep[Example 18.4]{davidson2021stochastic}, (3) a wide class of lag functions subject to dynamic stability conditions \citep[Example 18.5]{davidson2021stochastic}, and GARCH(1,1) processes under given conditions \citep{hansen-1991-garch}.

Define 
\begin{align*}
    s_{\omega,n}^2:= \V\Big[\sum_{t=-\T }^{T} \omega_n(t)X_t \Big] = \sum_{t=-\T }^{T} \omega_n(t)^2\V[X_t] + 2 \sum_{t=-\T }^{T}\sum_{t'=-\T }^{t-1} \omega_n(t)\omega_n(t')\mathrm{cov}[X_t,X_{t'}]
\end{align*}
and $ X_{nt}:= \omega_n(t)(X_t - \E[X_t])/s_{\omega,n} $, then $\big\{X_{nt}, \ t\in [-\T] \cup [T], n \in \mathbb{N} \big\}$ constitutes a triangular array of zero-mean random variables. Also, define $\sigma_{nt}:= \sqrt{\E[X_{nt}^2]} $, $c_{nt}:= \max\{\sigma_{nt}, s_{\omega,n}\}$, and $d_{nt} \leq \bar{d}||X_{nt}||_2$ uniformly in $n$ and $t$ for some constant $\bar{d}>0$. The following set of assumptions is important for establishing the asymptotic normality of the T-DiD estimator. Let $r$ and $C$ be constants such that $r>2$ and $0<C<\infty$.
\begin{assumption}\label{ass:dominance_Y}
    $ \displaystyle \sup_{t \in [-\T] \cup [T]} ||(Y_{1,t} - Y_{0,t}) - \E[(Y_{1,t} - Y_{0,t})]||_r \leq C $ uniformly in $n$.
\end{assumption}
\noindent \Cref{ass:dominance_Y} is weak in two dimensions: (1) it is imposed on the difference $X_t:=Y_{1,t} - Y_{0,t}$ and \emph{not} on the levels $Y_{d,t}$, $(d,t) \in \{0,1\} \times [-\T] \cup [T] $ and (2) it is imposed on deviations of $X_t$ from its mean $\E[X_t]$. \Cref{ass:dominance_Y} thus does not rule out non-stationary common shocks. \Cref{ass:dominance_Y} also does not rule out possibly unbounded deterministic trends in $X_t$. It is weaker than sub-Gaussianity and variance-homogeneity conditions needed for some SC methods -- cf. \citet[Assumption 1]{arkhangelsky-etal-2021} and \citet[Assumption 3]{Sun-Ben-Michael-Feller-2023-using}. 

The following is a standard regularity condition on $s_{\omega,n}$.
\begin{assumption}\label{ass:bound_sn}
$ \sqrt{n}s_{\omega,n} \geq \epsilon $ for some small constant $\epsilon>0$ uniformly in $\W^2$.
\end{assumption}
\noindent \Cref{ass:bound_sn} is a flexible technical condition on $s_{\omega,n}$ that accommodates, e.g., covariance stationarity, i,e., $s_{\omega,n}=\mathcal{O}(n^{-1/2})$. It, however, rules out degeneracy in the limit: $\sqrt{n}s_{\omega,n} \rightarrow 0 $ as $n\rightarrow \infty$. As \Cref{ass:bound_sn} restricts only $s_{\omega,n}$, standard deviations $\sigma_{nt}$ for some $t \in [-\T] \cup [T] $ are allowed to be zero (in the limit). \Cref{ass:bound_sn}, however, restricts this form of degeneracy.

This paper uses the general representation $X_{nt} = g_t(\ldots,V_{n,t-1},V_{nt},V_{n,t+1},\ldots)$ for some mixing array $V_{nt}$ and time-dependent function $g_t(\cdot)$. The following assumption ensures that this representation satisfies the required near epoch dependence and mixing properties.
\begin{assumption}\label{ass:Sampling}
    (a) $\{X_{nt}\} $ is $L_2$-near epoch dependent of size $-1/2$ with respect to a constant array $\{d_{nt}\}$ on $ \{V_{nt}\} $.    
    (b) $ \{V_{nt}\} $ is an $\alpha$-mixing array of size $-r/(r-1)$.
\end{assumption}
\noindent \Cref{ass:Sampling} allows the observed (weighted) data to be heterogeneous and weakly dependent. Besides, $\{ATT(t), \ t \in [T]\} $ and the user-specified weighting scheme $ \{\omega_n(t), \ t\in [-\T] \cup [T] \} $ are also allowed to be heterogeneous in $t$. \Cref{ass:Sampling} is a more general form of weak dependence than commonly imposed mixing conditions in the literature -- cf. \citet{chan2021pcdid,Sun-Ben-Michael-Feller-2023-using,fry-2024-method}.

The next assumption is essential in establishing central limit theorem results under NED.
\begin{assumption}\label{ass:technical_CLT}
\[\max_{1\leq j\leq r_n+1}M_{nj}=o(b_n^{-1/2}) \text{ and }\sum_{j=1}^{r_n} M_{nj}^2 = \mathcal{O}(b_n^{-1}) \]
\noindent where $\displaystyle M_{nj}=\max_{(j-1)b_n+1\leq t \leq jb_n} c_{nt}$, for $j \in [r_n]$, $\displaystyle M_{n,r_n+1} = \max_{r_nb_n+1\leq t \leq n} c_{nt}$, $b_n= \lfloor n^{1-\alpha} \rfloor $, $\alpha\in(0,1]$, and $r_n= \lfloor n/b_n \rfloor $.
\end{assumption}
\noindent While \Cref{ass:Sampling} allows heterogeneity in observed data, \Cref{ass:technical_CLT} serves to restrict the degree of heterogeneity allowed -- see \citet{deJong-1997central}. Also, while \Cref{ass:bound_sn} characterises a lower bound on the growth rate of $s_{\omega,n}$ allowed, \Cref{ass:technical_CLT} characterises an upper bound on its growth rate, i.e., $n^{(1-\alpha)/2}c_{nt}= n^{(1-\alpha)/2}\max\{\sigma_{nt}, s_{\omega,n}\} = o(1), \ \alpha \in (0,1] $. The absence of this upper bound on the growth rate of $s_{\omega,n}$ can result in long-range dependence, and asymptotic normality fails.

With the foregoing assumptions in hand, the asymptotic normality of the T-DiD follows. 
\begin{theorem}[Asymptotic Normality]\label{Theorem:AsympN}
\Copy{Key:Theorem:AsympN}{Under \Cref{ass:parallel_trends,ass:limited_anticip,ass:Sampling,ass:technical_CLT,ass:bound_sn,ass:TTratio_lambda_n,ass:dominance_Y}, $
s_{\omega,n}^{-1}(\widehat{ATT}_{\omega,T} - ATT_{\omega,T}) \xrightarrow{d} \mathcal{N}(0,1) $ uniformly in $\W^2$.}
\end{theorem}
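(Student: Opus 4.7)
The proof plan leverages the identification decomposition in \eqref{eqn:Decomp_ATTn} to split the studentized estimation error into a stochastic sum plus a deterministic bias term, then applies a central limit theorem for near-epoch dependent arrays to the stochastic sum.

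The first step is to establish the algebraic identity
\[
s_{\omega,n}^{-1}\bigl(\widehat{ATT}_{\omega,T} - ATT_{\omega,T}\bigr) = \sum_{t=-\T}^{T} X_{nt} + s_{\omega,n}^{-1} R_n^{w,\psi}.
\]
By linearity applied to \eqref{eqn:DiD_estimator2} together with the fact that each $ATT_{\tau,t} = \E[\widehat{ATT}_{\tau,t}]$ by construction, one obtains $\E[\widehat{ATT}_{\omega,T}] = \widetilde{ATT}_n^{w,\psi} = \sum_{t=-\T}^{T} \omega_n(t)\E[X_t]$. Substituting this together with $ATT_{\omega,T} = \widetilde{ATT}_n^{w,\psi} - R_n^{w,\psi}$ from \eqref{eqn:Decomp_ATTn} and the normalization $X_{nt} = \omega_n(t)(X_t - \E[X_t])/s_{\omega,n}$ yields the display above. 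The deterministic term $s_{\omega,n}^{-1}R_n^{w,\psi}$ then vanishes uniformly: part (a) of \Cref{Theorem:Identification} gives $|R_n^{w,\psi}| = \mathcal{O}(n^{-(1/2+\gamma\wedge\delta)})$ uniformly in $\W^2$, while \Cref{ass:bound_sn} gives $s_{\omega,n}^{-1}\leq \sqrt{n}/\epsilon$ uniformly, so $s_{\omega,n}^{-1} R_n^{w,\psi} = \mathcal{O}(n^{-(\gamma\wedge\delta)}) = o(1)$. This is precisely why the sharper $n^{-(1/2+\gamma\wedge\delta)}$ rates are imposed in \Cref{ass:parallel_trends,ass:limited_anticip} rather than merely $o(1)$ rates, which would suffice for identification but leave an asymptotic bias in the studentized statistic.

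The second and main step is to apply a CLT for heterogeneous $L_2$-NED arrays, in the spirit of \citet{deJong-1997central} and the companion results in \citet{davidson2021stochastic}, to obtain $\sum_{t=-\T}^T X_{nt} \xrightarrow{d} \mathcal{N}(0,1)$. By construction, $\E[X_{nt}] = 0$ and $\V[\sum_t X_{nt}] = 1$. The NED size $-1/2$ on a mixing array of size $-r/(r-1)$ is furnished by \Cref{ass:Sampling}. For the Liapunov-type moment condition I combine (i) $|\omega_n(t)|\lesssim 1/n$, which follows from the definition of $\W$ in \eqref{eqn:dfn_W} and \Cref{ass:TTratio_lambda_n} (since $T\wedge\T \asymp n$), (ii) $s_{\omega,n}^{-1}\lesssim \sqrt{n}$ from \Cref{ass:bound_sn}, and (iii) $\|X_t - \E[X_t]\|_r \leq C$ for some $r>2$ from \Cref{ass:dominance_Y}, to obtain $\sup_t \|X_{nt}\|_r \lesssim n^{-1/2}$ uniformly in $\W^2$. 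The blocking constraints on $\max_j M_{nj}$ and $\sum_j M_{nj}^2$ that rule out excessive heterogeneity are provided directly by \Cref{ass:technical_CLT}.

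The hard part is not any single verification in isolation but ensuring every bound holds \emph{uniformly} in $(w,\psi)\in\W^2$. This uniformity propagates because the constants controlling $|R_n^{w,\psi}|$, $|\omega_n(t)|$, $s_{\omega,n}^{-1}$, and thus $\|X_{nt}\|_r$ are independent of the particular weighting scheme within $\W^2$. Consequently, the Berry--Esseen-type constants underlying the NED-CLT depend on $(w,\psi)$ only through these uniformly bounded quantities, delivering $s_{\omega,n}^{-1}(\widehat{ATT}_{\omega,T} - ATT_{\omega,T})\xrightarrow{d}\mathcal{N}(0,1)$ uniformly in $\W^2$.
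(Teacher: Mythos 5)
Your proposal is correct and follows essentially the same route as the paper: decompose the studentized error into $\sum_t X_{nt}$ plus the bias $s_{\omega,n}^{-1}R_n^{w,\psi}$, kill the bias using the $\mathcal{O}(n^{-(1/2+\gamma\wedge\delta)})$ rate from \Cref{Theorem:Identification} together with the lower bound $\sqrt{n}s_{\omega,n}\geq\epsilon$ from \Cref{ass:bound_sn}, and then invoke the NED central limit theorem of \citet{deJong-1997central}/\citet[Theorem 25.12]{davidson2021stochastic}, whose moment condition is verified exactly as in the paper's \Cref{lem:X_Sig_sn_LrBound} by combining $|\omega_n(t)|\lesssim n^{-1}$, \Cref{ass:dominance_Y}, and \Cref{ass:bound_sn}. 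The only cosmetic difference is that the paper normalises the $L_r$ bound by $c_{nt}=\sigma_{nt}\vee s_{\omega,n}$ (the form required by the cited CLT) rather than stating $\sup_t\|X_{nt}\|_r\lesssim n^{-1/2}$ directly, but these are equivalent given \Cref{ass:bound_sn}.
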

\noindent In practice, $s_{\omega,n}$ is estimated using a heteroskedasticity and auto-correlation robust procedure, e.g., the \citet{newey-west-1987-simple} procedure. Using the self-normalised $t$-test statistic of \citet{chernozhukov-Wuthrich-Zhu-2024t} is an alternative inference procedure. Exploring this path to inference in the current framework is, however, left for future work due to considerations of scope and space. The asymptotic theory for the regression-based estimator of $ATT_{\omega,T}$ in \eqref{eqn:ATT_lin_mod} and its variants is analogous to the preceding analysis; formal details are omitted. \Cref{App_Sect:Det_Trend} formally considers deterministic time trends.

\subsection{Extension -- Multiple control units}\label{SubSect:Ext_Mult_Control}

Beyond the two-unit baseline case, practical cases often involve a fixed number of units with large $\T, T$ (e.g., \citet{marcus-santanna-2021role,grooms-2015-enforcing}), where both SC and T-DiD might be suitable. This section focuses on scenarios with multiple control units, say two or three, and large $\T$ and $T$, which still fall within the T-DiD framework. In particular, the availability of multiple control units is exploited in this paper to propose an over-identifying restrictions test of identification. Two cases of interest arise when multiple units are available: (1) \( J > 1 \) control units for a given treated unit, and (2) \( I > 1 \) treated units, each of which has at least one corresponding control unit. To maintain focus on the primary case of a single treated unit, the generalisation to multiple treated units is deferred to \Cref{App_Sect:Mult_Treat}. It is assumed that the researcher has control units $j\in [J]$ for the treated unit. Whenever there are $J>1$ valid control units for the treated unit, one can, using the theory elaborated in \Cref{Sub_Sect:Baseline_Theory} above, obtain $J$ estimates of $ATT_{\omega,T}$. In what follows, let $\widehat{ATT}_{\omega,jT}$ denote the estimator \eqref{eqn:DiD_estimator} using control unit $j \in [J] $ and the treated unit.

Let $\widehat{ATT}_{\omega,\cdot T}:=(\widehat{ATT}_{\omega, 1 T},\ldots,\widehat{ATT}_{\omega, J T})'$ denote a $J\times 1$ vector of T-DiD estimators of $ATT_{\omega,T}$. Define the $J\times J$ matrix $S_{\omega,n}:= \E[(\widehat{ATT}_{\omega,\cdot T} - \mathbbm{1}_JATT_{\omega,T})(\widehat{ATT}_{\omega,\cdot T} - \mathbbm{1}_J ATT_{\omega,T})']$ where $\mathbbm{1}_J$ denotes a $J \times 1$ vector of ones. The following presents an extension of \Cref{ass:bound_sn} to the multiple-control setting. It requires that the minimum eigenvalue of \( nS_{\omega,n} \) be uniformly bounded below by \( \epsilon^2 > 0 \).

\begin{namedassumption}{\ref{ass:bound_sn}-Ext}\label{ass:bound_sn_ext}
    Uniformly in $n$, $\rho_{\mathrm{min}}(nS_{\omega,n})\geq\epsilon^2$.
\end{namedassumption}

\noindent The following provides the \Cref{Theorem:AsympN}-analogue of the multiple-control case. Define $\mathbb{S}^J:= \{ \uptau \in \mathbb{R}^J: ||\uptau|| = 1 \} $ where $J \in \mathbb{N} $.
\begin{corollary}\label{Theorem:AsympN_ext}
\Copy{Key:Theorem:AsympN_ext}{Suppose \Cref{ass:parallel_trends,ass:limited_anticip,ass:Sampling,ass:technical_CLT,ass:TTratio_lambda_n,ass:dominance_Y} hold for each of the $J$ unique treated-control pairs. Suppose further that \Cref{ass:bound_sn_ext} holds, then $(\uptau_J'S_{\omega,n}\uptau_J)^{-1/2}\uptau_J'(\widehat{ATT}_{\omega,\cdot T} -\mathbbm{1}_J ATT_{\omega,T} ) \xrightarrow{d} \mathcal{N}(0,1)$ uniformly in $\W^2 $ for all $\uptau_J \in \mathbb{S}^J$.}
\end{corollary}

\noindent A useful deduction from \Cref{Theorem:AsympN_ext} using the Cram\'er-Wold device is that $S_{\omega,n}^{-1/2}(\widehat{ATT}_{\omega,\cdot T} - \mathbbm{1}_J ATT_{\omega,T}) \xrightarrow{d} \mathcal{N}(0,\mathrm{I}_J)$ under the imposed conditions. Although theoretically useful in its own right, \Cref{Theorem:AsympN_ext} is not interesting from a practical point of view as no direction vector $\uptau_J \in \mathbb{S}^J $ is specified. With asymptotically ``efficient" linear combinations of $\widehat{ATT}_{\omega,\cdot T}$ and over-identifying restrictions tests of identification in mind, it is useful to cast the problem of finding an optimal linear combination as follows: 
\begin{align}\label{eqn:Eff_ATT}
    \widehat{ATT}_{\omega,T}^{h_J}: = \argmin_{a} \big\{ (\widehat{ATT}_{\omega,\cdot T} - \mathbbm{1}_Ja)'H_n(\widehat{ATT}_{\omega,\cdot T} - \mathbbm{1}_Ja) \big\} = h_J'\widehat{ATT}_{\omega,\cdot T}
\end{align}
\noindent where $h_J:= H_n'\mathbbm{1}_J(\mathbbm{1}_J'H_n\mathbbm{1}_J)^{-1}$, $H_n$ is a sequence of $J\times J$ positive definite matrices, and $\V[\widehat{ATT}_{\omega,T}^{h_J}] = h_J'S_{\omega,n}h_J = (\mathbbm{1}_J'H_n\mathbbm{1}_J)^{-2}\mathbbm{1}_J'H_nS_{\omega,n}H_n\mathbbm{1}_J $. Thus $\widehat{ATT}_{\omega,T}^{h_J}$ entails a specific linear combination of $\widehat{ATT}_{\omega, \cdot T}$, namely $h_J$ which in turn is a function of the positive definite $H_n$. Using standard Generalised Method of Moments (GMM) and minimum distance estimation (MD) arguments, e.g., \citet[Chap. 14]{wooldridge-2010}, the efficient choice of $H_n$ within the MD class of $ATT_{\omega,T}$ estimators $\widehat{ATT}_{\omega,T}^{h_J}$ is $H_n=S_{\omega,n}^{-1}$ whence
\[
    \widehat{ATT}_{\omega,T}^*:= \widehat{ATT}_{\omega,T}^{h_J^*} \text{ with } (s_{\omega,n}^*)^2: = \V[\widehat{ATT}_{\omega,T}^*]=1/(\mathbbm{1}_J'S_{\omega,n}^{-1}\mathbbm{1}_J).
\]
\noindent For completeness, the efficiency result is provided in \Cref{Prop:Eff_ATT}. Let $\Hyp^J:= \{h\in \mathbb{R}^J: h'\mathbbm{1}_J = 1\} $ denote the space of vectors in $\mathbb{R}^J$ whose elements sum to 1, and observe that $h_J\in \Hyp^J$. 

\begin{proposition}\label{Prop:Eff_ATT}
    \Copy{Key:Prop:Eff_ATT}{Under the assumptions of \Cref{Theorem:AsympN_ext}, $h_{J}^*:= S_{\omega,n}^{-1}\mathbbm{1}_J(\mathbbm{1}_J'S_{\omega,n}^{-1}\mathbbm{1}_J)^{-1} $ delivers the most efficient estimator of $ATT_{\omega,T}$ among all $h_{J} \in \Hyp^J$.}
\end{proposition}

\section{Over-identifying Restrictions Test}\label{Sect:Tests}

Whenever all available candidate control units of the treated unit are valid, each element in $\widehat{ATT}_{\omega,\cdot T}$ is a consistent estimator for the same $ATT_{\omega,T}$. This provides the basis for a test of over-identifying restrictions. Thus, the proposed test is similar in spirit to that of \citet{marcus-santanna-2021role}, exploiting over-identification for efficiency and identification testing.\footnote{Such a test is not feasible in the SC framework, when a pre-treatment fit is taken of \emph{all} control outcomes.}

\subsection{Two-candidate controls}\label{Subsub:two_control}
To drive intuition, consider the simple case with two candidate control units. One would expect the same $ATT_{\omega,T}$ up to a negligible $o(n^{-1/2})$ bias term from both $\widetilde{ATT}_{1n}^{w,\psi}$ and $\widetilde{ATT}_{2n}^{w,\psi}$, where $\widetilde{ATT}_{jn}^{w,\psi}$ is the effective estimand $\widetilde{ATT}_n^{w,\psi}$ using control unit $j\in [2]$. Thus from the decomposition \eqref{eqn:Decomp_ATTn},
\begin{equation*}
    \widetilde{ATT}_{1n}^{w,\psi} - \widetilde{ATT}_{2n}^{w,\psi} = R_{1n}^{w,\psi} - R_{2n}^{w,\psi}
\end{equation*} where $R_{jn}^{w,\psi}$ denotes $R_n^{w,\psi}$ using control unit $j$. Let $Y_{0,t}^{(j)}$ denote the outcome of the $j$'th control unit at period $t$, then
$$
\widehat{ATT}_{\omega,1T} - \widehat{ATT}_{\omega,2T} = \sum_{t=1}^{T}w_T(t)(Y_{0,t}^{(1)} - Y_{0,t}^{(2)}) - \sum_{-\tau=1}^{\T }\psi_{\T }(-\tau)(Y_{0,\tau}^{(1)} - Y_{0,\tau}^{(2)}).
$$

The expression is a difference-in-differences estimator akin to \eqref{eqn:DiD_estimator} using two control units since the outcome of the treated unit cancels out. This baseline case is particularly appealing because it involves a straightforward implementation: estimating $ATT_{\omega,T}$ with just two controls and then testing for a null effect using a $t$-test. Suppose \Cref{Theorem:Identification} holds for each candidate control unit --- an implication of \Cref{ass:parallel_trends,ass:limited_anticip}, with \Cref{ass:TTratio_lambda_n} maintained --- then it follows that $R_{jn}^{w,\psi} = o(n^{-1/2})$ for each $j\in [2]$. This further implies $\widetilde{ATT}_{1n}^{w,\psi} - \widetilde{ATT}_{2n}^{w,\psi} = o(n^{-1/2})$. $\widehat{ATT}_{\omega,1T} - \widehat{ATT}_{\omega,2T}$ would not be statistically significant if both controls were valid, i.e., each control guarantees that \Cref{Theorem:Identification} hold for a given treated unit.

\subsection{Multiple candidate controls}

Generally, when more than two candidate control units are available, it follows from the identification conditions in \Cref{ass:parallel_trends,ass:limited_anticip}—see \Cref{Theorem:Identification}—that \( \| R_{\cdot n}^{w,\psi} \| = o(n^{-1/2}) \) uniformly over \( \mathcal{W}^2 \), where \( R_{\cdot n}^{w,\psi} := (R_{1n}^{w,\psi}, \ldots, R_{Jn}^{w,\psi})' \) denotes the \( J \times 1 \) vector of bias indexed by the \( J \) control units for a given treated unit. In view of \Cref{ass:parallel_trends}, \Cref{ass:limited_anticip}, and \Cref{rem:weak_suff_idcond}, one can characterise the test hypotheses using the representation 
$$
||\sqrt{n}R_{\cdot n}^{w,\psi}|| = C_{\omega,n} n^{1/2 - \tilde{\delta}}
$$ where $ \{C_{\omega,n}: n \geq 1 \}$ is a sequence of bounded positive constants. $\tilde{\delta} > 1/2$ if asymptotic identification (\Cref{Theorem:Identification}) holds for each $j \in [J] $, and $\tilde{\delta}\leq 1/2$ otherwise for at least one control unit $j$. The following correspond to the null, local alternative, and fixed alternative hypotheses:
\[ 
\Hyp_o: \tilde{\delta} > 1/2; \quad \Hyp_{an}: \tilde{\delta} = 1/2; \quad \text{and} \quad \Hyp_{a}: \tilde{\delta} < 1/2. 
\]
\noindent If \Cref{ass:parallel_trends,ass:limited_anticip} hold for each control unit \( j \in [J] \), then the null hypothesis \( \Hyp_o \) is satisfied. Exact identification in the C-DiD set-up holds under $\Hyp_o$ with $\tilde{\delta} = \infty$. Thus, the current framework only requires $\tilde{\delta} > 1/2$. Under the local alternative $\Hyp_{an}$ or fixed alternative $\Hyp_a$, either \Cref{ass:parallel_trends} or \Cref{ass:limited_anticip} is violated for at least one control unit.

Define $\widehat{S}_{\omega,n}$ as the estimator of the covariance matrix $\widetilde{S}_{\omega,n}:=\E[(\widehat{ATT}_{\omega,\cdot T} - \E[\widehat{ATT}_{\omega,\cdot T}])(\widehat{ATT}_{\omega,\cdot T} - \E[\widehat{ATT}_{\omega,\cdot T}])']$.\footnote{The separate notation of the covariance matrix is introduced since $\widetilde{S}_{\omega,n}=S_{\omega,n}$ only holds under $\Hyp_o$. $\widetilde{S}_{\omega,n}$, in contrast, remains a valid covariance matrix under $\Hyp_o$, $\Hyp_{an}$, and $\Hyp_a$.} The following consistency condition is imposed on $\widehat{S}_{\omega,n}$.

\begin{assumption}\label{ass:consis_Sn}
    $ \displaystyle \frac{\uptau_J'\widehat{S}_{\omega,n}\uptau_J}{\uptau_J'\widetilde{S}_{\omega,n}\uptau_J} \xrightarrow{p} 1 $ for all $\uptau_J \in \mathbb{S}^J $.
\end{assumption}

\noindent The over-identifying restrictions test statistic for testing $\Hyp_o$ above is
\begin{align*}
    \widehat{Q}_{\omega,n}: = (\widehat{ATT}_{\omega,\cdot T} - \mathbbm{1}_J\widehat{ATT}_{\omega,T}^*)'\widehat{S}_{\omega,n}^{-1}(\widehat{ATT}_{\omega,\cdot T} - \mathbbm{1}_J\widehat{ATT}_{\omega,T}^*).
\end{align*}

While \Cref{ass:parallel_trends,ass:limited_anticip} imply $\Hyp_o$, the converse generally does not hold. There exist violations of \Cref{ass:parallel_trends}, \Cref{ass:limited_anticip}, or both that imply neither $\Hyp_{an}$ nor $\Hyp_{a}$. Technically, the proposed test derives power under the condition that $\uptau_J^R:= ||\sqrt{n}R_{\cdot n}||^{-1} \sqrt{n}R_{\cdot n} $ does not lie in the null space of $(n\widetilde{S}_{\omega,n})^{-1/2}[\mathrm{I}_J - \widetilde{P}_{n}](n\widetilde{S}_{\omega,n})^{-1/2}$ as $n\rightarrow \infty$ where $\widetilde{P}_{n}:= \widetilde{S}_{\omega,n}^{-1/2}\mathbbm{1}_J(\mathbbm{1}_J'\widetilde{S}_{\omega,n}^{-1}\mathbbm{1}_J)^{-1} \mathbbm{1}_J'\widetilde{S}_{\omega,n}^{-1/2} $. Thus, the test has trivial power under $\Hyp_{an}$ and $\Hyp_a$ if the above condition is violated. This constitutes the source of the inconsistency of over-identifying restrictions tests, including the one proposed above.\footnote{\citet{guggenberger2012note} provides a detailed discussion.}

As to whether ${\uptau_J^R}'(n\widetilde{S}_{\omega,n})^{-1/2}[\mathrm{I}_J - \widetilde{P}_{n}](n\widetilde{S}_{\omega,n})^{-1/2}\uptau_J^R=o(n^{-1/2})$ can hold under $\Hyp_{an}$ or $\Hyp_{a}$ is less of a statistical problem and more of the economic question at hand. Thus, from a statistical point of view, the above test is not consistent, i.e., it cannot detect \emph{all} violations of $\Hyp_o$. Thus, a good understanding of the economics of such a violation would be useful in complementing the over-identifying restrictions test of $\Hyp_o$. To illustrate in the context of the empirical example, suppose the asymptotic parallel trends assumption (\Cref{ass:parallel_trends}), with Togo as a control for Benin, were violated. If the path of pre-post log GDP per capita averages of Cameroon, a second control unit, coincided with that of Togo, the proposed test would be incapable of detecting this violation. In other words, the test fails to reject the null if the trend biases across control units are identical up to a $o(n^{-1/2})$ term.

Let $F_n^\omega$ be the distribution from which observed data are drawn, and let $\mathcal{F}_o^\omega \subset \mathcal{F}^\omega$ characterise the set of distributions over which the test is not consistent, i.e., the set of distributions of the data for which $\Hyp_o$ is violated but the test lacks power, i.e., ${\uptau_J^R}'(n\widetilde{S}_{\omega,n})^{-1/2}[\mathrm{I}_J - \widetilde{P}_{n}](n\widetilde{S}_{\omega,n})^{-1/2}\uptau_J^R=o(n^{-1/2})$.\footnote{The superscript $\omega $ in $F_o^\omega $ and $ \mathcal{F}^\omega$ emphasises the dependence on the weighting scheme in $\W^2$.} Thus, $\displaystyle 0 < \theta:= \lim_{n\rightarrow \infty} ||[\mathrm{I}_J - \widetilde{P}_{n}](n\widetilde{S}_{\omega,n})^{-1/2}\uptau_J^R||^2\cdot ||\sqrt{n}R_{\cdot n}||^2$ if $F_n^\omega \not\in \mathcal{F}_o^\omega $. Observe that distributions for which $\Hyp_o$ is true, namely $\mathcal{F}_{\Hyp_o}^\omega$ belong to $\mathcal{F}_o^\omega$. The following shows the validity and non-trivial power of the test over the set of distributions $\mathcal{F}^\omega\setminus\mathcal{F}_o^\omega$.

\begin{theorem}\label{Theorem:Test_Implication}
\Copy{Key:Theorem:Test_Implication}{Let \Cref{ass:Sampling,ass:technical_CLT,ass:bound_sn,ass:TTratio_lambda_n,ass:dominance_Y,ass:consis_Sn} hold, then (a) under $\Hyp_o$, $\widehat{Q}_{\omega,n} \xrightarrow{d} \chi_{J-1}^2$; (b) under $\Hyp_{an}$ and if $F_n^\omega \not\in \mathcal{F}_o^\omega $, $\widehat{Q}_{\omega,n} \xrightarrow{d} \chi_{J-1}^2(\theta)$; and (c) under $\Hyp_{a}$ and if $F_n^\omega \not\in \mathcal{F}_o^\omega $, $\widehat{Q}_{\omega,n} \rightarrow \infty$ uniformly in $\W^2$.}
\end{theorem}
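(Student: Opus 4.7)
The plan is to recast $\widehat{Q}_{\omega,n}$ as an idempotent quadratic form in a standardized vector and then combine the central limit theorem from \Cref{Theorem:AsympN_ext} with the variance-estimator consistency in \Cref{ass:consis_Sn}. The key algebraic step is the identity
\[
\widehat{ATT}_{\omega,\cdot T} - 1_J\widehat{ATT}_{\omega,T}^* = \bigl(\mathrm{I}_J - \widehat{A}_n\bigr)\bigl(\widehat{ATT}_{\omega,\cdot T} - 1_J ATT_{\omega,T}\bigr),
\]
where $\widehat{A}_n := 1_J(1_J'\widehat{S}_{\omega,n}^{-1}1_J)^{-1}1_J'\widehat{S}_{\omega,n}^{-1}$ satisfies $\widehat{A}_n 1_J = 1_J$. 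Substituting and expanding the sandwich yields
\[
\widehat{Q}_{\omega,n} = Z_n'(\mathrm{I}_J - \widehat{P}_n)Z_n,
\]
where $Z_n := \widehat{S}_{\omega,n}^{-1/2}(\widehat{ATT}_{\omega,\cdot T} - 1_J ATT_{\omega,T})$ and $\widehat{P}_n := \widehat{S}_{\omega,n}^{-1/2}1_J(1_J'\widehat{S}_{\omega,n}^{-1}1_J)^{-1}1_J'\widehat{S}_{\omega,n}^{-1/2}$ is a rank-one projector, so $\mathrm{I}_J - \widehat{P}_n$ is idempotent of rank $J-1$.

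Next, I would decompose $Z_n = \widehat{S}_{\omega,n}^{-1/2}\bigl(\widehat{ATT}_{\omega,\cdot T} - \E[\widehat{ATT}_{\omega,\cdot T}]\bigr) + \widehat{S}_{\omega,n}^{-1/2}R_{\cdot n}^{w,\psi}$ and treat the mean and noise terms separately. For part (a), under $\Hyp_o$ the bias satisfies $\|\sqrt{n}R_{\cdot n}^{w,\psi}\| = o(1)$; since $\|\widehat{S}_{\omega,n}^{-1/2}\| = \mathcal{O}_p(\sqrt{n})$ by \Cref{ass:bound_sn_ext,ass:consis_Sn}, the bias term is $o_p(1)$, and $\widetilde{S}_{\omega,n}$ coincides with $S_{\omega,n}$ up to $o(n^{-1})$. \Cref{Theorem:AsympN_ext} applied to the centered first summand combined with the Cram\'er--Wold device gives $Z_n \xrightarrow{d} \mathcal{N}(0,\mathrm{I}_J)$, and the continuous mapping theorem applied to $\widehat{P}_n$ together with the standard law that $Z'(\mathrm{I}_J - P)Z \sim \chi_{J-1}^2$ for any rank-one projector $P$ and $Z \sim \mathcal{N}(0,\mathrm{I}_J)$ yields $\widehat{Q}_{\omega,n} \xrightarrow{d} \chi_{J-1}^2$.

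For (b), under $\Hyp_{an}$ one has $\|\sqrt{n}R_{\cdot n}^{w,\psi}\| = \mathcal{O}(1)$ and bounded away from zero, so $\widehat{S}_{\omega,n}^{-1/2}R_{\cdot n}^{w,\psi}$ has a non-degenerate limiting mean; the condition $F_n^\omega \notin \mathcal{F}_o^\omega$ ensures $(\mathrm{I}_J - \widetilde{P}_n)\widetilde{S}_{\omega,n}^{-1/2}R_{\cdot n}^{w,\psi}$ does not vanish and its squared norm converges precisely to $\theta$, so the quadratic form is $\chi_{J-1}^2(\theta)$ in the limit. For (c), $\|\sqrt{n}R_{\cdot n}^{w,\psi}\| \to \infty$ under $\Hyp_a$, and the non-degeneracy condition preserves the order of the relevant projection, so $\widehat{Q}_{\omega,n}$ diverges in probability. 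The principal obstacle is the careful bookkeeping between $S_{\omega,n}$, which drives the CLT in \Cref{Theorem:AsympN_ext}, and $\widetilde{S}_{\omega,n}$, which $\widehat{S}_{\omega,n}$ consistently estimates: these coincide only up to the rank-one perturbation $R_{\cdot n}^{w,\psi}R_{\cdot n}^{w,\psi \prime}$, which is asymptotically negligible under $\Hyp_o$ but not under either alternative, so the CLT must be re-applied to the properly centered, $\widetilde{S}_{\omega,n}$-standardized vector uniformly in $\W^2$, and the non-degeneracy qualifier $F_n^\omega \notin \mathcal{F}_o^\omega$ must be threaded through the projection argument to guarantee that the shift is not annihilated.
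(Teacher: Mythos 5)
Your proposal is correct and follows essentially the same route as the paper's proof: the same algebraic identity reducing the centred vector to $(\mathrm{I}_J-\widehat{P}_n)\widehat{S}_{\omega,n}^{-1/2}(\widehat{ATT}_{\omega,\cdot T}-1_JATT_{\omega,T})$ with $\widehat{P}_n$ a rank-one projector, the same split into a CLT term handled via \Cref{Theorem:AsympN_ext} with the Cram\'er--Wold device and a bias term $(n\widetilde{S}_{\omega,n})^{-1/2}\sqrt{n}R_{\cdot n}$ whose order under $\Hyp_o$, $\Hyp_{an}$, and $\Hyp_a$ drives parts (a)--(c). Your explicit remark that $S_{\omega,n}$ and $\widetilde{S}_{\omega,n}$ differ by the outer product of the bias and therefore coincide only under the null is a point the paper relegates to a footnote, but it does not change the argument.
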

\noindent Simulations in \Cref{Sect:Sim} support the theory developed in this paper by demonstrating the strong performance of the DiD estimator and the identification test in terms of low bias, good empirical size control, and substantial power under alternatives.

\section{Empirical Analysis}\label{Sect:Emp}

This section concerns the empirical results. The first part briefly explains the economic model of the democracy-growth nexus, the second describes the data,  and the third applies the T-DiD to estimate the effect of democracy on Benin's GDP per capita.

\subsection{The economic model}
On the one hand, democracy is a useful indicator of stability in a country. Thus, democracy can be an indicator of low country risk from the perspective of investors. Investment is useful not just in maintaining depreciating capital but also in growing the capital stock. From standard macroeconomic models, e.g., the Solow model, output per worker increases in capital per worker. In this sense, one can expect democracy to drive economic growth. On the other hand, democracy involves elections at sometimes unpredictable frequencies, which can be a source of uncertainty since a change of government can lead to radical changes in policy direction. To the extent that such uncertainty undermines investor confidence, one can expect democracy to negatively impact economic growth. Thus, it may not be \textit{a priori} clear if and in which direction democracy generally impacts economic growth. The answer may therefore be country-specific.

\subsection{Data}\label{Sect:Data}
The democracy measure is sourced from the \textit{Varieties of Democracy} (V-Dem) project. The data comprise annual observations from 1960 to 2018. The measure of democracy is defined as the average of the five high-level democracy indices of the V-Dem project, namely the electoral democracy index, liberal democracy index, participatory democracy index, deliberative democracy index, and egalitarian democracy index. Each index is an aggregate index capturing specific dimensions of the concept of democracy and ranges from 0 (zero democracy) to 1 (full democracy). 

\pgfplotsset{my personal style/.style=
{font=\footnotesize},width=12cm,height=7.0 cm} 

\begin{figure}[h!]
\begin{center}
\caption{Democracy index}
\label{Fig:comp1}

\begin{tikzpicture}[]
\begin{axis}[my personal style,minor x tick num=1,
xlabel=Period,
ymin=,
ymax=1,
xticklabels={{},{},1960,1970,1980,1990,2000,2010,2020},
ylabel=,
title=,legend style={
at={(0.50,0.9)},
anchor=north east}]

\addplot[no markers,color=black,line width=1pt,solid,mark size=0.5pt,smooth] table[col sep=tab,x=Year,y=Benin] {figures/DemocracyIndexAll.txt};

\addplot[no markers,color=black,line width=1pt,densely dotted,mark size=0.5pt,smooth] table[col sep=tab,x=Year,y=Togo] {figures/DemocracyIndexAll.txt};

\addplot[draw=black] coordinates {(1990,0) (1990,0.8) };
  \addplot[draw=gray] coordinates {(1960,0.5) (2016,0.5) };
  
\legend{Benin, Togo, Threshold (0.5)}
\end{axis}
\end{tikzpicture}
\end{center}
{\footnotesize
\textit{Notes:} The plots provide the aggregate democracy indices for Benin and Togo from 1960 through 2018. The thin vertical line at 1990 marks the beginning of Benin's democratisation process, whereas the horizontal line marks the 0.5 threshold for democracy.
}
\end{figure}
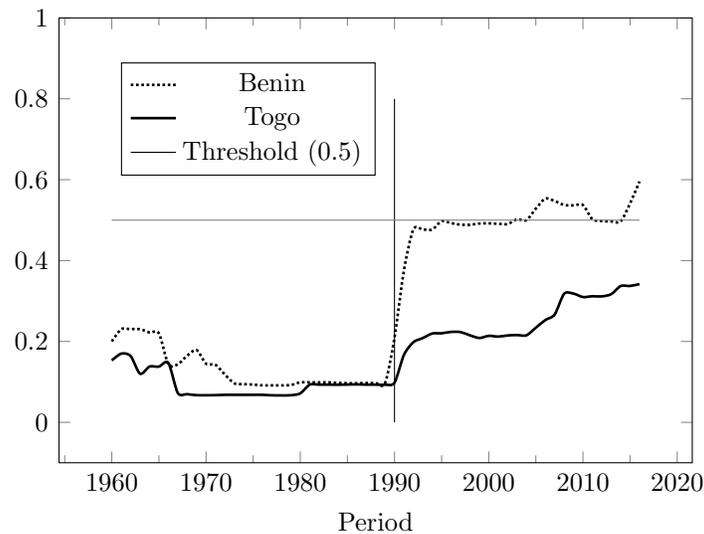

\Cref{Fig:comp1} plots the democracy index for Benin and Togo. As observed, Benin and Togo have similar levels of democracy until 1990; the average gap in democracy between the two countries is around 0.039, with Benin being the more democratic country. Starting in 1990, when the democratisation process begins, the two countries diverge; the gap now stands at approximately 0.269, seven times the pre-1990 gap, with Benin becoming the more democratic country. Overall, \Cref{Fig:comp1} indicates a 600-percent gain in Benin's democracy relative to Togo during the 1990s, 2000s, and 2010s. Although democracy in Togo improves since 1990, it does not reach a level that qualifies as democratic. This is why Togo serves as a control unit for Benin. For the empirical analyses, the main transition window considered is 1990-1992, and treatment remains an absorbing state up to 2018.

The outcome variable is GDP per capita of Benin and Togo \emph{in constant 2015 US dollars} from 1960 to 2018. The data are sourced from the World Bank Development Indicators.\footnote{The preferred measure of the outcome, PPP-adjusted GDP per capita, is not available for the 1960-1989 pre-treatment period.} \Cref{fig_gdp} shows the GDP per capita of Benin and Togo from 1960 to 2018. As can be observed, neither country's GDP per capita dominates until 1990 when Benin begins the democratisation process. After that, Benin's economic performance markedly surpasses Togo's.

\begin{figure}[!htbp]
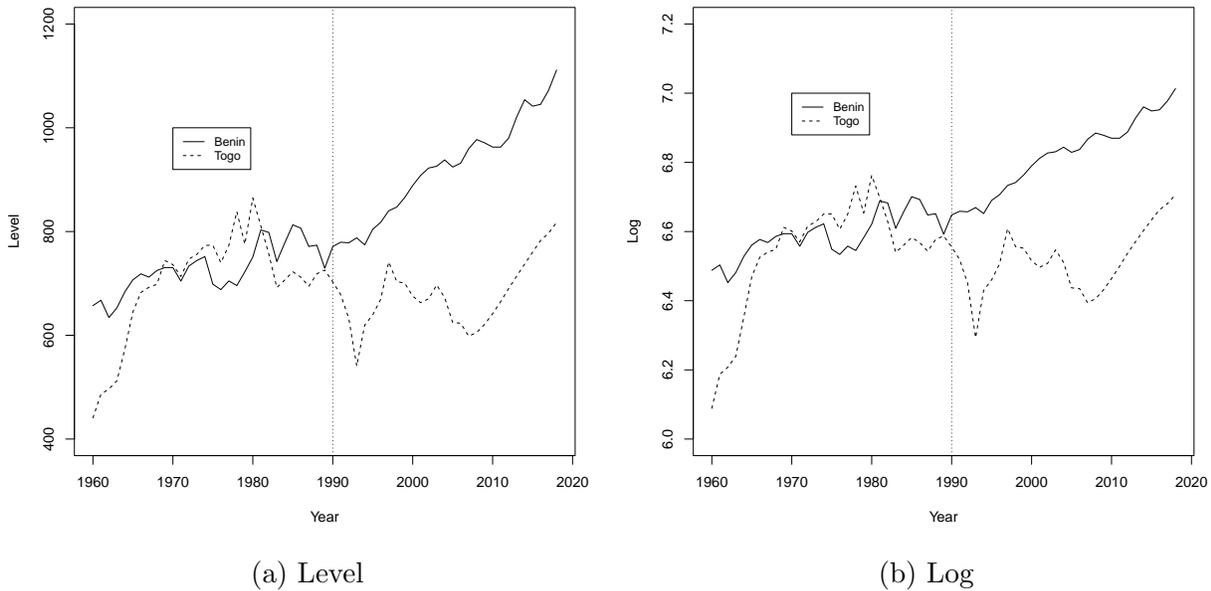

\centering 
\caption{GDP per Capita}
\begin{subfigure}{0.49\textwidth}
\centering
\includegraphics[width=1\textwidth]{figures/out_GDP_per_Capita}
\caption{Level}
\end{subfigure}
\begin{subfigure}{0.49\textwidth}
\centering
\includegraphics[width=1\textwidth]{figures/out_logGDP_per_Capita}
\caption{Log}
\end{subfigure}
\label{fig_gdp}
\begin{justify}
{\footnotesize
\textit{Notes:} The above plots show level and log GDP per capita, measured in constant 2015 US dollars. The vertical dotted lines at 1990 mark the beginning of Benin's democratisation process.
}
\end{justify}
\end{figure}

\subsection{The effect of democracy on growth}
\Cref{Tab:Empirical_Results} presents the empirical results. Column-wise, results in Panel A are based on log GDP per capita, while results in Panel B are based on GDP per capita in levels (in constant 2015 US dollars). For each panel, there are four specifications by the transition window taken as the transition period of Benin from an autocracy to a democracy: (1) 1990-1992, (2) 1990, (3) 1991, and (4) 1992. Row-wise, the first part of \Cref{Tab:Empirical_Results} corresponds to estimates of $ATT_{\omega,T}$ and the parameter $\beta_1$ on $X_{t-1}$, which captures persistence in the series $X_t$. The estimates are accompanied by heteroskedasticity- and auto-correlation-robust standard errors (in parentheses). The second part comprises typical diagnostic tests: (1) the Augmented Dickey-Fuller (ADF) test, (2) the Kwiatkowski-Phillips-Schmidt-Shin (KPSS) test, and (3) the Durbin-Watson (DW) test with respective null hypotheses (1) unit root, (2) trend stationarity, and (3) zero auto-correlation against a two-sided alternative. The uniform weighting scheme $ \omega_n(t) = T^{-1}\indicator{t\geq 1} - \T^{-1}\indicator{t\leq -1}  $ is applied throughout. Estimates of \( ATT_{\omega,T} \) are interpreted as the loss in GDP per capita that would have occurred had Benin not democratised.\footnote{For results in Panel A, note that $ \displaystyle \log(Y_t(1)) - \log(Y_t(0)) = -\log\Big(\frac{Y_t(0)}{Y_t(1)}\Big) = -\log\Big(\frac{Y_t(0)}{Y_t(1)} - 1 + 1\Big) = -\log\Big(\frac{Y_t(0) - Y_t(1)}{Y_t(1)} + 1\Big) \approx - \frac{Y_t(0) - Y_t(1)}{Y_t(1)} $ for small changes.}
 
\begin{table}[!htbp]
\centering
\small
\caption{The Effect of Democracy on Growth - Empirical Estimates}
\setlength{\tabcolsep}{2pt}
\begin{tabular}{@{}rccccccccc@{}}
\toprule
\multicolumn{1}{l}{} & \multicolumn{4}{c}{Panel A}  & & \multicolumn{4}{c}{Panel B}\\ 
\multicolumn{1}{l}{} & \multicolumn{4}{c}{Log GDP per capita}  & & \multicolumn{4}{c}{GDP per capita}\\ \cmidrule(l){2-5} \cmidrule(r){7-10}
\multicolumn{1}{l|}{Transition Window} & \multicolumn{1}{c}{ } & \multicolumn{1}{c}{ } & \multicolumn{1}{c}{ } & \multicolumn{1}{c}{ } & $\quad$ & \multicolumn{1}{c}{ } & \multicolumn{1}{c}{ } & \multicolumn{1}{c}{ } & \multicolumn{1}{c}{ } \\ 
 \multicolumn{1}{l|}{$(t=0)$}  & \multicolumn{1}{c}{1990-1992} & \multicolumn{1}{c}{1990} & \multicolumn{1}{c}{1991} & \multicolumn{1}{c}{1992} & $\quad$ & \multicolumn{1}{c}{1990-1992} & \multicolumn{1}{c}{1990} & \multicolumn{1}{c}{1991} & \multicolumn{1}{c}{1992} \\ \midrule
\multicolumn{1}{l|}{Estimate} & & & & & & & & & \\ \cmidrule(r){1-1}
\multicolumn{1}{r|}{$ATT_{\omega,T}$}
&0.064  &0.078  &0.074  &0.059  &       &46.586 &56.536 &54.101 &42.786 \\ 
\multicolumn{1}{r|}{ } 
&(0.017)   &(0.018)   &(0.019)   &(0.017)   &        &(16.178)  &(14.886)  &(16.040)   &(16.076)  \\ 
\multicolumn{1}{r|}{p-val} 
&0.000     &0.000     &0.000     &0.000     &      &0.004 &0.000     &0.001 &0.008 \\
\multicolumn{1}{r|}{2015 BMK} 
&  &  &  &  &        &  4.5\% &5.4\% &5.2\% &4.1\% \\ 
\multicolumn{1}{r|}{ } 
&  &  &  &  &  &  &  &  &  \\
\multicolumn{1}{r|}{ $X_{t-1}$} 
&0.803 &0.775 &0.775 &0.802 &      &0.835 &0.806 &0.805 &0.835 \\ 
\multicolumn{1}{r|}{ } 
&(0.050)      &(0.060)      &(0.060)      &(0.049)     &          &(0.060)      &(0.060)      &(0.064)     &(0.059)     \\   
\midrule
\multicolumn{1}{l|}{Diagnostic Tests} & & & & & & & & & \\ \cmidrule(r){1-1}
\multicolumn{1}{r|}{ADF p-val} &0.028 &0.025 &0.035 &0.043 &      &0.023 &0.034 &0.033 &0.024 \\  
\multicolumn{1}{r|}{KPSS p-val} &0.100  &0.100  &0.100  &0.100  &     &0.100  &0.100  &0.100  &0.100  \\  
\multicolumn{1}{r|}{DW p-val} &0.159 &0.977 &0.798 &0.153 &      &0.432 &0.822 &0.906 &0.441 \\
\bottomrule
\end{tabular}\label{Tab:Empirical_Results}
\begin{justify}
{\footnotesize
    \textit{Notes:} GDP per capita is measured in constant 2015 US dollars. All estimates are based on quasi-differenced $X_t$ following \eqref{eqn:ATT_lin_mod2} in the supplement. All standard errors (in parentheses) are heteroskedasticity and auto-correlation robust using the \citet{newey-west-1987-simple} procedure. ADF is the Augmented Dickey-Fuller test of a unit root null hypothesis. KPSS denotes the Kwiatkowski-Phillips-Schmidt-Shin (KPSS) test of a trend stationarity null hypothesis. DW is the Durbin-Watson test of a zero auto-correlation null hypothesis against the two-sided alternative.}
\end{justify}
\end{table}

\begin{table}[!htbp]
\centering
\small
\caption{Empirical Estimates with Transition Window: 1990-1992 }
\setlength{\tabcolsep}{2pt}
\begin{tabular}{@{}rccccccccc@{}}
\toprule
\multicolumn{1}{l}{} & \multicolumn{4}{c}{Panel A}  & & \multicolumn{4}{c}{Panel B}\\ 
\multicolumn{1}{l}{} & \multicolumn{4}{c}{Log GDP per capita}  & & \multicolumn{4}{c}{GDP per capita}\\ \cmidrule(l){2-5} \cmidrule(r){7-10}
\multicolumn{1}{l|}{Transition Window} & \multicolumn{1}{c}{ } & \multicolumn{1}{c}{ } & \multicolumn{1}{c}{ } & \multicolumn{1}{c}{ } & $\quad$ & \multicolumn{1}{c}{ } & \multicolumn{1}{c}{ } & \multicolumn{1}{c}{ } & \multicolumn{1}{c}{ } \\ 
 \multicolumn{1}{l|}{  }  & \multicolumn{1}{c}{ T-DiD$^*$ } & \multicolumn{1}{c}{S-DiD} & \multicolumn{1}{c}{SC} & \multicolumn{1}{c}{ASC} & $\quad$ & \multicolumn{1}{c}{ T-DiD$^*$ } & \multicolumn{1}{c}{S-DiD} & \multicolumn{1}{c}{SC} & \multicolumn{1}{c}{ASC} \\ \midrule
\multicolumn{1}{l|}{Estimate} & & & & & & & & & \\ \cmidrule(r){1-1}
\multicolumn{1}{r|}{$ATT_{\omega,T}$}
&0.042       &0.249       &0.090       &0.405 &      &32.493      &101.582     &-43.968     &254.327     \\ 
\multicolumn{1}{r|}{ } 
&(0.011)    &(0.282)    &           &           & &(10.223)   &(1519.637) &           &           \\  
\multicolumn{1}{r|}{p-val} 
&0.000 &      &0.700 &0.417 & &0.001 &      &0.800 &0.316 \\ 
\multicolumn{1}{r|}{2015 BMK} 
&         &         &         &      &   &3.1\%   &9.8\%   &-4.2\%  &24.4\%  \\ \midrule
\multicolumn{1}{l|}{Identification test} 
&  &  &  &  &      &   &    &   & \\ \cmidrule(r){1-1}
\multicolumn{1}{r|}{p-val}&0.882                     &                          &                          &                      &    &0.938                     &                          &                          &                          \\
\bottomrule
\end{tabular}\label{Tab:Empirical_Results_2}
\begin{justify}
{\footnotesize
    \textit{Notes:} Methods compared to the T-DiD include the Synthetic difference-in-differences (S-DiD) of \citet{arkhangelsky-etal-2021}, the Synthetic Control (SC) of \citet{abadie-gardeazabal-2003}, and the Augmented Synthetic Control (ASC) of \citet{ben-feller-rothstein-2021}. The S-DiD, SC, and ASC p-values are computed from permutations. The transition window of 1990-1992 is maintained for all results above. T-DiD uses Togo and Cameroon as controls following \Cref{Prop:Eff_ATT}. }
\end{justify}
\end{table}

\begin{table}[htbp]
\centering
\caption{ Synthetic Control Weights with Transition Window: 1990-1992 }
\begin{tabular}{l|ccc|ccc}
\toprule
 & \multicolumn{3}{c|}{Log GDP per capita} & \multicolumn{3}{c}{Level GDP per capita} \\ \cmidrule(l){2-4} \cmidrule(r){5-7}
Country & S-DiD & SC & ASC & S-DiD & SC & ASC \\ \midrule
Algeria                  & 0.056 & 0.035 & 0.039 & 0.038 & 0.020 & 0.014 \\
Burundi                  & 0.068 & 0.068 & 0.222 & 0.060 & 0.039 & 0.304 \\
Cameroon                 & 0.115 & 0.048 & 0.180 & 0.057 & 0.029 & 0.127 \\
Central African Republic & 0.060 & 0.059 & 0.032 & 0.061 & 0.034 & -0.001 \\
Chad                     & 0.057 & 0.057 & -0.009 & 0.064 & 0.034 & -0.005 \\
Gabon                    & 0.000 & 0.023 & -0.033 & 0.000 & 0.011 & -0.005 \\
Kenya                    & 0.044 & 0.050 & -0.021 & 0.054 & 0.030 & 0.000 \\
Libya                    & 0.000 & 0.016 & -0.016 & 0.000 & 0.007 & 0.001 \\
Madagascar               & 0.054 & 0.056 & 0.006 & 0.063 & 0.033 & 0.030 \\
Malawi                   & 0.020 & 0.067 & -0.022 & 0.059 & 0.037 & -0.002 \\
Nigeria                  & 0.000 & 0.043 & -0.010 & 0.048 & 0.026 & -0.001 \\
Rwanda                   & 0.075 & 0.067 & 0.051 & 0.060 & 0.038 & 0.129 \\
Seychelles               & 0.034 & 0.027 & -0.039 & 0.013 & 0.015 & -0.005 \\
Sierra Leone             & 0.079 & 0.050 & 0.270 & 0.058 & 0.030 & 0.065 \\
Somalia                  & 0.020 & 0.071 & -0.041 & 0.061 & 0.458 & -0.003 \\
Sudan                    & 0.012 & 0.055 & -0.101 & 0.060 & 0.033 & -0.010 \\
Tanzania                 & 0.108 & 0.057 & 0.179 & 0.068 & 0.033 & 0.148 \\
Togo                     & 0.060 & 0.056 & 0.048 & 0.058 & 0.033 & 0.026 \\
Zambia                   & 0.067 & 0.049 & 0.082 & 0.063 & 0.030 & 0.083 \\
Zimbabwe                 & 0.071 & 0.045 & 0.183 & 0.055 & 0.028 & 0.106 \\
\midrule
\# Controls              & 17 & 20 & 11 & 18 & 20 & 11 \\
\bottomrule
\end{tabular}\label{Tab:Empirical_Results_3}
\begin{justify}
 \footnotesize
{Synthetic Control Weights: S-DiD, SC, and ASC estimators for Log and Level GDP per capita. Entries are reported to three decimal places. The last row counts the number of countries with non-zero weights. Estimated non-zero S-DiD pre-treatment period weights are highly concentrated in a small subset of years. For log GDP, the estimated lambda weights are 0.045 in 1960, 0.029 in 1963, 0.012 in 1978, 0.011 in 1979, and 0.903 in 1989. For GDP in levels, the only non-zero pre-treatment weight is 1989, with estimated weight 1.0. In both specifications, all other pre-treatment years receive zero estimated weight.}   
\end{justify}
\end{table}

From Panel A, one observes a fairly narrow range of the $ATT_{\omega,T}$ estimates by transition window. One observes that had Benin not democratised, her economy would have been $5.9\% - 7.8\% $ smaller on average over the period spanning the early 1990s up to 2018. To have a sense of the economic significance of these effects, consider comparisons with the average World GDP per capita growth rate over the 1991-2018 period.\footnote{The average World GDP per capita growth rate over the 1991-2018 period is 3.597\% -- source: authors' calculation using World Bank data sourced from \url{www.macrotrends.net}.} The $ATT_{\omega,T}$ estimates are economically significant and interesting  -- Benin's economy would have, on average over the period spanning the early 1990s up to 2018, ``shrunk" by a factor of $1.6 - 2.2$ times the average World GDP per capita growth rate over the 1991-2018 period had she not democratised. Moreover, the estimates are statistically significant at all conventional levels.

Panel B uses Benin's GDP per capita (in constant 2015 US dollars) of $ 1041.653 $ as a benchmark for interpreting results on the Level GDP per capita.\footnote{The corresponding row is labelled ``2015 BMK".} Thus, one observes that Benin's annual average GDP per capita over the period spanning the early 1990s through 2018 as a percentage of her $2015$ GDP per capita (in 2015 constant USD) would have been $4.1\% - 5.4\% $ smaller had she not democratised. These values are economically significant and interesting as the margin is economically non-trivial.\footnote{Level $ATT_{\omega,T}$ estimates are expressed in 2015 USD in Benin and hence are generally dependent on 2015 price levels and exchange rates in Benin. To further aid interpretation, taking the 2015 ratio of Benin's Purchasing Power Parity (PPP) adjusted GDP per capita in constant 2021 USD to the GDP per capita in constant 2015 USD of $2.9$ (source: World Bank Indicators), assuming this ratio is stable over the post-treatment period, the 2015 constant USD PPP-adjusted $ATT_{\omega,T}$ estimates of Panel B read: $ 131.9819, \ 163.4295, \  155.8257, \text{ and } 120.6922 $.} Moreover, all $ATT_{\omega,T}$ level estimates are statistically significant at all conventional levels. Thus, across different specifications of the transition window, the effects of democratisation on economic growth are positive and statistically significant. That the range of estimates $4.1\% - 5.4\% $ is narrow indicates the robustness of the results to the specification of the transition window.

The estimates on $X_{t-1}$ are indicative of a high level of persistence in the series $X_t$ which is the difference between the annual GDP per capita of Benin and Togo. The Augmented Dickey-Fuller (ADF) test suggests the unit-root null hypothesis is rejected at the $ 5\%$ level across all specifications in both panels A and B. The Kwiatkowski-Phillips-Schmidt-Shin (KPSS) test of trend stationarity suggests a failure to detect trend non-stationarity at the $5\%$ level, while the Durbin-Watson test indicates a failure to reject the null of zero auto-correlation (against the two-sided alternative) at any conventional nominal level.\footnote{The DW test also serves as a specification test in this setting. Suppose the true model had an MA(1) error instead, i.e., $ X_t = \beta_0 + \widetilde{ATT}_n^{w,\psi}  \mathbbm{1}\{t \geq 1\} + \mathcal{E}_t + \beta_1 \mathcal{E}_{t-1} $, then the resulting errors from the ``mis-specified" model (with the lagged $X_t$ above ) would be auto-correlated. This is the violation the DW test is apt to detect.} Taken together, the diagnostic tests suggest the reliability of the T-DiD results in \Cref{Tab:Empirical_Results}.

To compare the proposed \( T \)-DiD estimator with alternative methods in this empirical setting, the S-DID, SC, and ASC results reported in \Cref{Tab:Empirical_Results_2} use a donor pool comprising \emph{all} African countries \emph{without} missing constant 2015 USD GDP per capita data, whose average democracy index remained below the 0.5 threshold throughout the 1960--2018 period. The resulting donor pool consists of 20 countries, along with the corresponding SC weights, as detailed in \Cref{Tab:Empirical_Results_3}. 1990-1992 is maintained as the transition window. The efficient T-DiD following \Cref{Prop:Eff_ATT} with Cameroon and Togo as controls is compared with the Synthetic DiD of \citet{arkhangelsky-etal-2021}, the Synthetic Control method of \citet{abadie-gardeazabal-2003}, and the Augmented Synthetic Control method of \citet{ben-feller-rothstein-2021}. Relative to the T-DiD, these estimates are not precisely estimated. This is not surprising as the minimum attainable p-value (or test size under the null) in a permutation test applied to the SC and ASC is $1/21 \approx 0.0476 $.\footnote{Besides, the discussion in \Cref{SubSect:Persistence} suggests that the SC estimators without adjusting for auto-correlation in the outcome and its fitted counterfactual may be targeting the long-run (multiplier) effect instead.}

\section{Conclusion}\label{Sect:Con}

While being instrumental in answering the empirical question at hand, the T-DiD estimator equally contributes to the larger applied and theoretical literature on causal inference in large-$\T,T$ settings with a fixed number of cross-sectional units. The $ATT_{\omega,T}$ parameter of interest indexed over a space of convex weighting schemes is asymptotically identified under identification conditions where even both asymptotic parallel trends and limited anticipation identification conditions may be individually violated. Under NED—a fairly general form of temporal dependence—and weak dominance regularity conditions, the T-DiD estimator is asymptotically normal, thereby paving the way for standard inference using, e.g., $t$-tests.

This paper further extends the baseline theory to accommodate possible intricacies underpinned by stochastic and deterministic trends in $X_t:= Y_{1,t} - Y_{0,t} $ and multiple control units. This paper exploits over-identification in the multiple-control setting to propose an identification test with more desirable statistical properties relative to the pre-trends test. For example, the proposed identification test can detect violations of identification in post-treatment periods, unlike pre-trends tests. 

Sustained interest in the economic gains of democracy spawns a vast array of interesting contributions that fail to reach a clear consensus. This issue is largely due to the immense heterogeneity in effects, which renders the generalisability of results quite difficult. This paper departs from the extant literature by adopting a novel perspective -- estimating country-specific effects. This paper equally meets the methodological challenge that this new approach engenders --- proposing the T-DiD, which is adaptable to as few as a single treated unit and a single control unit with large pre- and post-treatment periods. The estimated effect of democracy on economic growth in this paper is not only statistically significant at all conventional levels; it is also economically interesting. Benin's economy would have been $5.9\%-7.8\% $ smaller on average had she not democratised. 

\section*{Acknowledgements}
\noindent This paper benefited from the invaluable feedback of Al-mouksit Akim, Brantly Callaway, Alecia Cassidy, Traviss Cassidy, Daniel Henderson, Jonathan Hall, Feiyu Jiang, Eric Kadio, D\'esir\'e K\'edagni, Junsoo Lee, Xiaochun Liu, Tigran Melkonyan, and Julius Owusu. This manuscript benefited from the use of ChatGPT (OpenAI, GPT-5.3) for language editing and presentation.

\section*{Declaration of Interests}
\noindent The authors have no interests to declare.

\section*{Data Availability Statement}
\noindent Data used in the study are publicly available. The replication package, including the dataset, is available on author E.S.T's website.

\begingroup
\setstretch{1.25}
\setlength\bibitemsep{0.5 pt}
\printbibliography
\endgroup
\end{refsection}

\newpage
\setcounter{page}{1}

\appendix
\renewcommand{\thetable}{S.\arabic{table}}
\renewcommand{\thefigure}{S.\arabic{figure}}
\renewcommand{\thesection}{S.\arabic{section}}

\begin{center}
    {\large \bf Supplementary Material} \\[0.2cm]
    
    {\Large Difference-in-differences with as few as two cross-sectional units\\ -- A new perspective to the democracy-growth debate} \\[0.2cm]
   
    {\large Gilles Boevi Koumou \quad \quad Emmanuel Selorm Tsyawo}
    \hrule height 1pt
\end{center}

\vspace{1cm}
The supplementary material provides detailed proofs and extensions of the theoretical results presented in the main text. Specifically, \Cref{App_Sect:Proofs_main} contains proofs of all theoretical results in the main text. \Cref{App_Sect:Extensions} generalises the baseline model in \eqref{eqn:ATT_lin_mod} to accommodate both stochastic and deterministic trends, and extends the two-unit framework to settings with multiple treated units. Supporting lemmata and propositions referenced in the main text are provided in \Cref{App_Sect:Supp_Lem,App_Sect:Useful_Prop}, respectively. Further discussions on related literature, pre-trends testing adapted to the baseline setting of this paper, and the empirical setting are presented in \Cref{App_Sect:Discussions}, while simulation results appear in \Cref{Sect:Sim}.

\begin{refsection}
\section{Proofs of main results}\label{App_Sect:Proofs_main}

\subsection{Proof of \Cref{Prop:PT_DGP}}

  \begin{namedproposition}{\ref{Prop:PT_DGP}}
      \Paste{Key:Prop:PT_DGP}
  \end{namedproposition}

\textbf{Proof:} 

The assumption $\E[e_t\mid D=d]=0, \ d \in \{0,1\} $ and \eqref{eqn:PT_DGP1} imply 
\begin{align*}
    \E[Y_t(0)\mid D=d] = \alpha_0 + (\alpha_1-\alpha_0)d + \varphi(t) + \nu_{0}(t,d;\eta).
\end{align*}
Thus, 
 \[
 \E[Y_t(0)-Y_\tau(0)\mid D=d] = \varphi(t) - \varphi(\tau) + \nu_{0}(t,d;\eta) - \nu_{0}(\tau,d;\eta)  
 \]
for any pair $(\tau,t) \in [-\T] \times [T] $, and 
\begin{align*}
    \E[Y_t(0)-Y_\tau(0)\mid D=1] - \E[Y_t(0)-Y_\tau(0)\mid D=0] &= \big( \nu_{0}(t,1;\eta) - \nu_{0}(\tau,1;\eta) \big) - \big( \nu_{0}(t,0;\eta) - \nu_{0}(\tau,0;\eta) \big) \\
    &= \frac{1}{4}\big| 1 + 0.5\sin(t) \big|t^{-\eta} + \frac{1}{4}\big| 1 + 0.5\sin(-\tau) \big|(-\tau)^{-\eta}.
\end{align*}It then follows that
\begin{align*}
    \frac{1}{\T T}\sum_{-\tau=1}^{\T}\sum_{t=1}^T &\big( \E[Y_t(0)-Y_\tau(0)\mid D=1] - \E[Y_t(0)-Y_\tau(0)\mid D=0] \big) \\
    &= \frac{1}{4T} \sum_{t=1}^T \big| 1 + 0.5\sin(t) \big|t^{-\eta} + \frac{1}{4\T }\sum_{-\tau=1}^{\T} \big| 1 + 0.5\sin(-\tau) \big|(-\tau)^{-\eta}.
\end{align*}

Since \( \big| 1 + 0.5\sin(t) \big| \leq (3/2) \, \forall \, t\geq 1 \), then for $\eta>0$,
\begin{align*}
    \Bigg| \frac{1}{T} \sum_{t=1}^T \big| 1 + 0.5\sin(t) \big|t^{-\eta}\Bigg| &\leq \frac{3}{2T}\sum_{t=1}^T t^{-\eta}\\
    &\leq \frac{3}{2T}\Big(1+\int_{1}^{T} t^{-\eta} dt\Big)\\
   &= \frac{3}{2}\begin{cases} 
      T^{-1} \Big(1 - \frac{1}{1-\eta}\Big) + T^{-\eta}\frac{1}{(1-\eta)}, & \quad \eta \neq 1  \\
      T^{-1} + T^{-1}\log(T), & \quad \eta = 1
   \end{cases}.
\end{align*}
\noindent By the same token, 
\begin{align*}
    \Bigg| \frac{1}{\T }\sum_{-\tau=1}^{\T} \big| 1 + 0.5\sin(-\tau) \big|(-\tau)^{-\eta} \Bigg| \leq \frac{3}{2}\begin{cases} 
      \T^{-1} \Big(1 - \frac{1}{1-\eta}\Big) + \T^{-\eta}\frac{1}{(1-\eta)}, & \quad \eta \neq 1  \\
      \T^{-1} + \T^{-1}\log(\T), & \quad \eta = 1
   \end{cases} 
\end{align*}

Since \( T^{-1}\log(T) = o(T^{-\bar{\eta}}) \, \forall \, \bar{\eta} < 1 \) and $ \eta > 1/2$, it follows by the triangle inequality that 
\[
 \frac{1}{\T T}\sum_{-\tau=1}^{\T}\sum_{t=1}^T \big( \E[Y_t(0)-Y_\tau(0)\mid D=1] - \E[Y_t(0)-Y_\tau(0)\mid D=0] \big) = \mathcal{O}\Big((\T \wedge T)^{-(1/2 + (\dot{\eta} -1/2))}\Big), \ \dot{\eta} \in (1/2, \, 1).
 \]
 \noindent Set $ \gamma = \dot{\eta} - (1/2) $ and observe that $\gamma\in (0,1/2)$, which implies $\gamma>0$. This completes the proof.
 
\qed

\subsection{Proof of \Cref{Theorem:Identification}}
\begin{namedtheorem}{\ref{Theorem:Identification}}
    \Paste{Key:Theorem:Identification}
\end{namedtheorem}

\textbf{Proof:}
\paragraph{Part (a):} Recall \( R_{\tau,t} = \E[Y_t(0)-Y_\tau(0)\mid D=1] - \E[Y_t(0)-Y_\tau(0)\mid D=0] - \E[Y_\tau(1)-Y_\tau(0)\mid D=1] \), then for any pair $(\tau,t) \in [-\T] \times [T] $,
\begin{align}\label{eqn:ATT_decomp}
    ATT(t) =& \E[Y_t(1)-Y_t(0)\mid D=1] \nonumber \\
    =& \E[Y_t(1)-Y_\tau(0)\mid D=1] - \E[Y_t(0)-Y_\tau(0)\mid D=1] \nonumber \\
    =& \E[Y_t(1)-Y_\tau(0)\mid D=1] - \E[Y_t(0)-Y_\tau(0)\mid D=0] \nonumber \\
    &- \big(\E[Y_t(0)-Y_\tau(0)\mid D=1] - \E[Y_t(0)-Y_\tau(0)\mid D=0]\big) \nonumber \\
    =& \E[Y_t(1)-Y_\tau(1)\mid D=1] - \E[Y_t(0)-Y_\tau(0)\mid D=0] \nonumber \\
    & - \big\{\E[Y_t(0)-Y_\tau(0)\mid D=1] - \E[Y_t(0)-Y_\tau(0)\mid D=0] - \E[Y_\tau(1)-Y_\tau(0)\mid D=1]\big\} \nonumber \\
    =& \E[Y_{1,t} - Y_{1,\tau}] - \E[Y_{0,t} - Y_{0,\tau}] \nonumber \\
    & - \big(\E[Y_t(0)-Y_\tau(0)\mid D=1] - \E[Y_t(0)-Y_\tau(0)\mid D=0] - \E[Y_\tau(1)-Y_\tau(0)\mid D=1]\big) \nonumber \\
    =&: ATT_{\tau,t} - R_{\tau,t}.
\end{align}The first equality follows by definition; the second subtracts and adds $\E[Y_\tau(0)\mid D=1]$; the third subtracts and adds $\E[Y_t(0)-Y_\tau(0)\mid D=0]$; the fourth subtracts and adds $ \E[Y_\tau(1)\mid D=1] $; the fifth uses the fact that treated potential outcomes are observed for the treated unit and untreated potential outcomes are observed for the untreated unit; and the last line follows by definition.

Since pre-treatment and post-treatment weights each sum to one by construction, it follows from the above decomposition of $ATT(t)$ that
\begin{align*}
    ATT_{\omega,T}:&= \sum_{t=1}^{T} w_T(t) ATT(t) = \Big(\sum_{-\tau=1}^{\T } \psi_{\T }(-\tau) \Big) \sum_{t=1}^{T} w_T(t) ATT(t) \\
    &= \sum_{-\tau=1}^{\T }\sum_{t=1}^T \psi_{\T }(-\tau) w_T(t) ATT_{\tau,t} - \sum_{-\tau=1}^{\T }\sum_{t=1}^T \psi_{\T }(-\tau) w_T(t) R_{\tau,t}.
\end{align*}
By the triangle inequality and \Cref{ass:parallel_trends},
\begin{align*}
    \Big|\sum_{-\tau=1}^{\T }\sum_{t=1}^T \psi_{\T }(-\tau) & w_T(t) \Big( \E[Y_t(0)-Y_\tau(0)\mid D=1] - \E[Y_t(0)-Y_\tau(0)\mid D=0] \Big) \Big|\\ 
    \leq & \Big(T \sup_{w\in \W }\max_{ t \in [T] }w_T(t)\Big) \times \Big(\T\sup_{\psi \in \W }\max_{ \tau \in [-\T] }\psi_{\T }(-\tau)\Big)\\
    &\times \Big|\frac{1}{\T T}\sum_{-\tau=1}^{\T}\sum_{t=1}^T \Big( \E[Y_t(0)-Y_\tau(0)\mid D=1] - \E[Y_t(0)-Y_\tau(0)\mid D=0] \Big) \Big|\\
    =& \, \mathcal{O}\big((\T \wedge T)^{-(1/2+\gamma)}\big).
\end{align*}
\noindent In a similar vein, 
\begin{align*}
    \Big|\sum_{-\tau=1}^{\T }\psi_{\T }(-\tau) \big( \E[Y_\tau(1)-Y_\tau(0)\mid D=1] \big) \Big| &\leq \Big(\T \sup_{\psi \in \W }\max_{\tau\in [-\T]}\psi_{\T }(-\tau)\Big)\Big|\frac{1}{\T }\sum_{-\tau=1}^{\T} \big( \E[Y_\tau(1)-Y_\tau(0)\mid D=1] \big) \Big| \\
    &= \, \mathcal{O}(\T ^{-(1/2+\delta)})
\end{align*} holds by \Cref{ass:limited_anticip}. Thanks to the above and \Cref{ass:TTratio_lambda_n},
\begin{align}\label{eqn:wR_t_uniform_o}
    \displaystyle R_n^{w,\psi}: = \sum_{-\tau=1}^{\T }\sum_{t=1}^T \psi_{\T }(-\tau) w_T(t)R_{\tau,t}  &= \mathcal{O}\big((\T \wedge T)^{-(1/2+\gamma)}\big) +  \mathcal{O}(\T ^{-(1/2+\delta)}) \nonumber \\
    &= \mathcal{O}(n^{-(1/2+\gamma\wedge\delta)}) \text{ uniformly in } \{w,\psi\} \in \W^2.
\end{align}
\noindent To see why \eqref{eqn:wR_t_uniform_o} holds, note that
\begin{align*}
    \mathcal{O}\big((\T \wedge T)^{-(1/2+\gamma)}\big) +  \mathcal{O}(\T ^{-(1/2+\delta)}) &\asymp \frac{1}{(\T \wedge T) ^{(1/2+\gamma)}} + \frac{1}{\T ^{(1/2+\delta)}}\\
    &\lesssim \frac{1}{\T ^{(1/2+\gamma\wedge\delta)} \wedge T^{(1/2+\gamma)} } \lesssim \frac{1}{\T ^{(1/2+\gamma\wedge\delta)} \wedge T^{(1/2+\gamma\wedge\delta)} }\\
    &= \Big(\frac{1}{\T \wedge T}\Big)^{(1/2+\gamma\wedge\delta)} = \Big(\frac{n^{-1}}{(1-\lambda_n)\wedge \lambda_n}\Big)^{(1/2+\gamma\wedge\delta)}\\
    &\asymp n^{-(1/2+\gamma\wedge\delta)} 
\end{align*}
\noindent uniformly in $n$ subject to \Cref{ass:TTratio_lambda_n}.

Averaging both sides of \eqref{eqn:ATT_decomp} and applying the foregoing gives
\begin{align}\label{eqn:ATT_identif}
    \sum_{t=1}^T w_T(t) ATT(t) &= \sum_{-\tau=1}^{\T }\sum_{t=1}^T \psi_{\T }(-\tau) w_T(t)ATT_{\tau,t} - \sum_{-\tau=1}^{\T }\sum_{t=1}^T \psi_{\T }(-\tau) w_T(t)R_{\tau,t} \nonumber\\ 
    &= \widetilde{ATT}_n^{w,\psi} - R_n^{w,\psi} \nonumber\\
    &=\widetilde{ATT}_n^{w,\psi} + \mathcal{O}(n^{-(1/2+\gamma\wedge\delta)})
\end{align} uniformly in $ \{w,\psi\} \in \W^2  $. $\widetilde{ATT}_n^{w,\psi}$ is identified from the data sampling process. Identification therefore holds \emph{modulo} $ R_n^{w,\psi} = \mathcal{O}(n^{-(1/2+\gamma\wedge\delta)}) $.

\paragraph{Part (b):} Fix $w\in\W $ and observe that since pre-treatment weights must sum to one and post-treatment weights must also sum to one,
\begin{align*}
    \sum_{-\tau=1}^{\T }\sum_{t=1}^T\psi_{\T }(-\tau) w_T(t)ATT(t) &= \Big(\sum_{-\tau=1}^{\T }\psi_{\T }(-\tau) \Big) \sum_{t=1}^{T}w_T(t)ATT(t) \\ 
    &= \sum_{t=1}^{T}w_T(t)ATT(t) =: ATT_{\omega,T}
\end{align*}
\noindent, i.e., $ATT_{\omega,T}$ does not depend on $\psi\in\W$ for $\tau\leq -1$. Thus, for any $ \{\psi,\phi\} \in \W^2 $,
\begin{align*}
    \widetilde{ATT}_n^{w,\psi} - \widetilde{ATT}_n^{w,\phi} =& \sum_{-\tau=1}^{\T }\sum_{t=1}^Tw_T(t)\big(\psi_{\T }(-\tau)ATT_{\tau,t} - ATT(t) + ATT(t) - \phi_{\T }(-\tau)ATT_{\tau,t}\big)\\
    =& \sum_{-\tau=1}^{\T }\sum_{t=1}^T\psi_{\T }(-\tau) w_T(t)\big(ATT_{\tau,t} - ATT(t)\big) \\
    &- \sum_{-\tau=1}^{\T }\sum_{t=1}^T\phi_{\T }(-\tau) w_T(t)\big(ATT_{\tau,t} - ATT(t)\big) \\
    =& \sum_{-\tau=1}^{\T }\sum_{t=1}^T\psi_{\T }(-\tau) w_T(t)R_{\tau,t} - \sum_{-\tau=1}^{\T }\sum_{t=1}^T\phi_{\T }(-\tau) w_T(t)R_{\tau,t}\\
    =& R_n^{w,\psi} - R_n^{w,\phi}.
\end{align*}

\noindent The conclusion follows from the triangle inequality and \eqref{eqn:wR_t_uniform_o} in Part (a) above.

\qed

\subsection{Proof of \Cref{Theorem:AsympN}}

\begin{namedtheorem}{\ref{Theorem:AsympN}}
    \Paste{Key:Theorem:AsympN}
\end{namedtheorem}
\textbf{Proof:}

Under the conditions of \Cref{Theorem:Identification}, \eqref{eqn:ATT_decomp}, and \eqref{eqn:ATT_identif}, 
\begin{align*}
    ATT_{\omega,T}
    &= \sum_{-\tau=1}^{\T }\sum_{t=1}^T\psi_{\T }(-\tau) w_T(t)\Big( \E[Y_{1,t} - Y_{1,\tau}] - \E[Y_{0,t} - Y_{0,\tau}] \Big) -  \sum_{-\tau=1}^{\T }\sum_{t=1}^T\psi_{\T }(-\tau) w_T(t)R_{\tau,t}\\
    &= \sum_{t=-\T }^{T} \omega_n(t)\E[X_t] - \sum_{-\tau=1}^{\T }\sum_{t=1}^T\psi_{\T }(-\tau) w_T(t) R_{\tau,t}\\
    &= \sum_{t=-\T }^{T} \omega_n(t)\E[X_t] + \mathcal{O}(n^{-(1/2 + \gamma\wedge \delta)})
\end{align*}

\noindent uniformly in $\W^2 $. In addition to \eqref{eqn:DiD_estimator2},
\begin{equation}\label{eqn:ATT_expand}
    \widehat{ATT}_{\omega,T} - ATT_{\omega,T} = \sum_{t=-\T }^{T} \omega_n(t)(X_t-\E[X_t]) + \mathcal{O}(n^{-(1/2+\gamma\wedge \delta)})
\end{equation}
\noindent uniformly in $\W^2$. Further,
\begin{align*}
    s_{\omega,n}^{-1}(\widehat{ATT}_{\omega,T} - ATT_{\omega,T}) &= s_{\omega,n}^{-1}\sum_{t=-\T }^{T} \omega_n(t)(X_t-\E[X_t]) + \mathcal{O}((\sqrt{n}s_{\omega,n})^{-1}n^{-(\gamma\wedge \delta)})\\
    &= \sum_{t=-\T }^{T} X_{nt} + \mathcal{O}(n^{-(\gamma\wedge \delta)})
\end{align*}

\noindent by \Cref{ass:bound_sn} uniformly in $\W^2 $. The conclusion follows from \Cref{lem:CLT}.

\qed

\subsection{Proof of \Cref{Theorem:AsympN_ext}}
\begin{namedcorollary}{\ref{Theorem:AsympN_ext}}
    \Paste{Key:Theorem:AsympN_ext}
\end{namedcorollary}

\textbf{Proof:}
The following expansion follows from \eqref{eqn:ATT_expand} uniformly in $\W^2$:
    \begin{align*}
        \uptau_J'(\widehat{ATT}_{\omega,\cdot T} - \mathbbm{1}_J ATT_{\omega,T})  = \sum_{t=-\T }^{T} \omega_n(t) \uptau_J'(X_{\cdot t} -\E[X_{\cdot t}]) + \mathcal{O}(n^{-(1/2+\gamma\wedge \delta)})
    \end{align*}
\noindent where $X_{\cdot t}:=(X_{1 t},\ldots,X_{J t})'$ and $X_{j t}, \ j \in [J] $ is the difference between the outcomes of the treated unit and the $j$'th control unit. Further,
\begin{align*}
&(\uptau_J'S_{\omega,n}\uptau_J)^{-1/2}\uptau_J'(\widehat{ATT}_{\omega,\cdot T} - \mathbbm{1}_J ATT_{\omega,T})\\  
&= (\uptau_J'S_{\omega,n}\uptau_J)^{-1/2}\sum_{t=-\T }^{T} \omega_n(t) \uptau_J'(X_{\cdot t} -\E[X_{\cdot t}]) + (n\uptau_J'S_{\omega,n}\uptau_J)^{-1/2} \times \mathcal{O}(n^{-(\gamma\wedge \delta)})\\
&= (\uptau_J'S_{\omega,n}\uptau_J)^{-1/2}\sum_{t=-\T }^{T} \omega_n(t) \uptau_J'(X_{\cdot t} -\E[X_{\cdot t}]) + \mathcal{O}(n^{-(\gamma\wedge \delta)})
    \end{align*}
since $\displaystyle \inf_{\uptau_J\in \mathbb{S}^J} n\uptau_J'S_{\omega,n}\uptau_J =: \rho_{\mathrm{min}}(nS_{\omega,n}) \geq \epsilon^2 $ by \Cref{ass:bound_sn_ext}. The conclusion follows similarly from the proof of \Cref{lem:CLT} and is therefore omitted.
\qed

\subsection{Proof of \Cref{Prop:Eff_ATT}}
\begin{namedproposition}{\ref{Prop:Eff_ATT}}
    \Paste{Key:Prop:Eff_ATT}
\end{namedproposition}
\textbf{Proof:}
Consider the classical minimum distance problem \eqref{eqn:Eff_ATT} (see, e.g., \citet[Sect. 14.5]{wooldridge-2010}) with an arbitrary positive definite weight matrix $H_n$. The solution is given by \Cref{eqn:Eff_ATT}. Under the conditions of \Cref{Theorem:AsympN_ext}, 
    \[
    \V[\widehat{ATT}_{\omega,T}^{h_J}]^{-1/2}(\widehat{ATT}_{\omega,T}^{h_J} - ATT_{\omega,T}) = (n\V[\widehat{ATT}_{\omega,T}^{h_J}])^{-1/2}\sqrt{n}(\widehat{ATT}_{\omega,T}^{h_J} - ATT_{\omega,T}) \xrightarrow{d} \mathcal{N}(0,1) 
    \] where $\V[\widehat{ATT}_{\omega,T}^{h_J}] = (\mathbbm{1}_J'H_n\mathbbm{1}_J)^{-2}\mathbbm{1}_J'H_nS_{\omega,n}H_n\mathbbm{1}_J$. Since $P_n:= S_{\omega,n}^{1/2}H_n\mathbbm{1}_J(\mathbbm{1}_J'H_nS_{\omega,n}H_n\mathbbm{1}_J)^{-1}\mathbbm{1}_J'H_nS_{\omega,n}^{1/2} $ is idempotent,
    \begin{align*}
        &\mathbbm{1}_J'S_{\omega,n}^{-1}\mathbbm{1}_J - (\mathbbm{1}_J'H_n\mathbbm{1}_J)(\mathbbm{1}_J'H_nS_{\omega,n}H_n\mathbbm{1}_J)^{-1}(\mathbbm{1}_J'H_n\mathbbm{1}_J)\\ 
        &= \mathbbm{1}_J'(S_{\omega,n}^{-1} - H_n\mathbbm{1}_J(\mathbbm{1}_J'H_nS_{\omega,n}H_n\mathbbm{1}_J)^{-1}\mathbbm{1}_J'H_n)\mathbbm{1}_J\\
        &= \mathbbm{1}_J'S_{\omega,n}^{-1/2}S_{\omega,n}^{1/2}(S_{\omega,n}^{-1} - H_n\mathbbm{1}_J(\mathbbm{1}_J'H_nS_{\omega,n}H_n\mathbbm{1}_J)^{-1}\mathbbm{1}_J'H_n)S_{\omega,n}^{1/2}S_{\omega,n}^{-1/2}\mathbbm{1}_J\\
        &= \mathbbm{1}_J'S_{\omega,n}^{-1/2}(\mathrm{I} - S_{\omega,n}^{1/2}H_n\mathbbm{1}_J(\mathbbm{1}_J'H_nS_{\omega,n}H_n\mathbbm{1}_J)^{-1}\mathbbm{1}_J'H_nS_{\omega,n}^{1/2})S_{\omega,n}^{-1/2}\mathbbm{1}_J\\
        &= ||(\mathrm{I} - P_n)S_{\omega,n}^{-1/2}\mathbbm{1}_J||^2.
    \end{align*}

From the foregoing, 
\[
\Big(\mathbbm{1}_J'S_{\omega,n}^{-1}\mathbbm{1}_J - (\mathbbm{1}_J'H_n\mathbbm{1}_J)(\mathbbm{1}_J'H_nS_{\omega,n}H_n\mathbbm{1}_J)^{-1}(\mathbbm{1}_J'H_n\mathbbm{1}_J) \Big)/n = ||(\mathrm{I} - P_n)(nS_{\omega,n})^{-1/2}\mathbbm{1}_J||^2 \geq 0.
\]
\noindent Thus, the optimal choice of $H_n$ is $S_{\omega,n}^{-1}$, which yields the optimal linear combination $\displaystyle h_J^*:= \frac{1}{\mathbbm{1}_J'S_{\omega,n}^{-1}\mathbbm{1}_J}S_{\omega,n}^{-1}\mathbbm{1}_J$ in $\Hyp^J$. This proves the assertion as claimed.
\qed

\subsection{Proof of \Cref{Theorem:Test_Implication} }
\begin{namedtheorem}{\ref{Theorem:Test_Implication}}
    \Paste{Key:Theorem:Test_Implication}
\end{namedtheorem}
\textbf{Proof:}
Recalling $h_J'\mathbbm{1}_J=1$ for all $h_J\in \Hyp^J$, the following expansion holds:
\begin{align*}
    \widehat{ATT}_{\omega,\cdot T} - \mathbbm{1}_J\widehat{ATT}_{\omega,T}^* & = (\widehat{ATT}_{\omega,\cdot T} - \mathbbm{1}_JATT_{\omega,T}) - \mathbbm{1}_J(\widehat{ATT}_{\omega,T}^* - ATT_{\omega,T}) \\
    & = (\widehat{ATT}_{\omega,\cdot T} - \mathbbm{1}_JATT_{\omega,T}) - \mathbbm{1}_J\hat h_{J}^{*\prime} (\widehat{ATT}_{\omega,\cdot T} - \mathbbm{1}_J ATT_{\omega,T}) \\
    &= [\mathrm{I}_J - \mathbbm{1}_J\hat h_{J}^{*\prime}](\widehat{ATT}_{\omega,\cdot T} - \mathbbm{1}_J ATT_{\omega,T}) \\
    &= [\mathrm{I}_J - \mathbbm{1}_J(\mathbbm{1}_J'\widehat{S}_{\omega,n}^{-1}\mathbbm{1}_J)^{-1} \mathbbm{1}_J'\widehat{S}_{\omega,n}^{-1}](\widehat{ATT}_{\omega,\cdot T} - \mathbbm{1}_JATT_{\omega,T})\\
    &= [\widehat{S}_{\omega,n}^{1/2} - \mathbbm{1}_J(\mathbbm{1}_J'\widehat{S}_{\omega,n}^{-1}\mathbbm{1}_J)^{-1} \mathbbm{1}_J'\widehat{S}_{\omega,n}^{-1/2}]\widehat{S}_{\omega,n}^{-1/2}(\widehat{ATT}_{\omega,\cdot T} - \mathbbm{1}_JATT_{\omega,T})
\end{align*}

\noindent where $ \Hyp^J \ni h_{J}^*:= \widehat{S}_{\omega,n}^{-1} \mathbbm{1}_J (\mathbbm{1}_J'\widehat{S}_{\omega,n}^{-1}\mathbbm{1}_J)^{-1} $. Pre-multiplying the above by $\widehat{S}_{\omega,n}^{-1/2}$, it follows that
\begin{align*}
    \widehat{S}_{\omega,n}^{-1/2}(\widehat{ATT}_{\omega,\cdot T} - \mathbbm{1}_J\widehat{ATT}_{\omega,T}^*) &= [\mathrm{I}_J - \widehat{S}_{\omega,n}^{-1/2}\mathbbm{1}_J(\mathbbm{1}_J'\widehat{S}_{\omega,n}^{-1}\mathbbm{1}_J)^{-1} \mathbbm{1}_J'\widehat{S}_{\omega,n}^{-1/2}]\widehat{S}_{\omega,n}^{-1/2}(\widehat{ATT}_{\omega,\cdot T} - \mathbbm{1}_JATT_{\omega,T})\\
    &=: (\mathrm{I}_J - \widehat{P}_{n})\widehat{S}_{\omega,n}^{-1/2}(\widehat{ATT}_{\omega,\cdot T} - \mathbbm{1}_JATT_{\omega,T}).
\end{align*}

\noindent Observe that $\widehat{P}_{n}:= \widehat{S}_{\omega,n}^{-1/2}\mathbbm{1}_J(\mathbbm{1}_J'\widehat{S}_{\omega,n}^{-1}\mathbbm{1}_J)^{-1} \mathbbm{1}_J'\widehat{S}_{\omega,n}^{-1/2} $ is a projection matrix, thus $\mathrm{I}_J - \widehat{P}_{n}$ is idempotent. Denote its probability limit, in view of \Cref{ass:consis_Sn} and the continuous mapping theorem, as: $\widetilde{P}_{n}:= \widetilde{S}_{\omega,n}^{-1/2}\mathbbm{1}_J(\mathbbm{1}_J'\widetilde{S}_{\omega,n}^{-1}\mathbbm{1}_J)^{-1} \mathbbm{1}_J'\widetilde{S}_{\omega,n}^{-1/2} $.  Moreover, since $\mathrm{rank}[\widetilde{S}_{\omega,n}^{-1/2}\mathbbm{1}_J]=1$, $\mathrm{rank}[\mathrm{I}_J - \widetilde{P}_{n}]=J-1$. Also, denoting $\widehat{\mathrm{I}}_{n}: = \widehat{S}_{\omega,n}^{-1/2} \widetilde{S}_{\omega,n}^{1/2} $, the following expansion follows from \eqref{eqn:ATT_expand}:
\begin{align*}
        \widehat{S}_{\omega,n}^{-1/2}(\widehat{ATT}_{\omega,\cdot T} - \mathbbm{1}_J ATT_{\omega,T})  &= \widehat{\mathrm{I}}_{n}\widetilde{S}_{\omega,n}^{-1/2}\sum_{t=-\T }^{T} \omega_n(t)(X_{\cdot t} -\E[X_{\cdot t}]) + \widehat{\mathrm{I}}_{n}(n\widetilde{S}_{\omega,n})^{-1/2} \sqrt{n}R_{\cdot n}.
\end{align*}

\noindent Under \Cref{ass:consis_Sn}, $||\widehat{\mathrm{I}}_{n} - \mathrm{I}_J|| \xrightarrow{p} 0 $. In addition to the conditions of \Cref{lem:CLT}, the continuous mapping theorem, and the Cram\'er-Wold device,
$$
\widehat{\mathrm{I}}_{n}\widetilde{S}_{\omega,n}^{-1/2}\sum_{t=-\T }^{T} \omega_n(t)(X_{\cdot t} -\E[X_{\cdot t}]) \xrightarrow{d} \mathcal{N}(0,\mathrm{I}_J). 
$$

\paragraph{Part (a):} Under $\Hyp_o$, $||\widehat{\mathrm{I}}_{n}\widetilde{S}_{\omega,n}^{-1/2}R_{\cdot n}^{w,\psi}|| \xrightarrow{p} 0$ thus $\widehat{S}_{\omega,n}^{-1/2}(\widehat{ATT}_{\omega,\cdot T} - \mathbbm{1}_J ATT_{\omega,T}) \xrightarrow{d} \mathcal{N}(0,\mathrm{I}_J) $ and the conclusion follows.

\paragraph{Part (b):} Under $\Hyp_{an}$,
\begin{align*}
    \widehat{\mathrm{I}}_{n}(n\widetilde{S}_{\omega,n})^{-1/2} \sqrt{n}R_{\cdot n}^{w,\psi} = \widehat{\mathrm{I}}_{n}(n\widetilde{S}_{\omega,n})^{-1/2} \uptau_J^R ||\sqrt{n}R_{\cdot n}^{w,\psi}||
\end{align*}

\noindent where $\uptau_J^R= \sqrt{n}R_{\cdot n}^{w,\psi} ||\sqrt{n}R_{\cdot n}^{w,\psi}||^{-1}$. Under the condition that $\uptau_J^R$ does not lie in the null space of $(n\widetilde{S}_{\omega,n})^{-1/2}[\mathrm{I}_J - \widetilde{P}_{n}](n\widetilde{S}_{\omega,n})^{-1/2}$ as $n\rightarrow \infty$, the conclusion for this part follows.

\paragraph{Part (c):} Under $\Hyp_a$ and the condition that $\uptau_J^R$ does not lie in the null space of $(n\widetilde{S}_{\omega,n})^{-1/2}[\mathrm{I}_J - \widetilde{P}_{n}](n\widetilde{S}_{\omega,n})^{-1/2}$ as $n\rightarrow \infty$, $||[\mathrm{I}_J - \widetilde{P}_{n}](n\widetilde{S}_{\omega,n})^{-1/2}\uptau_J^R||^2\cdot ||\sqrt{n}R_{\cdot n}^{w,\psi}||^2 \rightarrow \infty$ as $n\rightarrow \infty$.

\qed

\section{Extensions}\label{App_Sect:Extensions}

\subsection{Extension -- Unit roots and high persistence}\label{subsec:non_stationarity}

In spite of the relatively weak conditions, e.g., for identification (\Cref{ass:parallel_trends,ass:limited_anticip}), and sampling (\Cref{ass:Sampling,ass:technical_CLT}) that are allowed under the baseline model, unit roots and high persistence in $X_t$ can lead to a violation of the dominance condition in \Cref{ass:dominance_Y} and therefore hamper inference using \Cref{Theorem:AsympN}. To this end, this paper extends the baseline theory to accommodate unit root processes and deterministic time trends.

\subsubsection{Unit roots in $X_t$}

The incidence of unit-root non-stationary processes is quite common in economics. Consider the simple linear model \eqref{eqn:ATT_lin_mod}. $X_t$ becomes a unit-root process if $\beta_1=1$ in $U_t = \beta_1 X_{t-1} + \mathcal{E}_t $ and $\mathcal{E}_t$ is some white-noise process. In this first instance, i.e., $X_t = \beta_0 + \widetilde{ATT}_n^{w,\psi}  \indicator{t \geq 1} + X_{t-1} + \mathcal{E}_t$, applying the first difference removes the unit root, namely, $\Delta X_t = \beta_0 + \widetilde{ATT}_n^{w,\psi}  \indicator{t \geq 1} + \mathcal{E}_t$ where $\Delta X_t:=X_t - X_{t-1}$. Thus, handling unit roots in the T-DiD context is straightforward.

\subsubsection{Persistence in $X_t$}\label{SubSect:Persistence}
    Economic time series data, e.g., GDP data, are often highly persistent. Very high $|\beta_1|<1$ in $U_t = \beta_1 X_{t-1} + \mathcal{E}_t $ can pose a problem for inference, especially in small samples. Thus, following, e.g., \citet[Sect. 16.15]{hansen_2021econometrics} and \citet{baillie2025robust}, one may consider substituting in the term $U_t$ into \eqref{eqn:ATT_lin_mod}:
\begin{equation}\label{eqn:ATT_lin_mod2}
    X_t = \beta_0 + \widetilde{ATT}_n^{w,\psi}  \indicator{t \geq 1} + \beta_1 X_{t-1} + \mathcal{E}_t.
\end{equation} $\widetilde{ATT}_n^{w,\psi}$ in the above model is an average (contemporaneous) treatment effect on the treated. Including the lagged $X_t$ in \eqref{eqn:ATT_lin_mod2} ensures the resulting residuals are less serially correlated, thereby improving the efficiency of the estimator \citep{baillie2025robust}. 

Quite importantly, the presence of auto-correlation also has ramifications from an identification perspective. Suppose the true data-generating process followed \eqref{eqn:ATT_lin_mod2}, then for $\mu:= \beta_0/(1-\beta_1) $ where $ |\beta_1|<1 $,  $X_t$ can be written in the form
\begin{equation*}
    X_t = \mu + \beta_1^{t+\T+1}\big( X_{-\T-1} -\mu \big) + \underbrace{\widetilde{ATT}_n^{w,\psi} \Big(\frac{1-\beta_1^t}{1-\beta_1}\Big)}_{\widetilde{A}_{n}(t)} \indicator{t \geq 1} + \underbrace{\sum_{s=-\T}^t \beta_1^{t-s}\mathcal{E}_s}_{\widetilde{\mathcal{E}}_t}.
\end{equation*}From the above, ignoring persistence in $X_t$ has consequences for identification; the probability limit of the estimator is the multiplier effect \( \displaystyle \frac{1}{T} \sum_{t=1}^T \widetilde{A}_{n}(t) = \widetilde{ATT}_n^{w,\psi}\Big(1 - \frac{\beta_1(1-\beta_1^T)}{T(1-\beta_1)} \Big)\frac{1}{1-\beta_1} \approx \widetilde{ATT}_n^{w,\psi} + \widetilde{ATT}_n^{w,\psi} \frac{\beta_1}{1-\beta_1}  \) where the effective estimand $\widetilde{ATT}_n^{w,\psi}$ is properly viewed as the average contemporaneous effect of treatment over the post-treatment window (under the conditions of \Cref{Theorem:Identification}) and the extra term $ \widetilde{ATT}_n^{w,\psi} \frac{\beta_1}{1-\beta_1} $ occurs via the dynamic propagation through $X_{t-1}$. The resulting error term \( \widetilde{\mathcal{E}}_t:=\sum_{s=-\T}^t \beta_1^{t-s}\mathcal{E}_s \) is non-trivially serially correlated, with implications for CLT-based inference and efficiency. For example, including moving average terms to reduce the serial correlation increases the number of parameters to estimate and reduces the precision of the ATT estimate.

\subsubsection{Serially Correlated $ U_t $ }
Suppose that $U_t$ follows an MA(q) process, namely $U_t = \sum_{s=0}^q \beta_s\mathcal{E}_{t-s} $. Ignoring this form of temporal dependence in $U_t$ when estimating \eqref{eqn:ATT_lin_mod} does not lead to inconsistency of the estimator. Moreover, HAC estimators of the standard errors are consistent, although there is a loss in efficiency in not explicitly accounting for this dependence structure. In the context of the SC or factor models, \citet{gonccalves-ng-2024imputation} demonstrates improved coverage rates and reduced bias from controlling for lagged errors in imputing counterfactual outcomes under stationarity conditions.

\subsection{Extension -- Deterministic time trend}\label{App_Sect:Det_Trend}
\Cref{Rem:Prop_1_DGP}, for example, shows that common and possibly unbounded time trends get differenced away in $X_t:= Y_{1,t} - Y_{0,t} $ and do not pose a problem for the asymptotic identification and inference on $ATT_{\omega,T}$. However, if time trends differ between the treated and untreated units, the T-DiD estimate captures these differing trends, which is undesirable from an identification point of view. This subsection follows \citet{Li-2020-statistical} and \citet[Chap. 16]{Hamilton-1994-series} in dealing with deterministic time trends.

Consider the leading case of a deterministic trend in $X_t$ with $U_t=\beta_1 \tilde{t}(t) + \mathcal{E}_t$ in \eqref{eqn:ATT_lin_mod} namely,
\begin{equation}\label{eqn:ATT_lin_mod_dtrend}
    \begin{split}
        X_t &= \beta_0 + \widetilde{ATT}_n^{w,\psi}  \indicator{t\geq 1} + \beta_1 \tilde{t}(t) + \mathcal{E}_t\\
        &= \mathcal{X}_t\widetilde{\beta}_n^{w,\psi} + \mathcal{E}_t\\
        &= \mathcal{X}_t\Gamma_n^{-1/2}\Gamma_n^{1/2}\widetilde{\beta}_n^{w,\psi} + \mathcal{E}_t\\
        &= \mathcal{X}_{\Gamma,t}\widetilde{\beta}_{\Gamma,n}^{w,\psi} + \mathcal{E}_t
    \end{split}
\end{equation}where $\tilde{t}(t):= \T + t + 1$, $\mathcal{X}_t:= \big( 1,\indicator{t\geq 1},\tilde{t}(t) \big)$, and $\mathcal{X}_{\Gamma,t}:=\mathcal{X}_t\Gamma_n^{-1/2}$. To simplify the exposition, the following sampling assumption is imposed on $\mathcal{E}_t$ in \eqref{eqn:ATT_lin_mod_dtrend}. 
\begin{namedassumption}{\ref{ass:Sampling}-Ext}\label{ass:Sampling_dtrend}
    $\mathcal{E}_t$ in \eqref{eqn:ATT_lin_mod_dtrend} is a square-integrable martingale difference sequence with $\E[\mathcal{E}_t|\mathcal{F}_{t-1}]=0 $ and $\E[\mathcal{E}_t^2|\mathcal{F}_{t-1}] = \sigma_{\varepsilon}^2>0$ where $ \mathcal{F}_{t-1}:= \sigma\big(X_{-\T}, \ldots, X_{t-1}\big) $ denotes the natural filtration.
\end{namedassumption}

$\Gamma_n^{-1/2}$ is a diagonal matrix such that 
$$ 
\sum_{t=-\T}^T \widetilde{w}_n(t) \mathcal{X}_{\Gamma,t}'\mathcal{X}_{\Gamma,t}= Q_{\widetilde{w},n} + \mathcal{O}(n^{-1}),  
$$
and $Q_{\widetilde{w},n}$ is a positive-definite matrix  under \Cref{ass:TTratio_lambda_n}. For concreteness and ease of exposition, consider the uniform (regression) weighting scheme $\widetilde{w}(t) = T^{-1}\indicator{t\geq 1} + \T^{-1}\indicator{t\leq -1}$. The appropriate scaling matrix is $\Gamma_{\omega,n}^{1/2} = \mathrm{diag}(1,1,n)=:\Gamma_n^{1/2} $. The following is a linear representation of the ordinary least squares estimator of $\widetilde{\beta}_{\Gamma,n}^{w,\psi}$ in \eqref{eqn:ATT_lin_mod_dtrend}: 
\begin{equation*}
    (\widehat{\beta}_{\Gamma,n}^{w,\psi} - \widetilde{\beta}_{\Gamma,n}^{w,\psi}):= \Big( \sum_{t=-\T}^T \widetilde{w}_n(t) \mathcal{X}_{\Gamma,t}'\mathcal{X}_{\Gamma,t} \Big)^{-1} \sum_{t=-\T}^T \widetilde{w}_n(t) \mathcal{X}_{\Gamma,t}'\mathcal{E}_t.
\end{equation*}

Define $Q(\lambda_n):=Q_A(\lambda_n)^{-1}Q_B(\lambda_n)Q_A(\lambda_n)^{-1}$ where 

\begin{align}
   \label{eqn:Q_A} Q_A(\lambda_n):&= \begin{bmatrix}
        2 &  1 & \Big(\frac{3-2\lambda_n}{2}\Big) \\
         1 &  1 & \Big(\frac{2-\lambda_n}{2}\Big) \\
  \Big(\frac{3-2\lambda_n}{2}\Big) & \Big(\frac{2-\lambda_n}{2}\Big) &   \frac{1}{3} (2\lambda_n^2 - 5\lambda_n + 4)
\end{bmatrix}, \\
  \label{eqn:Q_B}  Q_B(\lambda_n):&= \sigma_{\varepsilon}^2
    \begin{bmatrix}
        \frac{1}{\lambda_n(1-\lambda_n)} &  \frac{1}{\lambda_n} & \frac{1}{\lambda_n} \\
         \frac{1}{\lambda_n} & \frac{1}{\lambda_n} & \frac{1}{2}\Big(\frac{2-\lambda_n}{\lambda_n}\Big) \\
  \frac{1}{\lambda_n} & \frac{1}{2}\Big(\frac{2-\lambda_n}{\lambda_n}\Big) &   \frac{2}{\lambda_n}\big( 3 - 2\lambda_n\big) 
\end{bmatrix},          
\end{align}and $ \sigma_{\varepsilon}^2:= \V[\mathcal{E}_t] $ under \Cref{ass:Sampling_dtrend}. Part (c) of \Cref{lem:Q_A_Q_B} shows that $Q_A(\lambda)$, $Q_B(\lambda)$, and $Q(\lambda)$ are positive-definite uniformly in $\lambda \in [\epsilon, \ 1-\epsilon] \subset (0,1)$.\footnote{Also see \Cref{Fig:min_eig}.} Recall $\mathbb{S}^J$ is the space of all $J\times 1$ vectors with unit Euclidean norm. Let $e_2:=(0,1,0)'$ denote a standard basis vector, $\widetilde{e}_2:= ||Q_B(\lambda_n)^{1/2}e_2||^{-1}Q_B(\lambda_n)^{1/2}e_2$, and $Q(\lambda_n)^{1/2}:=Q_A(\lambda_n)^{-1} Q_B(\lambda_n)^{1/2}$. The following states an asymptotic normality result under deterministic time trends.
\begin{theorem}\label{Theorem:AsympN_dTrend}
\Copy{Key:Theorem:AsympN_dTrend}{Suppose \Cref{ass:parallel_trends,ass:limited_anticip,ass:TTratio_lambda_n,ass:Sampling_dtrend} hold, then

    (a)
    \begin{equation*}
    \uptau_3'Q(\lambda_n)^{-1/2}\sqrt{n}(\widehat{\beta}_{\Gamma,n}^{w,\psi} - \widetilde{\beta}_{\Gamma,n}^{w,\psi}) \xrightarrow{d} \mathcal{N}(0,1)
\end{equation*}uniformly in $\uptau_3 \in \mathbb{S}^3 $, and

     (b) \begin{align*}
    \frac{\sqrt{n}(\widehat{ATT}_{\omega,T} - ATT_{\omega,T})}{\sqrt{e_2'Q(\lambda_n)e_2}} = \widetilde{e}_2'Q(\lambda_n)^{-1/2}\sqrt{n}(\widehat{\beta}_{\Gamma,n}^{w,\psi} - \widetilde{\beta}_{\Gamma,n}^{w,\psi}) + o(1) \xrightarrow{d} \mathcal{N}(0,1)
\end{align*}
\noindent as $n\rightarrow\infty$ under the uniform weighting scheme.}
\end{theorem}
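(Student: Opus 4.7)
The plan is to work from the closed-form OLS expression $\sqrt{n}(\widehat{\beta}_{\Gamma,n}^{w,\psi} - \widetilde{\beta}_{\Gamma,n}^{w,\psi}) = A_n^{-1}\sqrt{n}\,B_n$, where $A_n := \sum_{t=-\T}^T \widetilde{w}_n(t)\,\mathcal{X}_{\Gamma,t}'\mathcal{X}_{\Gamma,t}$ and $B_n := \sum_{t=-\T}^T \widetilde{w}_n(t)\,\mathcal{X}_{\Gamma,t}'\mathcal{E}_t$. Under the uniform weights $\widetilde{w}_n(t) = T^{-1}\indicator{t\geq 1} + \T^{-1}\indicator{t\leq -1}$, each entry of $A_n$ reduces to an arithmetic average of low-degree polynomials in $\tilde{t}(t)/n$ that can be evaluated via standard Riemann-sum arguments. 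Combined with $\T/n \to 1-\lambda_n$ and $T/n \to \lambda_n$ from \Cref{ass:TTratio_lambda_n}, this yields $A_n = Q_A(\lambda_n) + \mathcal{O}(n^{-1})$ uniformly in $\lambda_n$ bounded away from $\{0,1\}$. A parallel computation, exploiting conditional homoskedasticity $\E[\mathcal{E}_t^2|\mathcal{F}_{t-1}] = \sigma_{\varepsilon}^2$ from \Cref{ass:Sampling_dtrend}, delivers $n\cdot\V(B_n) = \sigma_{\varepsilon}^2\,n\sum_t\widetilde{w}_n(t)^2\,\mathcal{X}_{\Gamma,t}'\mathcal{X}_{\Gamma,t} = Q_B(\lambda_n) + \mathcal{O}(n^{-1})$, matching \eqref{eqn:Q_B}. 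Positive-definiteness of $Q_A(\lambda_n)$, $Q_B(\lambda_n)$, and $Q(\lambda_n)$ uniformly on $[\epsilon_{\lambda},1-\epsilon_{\lambda}]$, granted by Part (c) of \Cref{lem:Q_A_Q_B}, makes the relevant inversions stable in the limit.

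Next, fix $\uptau_3\in\mathbb{S}^3$ and apply a martingale difference sequence CLT to the scalar $\uptau_3'\sqrt{n}\,B_n = \sum_t c_{nt}(\uptau_3)\,\mathcal{E}_t$, where $c_{nt}(\uptau_3):= \sqrt{n}\,\widetilde{w}_n(t)\,\mathcal{X}_{\Gamma,t}'\uptau_3 = \mathcal{O}(n^{-1/2})$. Under \Cref{ass:Sampling_dtrend}, the summands form a square-integrable MDS with respect to $\{\mathcal{F}_t\}$, the conditional variance $\sigma_{\varepsilon}^2\sum_t c_{nt}(\uptau_3)^2$ converges to $\sigma_{\varepsilon}^2\,\uptau_3' Q_B(\lambda_n)\uptau_3$ by the previous step, and the Lindeberg condition follows because $\max_t|c_{nt}(\uptau_3)|\to 0$ while $\sum_t c_{nt}(\uptau_3)^2$ remains bounded. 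Hence $\uptau_3'\sqrt{n}\,B_n\xrightarrow{d}\mathcal{N}(0,\sigma_{\varepsilon}^2\,\uptau_3' Q_B(\lambda_n)\uptau_3)$, and Cramér-Wold upgrades this to $\sqrt{n}\,B_n\xrightarrow{d}\mathcal{N}(0,Q_B(\lambda_n))$. Combining with $A_n^{-1}\to Q_A(\lambda_n)^{-1}$ via Slutsky gives $\sqrt{n}(\widehat{\beta}_{\Gamma,n}^{w,\psi}-\widetilde{\beta}_{\Gamma,n}^{w,\psi})\xrightarrow{d}\mathcal{N}(0,Q(\lambda_n))$. Part (a) then follows from one more Cramér-Wold step applied to $\uptau_3' Q(\lambda_n)^{-1/2}\sqrt{n}(\widehat{\beta}_{\Gamma,n}^{w,\psi}-\widetilde{\beta}_{\Gamma,n}^{w,\psi})$, with uniformity over $\mathbb{S}^3$ inherited from continuity and the uniform spectral bounds on $Q(\lambda_n)$.

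For part (b), the key algebraic observation is $e_2'\widehat{\beta}_{\Gamma,n}^{w,\psi} = \widehat{ATT}_{\omega,T}$ and $e_2'\widetilde{\beta}_{\Gamma,n}^{w,\psi} = \widetilde{ATT}_n^{w,\psi}$, since the scaling $\Gamma_{\omega,n}^{1/2} = \mathrm{diag}(1,1,n)$ leaves the ATT coordinate untouched while rescaling only the trend slope. Invoking \Cref{Theorem:Identification}(a) to replace $\widetilde{ATT}_n^{w,\psi}$ by $ATT_{\omega,T}$ incurs an error of order $\mathcal{O}(n^{-(1/2+\gamma\wedge\delta)})$; multiplying by $\sqrt{n}$ this is $o(1)$, yielding $\sqrt{n}(\widehat{ATT}_{\omega,T} - ATT_{\omega,T}) = e_2'\sqrt{n}(\widehat{\beta}_{\Gamma,n}^{w,\psi}-\widetilde{\beta}_{\Gamma,n}^{w,\psi}) + o(1)$. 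Specialising part (a) along the unit direction $\widetilde{e}_2\in\mathbb{S}^3$ and rescaling by $\sqrt{e_2'Q(\lambda_n)e_2}$ then produces the displayed chain of equalities terminating at $\mathcal{N}(0,1)$.

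The principal obstacle is verifying the Lindeberg condition for the MDS CLT under only square-integrability of $\mathcal{E}_t$ in \Cref{ass:Sampling_dtrend}: to rule out heavy-tail pathologies in a non-identically distributed array one typically needs uniform integrability of $\{\mathcal{E}_t^2\}$ (e.g., $\sup_t\|\mathcal{E}_t\|_{2+\eta}<\infty$ for some $\eta>0$), which the actual proof would either impose or derive from ambient regularity. A secondary nuisance is the careful bookkeeping of boundary terms in the Riemann-sum expansions for $A_n$ and $n\cdot\V(B_n)$ so that the $\mathcal{O}(n^{-1})$ remainder holds uniformly in $\lambda_n$ on the compact interval $[\epsilon_{\lambda},1-\epsilon_{\lambda}]$; the remaining pieces, including transfer of uniformity to $\mathbb{S}^3$ via compactness and continuity of the Gaussian limit in $\uptau_3$, are standard.
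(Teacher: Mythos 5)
Your proposal follows essentially the same route as the paper: the OLS linear representation $A_n^{-1}\sqrt{n}B_n$, the computations $A_n = Q_A(\lambda_n)+\mathcal{O}(n^{-1})$ and $n\V(B_n)=Q_B(\lambda_n)+\mathcal{O}(n^{-1})$ (the paper's \Cref{lem:Q_A_Q_B}), a martingale-difference CLT for the scalar score $\uptau_3'\sqrt{n}B_n$, and for part (b) the identity $e_2'\widetilde{\beta}_{\Gamma,n}^{w,\psi}=\widetilde{ATT}_n^{w,\psi}$ together with \Cref{Theorem:Identification} to absorb the bias at rate $o(n^{-1/2})$. The only cosmetic differences are that the paper invokes \citet[Theorems 25.3--25.4]{davidson2021stochastic}, replacing your Lindeberg verification by uniform integrability of $\{\Upsilon_t^2/c_{\Upsilon,n}^2\}$ deduced from the a.s.\ bounded conditional variance (precisely the point you flag as the delicate step), and that it works directly with the reparametrised direction $\widetilde{\uptau}_3=\|Q_B(\lambda_n)^{1/2}\uptau_3\|^{-1}Q_B(\lambda_n)^{1/2}\uptau_3$ and the non-symmetric root $Q(\lambda_n)^{1/2}=Q_A(\lambda_n)^{-1}Q_B(\lambda_n)^{1/2}$ rather than a two-stage Cram\'er--Wold/Slutsky argument.
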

\noindent See \Cref{Proof:Theorem:AsympN_dTrend} for proof.

For the general weighting scheme, one has
\begin{align*}
    &\sum_{t=-\T}^T \widetilde{w}_n(t) \mathcal{X}_{\Gamma,t}'\mathcal{X}_{\Gamma,t}\\
    &= \Gamma_n^{-1/2} \sum_{t=-\T}^T
    \begin{bmatrix}
        \widetilde{w}_n(t) &  \widetilde{w}_n(t)\indicator{t\geq 1} &  \widetilde{w}_n(t) \tilde{t}(t)\\
         \widetilde{w}_n(t)\indicator{t\geq 1}
&  \widetilde{w}_n(t)\indicator{t\geq 1} &  \widetilde{w}_n(t)\indicator{t\geq 1}\tilde{t}(t)\\
  \widetilde{w}_n(t) \tilde{t}(t) &   \widetilde{w}_n(t)\indicator{t\geq 1}\tilde{t}(t) &   \widetilde{w}_n(t) \tilde{t}(t)^2
\end{bmatrix}
    \Gamma_n^{-1/2}\\
    &=    
    \begin{bmatrix}
        2 &  1 &  \sum_{t=-\T}^T \widetilde{w}_n(t) \Big(\frac{\tilde{t}(t)}{n}\Big) \\
         1 &  1 &  \sum_{t=1}^T\widetilde{w}_n(t)\Big(\frac{\tilde{t}(t)}{n}\Big) \\
  \sum_{t=-\T}^T\widetilde{w}_n(t) \Big(\frac{\tilde{t}(t)}{n}\Big) &   \sum_{t=1}^T\widetilde{w}_n(t)\Big(\frac{\tilde{t}(t)}{n}\Big) &   \sum_{t=-\T}^T\widetilde{w}_n(t) \Big(\frac{\tilde{t}(t)}{n}\Big)^2
\end{bmatrix}
\end{align*}where the constants in the above matrix follow from the requirement that weights for pre-treatment and post-treatment periods each sum to one and $\tilde{t}(t):=\T + t + 1$. Thus, generally, the above matrix is well behaved; it is positive definite and all entries are bounded and bounded away from zero under \Cref{ass:TTratio_lambda_n}.\footnote{ $\mathcal{X}_t\Gamma_n^{-1/2} = (1,\indicator{t\geq 1},\tilde{t}(t)/n) $ has full column rank, the entry $\tilde{t}(t)/n \in (0,1] $ uniformly in $\tilde{t}(\cdot) \in [n]$, and $\sum_{t=-\T}^T \widetilde{w}_n(t) \indicator{t\leq -1} = \sum_{t=-\T}^T \widetilde{w}_n(t) \indicator{t\geq 1} = 1$.}

\subsection{Multiple treated units}\label{App_Sect:Mult_Treat}
The index $i$ is included in the subscripts for emphasis and to render the respective quantities specific to a treated unit $i \in [I] $. Define the $I\times 1$ vector $\widehat{ATT}_{\omega,\cdot T}^*:= (\widehat{ATT}_{\omega,1T}^*,\ldots,\widehat{ATT}_{\omega,IT}^*)'$, the $I\times 1$ vector $ATT_{\omega,\cdot T}:= (ATT_{\omega,1T},\ldots,ATT_{\omega,IT})'$, and the $I\times I$ matrix $S_{\omega,n}^*:= \E[(\widehat{ATT}_{\omega,\cdot T}^* - ATT_{\omega,\cdot T})(\widehat{ATT}_{\omega,\cdot T}^* - ATT_{\omega,\cdot T})'] $.

\begin{corollary}\label{Coro:2}
Let \Cref{ass:parallel_trends,ass:limited_anticip,ass:Sampling,ass:technical_CLT,ass:TTratio_lambda_n,ass:dominance_Y} for each unique treated-control pair $(i,j)$ such that $ j\in [J_i] $ and $i\in [I]$ hold. In addition to \Cref{ass:bound_sn_ext},
\noindent (a)
\[(s_{\omega,in}^*)^{-1}(\widehat{ATT}_{\omega,iT}^* - ATT_{\omega,iT}) \xrightarrow{d} \mathcal{N}(0,1) \]
\noindent and  (b)
\[
(\uptau_I'S_{\omega,n}^* \uptau_I)^{-1}\uptau_I'(\widehat{ATT}_{\omega,\cdot T}^* - ATT_{\omega,\cdot T}) \xrightarrow{d} \mathcal{N}(0,1) 
\]
\noindent for all $\uptau_I \in \mathbb{S}^I$ uniformly in $\W^2 $.
\end{corollary}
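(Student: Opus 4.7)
The plan is to deduce both parts from Theorem \ref{Theorem:AsympN_ext} and Proposition \ref{Prop:Eff_ATT}, using the Cram\'er-Wold device for part (b). The architecture of the argument mimics the passage from Theorem \ref{Theorem:AsympN} to its multivariate counterpart Theorem \ref{Theorem:AsympN_ext}: reduce to an $\omega_n$-weighted partial sum of a near-epoch dependent array, exploit \Cref{ass:bound_sn_ext} to keep the standardising variance away from zero, and absorb the $\mathcal{O}(n^{-(1/2+\gamma\wedge\delta)})$ identification bias.

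For part (a), fix a treated unit $i\in [I]$ and let $\widehat{ATT}_{\omega,i\cdot T}$ denote the $J_i\times 1$ vector of DiD estimators of $ATT_{\omega,iT}$ built from each of the $J_i$ candidate control units of unit $i$. Proposition \ref{Prop:Eff_ATT} gives $\widehat{ATT}_{\omega,iT}^* = (h_{J_i}^*)'\widehat{ATT}_{\omega,i\cdot T}$ with $(h_{J_i}^*)'1_{J_i}=1$, whence
\[
\widehat{ATT}_{\omega,iT}^* - ATT_{\omega,iT} = (h_{J_i}^*)'\bigl(\widehat{ATT}_{\omega,i\cdot T} - 1_{J_i}ATT_{\omega,iT}\bigr),\qquad (s_{\omega,in}^*)^2 = (1_{J_i}'S_{\omega,in}^{-1}1_{J_i})^{-1}.
\]
Setting $\uptau_{J_i}:= h_{J_i}^*/\|h_{J_i}^*\| \in \mathbb{S}^{J_i}$ and invoking Theorem \ref{Theorem:AsympN_ext} for the $i$-th treated-control system yields the conclusion once the common scalar $\|h_{J_i}^*\|$ cancels from numerator and denominator.

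For part (b), I would first stack the expansion \eqref{eqn:ATT_expand} across treated units. Combining it with the identity above, each component admits
\[
\widehat{ATT}_{\omega,iT}^* - ATT_{\omega,iT} = \sum_{t=-\T}^{T}\omega_n(t)(h_{J_i}^*)'\bigl(X_{i\cdot t} - \E[X_{i\cdot t}]\bigr) + \mathcal{O}(n^{-(1/2+\gamma\wedge\delta)}),
\]
where $X_{i\cdot t}$ collects the $J_i$ treated-minus-control outcome differences for unit $i$ at time $t$. For arbitrary $\uptau_I \in \mathbb{S}^I$, the inner product reduces to a single scalar weighted partial sum $\sum_{t=-\T}^{T} \omega_n(t) Z_{nt}$ with $Z_{nt}:= \sum_{i=1}^{I}\uptau_{I,i}(h_{J_i}^*)'(X_{i\cdot t}-\E[X_{i\cdot t}])$. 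Finiteness of $I$, non-stochasticity of the $h_{J_i}^*$, and closure of $L_2$-NED of size $-1/2$ under finite linear combinations (on the concatenated mixing base $\bigcup_{i\in [I]}\{V_{nt}^{(i)}\}$, which preserves the $-r/(r-1)$ mixing size) ensure that \Cref{ass:Sampling,ass:technical_CLT,ass:dominance_Y} lift to $Z_{nt}$. The CLT used in the proof of Theorem \ref{Theorem:AsympN_ext} (via Lemma \ref{lem:CLT}) then applies, while \Cref{ass:bound_sn_ext} applied to $nS_{\omega,n}^*$ keeps $n\uptau_I'S_{\omega,n}^*\uptau_I \geq \epsilon^2$, guaranteeing that the identification bias is negligible after normalisation.

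The main obstacle is verifying that the cross-sectional NED/mixing structure across the $I$ treated-control systems correctly aggregates into a joint sampling hypothesis on $Z_{nt}$. Once the concatenation argument for the underlying mixing base is in place, the $L_2$-NED constants and the $L_r$-dominance of $Z_{nt}$ are each bounded by a finite multiple (depending on $I$ and $\max_i\|h_{J_i}^*\|$, both of which are bounded uniformly in $n$ under \Cref{ass:bound_sn_ext}) of the individual per-unit bounds. The remainder of the argument replays the proof of Theorem \ref{Theorem:AsympN_ext} verbatim, with $S_{\omega,n}^*$ playing the role previously played by $S_{\omega,n}$.
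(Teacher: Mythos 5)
Your proposal is correct and takes essentially the same route as the paper: part (a) is exactly the paper's argument (apply Theorem \ref{Theorem:AsympN_ext} to the direction $\uptau_{J_i}=h_{J_i}^*/\|h_{J_i}^*\|$ and let the norm cancel against the standardisation), and part (b) replays the proof of Theorem \ref{Theorem:AsympN_ext} with $S_{\omega,n}^*$ in place of $S_{\omega,n}$, which is precisely what the paper asserts (and omits). Your filling-in of the NED-closure-under-finite-linear-combinations step for part (b) is more detail than the paper supplies, not a deviation from it.
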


\begin{proof}
\textbf{Part (a):} The proof follows from that of \Cref{Theorem:AsympN_ext} given the specific linear combination in $ \widehat{ATT}_{iT}^* = {h_{iJ}^*}'\widehat{ATT}_{i\cdot T} = (\mathbbm{1}_J'S_{in}^{-1}\mathbbm{1}_J)^{-1} \mathbbm{1}_J'S_{in}^{-1}\widehat{ATT}_{i\cdot T}$, set  $\uptau_J = h_{iJ}^* ||h_{iJ}^*||^{-1} $.

\textbf{Part (b):} The proof of this part is similar to \Cref{Theorem:AsympN_ext} and is hence omitted.
\end{proof}

\begin{remark}
    Part (b) of \Cref{Coro:2} is based on efficient estimators of $ATT_{\omega,iT}$. This, however, need not be the case. More general weighting schemes are admissible, including one that simply selects a single control unit out of the set $ [J_i]$ of valid controls for unit $i$.
\end{remark}

\subsection{Proof of \Cref{Theorem:AsympN_dTrend}}\label{Proof:Theorem:AsympN_dTrend}

\begin{namedtheorem}{\ref{Theorem:AsympN_dTrend}}
    \Paste{Key:Theorem:AsympN_dTrend}
\end{namedtheorem}

\textbf{Proof:}

\paragraph{Part (a):}

Under the conditions of \Cref{lem:Q_A_Q_B}, the continuous mapping theorem and the continuity of the inverse at a non-singular matrix,
\begin{align*}
    \sqrt{n}(\widehat{\beta}_{\Gamma,n}^{w,\psi} - \widetilde{\beta}_{\Gamma,n}^{w,\psi})&= \Big( \sum_{t=-\T}^T \widetilde{w}_n(t) \mathcal{X}_{\Gamma,t}'\mathcal{X}_{\Gamma,t} \Big)^{-1} \sqrt{n}\sum_{t=-\T}^T \widetilde{w}_n(t) \mathcal{X}_{\Gamma,t}'\mathcal{E}_t\\
    &= Q_A(\lambda_n)^{-1} \sqrt{n}\sum_{t=-\T}^T \widetilde{w}_n(t) \mathcal{X}_{\Gamma,t}'\mathcal{E}_t + \mathcal{O}_p(n^{-1}).
\end{align*}

\noindent Thus, under the conditions of \Cref{lem:Q_A_Q_B},
\begin{align*}
    \V\Big[\uptau_3'\sqrt{n}(\widehat{\beta}_{\Gamma,n}^{w,\psi} - \widetilde{\beta}_{\Gamma,n}^{w,\psi})\Big] = \uptau_3'Q(\lambda_n)\uptau_3 + \mathcal{O}(n^{-1}).
\end{align*}The rest of the proof is conducted in the following steps.

First, the scalar-valued random variable $ \Upsilon_t:= \sqrt{n}\uptau_3'\widetilde{w}_n(t) \mathcal{X}_{\Gamma,t}'\mathcal{E}_t$ constitutes a square-integrable martingale difference sequence under \Cref{ass:Sampling_dtrend} since the following hold: (1) $\E[\Upsilon_t|\mathcal{F}_{\Upsilon,t-1}]=0 $ where $\mathcal{F}_{\Upsilon,t} := \sigma(\{\Upsilon_{t'}, t' \leq t\}) $ is the natural filtration and (2) 
\begin{align*}
    \sigma_{\Upsilon,t}^2:= \E[\Upsilon_t^2|\mathcal{F}_{t-1}] &= \sigma_{\varepsilon}^2n\widetilde{w}_n(t)^2\uptau_3' \mathcal{X}_{\Gamma,t}'\mathcal{X}_{\Gamma,t}\uptau_3\\ 
    &= n\sigma_{\varepsilon}^2||\widetilde{w}_n(t)\mathcal{X}_{\Gamma,t}\uptau_3||^2\\ 
    &\leq n\sigma_{\varepsilon}^2\frac{1}{\T^2 \wedge T^2}||\mathcal{X}_{\Gamma,t}||^2\\ 
    &= n\sigma_{\varepsilon}^2\frac{1}{\T^2 \wedge T^2}||(1,\indicator{t\geq 1},\tilde{t}(t)/n)||^2\\ 
    &\leq 3n\sigma_{\varepsilon}^2\frac{1}{\T^2 \wedge T^2}\\
    &=3\sigma_{\varepsilon}^2\frac{1}{n}\frac{1}{(1-\lambda_n)^2 \wedge \lambda_n^2}=:c_{\Upsilon,n}^2
\end{align*}

 \noindent Observe that $c_{\Upsilon,n}^2$ does not vary with $t$. First, the above implies $\{\Upsilon_t^2/c_{\Upsilon,n}^2\}$ is uniformly integrable. 
 
 Second, define 
 $$ 
 s_{\Upsilon,n}^2:= \sum_{t=-\T}^T \E[\Upsilon_t^2] = n\sigma_{\varepsilon}^2  \sum_{t=-\T}^T ||\widetilde{w}_n(t)\mathcal{X}_{\Gamma,t}\uptau_3||^2 =  ||Q_B(\lambda_n)^{1/2}\uptau_3||^2 + \mathcal{O}(n^{-1}).
 $$ From the above, the lower bound can be characterised as 
 \begin{align*}
     s_{\Upsilon,n}^2 = n\sigma_{\varepsilon}^2  \sum_{t=-\T}^T ||\widetilde{w}_n(t)\mathcal{X}_{\Gamma,t}\uptau_3||^2 \geq \sigma_{\varepsilon}^2\frac{1}{(1-\lambda_n)^2 \vee \lambda_n^2} \frac{1}{n}\sum_{t=-\T}^T ||\mathcal{X}_{\Gamma,t}||^2 \geq \sigma_{\varepsilon}^2\frac{1}{(1-\lambda_n)^2 \vee \lambda_n^2}.
 \end{align*}

 \noindent Thus,
 $$
 \sup_{n} n \frac{c_{\Upsilon,n}^2}{s_{\Upsilon,n}^2} \leq 3\frac{(1-\lambda_n)^2 \vee \lambda_n^2 }{(1-\lambda_n)^2 \wedge \lambda_n^2} = 3\Big(\frac{(1-\lambda_n) \vee \lambda_n }{(1-\lambda_n) \wedge \lambda_n}\Big)^2 < 3\Big(\frac{1-\epsilon}{\epsilon}\Big)^2 < \infty  
 $$  under \Cref{ass:TTratio_lambda_n}. 

 Taken together, it follows from \citet[Theorem 25.4]{davidson2021stochastic} that conditions 25.3(a) and 25.3(b) of \citet[Theorem 25.3]{davidson2021stochastic} are satisfied for $\Upsilon_{nt}:= \Upsilon_t/s_{\Upsilon,n}$. It follows from \citet[Theorem 25.3]{davidson2021stochastic} that $ \sum_{t=-\T}^T \Upsilon_{nt} \xrightarrow{d} \mathcal{N}(0,1) $ as $n\rightarrow \infty$ under \Cref{ass:TTratio_lambda_n}.

Noticing that $ \mathbb{S}^3 \ni \widetilde{\uptau}_3:= ||Q_B(\lambda_n)^{1/2}\uptau_3||^{-1}Q_B(\lambda_n)^{1/2}\uptau_3 $, for $ \uptau_3 \in \mathbb{S}^3 $,
\begin{equation}\label{eqn:expand_dTrend_CLT}
    \begin{split}
    &\widetilde{\uptau}_3'Q(\lambda_n)^{-1/2}\sqrt{n}(\widehat{\beta}_{\Gamma,n}^{w,\psi} - \widetilde{\beta}_{\Gamma,n}^{w,\psi}) = \widetilde{\uptau}_3'Q(\lambda_n)^{-1/2}Q_A(\lambda_n)^{-1} \sqrt{n}\sum_{t=-\T}^T \widetilde{w}_n(t) \mathcal{X}_{\Gamma,t}'\mathcal{E}_t + \mathcal{O}_p(n^{-1})\\
    &=\widetilde{\uptau}_3'Q(\lambda_n)^{-1/2} \underbrace{Q_A(\lambda_n)^{-1} Q_B(\lambda_n)^{1/2}}_{=: Q(\lambda_n)^{1/2}}\Big( Q_B(\lambda_n)^{-1/2} \sqrt{n}\sum_{t=-\T}^T \widetilde{w}_n(t) \mathcal{X}_{\Gamma,t}'\mathcal{E}_t \Big) + \mathcal{O}_p(n^{-1})\\
    &=\widetilde{\uptau}_3'Q_B(\lambda_n)^{-1/2} \sqrt{n}\sum_{t=-\T}^T \widetilde{w}_n(t) \mathcal{X}_{\Gamma,t}'\mathcal{E}_t + \mathcal{O}_p(n^{-1})\\
    &:= ||Q_B(\lambda_n)^{1/2}\uptau_3||^{-1} \uptau_3'\sqrt{n}\sum_{t=-\T}^T \widetilde{w}_n(t) \mathcal{X}_{\Gamma,t}'\mathcal{E}_t + \mathcal{O}_p(n^{-1})\\
    &=: \sum_{t=-\T}^T \Upsilon_{nt} + \mathcal{O}_p(n^{-1}) \xrightarrow{d} \mathcal{N}(0,1).
\end{split}
\end{equation}

\paragraph{Part (b):}

Noticing that $ \mathbb{S}^3 \ni \widetilde{e}_2:= ||Q_B(\lambda_n)^{1/2}e_2||^{-1}Q_B(\lambda_n)^{1/2}e_2 $, for $ (0,1,0)' =:e_2 \in \mathbb{S}^3 $, it follows from \eqref{eqn:expand_dTrend_CLT} that 
\begin{align*}
    \widetilde{e}_2'Q(\lambda_n)^{-1/2}\sqrt{n}(\widehat{\beta}_{\Gamma,n}^{w,\psi} - \widetilde{\beta}_{\Gamma,n}^{w,\psi}) &= ||Q_B(\lambda_n)^{1/2}e_2||^{-1} e_2'\sqrt{n}\sum_{t=-\T}^T \widetilde{w}_n(t) \mathcal{X}_{\Gamma,t}'\mathcal{E}_t + \mathcal{O}_p(n^{-1})\\
    &= \frac{\sqrt{n}(\widehat{ATT}_{\omega,T} - \widetilde{ATT}_n^{w,\phi} )}{\sqrt{e_2'Q(\lambda_n)e_2}}\\
    &= \frac{\sqrt{n}(\widehat{ATT}_{\omega,T} - ATT_{\omega,T})}{\sqrt{e_2'Q(\lambda_n)e_2}} + \frac{\sqrt{n}R_n^{w,\phi}}{\sqrt{e_2'Q(\lambda_n)e_2}}\\
    &=\frac{\sqrt{n}(\widehat{ATT}_{\omega,T} - ATT_{\omega,T})}{\sqrt{e_2'Q(\lambda_n)e_2}} + o(1).
\end{align*}The last equality follows from the conditions of \Cref{Theorem:Identification} and the continuous mapping theorem. Thus, invoking part (a) above,
\begin{align*}
    \frac{\sqrt{n}(\widehat{ATT}_{\omega,T} - ATT_{\omega,T})}{\sqrt{e_2'Q(\lambda_n)e_2}} = \widetilde{e}_2'Q(\lambda_n)^{-1/2}\sqrt{n}(\widehat{\beta}_{\Gamma,n}^{w,\psi} - \widetilde{\beta}_{\Gamma,n}^{w,\psi}) + o(1) \xrightarrow{d} \mathcal{N}(0,1),
\end{align*} and this proves the assertion as claimed.
\qed

\section{Supporting Lemmata}\label{App_Sect:Supp_Lem}

\begin{lemma}\label{lem:X_Sig_sn_LrBound}
    Suppose \Cref{ass:dominance_Y,ass:bound_sn,ass:TTratio_lambda_n} hold, then $\displaystyle\frac{||X_{nt}||_r}{c_{nt}}\lesssim 1 $ uniformly in $(n,t)$ and $\W^2 $ where $r>2$.
\end{lemma}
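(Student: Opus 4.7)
The plan is to chain together the three assumptions through direct bounds. First, I would peel off the definitions: since $X_{nt} = \omega_n(t)(X_t - \E[X_t])/s_{\omega,n}$, the $L_r$-norm factorises as
\[
\|X_{nt}\|_r = \frac{|\omega_n(t)|}{s_{\omega,n}}\,\|X_t - \E[X_t]\|_r,
\]
so Assumption \ref{ass:dominance_Y} immediately yields $\|X_{nt}\|_r \leq C|\omega_n(t)|/s_{\omega,n}$ uniformly in $t$ and $n$.

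Next, because $c_{nt} := \max\{\sigma_{nt}, s_{\omega,n}\} \geq s_{\omega,n}$ by construction, dividing by $c_{nt}$ gives
\[
\frac{\|X_{nt}\|_r}{c_{nt}} \;\leq\; \frac{\|X_{nt}\|_r}{s_{\omega,n}} \;\leq\; \frac{C\,|\omega_n(t)|}{s_{\omega,n}^2}.
\]
The remaining task is to show $|\omega_n(t)|/s_{\omega,n}^2 \lesssim 1$. For the numerator, the definition of $\W$ in \eqref{eqn:dfn_W} forces $T\max_{1\leq t\leq T}w_T(t)\lesssim 1$ and $\T\max_{1\leq -\tau\leq \T}\psi_\T(-\tau)\lesssim 1$ uniformly in $T$ and $\T$, respectively, and these bounds are uniform over $w,\psi \in \W$. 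Hence $|\omega_n(t)| \leq \max\{w_T(t),\psi_\T(-t)\} \lesssim (T\wedge \T)^{-1}$ uniformly in $\W^2$. Assumption \ref{ass:TTratio_lambda_n} ensures $T \asymp \T \asymp n$, giving $|\omega_n(t)| \lesssim n^{-1}$.

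For the denominator, Assumption \ref{ass:bound_sn} provides $s_{\omega,n}^{-2} \leq \epsilon^{-2}n$ uniformly in $\W^2$. Combining these bounds,
\[
\frac{\|X_{nt}\|_r}{c_{nt}} \;\lesssim\; \frac{C}{n} \cdot n \;=\; C \;\lesssim\; 1
\]
uniformly in $(n,t)$ and in $\{w,\psi\}\in\W^2$, which is the claim. There is no real obstacle here — the proof is essentially a bookkeeping exercise that matches the lower bound on $s_{\omega,n}$ against the upper bound on $|\omega_n(t)|$ coming from the admissible weighting class, with Assumption \ref{ass:TTratio_lambda_n} aligning the pre- and post-treatment orders of magnitude. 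The only thing worth double-checking is that the implicit constants in $\lesssim$ can be chosen uniformly over $\W^2$, which follows directly because each of the invoked inequalities is stated uniformly in that class.
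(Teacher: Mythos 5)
Your proposal is correct and follows essentially the same route as the paper: bound $\|\omega_n(t)(X_t-\E[X_t])\|_r\lesssim n^{-1}$ via Assumption \ref{ass:dominance_Y}, the class $\W$, and Assumption \ref{ass:TTratio_lambda_n}, then absorb the two factors of $s_{\omega,n}$ (one from the definition of $X_{nt}$, one from $c_{nt}\geq s_{\omega,n}$) using $\sqrt{n}s_{\omega,n}\geq\epsilon$ to land on the uniform bound $C\epsilon^{-2}$. The paper's only cosmetic difference is that it keeps $c_{nt}=\sigma_{nt}\vee s_{\omega,n}$ explicit before minorising it by $s_{\omega,n}$, which changes nothing.
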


\begin{proof}
    \begin{align*}
        &||\omega_n(t)(X_t - \E[X_t])||_r = w_T(t)||X_t-\E[X_t]||_r \indicator{t\geq 1} + \psi_{\T }(-t)||X_t-\E[X_t]||_r\indicator{t\leq -1}\\
        &= w_T(t)||(Y_{1,t}-Y_{0,t})-\E[(Y_{1,t}-Y_{0,t})]||_r \indicator{t\geq 1} + \psi_{\T }(-t)||(Y_{1,t}-Y_{0,t})-\E[(Y_{1,t}-Y_{0,t})]||_r\indicator{t\leq -1}\\
        &\leq C \big( w_T(t)\indicator{t\geq 1} + \psi_{\T }(-t) \indicator{t\leq -1}\big) \\
        &\leq \frac{C}{\lambda_n n}\sup_{\phi\in\W } \Big\{ T \max_{t \in [T]}\phi_T(t) \Big\}\indicator{t\geq 1} + \frac{C}{(1-\lambda_n)n}\sup_{\phi\in\W } \Big\{\T \max_{\tau \in [-\T] } \phi_{\T }(-\tau) \Big\} \indicator{t\leq -1}\\
        &\lesssim n^{-1}.
    \end{align*}
\noindent The first inequality follows from \Cref{ass:dominance_Y}. The last line uses the definition of $\W$ in \eqref{eqn:dfn_W} and \Cref{ass:TTratio_lambda_n}. It follows from the above and \Cref{ass:bound_sn} that 
\begin{align*}
    \frac{||X_{nt}||_r}{c_{nt}} &= \frac{||\omega_n(t)(X_t - \E[X_t])||_r}{s_{\omega,n}(\sigma_{nt}\vee s_{\omega,n})} \lesssim \frac{1}{\sqrt{n}s_{\omega,n}(\sqrt{n}\sigma_{nt} \vee \sqrt{n}s_{\omega,n})} \leq \frac{1}{\epsilon \sqrt{n}\sigma_{nt} \vee \epsilon^2} = \frac{1}{\epsilon \sqrt{n}\sigma_{nt}} \wedge \frac{1}{\epsilon^2} \leq \frac{1}{\epsilon^2}
\end{align*}
\noindent uniformly in $(n,t)$ and $\W^2 $.
\end{proof}

\begin{lemma}\label{lem:CLT}
    Suppose \Cref{ass:dominance_Y,ass:bound_sn,ass:TTratio_lambda_n,ass:Sampling,ass:technical_CLT} hold, then $\displaystyle \sum_{t=-\T }^{T} X_{nt} \xrightarrow{d} \mathcal{N}(0,1) $ uniformly in $\W^2 $.
\end{lemma}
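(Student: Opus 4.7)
The statement is a triangular-array central limit theorem for an $L_2$-NED sequence on a mixing base, so my plan is to invoke an off-the-shelf NED CLT (e.g.\ \citet[Theorem 24.6 or 25.7]{davidson2021stochastic}, or the \citet{deJong-1997central} CLT already referenced in the text) and devote the bulk of the work to verifying its hypotheses uniformly in $\{w,\psi\}\in\W^2$. By construction $\{X_{nt}\}$ is a triangular array of zero-mean random variables with $\sum_{t=-\T}^T\V[X_{nt}]\to 1$ (by the definition of $s_{\omega,n}$, modulo the contribution of cross-covariances that is already handled in $s_{\omega,n}^2$). So the question reduces to (i) a size-$(-1/2)$ NED condition with constants compatible with the CLT, (ii) a size-$(-r/(r-1))$ $\alpha$-mixing base array, (iii) an $L_r$ moment bound of the form $\|X_{nt}\|_r\lesssim c_{nt}$ with $r>2$, and (iv) the Lindeberg-type maximal / summability condition on $\{c_{nt}\}$.

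First, I would collect the ingredients. Items (i) and (ii) are immediate from \Cref{ass:Sampling}: $\{X_{nt}\}$ is $L_2$-NED of size $-1/2$ on the $\alpha$-mixing array $\{V_{nt}\}$ of size $-r/(r-1)$, and the NED constants satisfy $d_{nt}\leq\bar{d}\,\|X_{nt}\|_2\leq \bar{d}\,c_{nt}$ by the definition of $c_{nt}$. Item (iii) is exactly \Cref{lem:X_Sig_sn_LrBound}, which supplies $\|X_{nt}\|_r/c_{nt}\lesssim 1$ uniformly in $(n,t)$ \emph{and in} $\W^2$; this is the hinge that transports the CLT's hypotheses from a fixed $\{w,\psi\}$ to the uniform-in-$\W^2$ statement. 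Item (iv) is \Cref{ass:technical_CLT}, whose block-maximum conditions $\max_{j}M_{nj}=o(b_n^{-1/2})$ and $\sum_j M_{nj}^2=\mathcal{O}(b_n^{-1})$ are precisely the heterogeneity/variance-control conditions needed by the \citet{deJong-1997central} CLT (and the corresponding Lindeberg condition in the Davidson formulation).

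Second, I would assemble the CLT. With the ingredients above, the cited CLT yields
\[
\sum_{t=-\T}^T \frac{X_{nt}}{s_n}\xrightarrow{d}\mathcal{N}(0,1),
\]
where $s_n^2:=\V\bigl[\sum_{t=-\T}^T X_{nt}\bigr]$. But by the very definition of $s_{\omega,n}$ and the normalisation $X_{nt}=\omega_n(t)(X_t-\E[X_t])/s_{\omega,n}$, $s_n^2=1$, so no Slutsky rescaling is needed and the conclusion reads $\sum_{t=-\T}^T X_{nt}\xrightarrow{d}\mathcal{N}(0,1)$. Upper-bounding $\|X_{nt}\|_r$ and $c_{nt}$ by quantities that depend on $\omega_n(\cdot)$ only through the uniform-in-$\W$ envelopes $T\max_t w_T(t)\lesssim 1$ and $\T\max_\tau\psi_{\T}(-\tau)\lesssim 1$ (these envelopes are already used in the proof of \Cref{lem:X_Sig_sn_LrBound}), the hypotheses and hence the convergence hold uniformly in $\{w,\psi\}\in\W^2$.

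The main obstacle I anticipate is bookkeeping rather than a substantive inequality: I need to make sure that every constant appearing in the verification of the NED CLT (the NED constants $d_{nt}$, the $L_r$-bound constant, the block-maxima $M_{nj}$, and the lower bound on $\sqrt{n}s_{\omega,n}$) is controlled by the $\W$-envelope $T\max_t w_T(t)\lesssim 1$ (and its pre-treatment analogue) together with \Cref{ass:TTratio_lambda_n,ass:bound_sn,ass:dominance_Y}, so that the resulting rate in the CLT does not depend on the particular element of $\W^2$. Once this uniformity is written out explicitly, the statement follows directly from the cited NED CLT applied to $\{X_{nt}\}$.
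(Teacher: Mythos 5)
Your proposal follows essentially the same route as the paper: the paper's proof also reduces the statement to an off-the-shelf NED central limit theorem (specifically \citet[Theorem 25.12]{davidson2021stochastic}), verifying its hypotheses via the unit-variance normalisation $\|\sum_t X_{nt}\|_2 = 1$, the $L_r$-bound $\|X_{nt}\|_r/c_{nt}\lesssim 1$ from \Cref{lem:X_Sig_sn_LrBound}, the bound $d_{nt}/c_{nt}\lesssim 1$, and \Cref{ass:Sampling,ass:technical_CLT}, with uniformity in $\W^2$ delivered by the same weight envelopes you identify. Your plan is correct and matches the paper's argument in all substantive steps.
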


\begin{proof}

The proof proceeds by the following steps.

(a) \[ \Big|\Big|\sum_{t=-\T }^{T}X_{nt}\Big|\Big|_2 =  s_{\omega,n}^{-1}\Big|\Big|\sum_{t=-\T }^{T}\omega_n(t)(X_t - \E[X_t])\Big|\Big|_2=1. \]

(b) By \Cref{lem:X_Sig_sn_LrBound},
$$
 \Big|\Big|\frac{X_{nt}}{c_{nt}}\Big|\Big|_r = \Big|\Big|\frac{X_{nt}}{\sigma_{nt}\vee s_{\omega,n}}\Big|\Big|_r \lesssim 1  
$$
\noindent uniformly in $(n,t)$ and $\W^2 $ hence, $\displaystyle \frac{X_{nt}}{c_{nt}}$ is $L_r$-bounded uniformly in $(n,t)$ and $\W^2 $ under \Cref{ass:dominance_Y,ass:bound_sn,ass:TTratio_lambda_n}.

(c) The conditions of \Cref{lem:X_Sig_sn_LrBound} imply $\displaystyle \frac{d_{nt}}{c_{nt}} \leq \bar{d}\frac{||X_{nt}||_2}{\sigma_{nt}\vee s_{\omega,n}} \lesssim 1 $ uniformly in $(n,t)$ and $\W^2 $. 

Steps (a)-(c) above and \Cref{ass:Sampling,ass:technical_CLT} verify the conditions of \citet[Theorem 25.12]{davidson2021stochastic} thus $\sum_{t=-\T }^{T} X_{nt} \xrightarrow{d} \mathcal{N}(0,1) $ uniformly in $\W^2 $ as claimed.
\end{proof}

\begin{lemma}\label{lem:Q_A_Q_B}
    Fix $\widetilde{w}_n(t) = T^{-1}\indicator{t\geq 1} + \T^{-1}\indicator{t\leq -1}$, i.e., the uniform regression weighting scheme, then under \Cref{ass:TTratio_lambda_n,ass:Sampling_dtrend}, 
    
    (a)
    \begin{align*}
        \sum_{t=-\T}^T\widetilde{w}(t)\mathcal{X}_{\Gamma,t}'\mathcal{X}_{\Gamma,t} = Q_A(\lambda_n) + \mathcal{O}(n^{-1});
    \end{align*} 
    
    (b)
    \begin{align*}
        \V\Big[\sqrt{n}\sum_{t=-\T}^T \widetilde{w}_n(t) \mathcal{X}_{\Gamma,t}'\mathcal{E}_t\Big] = Q_B(\lambda_n) + \mathcal{O}(n^{-1})
    \end{align*}
    where the expressions of $Q_A(\lambda_n)$ and $Q_B(\lambda_n)$ are given in \eqref{eqn:Q_A} and \eqref{eqn:Q_B}, respectively; and 
    
    (c) $Q_A(\lambda), \ Q_B(\lambda)$, and $Q(\lambda):=Q_A(\lambda)^{-1} Q_B(\lambda) Q_A(\lambda)^{-1} $ are (each) positive definite uniformly in $\lambda \in [\epsilon, \ 1-\epsilon] \subset (0,1) $.
\end{lemma}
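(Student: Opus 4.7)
The plan is to handle parts (a), (b), and (c) in turn. Parts (a) and (b) are essentially bookkeeping exercises driven by standard arithmetic-series identities, while part (c) reduces to verifying positivity of principal minors of two small matrices whose entries are rational functions of $\lambda$.

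For part (a), I would expand each of the six distinct entries of the symmetric $3\times 3$ matrix $\sum_{t=-\T}^T \widetilde{w}_n(t)\mathcal{X}_{\Gamma,t}'\mathcal{X}_{\Gamma,t}$ using $\mathcal{X}_{\Gamma,t}=(1,\indicator{t\geq 1},\tilde{t}(t)/n)$ and $\widetilde{w}_n(t)=T^{-1}\indicator{t\geq 1}+\T^{-1}\indicator{t\leq -1}$. The $(1,1)$, $(1,2)$, and $(2,2)$ entries are immediate: since pre- and post-treatment weights each sum to one, they equal $2$, $1$, and $1$ exactly. Entries involving $\tilde{t}(t)/n$ reduce after the substitution $s=\T+t+1$ to sums $\sum_{s}s^k$ that close in form via $\sum_{k=1}^m k=m(m+1)/2$ and $\sum_{k=1}^m k^2=m(m+1)(2m+1)/6$. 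Substituting $T=\lambda_n n$, $\T=(1-\lambda_n)n$, and using $n^3-\T^3=T(n^2+n\T+\T^2)$, the leading term matches the corresponding entry of $Q_A(\lambda_n)$ while the remainder collapses to $\mathcal{O}(n^{-1})$ uniformly in $\lambda_n$ on any compact subset of $(0,1)$.

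For part (b), \Cref{ass:Sampling_dtrend} provides $\E[\mathcal{E}_t\mathcal{E}_{t'}\mid \mathcal{F}_{\min(t,t')-1}]=\sigma_\varepsilon^2\indicator{t=t'}$ via the tower property, so the cross-period terms vanish and
\[
\V\Big[\sqrt{n}\sum_{t=-\T}^T \widetilde{w}_n(t)\mathcal{X}_{\Gamma,t}'\mathcal{E}_t\Big]=n\sigma_\varepsilon^2\sum_{t=-\T}^T \widetilde{w}_n(t)^2\mathcal{X}_{\Gamma,t}'\mathcal{X}_{\Gamma,t}.
\]
The effective weight is now $n\widetilde{w}_n(t)^2=nT^{-2}\indicator{t\geq 1}+n\T^{-2}\indicator{t\leq -1}$, which introduces the $1/\lambda_n$ and $1/(1-\lambda_n)$ factors visible in the denominators of $Q_B(\lambda_n)$. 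Repeating the arithmetic-series expansion from (a) with this adjusted weight and again using $T=\lambda_n n$, $\T=(1-\lambda_n)n$ delivers the stated convergence to $Q_B(\lambda_n)$ with the $\mathcal{O}(n^{-1})$ remainder.

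For part (c), I would invoke Sylvester's criterion on $Q_A(\lambda)$ and $Q_B(\lambda)$. The $1\times 1$ and $2\times 2$ leading minors of $Q_A(\lambda)$ are $2$ and $2\cdot 1-1=1$; cofactor expansion reduces the $3\times 3$ determinant to a positive multiple of $2\lambda^2-2\lambda+1$, whose discriminant $4-8<0$ shows it stays bounded below by $1/2$ for every real $\lambda$, so $Q_A(\lambda)$ is positive definite without any restriction on $\lambda\in(0,1)$. The same strategy applies to $Q_B(\lambda)$: its $1\times 1$ and $2\times 2$ leading minors both equal $1/[\lambda(1-\lambda)]$, strictly positive on $(0,1)$, while the $3\times 3$ determinant is a rational function whose numerator can be checked non-vanishing on $[\epsilon_\lambda,1-\epsilon_\lambda]$ by examination of its discriminant, with numerical confirmation supplied by \Cref{Fig:min_eig}. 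Finally, $Q(\lambda)=Q_A(\lambda)^{-1}Q_B(\lambda)Q_A(\lambda)^{-1}$ is a congruence transformation of the positive definite $Q_B(\lambda)$ by the invertible matrix $Q_A(\lambda)^{-1}$, hence positive definite. Continuity of the minimum eigenvalue on the compact set $[\epsilon_\lambda,1-\epsilon_\lambda]$ then upgrades pointwise positivity to a uniform positive lower bound. The main obstacle is the symbolic determinant of $Q_B(\lambda)$, which is tedious but routine; everything else is direct computation.
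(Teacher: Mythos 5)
Your proposal is correct, and parts (a) and (b) follow essentially the same route as the paper: exact evaluation of the trivial entries (pre- and post-treatment weights each summing to one give the $2$, $1$, $1$ block), arithmetic-series identities for the entries involving $\tilde{t}(t)$, the substitutions $T=\lambda_n n$, $\T=(1-\lambda_n)n$, and — for (b) — the martingale-difference structure of \Cref{ass:Sampling_dtrend} to kill the cross-period covariances and reduce the variance to $\sigma_\varepsilon^2 n\sum_t \widetilde{w}_n(t)^2\mathcal{X}_{\Gamma,t}'\mathcal{X}_{\Gamma,t}$. Where you genuinely diverge is part (c): the paper establishes positive definiteness of $Q_A(\lambda)$, $Q_B(\lambda)$, and $Q(\lambda)$ by numerically plotting their minimum eigenvalues over $[\epsilon_\lambda,1-\epsilon_\lambda]$ (\Cref{Fig:min_eig}), whereas you propose Sylvester's criterion plus the observation that $Q(\lambda)=Q_A(\lambda)^{-1}Q_B(\lambda)Q_A(\lambda)^{-1}$ is a congruence of $Q_B(\lambda)$ and so inherits positive definiteness for free. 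Your route is the stronger one, and it closes fully in symbols: $\det Q_A(\lambda)=\tfrac{1}{12}(2\lambda^2-2\lambda+1)\geq \tfrac{1}{24}$ for all real $\lambda$ (so $Q_A$ is positive definite on all of $(0,1)$, as you claim), the first two leading minors of $Q_B(\lambda)/\sigma_\varepsilon^2$ both equal $1/[\lambda(1-\lambda)]$ exactly as you state, and the third works out to $(20-13\lambda)/[4\lambda^2(1-\lambda)]$, which is strictly positive on $(0,1)$ — so the "numerical confirmation" you hedge with for the $Q_B$ determinant is unnecessary. Uniformity over the compact interval then follows from continuity of the minimum eigenvalue, as you say. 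In short: same computation for (a)–(b), and for (c) an analytic argument that replaces (and improves on) the paper's graphical verification.
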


\begin{proof}
\textbf{Part (a)}

    As pre- and post-treatment weights must sum to one, one has, using the uniform regression weighting scheme $\widetilde{w}_n(t) = T^{-1}\indicator{t\geq 1} + \T^{-1}\indicator{t\leq -1}$ that,
\begin{align*}
    \sum_{t=-\T}^T\widetilde{w}(t)\mathcal{X}_t'\mathcal{X}_t 
    &= \sum_{t=-\T}^T
    \begin{bmatrix}
        \widetilde{w}_n(t) &  \widetilde{w}_n(t)\indicator{t\geq 1} &  \widetilde{w}_n(t) \tilde{t}(t)\\
         \widetilde{w}_n(t)\indicator{t\geq 1}
&  \widetilde{w}_n(t)\indicator{t\geq 1} &  \widetilde{w}_n(t)\indicator{t\geq 1}\tilde{t}(t)\\
  \widetilde{w}_n(t) \tilde{t}(t) &   \widetilde{w}_n(t)\indicator{t\geq 1}\tilde{t}(t) &   \widetilde{w}_n(t)\tilde{t}(t)^2
\end{bmatrix}\\
    &= \begin{bmatrix}
        2 &  1 &  1 + n\Big(\frac{3-2\lambda_n}{2}\Big) \\
         1 &  1 &  \frac{1}{2} + n\Big(\frac{2-\lambda_n}{2}\Big) \\
  1 + n\Big(\frac{3-2\lambda_n}{2}\Big) & \frac{1}{2} + n\Big(\frac{2-\lambda_n}{2}\Big) &   q_{A,n}^*
\end{bmatrix}
\end{align*}where  
$$
q_{A,n}^* = \frac{1}{6\lambda_n} \Big( 2\big((2\lambda_n-1)(1-\lambda_n)^2 + 1\big)n^2 + 3((1-\lambda_n)(2\lambda_n-1) + 1)n + 2\lambda_n   \Big) 
$$ which is well-defined for $\lambda_n>0$  thanks to \Cref{ass:TTratio_lambda_n}.

The non-trivial entries of the above matrix are computed in the following steps.
\noindent First, 
\begin{align*}
    \sum_{t=-\T}^T\widetilde{w}_n(t) \indicator{t\geq 1} \tilde{t}(t) = \frac{1}{T} \sum_{t=\T+1}^{n}t = \frac{1}{T}\frac{T}{2}(\T + 1 + n) = \frac{1}{2}(\T + 1 + n) = \frac{1}{2} + n\Big(\frac{2-\lambda_n}{2}\Big).
\end{align*}

\noindent Second, 
\begin{align*}
    \sum_{t=-\T}^T\widetilde{w}_n(t) \tilde{t}(t) &= \frac{1}{\T} \sum_{t=1}^{\T} t + \frac{1}{T} \sum_{t=\T+1}^{n} t\\
    &= \frac{1}{2}(\T+1) + \frac{1}{2}(\T+1+n) = 1 + n\Big(\frac{3-2\lambda_n}{2}\Big).
\end{align*}

\noindent Third,
\begin{align*}
    \sum_{t=-\T}^T \widetilde{w}_n(t)\tilde{t}(t)^2 &= \frac{1}{\T} \sum_{t=1}^{\T} t^2 + \frac{1}{T} \sum_{t=\T+1}^{n} t^2\\
    &= \frac{1}{\T} \sum_{t=1}^{\T} t^2 + \frac{1}{T} \Big( \sum_{t=1}^{n} t^2 - \sum_{t=1}^{\T} t^2\Big)\\
    &= \frac{1}{T}  \sum_{t=1}^{n} t^2 + \Big(\frac{1}{\T} - \frac{1}{T}\Big)\sum_{t=1}^{\T} t^2\\
    &= \frac{1}{T}\Big(\frac{1}{6} n(n+1)(2n+1) \Big) + \Big(\frac{1}{\T} - \frac{1}{T}\Big)\Big(\frac{1}{6} \T(\T+1)(2\T+1) \Big)\\
    &= \frac{1}{6\lambda_n} \Big( 2\big((2\lambda_n-1)(1-\lambda_n)^2 + 1\big)n^2 + 3((1-\lambda_n)(2\lambda_n-1) + 1)n + 2\lambda_n   \Big)\\
    &=: q_{A,n}^*.
\end{align*}

Next, fix $\Gamma_{\omega,n}^{-1/2} = \mathrm{diag}(1,1,n^{-1})$, then, 
\begin{align*}
    \sum_{t=-\T}^T\widetilde{w}(t)\mathcal{X}_{\Gamma,t}'\mathcal{X}_{\Gamma,t} &= 
    \begin{bmatrix}
        2 &  1 & \Big(\frac{3-2\lambda_n}{2}\Big) \\
         1 &  1 & \Big(\frac{2-\lambda_n}{2}\Big) \\
  \Big(\frac{3-2\lambda_n}{2}\Big) & \Big(\frac{2-\lambda_n}{2}\Big) &   \frac{1}{3} (2\lambda_n^2 - 5\lambda_n + 4)
\end{bmatrix} + \mathcal{O}(n^{-1})\\ 
&=:Q_A(\lambda_n) + \mathcal{O}(n^{-1}).
\end{align*}

\textbf{Part (b)}

Now consider 
\begin{align*}
    \sum_{t=-\T}^T \widetilde{w}_n(t) \mathcal{X}_{\Gamma,t}'\mathcal{E}_t = 
    \begin{bmatrix}
        \sum_{t=-\T}^T \widetilde{w}_n(t) \mathcal{E}_t\\
        \sum_{t=-\T}^T \widetilde{w}_n(t) \indicator{t\geq 1}\mathcal{E}_t\\
        \sum_{t=-\T}^T \widetilde{w}_n(t) (\tilde{t}(t)/n) \mathcal{E}_t
    \end{bmatrix}.
\end{align*} Under \Cref{ass:Sampling_dtrend} on $\mathcal{E}_t$, 
\begin{align*}
    \V[\sqrt{n}\sum_{t=-\T}^T \widetilde{w}_n(t) \mathcal{X}_{\Gamma,t}'\mathcal{E}_t] &= \sigma_{\varepsilon}^2n\sum_{t=-\T}^T \widetilde{w}_n(t)^2\mathcal{X}_{\Gamma,t}'\mathcal{X}_{\Gamma,t}\\ 
    &= \sigma_{\varepsilon}^2n\sum_{t=-\T}^T
    \begin{bmatrix}
        \widetilde{w}_n(t)^2 &  \widetilde{w}_n(t)^2\indicator{t\geq 1} &  \widetilde{w}_n(t)^2 \tilde{t}(t)/n\\
         \widetilde{w}_n(t)^2\indicator{t\geq 1}
&  \widetilde{w}_n(t)^2\indicator{t\geq 1} &  \widetilde{w}_n(t)^2\indicator{t\geq 1}\tilde{t}(t)/n\\
  \widetilde{w}_n(t)^2 \tilde{t}(t)/n &   \widetilde{w}_n(t)^2\indicator{t\geq 1}\tilde{t}(t)/n &   \widetilde{w}_n(t)^2\big(\tilde{t}(t)/n\big)^2
\end{bmatrix}\\
    &= \sigma_{\varepsilon}^2
    \begin{bmatrix}
        \frac{1}{\lambda_n(1-\lambda_n)} &  \frac{1}{\lambda_n} & \frac{1}{\lambda_n} \\
         \frac{1}{\lambda_n} & \frac{1}{\lambda_n} & \frac{1}{2}\Big(\frac{2-\lambda_n}{\lambda_n}\Big) \\
  \frac{1}{\lambda_n} & \frac{1}{2}\Big(\frac{2-\lambda_n}{\lambda_n}\Big) &   \frac{2}{\lambda_n}\big( 3 - 2\lambda_n\big) 
\end{bmatrix} + \mathcal{O}(n^{-1})\\
&= Q_B(\lambda_n) + \mathcal{O}(n^{-1}).
\end{align*}

The entries of the above matrix are computed in the following steps.

\noindent First,
\begin{align*}
    n\sum_{t=-\T}^T \widetilde{w}_n(t)^2 \indicator{t\geq 1} = \frac{n}{T^2}\sum_{t=-\T}^T \indicator{t\geq 1} =  \frac{n}{T} = \frac{1}{\lambda_n}.
\end{align*}

\noindent Second,
\begin{align*}
    n\sum_{t=-\T}^T \widetilde{w}_n(t)^2 = \frac{n}{\T^2}\sum_{t=-\T}^T \indicator{t\leq -1}  + \frac{n}{T^2}\sum_{t=-\T}^T \indicator{t\geq 1} =  \frac{n}{\T} + \frac{n}{T} = \frac{1}{\lambda_n(1-\lambda_n)}.
\end{align*}

\noindent Third, 
\begin{align*}
    n\sum_{t=-\T}^T \widetilde{w}_n(t)^2\indicator{t\geq 1}\tilde{t}(t)/n &= \sum_{t=-\T}^T \widetilde{w}_n(t)^2\indicator{t\geq 1}\tilde{t}(t)\\
    &= \frac{1}{T^2}\sum_{t=\T+1}^n t = \frac{1}{2T}( \T + 1 + n )\\
    &= \frac{1}{2}\frac{(2-\lambda_n)}{\lambda_n}  + \frac{1}{2\lambda_n n}\\
    &= \frac{1}{2}\frac{(2-\lambda_n)}{\lambda_n}  + \mathcal{O}(n^{-1})
\end{align*} thanks to \Cref{ass:TTratio_lambda_n}.

\noindent Fourth,
\begin{align*}
    n\sum_{t=-\T}^T \widetilde{w}_n(t)^2 \tilde{t}(t)/n &= \sum_{t=-\T}^T \widetilde{w}_n(t)^2 \tilde{t}(t) = \frac{1}{\T^2} \sum_{t=1}^{\T} t + \frac{1}{T^2} \sum_{t=\T+1}^n t \\ 
    &= \frac{1}{2\T}( \T + 1) + \frac{1}{2T}( \T + 1 + n )\\
    &= \frac{1}{2}\Big(1 + \frac{2-\lambda_n}{\lambda_n}\Big) + \frac{1}{2n}\Big( \frac{1}{(1-\lambda_n)} + \frac{1}{\lambda_n} \Big)\\
    &= \frac{1}{2}\Big(1 + \frac{2-\lambda_n}{\lambda_n}\Big) + \mathcal{O}(n^{-1})\\
    &= \frac{1}{\lambda_n} + \mathcal{O}(n^{-1})
\end{align*} thanks to \Cref{ass:TTratio_lambda_n}.

\noindent Fifth, 
\begin{align*}
    &n\sum_{t=-\T}^T\widetilde{w}_n(t)^2(\tilde{t}(t)/n)^2 = \frac{1}{n} \sum_{t=-\T}^T \widetilde{w}_n(t)^2\tilde{t}(t)\\ 
    &= \frac{1}{n} \Big( \frac{1}{\T^2}\sum_{t=1}^{\T} t^2 + \frac{1}{T^2}\sum_{t=\T+1}^n t^2 \Big) \\
    &= \frac{1}{n} \Big( \frac{1}{T^2}\sum_{t=1}^n t^2 + \Big( \frac{1}{\T^2} - \frac{1}{T^2} \Big) \sum_{t=1}^{\T} t^2 \Big)\\
    &= \frac{1}{6nT^2}n(n+1)(2n+1) + \frac{1}{6n}\Big( \frac{T^2 - \T^2}{\T^2T^2}\Big)\big(\T(\T+1)(2\T+1) \big)\\
    &= \frac{1}{\lambda_n^2n^3} \Big( 2n^3 + 3n^2 + n \Big) + \Big( \frac{2\lambda_n - 1}{\lambda_n^2(1-\lambda_n)^2n^3} \Big)\big( 2(1-\lambda_n)^3n^3 + 3(1-\lambda_n)^2n^2 + (1-\lambda_n)n \big) \\
    &= \frac{2}{\lambda_n^2} + \frac{2}{\lambda_n^2}(2\lambda_n-1)(1-\lambda_n) + \frac{1}{\lambda_n^2} \Big( \frac{3}{n} + \frac{1}{n^2} \Big) + \Big( \frac{2\lambda_n - 1}{\lambda_n^2(1-\lambda_n)^2} \Big)\Big( \frac{3(1-\lambda_n)^2}{n} + \frac{1-\lambda_n}{n^2} \Big)  \\
    &= \frac{2}{\lambda_n^2}\big( 1 + (2\lambda_n-1)(1-\lambda_n) \big) + \mathcal{O}(n^{-1})\\
    &= \frac{2}{\lambda_n}\big( 3 - 2\lambda_n\big) + \mathcal{O}(n^{-1}).
\end{align*}

\noindent All the above entries are well-defined thanks to \Cref{ass:TTratio_lambda_n}.

\textbf{Part (c)}
 
Let
 \begin{align*}
     Q_A(\lambda) = \begin{bmatrix}
        2 &  1 & \Big(\frac{3-2\lambda}{2}\Big) \\
         1 &  1 & \Big(\frac{2-\lambda}{2}\Big) \\
  \Big(\frac{3-2\lambda}{2}\Big) & \Big(\frac{2-\lambda}{2}\Big) &   \frac{1}{3} (2\lambda^2 - 5\lambda + 4)
\end{bmatrix} 
=: \begin{bmatrix}
A_{1:2,1:2}(\lambda) & A_{1:2,3}(\lambda) \\    
A_{1:2,3}(\lambda)' & A_{3,3}(\lambda)
\end{bmatrix},
 \end{align*}where $ A_{1:2,1:2}(\lambda):= \begin{bmatrix}
        2 &  1 \\
         1 &  1 \\
\end{bmatrix} $ is the $2\times 2$ leading principal sub-matrix. The eigenvalues of $A_{1:2,1:2}(\lambda)$ are $ \big((3+\sqrt{5})/2, \, (3-\sqrt{5})/2 \big)' $, thus $A_{1:2,1:2}(\lambda)$ is positive definite uniformly in $ \lambda \in [\epsilon, \, 1-\epsilon ] $. Further, consider the Schur complement of $A_{1:2,1:2}(\lambda)$ in $Q_A(\lambda)$, namely 
\begin{align*}
    \mathrm{SC}_{A}(\lambda) :&= A_{3,3}(\lambda) - A_{1:2,3}(\lambda)' A_{1:2,1:2}^{-1}(\lambda) A_{1:2,3}(\lambda)\\ 
    &= \frac{(2\lambda^2 - 5\lambda + 4)}{3}  - \frac{2\lambda^2 - 6\lambda + 5}{4}\\
    &= \frac{1}{6}\big(\lambda - \tfrac{1}{2}\big)^2 + \frac{1}{24}.
\end{align*} Over the open interval $(0,1)$, $\mathrm{SC}_{A}(\lambda)$ is a parabola, convex with a global minimum at $\lambda^*=1/2$ which equals $\mathrm{SC}_{A}(1/2) = 1/24 > 0 $. It follows from the above and \citet[Proposition 8.2.4(v)]{bernstein2009matrix} that $Q_A(\lambda)$ is positive definite uniformly in $ \lambda \in [\epsilon, 1-\epsilon], \ \epsilon \in (0,1/2] $.

Next, consider 
\begin{align*}
    \frac{\lambda}{ \sigma_{\varepsilon}^{2} }Q_B(\lambda) = 
    \begin{bmatrix}
        \frac{1}{(1-\lambda)} &  1 & 1 \\
         1 & 1 & \frac{(2-\lambda)}{2} \\
  1 & \frac{(2-\lambda)}{2} &   2( 3 - 2\lambda) 
\end{bmatrix},
\end{align*}and notice that since $\sigma_{\varepsilon}^{2}>0$, $Q_B(\lambda)$ is positive definite in $\lambda \in [\epsilon, 1-\epsilon]$ if and only if $\frac{\lambda}{ \sigma_{\varepsilon}^{2} }Q_B(\lambda)$ is positive definite. 

Let $B_{1:k,1:k}(\lambda)$ denote the upper-left $k\times k$ sub-matrix of $\frac{\lambda}{ \sigma_{\varepsilon}^{2} }Q_B(\lambda)$. 
$ \operatorname{det}\big(B_{1,1}(\lambda)\big) = \frac{1}{(1-\lambda)} \geq \frac{1}{(1-\epsilon)} > 0 $ for all $\lambda \in [\epsilon, 1-\epsilon] $. $ \operatorname{det}(B_{1:2,1:2}(\lambda)) = \frac{1}{1-\lambda} - 1 \geq \frac{1}{(1-\epsilon)} - 1 = \frac{\epsilon}{(1-\epsilon)} > 0 $. This confirms $B_{1:2,1:2}(\lambda)$ is positive definite uniformly in $\lambda \in [\epsilon, 1-\epsilon] $ by Sylvester's criterion. Compute the Schur complement of $B_{1:2,1:2}(\lambda)$ in $\frac{\lambda}{ \sigma_{\varepsilon}^{2} }Q_B(\lambda)$:
\begin{align*}
    \mathrm{SC}_B(\lambda) :&= B_{3,3}(\lambda) - B_{1:2,3}(\lambda)' B_{1:2,1:2}^{-1}(\lambda) B_{1:2,3}(\lambda)\\ 
    &= 2( 3 - 2\lambda) - (1 - (3/4)\lambda)\\
    &= 4^{-1}(20-13\lambda)
\end{align*}$ \mathrm{SC}_B(\lambda) \geq 4^{-1}(7+13\epsilon) > 0 $ uniformly in $\lambda \in [\epsilon, 1-\epsilon] $. Again, invoking \citet[Proposition 8.2.4(v)]{bernstein2009matrix}, it follows from the above that $Q_B(\lambda)$ is positive definite uniformly in $ \lambda \in [\epsilon, 1-\epsilon], \ \epsilon \in (0,1/2] $.

Lastly, observe that for all $\uptau \in \mathbb{S}^3$, $\uptau'Q(\lambda)\uptau = \uptau Q_A(\lambda)^{-1} Q_B(\lambda) Q_A(\lambda)^{-1}\uptau = \widetilde{\uptau}'Q_B(\lambda) \widetilde{\uptau} = || Q_B^{1/2}(\lambda) \widetilde{\uptau}|| > 0  $ since $ \widetilde{\uptau}:= Q_A(\lambda)^{-1}\uptau \neq 0 $ by the positive definiteness of $Q_A(\lambda)$ and $Q_B(\lambda)$ uniformly in $\lambda \in [\epsilon, \, 1-\epsilon]$ from the foregoing. 

This completes the proof for this part. \Cref{Fig:min_eig} illustrates the result using plots of the minimum eigenvalues of $Q_A(\lambda)$, $Q_B(\lambda)$ and $Q(\lambda)$, respectively, as a function of $\lambda \in [\epsilon_{\lambda}, \ 1-\epsilon_{\lambda}], \ \epsilon_{\lambda} = 10^{-6}$.

\begin{figure}[!htbp]
\centering 
\caption{Minimum Eigenvalues of $Q_A(\lambda)$, $Q_B(\lambda)$, and $Q(\lambda)$.}
\begin{subfigure}{0.32\textwidth}
\centering
\includegraphics[width=1\textwidth]{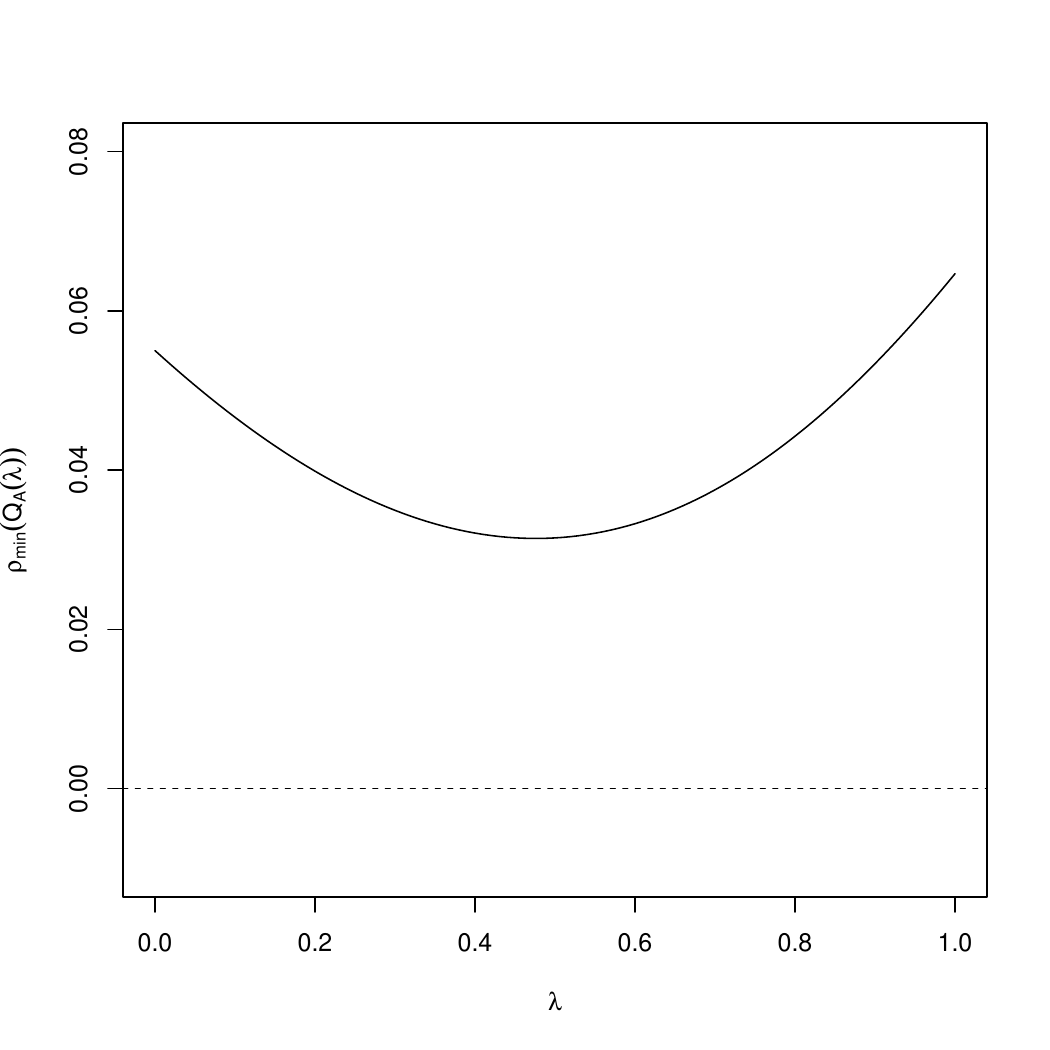}
\caption{$Q_A(\lambda)$}
\end{subfigure}
\begin{subfigure}{0.32\textwidth}
\centering
\includegraphics[width=1\textwidth]{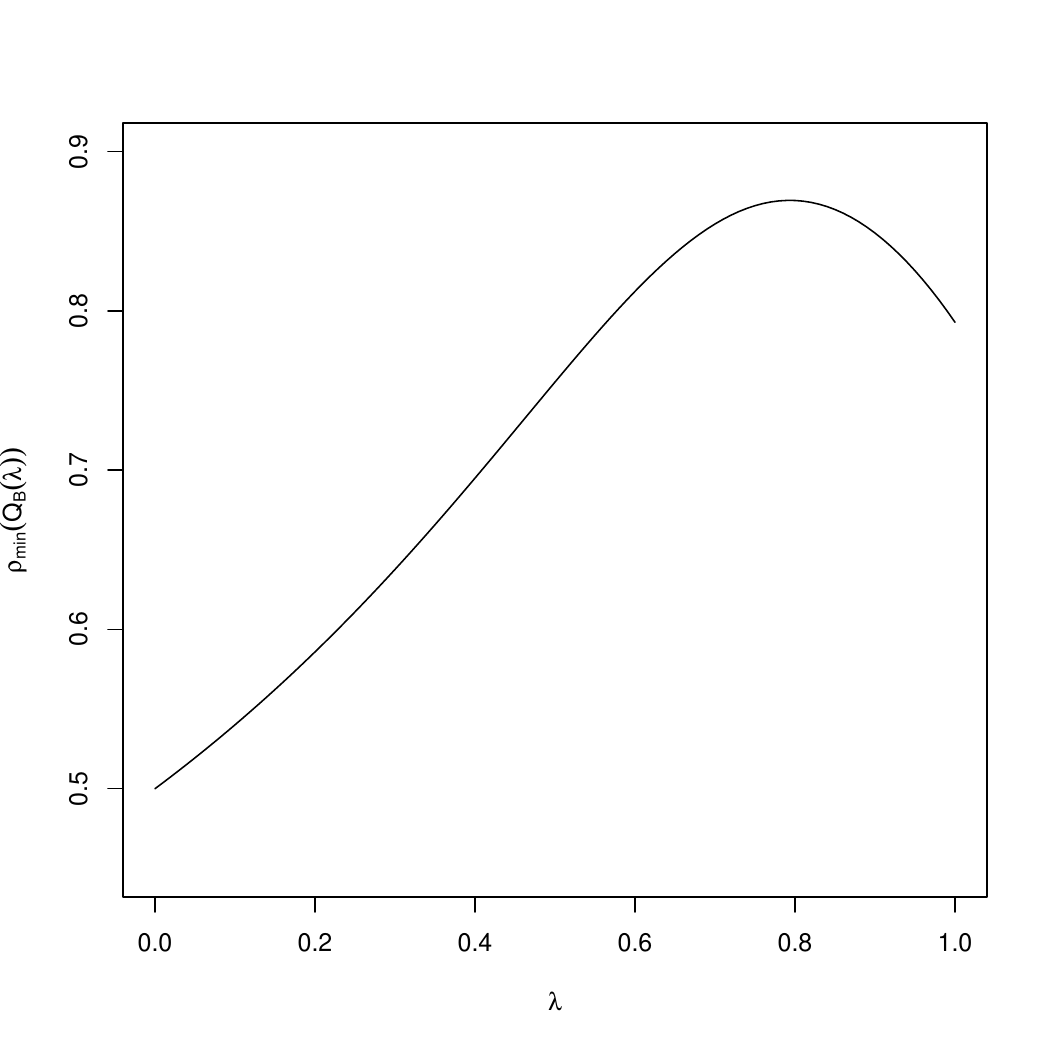}
\caption{$Q_B(\lambda)$}
\end{subfigure}
\begin{subfigure}{0.32\textwidth}
\centering
\includegraphics[width=1\textwidth]{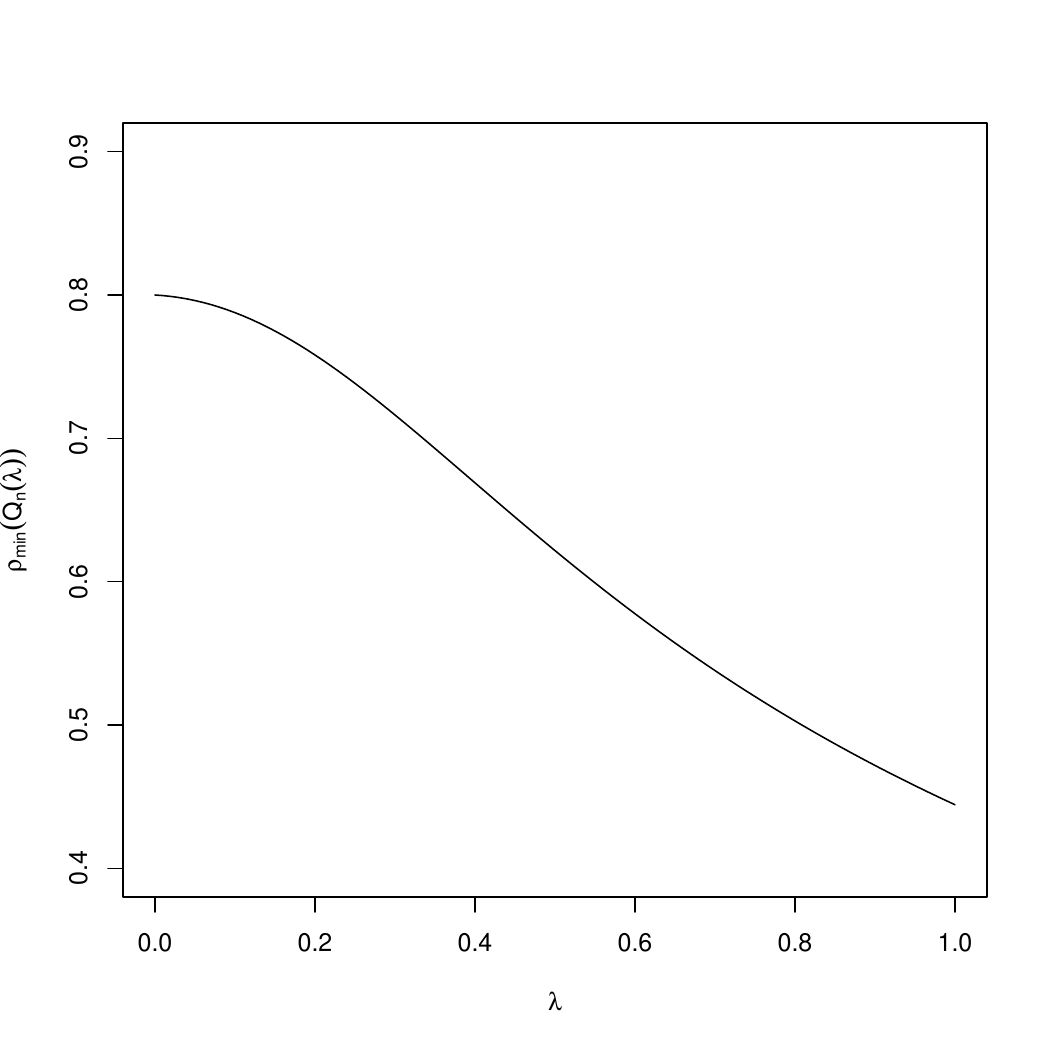}
\caption{$Q(\lambda)$}
\end{subfigure}
\label{Fig:min_eig}

{\footnotesize
\textit{Notes:} The continuous lines plot the minimum eigenvalue of the respective $3\times 3$ matrices as a function of $\lambda \in [\epsilon_{\lambda}, \ 1-\epsilon_{\lambda}], \ \epsilon_{\lambda} = 10^{-6} $. The horizontal dashed line in the first plot corresponds to zero.
}
\end{figure}
\end{proof}

\section{Useful Propositions}\label{App_Sect:Useful_Prop}

\subsection{T-DiD as a least-squares estimator}
\begin{proposition}\label{Prop:ATT_equiv_wreg}
The vector of minimisers of the least squares criterion
\begin{align*}
        S_{\widetilde{w},n}(\beta_0,B): = \sum_{t=-\T}^T \widetilde{w}_n(t)(X_t - \beta_0 - B  \indicator{t\geq 1})^2
    \end{align*} is given by $\displaystyle (\widehat{\beta}_0,\widehat{B})' = \argmin_{(\beta_0,B)'} S_{\widetilde{w},n}(\beta_0,B)$ where $\widehat{B} = \widehat{ATT}_{\omega,T}$.
\end{proposition}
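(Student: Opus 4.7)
The plan is to derive $\widehat{B}$ explicitly from the two first-order conditions of the (strictly convex) quadratic criterion $S_{\widetilde{w},n}(\beta_0, B)$ and to verify that the resulting expression coincides with \eqref{eqn:DiD_estimator}. Since the criterion is quadratic and positive semi-definite in $(\beta_0, B)'$, the argmin exists and any stationary point is the unique minimizer (provided the relevant moment matrix is non-singular, which follows from $\sum_t \widetilde{w}_n(t)\indicator{t\geq 1} = 1$ and $\sum_t \widetilde{w}_n(t) = 2$ not being proportional). So the whole proof reduces to manipulating two linear equations; no asymptotics or probability are needed.

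The first key observation I would record is the accounting identity for the weights. By the convexity of $w_T$ and $\psi_{\T}$ and the convention $\widetilde{w}_n(0)=0$, one has
\begin{equation*}
    \sum_{t=-\T}^T \widetilde{w}_n(t) = \sum_{t=1}^T w_T(t) + \sum_{-\tau=1}^{\T}\psi_{\T}(-\tau) = 2,\qquad \sum_{t=-\T}^T \widetilde{w}_n(t)\indicator{t\geq 1} = 1.
\end{equation*}
These two identities will collapse the FOC system to something trivial to solve.

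Next, I would differentiate $S_{\widetilde{w},n}$. The FOC with respect to $\beta_0$ gives
\begin{equation*}
    \sum_{t=-\T}^T \widetilde{w}_n(t) X_t = 2\widehat{\beta}_0 + \widehat{B},
\end{equation*}
and the FOC with respect to $B$ (using $\indicator{t\geq 1}^2 = \indicator{t\geq 1}$) gives
\begin{equation*}
    \sum_{t=1}^T w_T(t) X_t = \widehat{\beta}_0 + \widehat{B}.
\end{equation*}
Subtracting the second from the first yields $\widehat{\beta}_0 = \sum_{-\tau=1}^{\T}\psi_{\T}(-\tau) X_\tau$, which is the weighted pre-treatment mean of $X_t$. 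Substituting back,
\begin{equation*}
    \widehat{B} = \sum_{t=1}^T w_T(t) X_t - \sum_{-\tau=1}^{\T}\psi_{\T}(-\tau) X_\tau.
\end{equation*}

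Finally, plugging in $X_t = Y_{1,t} - Y_{0,t}$ and rearranging yields exactly the right-hand side of \eqref{eqn:DiD_estimator}, proving $\widehat{B} = \widehat{ATT}_{\omega,T}$. There is no genuine obstacle here; the only step that requires a moment's care is keeping track of the two pieces of $\widetilde{w}_n(t)$ and recognizing that each piece separately integrates to one, which is precisely why the FOCs decouple into a pre-treatment mean (absorbed into $\widehat{\beta}_0$) and a post-minus-pre weighted difference (identified with $\widehat{B}$).
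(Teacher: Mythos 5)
Your proof is correct and follows essentially the same route as the paper's: both derive the two first-order conditions, use the weight identities $\sum_t \widetilde{w}_n(t)=2$ and $\sum_t \widetilde{w}_n(t)\indicator{t\geq 1}=1$ to reduce them to a $2\times 2$ linear system, and solve to get $\widehat{B}=\sum_{t=1}^T w_T(t)X_t - \sum_{-\tau=1}^{\T}\psi_{\T}(-\tau)X_\tau$. The only cosmetic difference is that you subtract the two FOCs to isolate $\widehat{\beta}_0$ directly, whereas the paper solves the first FOC for $\widehat{\beta}_0$ in terms of $\widehat{B}$ and substitutes into the second; the algebra is equivalent.
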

\begin{proof}
    The first order conditions with respect to $\beta_0$ and $B$ are given by 
    \begin{align}
     \label{eqn:FOC1}   \frac{\partial S_{\widetilde{w},n}(\widehat{\beta}_0,\widehat{B})}{\partial \widehat{\beta}_0}:&= -2\sum_{t=-\T}^T \widetilde{w}_n(t)(X_t - \widehat{\beta}_0 - \widehat{B}  \indicator{t\geq 1}) = 0; \text{ and}\\
     \label{eqn:FOC2}   \frac{\partial S_{\widetilde{w},n}(\widehat{\beta}_0,\widehat{B}))}{ \partial \widehat{B}}:&= -2\sum_{t=-\T}^T \widetilde{w}_n(t)\indicator{t\geq 1}(X_t - \widehat{\beta}_0 - \widehat{B}\indicator{t\geq 1}) = 0.
    \end{align} Since $\sum_{t=-\T}^T \widetilde{w}_n(t)\indicator{t\leq -1} =  \sum_{t=-\T}^T \widetilde{w}_n(t)\indicator{t\geq 1}  = 1 $ and $\sum_{t=-\T}^T \widetilde{w}_n(t)=2$ by definition, it follows from \eqref{eqn:FOC1} that 
    \begin{align*}
        \widehat{\beta}_0 = \frac{1}{2} \Big(\sum_{t=-\T}^T \widetilde{w}_n(t) X_t - \widehat{B}\Big)
    \end{align*} which, substituted into \eqref{eqn:FOC2}, gives
    \begin{align*}
        \widehat{B} &= 2\sum_{t=-\T}^T \widetilde{w}_n(t)\indicator{t\geq 1}X_t - \sum_{t=-\T}^T \widetilde{w}_n(t)X_t \\
        &= 2\sum_{t=-\T}^T \widetilde{w}_n(t)\indicator{t\geq 1}X_t - \sum_{t=-\T}^T \widetilde{w}_n(t)\big( \indicator{t\leq -1} + \indicator{t\geq 1} \big)X_t\\
        &= \sum_{t=-\T}^T \widetilde{w}_n(t)\indicator{t\geq 1}X_t - \sum_{t=-\T}^T \widetilde{w}_n(t)\indicator{t\leq -1}X_t\\
        &= \sum_{t=1}^T w_T(t)X_t - \sum_{-\tau= 1}^{\T} \psi(-\tau)X_t=: \widehat{ATT}_{\omega,T}
    \end{align*}by \eqref{eqn:DiD_estimator}.
\end{proof}

\subsection{Deriving the T-DiD from SC-type arguments}\label{App_Sub_Sect:TDiD_SC}
Let $\widehat{Y}_{1,t}(0), \, t \in [-\T] \cup [T] $ denote the synthetic control. Then for the demeaned convex-weighted SC with a single control unit, the SC weight is trivially one: 
\begin{align*}
    &\Big(\widehat{Y}_{1,\tau}(0) - \frac{1}{\T} \sum_{-\tau'=1}^{\T} Y_{1,\tau'}\Big) = \underbrace{1}_{\widehat{b}_{\T}} \times 
 \Big(Y_{0,\tau} - \frac{1}{\T} \sum_{-\tau'=1}^{\T} Y_{0,\tau'}\Big) \text{ whence}\\
    & \widehat{Y}_{1,\tau}(0) = \underbrace{\frac{1}{\T} \sum_{-\tau'=1}^{\T} \big(Y_{1,\tau'} - Y_{0,\tau'}\big)}_{\widehat{a}_{\T}} + \underbrace{1}_{\widehat{b}_{\T}} \times Y_{0,\tau} =: \widehat{a}_{\T} + \widehat{b}_{\T}Y_{0,\tau}
\end{align*}for each $\tau \in [-\T] $.

Using the above to recover untreated potential outcomes of the treated unit in the post-treatment period, one has for each $t\in [T]$, $\widehat{Y}_{1,t}(0) = \widehat{a}_{\T} + \widehat{b}_{\T}Y_{0,t}$. Thus, an estimator of $ATT_{\omega,T}$ can be expressed as
\begin{equation*}
    \widehat{ATT}_{\omega,T}^S:= \sum_{t=1}^T w_T(t)\big(Y_{1,t}(1) - \widehat{Y}_{1,t}(0)\big) = \sum_{t=1}^T w_T(t)\big(Y_{1,t} - Y_{0,t}\big) - \frac{1}{\T} \sum_{-\tau'=1}^{\T} \big(Y_{1,\tau'} - Y_{0,\tau'}\big).
\end{equation*}From the expression of the T-DiD in \eqref{eqn:DiD_estimator}, $\widehat{ATT}_{\omega,T}^S$ is a special case of $\widehat{ATT}_{\omega,T}$ with uniform weighting in pre-treatment periods, i.e., $\psi_{\T }(-\tau) = 1/\T $.

\section{Discussions}\label{App_Sect:Discussions}

\subsection{Further discussion of the literature}

The canonical DiD derives identifying variation from pooled cross-sectional units. It is one of the most commonly used econometric methods for estimating the causal effects of policy interventions in various social sciences, including economics and political science. For example, \citet{Giavazzi2005}, \citet{Persson2005}, \citet{Persson2006}, \citet{Papaioannou2008} and \citet{Persson2009} use more or less the canonical DiD (C-DiD hereafter) to estimate the effect of democracy on economic growth. However, the C-DiD and its refinements, e.g., \citet{ferman-pinto-2019-inference,callaway-santanna-2021,chan2021pcdid,Bellego-Benatia-Dortet-2024-chained,galindo-Some-Tchuente2018fuzzy,picchetti-Pinto-Shinoki-2024difference} are unsuited for fixed-$N$ large-$T$ settings (such as the one under consideration in this paper) where there may be temporal dependence, unit roots, deterministic trends, or cross-sectional dependence between units arising from, e.g., business cycles, persistent series, and correlated shocks. 

Pooled regression analyses often rule out such cross-sectional dependence, which is common in fixed-$N$ large-$T$ settings. For example, it is plausible that common waves of democratisation or de-democratisation affect multiple countries, thereby inducing cross-sectional dependence. Consider, for instance, the wave of democratisation in West Africa in the early 1990s and the Arab Spring uprisings of the early 2010s. Estimating ATT parameters using two-way fixed effects models faces challenges similar to those in pooled regression analyses, where significant heterogeneity in treatment effects is often masked. For these reasons, the literature largely turns to Synthetic Control (SC) methods in such settings -- see e.g., \citet{abadie-gardeazabal-2003,abadie-diamond-hainmueller-2010,xu-2017,abadie-2021,ferman-pinto-2021-synthetic,Sun-Ben-Michael-Feller-2023-using}. SC formalises case studies, and it is adapted to exploiting the temporal variation in the data to estimate treatment effects. 

There are, however, documented impediments to using the SC from a practical perspective. The SC requires a donor pool of multiple control units. For example, bias from the pre-treatment fit when the number of control units is fixed may not shrink to zero as the number of pre-treatment periods increases \citep{ferman-pinto-2021-synthetic}. Additionally, the survey in Appendix A of \citet{lei-sudijono-2024-inference} indicates that unit-level placebo inference, which several SC methods use for inference and robustness checks, may be impractical at small nominal levels --- see also \citet{fry-2024-method}. In fact, the power of such tests is increasing in the number of selected control units, whereas the smallest attainable $p$-value is decreasing in the number of selected controls. Per-period treatment effects on the treated unit, which are typical target SC estimands, are not consistently estimable under fixed-$N$ settings. Consequently, the resulting confidence or predictive intervals can be wide and uninformative \citep{chernozhukov-Wuthrich-Zhu-2024t}. Further, a good pre-treatment fit of the treated unit may not carry over to control units in the donor pool thus complicating the use of permutation methods for SC-based inference \citep{abadie-2021}. The setting of \citet{Li-2020-statistical}, namely several pre- and post-treatment periods with a fixed number of units is close to the one under consideration in this paper. There are important differences to highlight, however. (1) \citet{Li-2020-statistical} uses the SC framework and inference therein is sub-sampling-based and non-standard, whereas the T-DiD is asymptotically normal. (2) The \citet{Li-2020-statistical} model requires multiple (albeit a finite number of) controls for the pre-treatment fitting procedure to be meaningful while the baseline (building block) of the proposed T-DiD requires only one control unit. (3) \citet[Assumption 1]{Li-2020-statistical} imposes stationarity, while it is not required in the current paper.

There are drawbacks to using the SC from an econometric viewpoint. (1) The SC requires a perfect pre-treatment fit, which is often hard to achieve or verify, especially when (potential) outcomes are non-stationary. Inadequate pre-treatment fit, which is more common with longer time series, can introduce bias \citep{ferman-pinto-2021-synthetic,ben-feller-rothstein-2021}. When the pre-treatment fit is good, it may be attributable to non-stationary common shocks or time trends -- see, e.g., \citet[p. 1216]{ferman-pinto-2021-synthetic}. (2) SC methods draw on cross-sectional dependence typically modelled via linear factor models, e.g., \citet{ferman-pinto-2021-synthetic,fry-2024-method,Sun-Ben-Michael-Feller-2023-using}. Without strong cross-sectional dependence, the SC can suffer identification challenges and pre-treatment fit bias. (3) When there is only one suitable control unit, as in the current running empirical example, the convex-weighted SC \emph{trivially} reduces to the difference between post-treatment treated and control outcomes. The post-treatment control outcome serves as the post-treatment untreated potential outcome for the treated unit. The convex-weighted demeaned SC is algebraically shown in \Cref{App_Sub_Sect:TDiD_SC} to be a form of the proposed T-DiD. SC permutation-based inference is, however, infeasible in this setting as there is only one control unit. This implies a stronger identification condition relative to the proposed T-DiD -- the (weighted) average of untreated potential outcomes of both the treated and untreated units must be equal. This rules out \emph{characteristically} different means of the untreated potential outcomes of both units. Although the T-DiD estimator requires at least one valid control unit while the SC only needs a donor pool of peers, it is worth emphasising that control validity in the DiD framework can be weaker than that of the SC. For example, the outcome of a valid DiD control unit does not need to be in \emph{any way} dependent on that of the treated unit in pre-treatment periods since a pre-post asymptotic parallel trends condition suffices. Thus, control units that may be individually SC-irrelevant can be DiD-valid.

The preceding paragraphs highlight two important points. First, an empirical setting with as few as a single treated unit and a single control unit easily arises and becomes empirically relevant once a researcher is interested in unit-specific effects \emph{in lieu of} treatment effects from pooled regression analyses where heterogeneity can be substantially masked. Second, existing methods viz. the C-DiD, SC, and factor models are inadequate in this specific setting. Hence, this paper proposes the T-DiD, which exploits temporal variation in the data for consistent estimation and reliable inference in the presence \emph{or absence} of cross-sectional dependence. Drawing on recent advances in the DiD literature on treatment effect heterogeneity, e.g., \citet{callaway-santanna-2021,chaisemartin-dhaultfoeuille-2020,borusyak-jaravel-spiess-2024}, this paper defines convex-weighted treatment effects on the treated across time and proposes a consistent and asymptotically normal DiD estimator for it.\footnote{Period-specific individual treatment effects on the treated are not consistently estimable in fixed-$N$ settings.} Thus, this paper introduces the DiD into a space within the body of literature that has hitherto been the preserve of Synthetic Control (SC) methods, factor models, and comparative studies. This paper provides easily interpretable asymptotic parallel trends and asymptotically limited anticipation assumptions under which convex-weighted treatment effects on the treated, namely $ATT_{\omega,T}:= \sum_{t=1}^T\omega_T(t)ATT(t) $, are \emph{asymptotically} identified with as few as a single treated and a single control unit and several pre- and post-treatment periods. 

This paper makes a methodological contribution to the literature on the estimation of treatment effects in time series settings. \citet{arkhangelsky-etal-2021} introduces the Synthetic Difference-in-Differences that combines attractive features of both the SC and C-DiD. It is unsuited for the two-unit baseline case considered in this paper, as it requires the number of untreated units to go to infinity -- see Assumption 2 therein. \citet{gonccalves-ng-2024imputation} provides improved predictors of, e.g., SC counterfactuals by exploiting serial correlation in the error. \citet{Pesaran-Smith-2016counterfactual,Angrist-Jorda-Kuersteiner-2018semiparametric} construct counterfactual outcomes from pre-treatment covariates of the treated unit under the identification assumption that the latter are invariant to treatment. A special case of the aforementioned (without pre-treatment covariates) is the before-after (BA) difference in means estimator, see e.g., \citet[p. 366]{carvalho-Masini-Ricardo-2018arco} for a discussion. The BA uses the average pre-treatment outcome of the treated unit as its post-treatment untreated potential outcome. The BA estimator, like the \citet{Pesaran-Smith-2016counterfactual,Angrist-Jorda-Kuersteiner-2018semiparametric,botosaru-giacomini-weidner-2023forecasted} estimators, cannot disentangle common shocks from treatment effects, especially if these occur post-treatment.

A related strand of literature concerns the difference-in-discontinuities (diff-in-disc) estimator \citep[e.g.,][]{galindo-Some-Tchuente2018fuzzy, picchetti-Pinto-Shinoki-2024difference}, which combines the features of regression discontinuity designs and difference-in-differences. This approach is conceptually distinct from the T-DiD, as it identifies local treatment effects defined around a threshold of a running variable (such as time). In contrast, the T-DiD identifies an estimand that pools across all post-treatment periods. Thus, the two estimators are complementary: the diff-in-disc utilises cross-sectional variation (requiring large $N$) to estimate an ATT within an asymptotically shrinking bandwidth, whereas the T-DiD leverages a fixed number of units with large $T$ to estimate a weighted average of ATTs globally. Consequently, while the identification conditions of the diff-in-disc are local, those of the T-DiD are global.

\subsection{Pre-trends testing}\label{App_Sect:Pre_Trends}
While tests of pre-treatment trends are known to only provide \emph{suggestive evidence}, it is instructive to tease out hypotheses they \emph{actually} test. The idea of a pre-test in the current context can be fashioned as follows. Choose a pre-treatment period $\T _o$ such that $\T_o/\T \in (0,1)$ and define $ATT_{\omega,\T_o}:=\sum_{-\tau=1}^{\T _o}w(\tau)ATT(\tau)$. Extending the use of notation in an obvious way, one has the decomposition $ATT_{\omega,\T_o} = \widetilde{ATT}_{\T}^{w,\psi} - R_{\T }^{w,\psi} $ where as before, $\widetilde{ATT}_{\T}^{w,\psi}$ is identified but $ATT_{\omega,\T_o}$ is not, and $R_{\T }^{w,\psi}$ is the difference between a pre-treatment trend bias $TB_{\omega,\T }$ and anticipation bias $ATT_{\omega,\T  - \T _o}$. Decompose $R_{\T }^{w,\psi}$ further as $R_{\T }^{w,\psi}:= TB_{\omega,\T } - ATT_{\omega,\T  - \T _o} $. Thus,
\[
    \widetilde{ATT}_{\T}^{w,\psi} = TB_{\omega,\T } + (ATT_{\omega,\T_o} - ATT_{\T  - \T _o}).
\]
\noindent In line with the convergence rate-based hypotheses adopted in \Cref{Sect:Tests}, a pre-test adapted to the current setting can be formulated using the representation $|\widetilde{ATT}_{\T}^{w,\psi}| = C_{\omega,\T} \T^{1/2-\check{\delta}} $ where $ \{C_{\omega,\T}: \T \geq 1 \}$ is a sequence of bounded positive constants.
\begin{align*}
    \Hyp_o^{pt}: \check{\delta} > 1/2, \quad \Hyp_{an}^{pt}: \check{\delta} = 1/2, \quad \Hyp_a^{pt}: \check{\delta} < 1/2.
\end{align*}

\noindent The pre-test can be implemented using the $t$-statistic
\begin{equation*}
    \widehat{t}_{\omega,\T} = \frac{\widehat{ATT}_{\omega,\T_o}}{\widehat{s}_{\omega,\T}} \text{ where } \widehat{ATT}_{\omega,\T_o} = \sum_{-\tau =1 }^{\T} \omega_{\T}(\tau)X_\tau,
\end{equation*} $\omega_{\T}(\tau):= w_{\T_o}(\tau)\indicator{\tau > \T_o} - \psi_{\T - \T_o}(\tau)\indicator{\tau < \T_o}  $, and $\widehat{s}_{\omega,\T}$ is consistent for $s_{\omega,\T}:= \E[(\widehat{ATT}_{\omega,\T_o}-\E[\widehat{ATT}_{\omega,\T_o}])^2] $ in the sense $\displaystyle \plim_{\T\rightarrow \infty} \widehat{s}_{\omega,\T}/s_{\omega,\T} = 1 $.

$ TB_{\omega,\T } = o(\T^{1/2}) $ neither implies nor is implied by \Cref{ass:parallel_trends}. Similarly, the rate condition on anticipation bias-related terms -- $(ATT_{\omega,\T_o} - ATT_{\omega,\T  - \T _o}) = o(\T^{1/2})$ -- neither implies nor is implied by \Cref{ass:limited_anticip}. Thus, parallel trends in pre-treatment periods offer no guarantee of parallel trends after $t=0$ in the absence of treatment \citep{kahn2020promise,dette-2024-testing}. The foregoing extends the arguments of \citet{kahn2020promise} on pre-tests in the C-DiD setting to the current setting with a large time dimension, temporal dependence, cross-sectional dependence, and anticipation bias. 

Besides possibly distorting inference, pre-tests are documented in the literature to have poor power performance \citep{freyaldenhoven-hansen-shapiro-2019,roth-2022-pretest}. \citet{dette-2024-testing} innovatively reverses the burden of proof by testing for pre-treatment parallel trends as the alternative hypothesis. Thus, the test proposed therein under the alternative provides power in favour of pre-treatment parallel trends. While this approach addresses the power deficiency of commonly used tests of pre-treatment parallel trends, it does not cure the inherent problem underscored in the preceding paragraph of pre-tests (in general) being uninformative of the validity (or absence thereof) of parallel trends, especially when violations occur in the post-treatment period.

Pre-tests can lead to biased inference if used as a basis for estimating treatment effects on the treated \citep{roth-2022-pretest}. In a similar vein, the foregoing discussion points out why inference on $ATT_{\omega,T}$ based on passing a pre-test risks being misleading because ``passing" a pre-test neither implies nor is implied by the identification of $ATT_{\omega,T}$. In contrast, the over-identifying restrictions test in \Cref{Theorem:Test_Implication} is valuable as it tests identification save on a space of distributions where the test has no power under the alternative hypotheses. For example, the proposed test can detect identification failure stemming from violations of \Cref{ass:parallel_trends} or \Cref{ass:limited_anticip} that might ``pass" a pre-test. Moreover, pre-tests are only based on pre-treatment data, whereas the proposed test is based on the entire data set whence its ability to detect violations of \Cref{ass:parallel_trends} in post-treatment periods and have more power as evidenced in simulations -- see \Cref{App_Sub:Id_test}.

\subsection{Further discussion on the empirical analysis}\label{App_Sect:Emp_Supplement}

The empirical application of interest in this paper examines the impact of democracy on economic growth. Given the current global decline in both democracy and satisfaction with democratic systems \citep{Coppedge2022, Foa2020}, it is apt to re-examine whether democracy drives economic growth and well-being. Several efforts have been devoted to quantifying the effect of democracy on economic growth -- see \citet{Doucouliagos2008}, \citet{Colagrossi2020} and \citet{Knutsen2021} for reviews. However, empirical findings on the relationship between democracy and economic growth are conflicting. Some studies, such as \citet{Acemoglu2019}, report a positive relationship, while others, like \citet{Gerring2005}, find a negative effect, and still others, such as \citet{Murtin2014}, observe no significant effect. As \citet{Knutsen2021} points out, these conflicting findings stem not only from differences in modelling choices and data quality but also from the varying economic performance of regimes with similar levels of democracy, particularly at the autocratic end of the spectrum. For example, \citet{Chen-Stengos-2024-threshold} finds heterogeneity in effects by regime type, institutional quality, and education levels -- see also \citet{Acemoglu2019}. Thus, one's answer to the democracy-growth question is largely an artefact of the composition of units in the pooled regression analyses. 

\pgfplotsset{my personal style/.style=
{font=\footnotesize},width=10cm,height=7.5cm} 

\begin{figure}[h!]
\begin{center}
\caption{Democracy Indices}
\label{fig1}

\begin{tikzpicture}[]
\begin{axis}[my personal style,minor x tick num=1,
xlabel=Period,
xticklabels={{},{},1960,1970,1980,1990,2000,2010,2020},
ymin=0,
ymax=0.7,
ylabel=, legend style={
at={(0.4,0.9)},
anchor=north east}]

\addplot[mark=*,color=black,line width=1pt,mark size=1pt] table[col sep=tab,x=Year,y=Benin] {figures/DemocracyIndexAll.txt};

\addplot[mark=*,color=blue,line width=1pt,solid,mark size=1pt] table[col sep=tab,x=Year,y=Togo] {figures/DemocracyIndexAll.txt};

\addplot[mark=*,color=gray,line width=1pt,solid,mark size=1pt] table[col sep=tab,x=Year,y=Niger] {figures/DemocracyIndexAll.txt};

\addplot[mark=*,color=red,line width=1pt,solid,mark size=1pt] table[col sep=tab,x=Year,y=Nigeria] {figures/DemocracyIndexAll.txt};

\addplot[mark=*,color=green,line width=1pt,solid,mark size=1pt] table[col sep=tab,x=Year,y=Burkina] {figures/DemocracyIndexAll.txt};

\addplot[draw=gray] coordinates {(1960,0.5) (2016,0.5) };

\legend{Benin, Togo, Niger, Nigeria, Burkina Faso, Threshold (0.5)}
\end{axis}
\end{tikzpicture}
\end{center}
\begin{justify}
{\footnotesize
\textit{Notes:} The plots above depict democracy indices for West African countries neighbouring Benin from 1960 through 2018.
}
\end{justify}
\end{figure}

For the empirical application of the T-DiD, this paper focuses on Benin, a country that experienced strong democratisation in the last three decades. The ideal pool of candidate controls includes neighbouring countries such as Togo, Burkina Faso, Niger, and Nigeria. However, Burkina Faso and Niger have experienced democratic regimes over some periods as shown in \Cref{fig1} using the democracy index based on the V-Dem project; see \cref{Sect:Data} for more details. Nigeria has economic characteristics that are dissimilar to those of
Benin. For example, Nigeria has an independent monetary policy while Benin does not. In sum, Togo remains the only suitable control unit among the aforementioned candidate controls. 

Benin is a former French colony, and Togo is a former French protectorate. Both countries attained independence in 1960. They share a border of almost $651$ km and are culturally similar. Both countries are members of the same monetary union -- the West African Economic and Monetary Union (UEMOA) -- even after independence, and they have the same monetary policy. Politically -- see, e.g., \citet{Kohnert2021} -- the two countries were autocracies until 1990 when Benin initiated a process of democratisation. The former French president François Mitterand, in the wake of the fall of the Berlin Wall in 1989 and the implosion of the Soviet Union, encouraged francophone African countries to adopt democracy.\footnote{The statement was made in his speech at Baule during the 16th Conference of Heads of State of Africa and France -- see \url{https://www.vie-publique.fr/discours/127621-allocution-de-m-francois-mitterrand-president-de-la-republique-sur-la}.} Benin and Togo have both undergone democratisation processes, but their outcomes differ: Benin becomes a model democracy for the whole of Sub-Saharan Africa, while Togo remains largely perceived as an autocratic country. The choice of Togo as a control unit for Benin weakens the asymptotic parallel trends and limited anticipation identification conditions. Both identification conditions only need to hold conditional not only on the aforementioned observed characteristics but also on unobservable common characteristics.

\section{Simulations}\label{Sect:Sim}
This section presents simulation results. Section \ref{App_Sub:Bias_Size} presents the bias and empirical rejection rates of competing estimators, and Section \ref{App_Sub:Pow_Curve_ATT} presents power curves corresponding to estimator-based $t$-tests. The power curves serve to examine the ability of the $t$-tests to detect significant $ATT_{\omega,T}$. Section \Cref{App_Sub:Id_test} examines the size and power performance of the proposed identification tests, and Section \Cref{App_Sub:Id_test} extends the results to heterogeneous (through time) treatment effects.

\subsection{DGPs}
This section examines the small-sample performance of the T-DiD compared to alternatives such as the SC and the BA under several interesting data-generating processes (DGPs). Consider the following DGP of untreated potential outcomes
\begin{align*}
    Y_{d,t}(0) = \alpha_0 + d(\alpha_1-\alpha_0) + \alpha_2Y_{d,t-1}(0) + \varphi(t) + d\nu_t + \big(1-d(1-1/\sqrt{2})\big)\big(e_{0,t} + \alpha_3 e_{0,t-1}),
\end{align*}
\noindent with the observed outcome generated as 
\begin{align}\label{eqn:DGP_Sim2}
    Y_{d,t} = Y_{d,t}(0) + d \Big(ATT(t) + \alpha_2\big(Y_{d,t-1} - Y_{d,t-1}(0)\big) + \alpha_4t + (1/\sqrt{2})\big(e_{1,t} + \alpha_3 e_{1,t-1}\big) \Big)
\end{align}where $Y_{d,t}(0)$ denotes the untreated potential outcome of unit $d\in\{0,1\}$. $ATT(t) = 0.0$ for all periods for all DGPs. Variations of $(\alpha_0, \alpha_1, \alpha_2, \alpha_3, \alpha_4)'$, $ \varphi(t)$, $\nu_t$, and $ e_{d,t}$ generate different DGPs described below. $\alpha_1\neq \alpha_0$ induces a location shift between the untreated potential outcomes of the treated and untreated units; the SC fails under this scenario. $Y_{d,t}$ and $Y_{d,t}(0)$ are first-order auto-regressive AR(1) and first-order moving average MA(1) processes when $|\alpha_2| \in (0,1) $ and $|\alpha_3| \in (0,1) $, respectively. $Y_{d,t}$ and $Y_{d,t}(0)$ become unit root processes when $\alpha_2=1$. $\alpha_4\neq 0$ leads to uncommon trend non-stationarity in $Y_{1,t}$, unlike in $Y_{0,t}$. The errors of the observed and untreated potential outcomes of both the treated and untreated units at each period $t\in [-\T] \cup [T] $ are correlated, thus there is cross-sectional dependence across units with correlation coefficient $\mathrm{cor}[e_{0t}, (1/\sqrt{2})(e_{0t}+e_{1t})] = 1/\sqrt{2} \approx 0.707$. $\varphi(\cdot)$ introduces a common trend while the random variable $\nu_t$ introduces violations of the standard parallel trends and no-anticipation assumptions. Quite importantly, the BA fails whenever $\varphi(\cdot)$ in the post-treatment period is not cancelled out (at least asymptotically) by that of the pre-treatment period. For inference, the convex-weighted SC and BA use a straightforward adaptation of the asymptotic normality result in this paper.

The following specifies forms of the common trend $\varphi(t)$ and the error $e_{dt}$:
\begin{align*}
   \varphi(t) = 
   \begin{cases}
    \varphi_{NT}(t):= 0.0 & \text{ no trends}\\
    \varphi_{BST}(t):= \sqrt{2}\indicator{t\geq 4} & \text{ bounded binary trend} \\
    \varphi_{CST}(t):= \cos(t) & \text{ bounded continuous trend} \\
    \varphi_{NSQT}(t):= t + t^2/500 & \text{ unbounded quadratic trend}
    \end{cases}
\end{align*}and
\begin{align*}
    e_{dt} = 
    \begin{cases}
        \varepsilon_{dt}\varepsilon_{dt-1} & \text{ MDS}\\
        \sigma_{dt}\varepsilon_{dt} , \ \sigma_{dt}^2 = 0.4 + 0.3e_{dt-1}^2 + 0.3\sigma_{dt-1}^2 & \text{ GARCH(1,1)}\\
        \big(\varepsilon_{dt} + \varepsilon_{dt-1}\varepsilon_{dt-2}\big)/\sqrt{2} & \text{ White Noise}
    \end{cases}
\end{align*}
\noindent where $\varepsilon_{0t} \overset{iid}{\sim} (\chi^2-1)/\sqrt{2}$ and $  \varepsilon_{1t} \overset{iid}{\sim} \mathbb{T}_T/\sqrt{T/(T-2)}$ where $\mathbb{T}_T$ denotes the Student $t$-distribution with $T$ degrees of freedom. Simulation results are based on $10,000 $ Monte Carlo samples. Eleven DGPs are considered in total:
\begin{enumerate}
\item[1.] DGP SC-BA: $ \alpha_1 = \alpha_0 = 0.5 $, $\alpha_2 = \alpha_3 = 0.0$, $ \varphi(t) = 0 $, $\nu_t = 0$, $e_{dt}$ is MDS; 
			
\item[2.] DGP BA: $ \alpha_1 = -\alpha_0 = 0.5 $, $\alpha_2 = \alpha_3 = 0.0$, $ \varphi(t) = 0 $, $\nu_t = 0$,  $e_{dt}$ is MDS; 
			
\item[3.] DGP SC: $ \alpha_1 = \alpha_0 = 0.5 $, $\alpha_2 = \alpha_3 = 0.0$, $ \varphi(t) = \varphi_{BST}(t) $, $\nu_t = 0$,  $e_{dt}$ is MDS; 
			
\item[4.] (A) DGP PT-NA: $ \alpha_1 = \alpha_0 = 0.5 $, $\alpha_2 = \alpha_3 = 0.0$, $ \varphi(t) = 0.0 $, $\nu_t \sim \mathcal{N}\big(0.5\indicator{t\neq 0}\mathrm{sgn}(t)|t|^{-0.9},1\big) $,  $e_{dt}$ is MDS; 

\item[4.] (B) DGP PT-NA: $ \alpha_1 = \alpha_0 = 0.5 $, $\alpha_2 = \alpha_3 = 0.0$, $ \varphi(t) = 0.0 $, $\nu_t \sim \mathcal{N}\big(0.5\indicator{t\neq 0}|t|^{-0.25},1\big) $,  $e_{dt}$ is MDS;   

\item[5.] DGP GARCH(1,1): $ \alpha_1 = \alpha_0 = 0.5 $, $\alpha_2 = \alpha_3 = 0.0$, $ \varphi(t) = 0.0 $, $\nu_t = 0.0 $, $e_{dt}$ is GARCH(1,1); 

\item[6.] DGP MA(1): $ \alpha_1 = \alpha_0 = 0.5 $, $\alpha_2 = 0.0$, $\alpha_3 = 0.25$, $ \varphi(t) = 0.0 $, $\nu_t = 0.0 $,  $e_{dt}$ is White Noise; 

\item[7.] DGP AR(1): $ \alpha_0 = \alpha_1 = \alpha_2 = 0.5$, $\alpha_3 = 0.0$, $ \varphi(t) = \cos(t) $, $\nu_t = 0.0 $,  $e_{dt}$ is White Noise; 

\item[8.] DGP U-R: $ \alpha_0 = \alpha_1 = 0.5$, $\alpha_2 = 1.0$, $\alpha_3 = 0.0$, $ \varphi(t) = 0.0 $, $\nu_t = 0.0 $, $e_{dt}$ is White Noise;

\item[9.] DGP Q-T: $ \alpha_0 = \alpha_1 = 0.5$, $\alpha_2 = \alpha_3 = 0.0$, $ \varphi(t) = \varphi_{NSQT}(t) $, $\nu_t = 0.0 $,  $e_{dt}$ is White Noise;  and

\item[10.] DGP T-T: $ \alpha_1 = \alpha_0 = 0.5 $, $\alpha_2 = \alpha_3 = 0.0$, $ \varphi(t) = 0.0 $, $\alpha_4=1.0$, $\nu_t = 0$,  $e_{dt}$ is MDS.
\end{enumerate}

$\alpha_0\neq \alpha_1$ in DGP BA, thus the SC estimator is not expected to perform well. Likewise, $\varphi(t)=\varphi_{BST}(t)$ induces a temporal location shift in DGP SC, thus it is not expected to be favourable to the BA estimator. $\nu_t$ in DGP (A) PT-NA induces violations of the standard parallel trends and no anticipation assumptions although it satisfies \Cref{ass:parallel_trends,ass:limited_anticip}. Unlike in DGP PT-NA (A), weaker rates on the violations of \Cref{ass:parallel_trends,ass:limited_anticip} are allowable in DGP PT-NA (B) as both biases tend to cancel out asymptotically between the pre-treatment and post-treatment periods. In view of \Cref{rem:weak_suff_idcond}, \Cref{ass:parallel_trends,ass:limited_anticip} are both violated in DGP PT-NA (B) but identification holds because the biases tend to cancel out asymptotically. A common quadratic term via $\varphi(t) = t + t^2/500$ induces trend non-stationarity in DGP Q-T; this DGP is not favourable to the BA estimator. The trend in DGP T-T is linear and affects only the treated unit; the trend leads to identification failure if not removed.

\subsection{Bias and size control}\label{App_Sub:Bias_Size}

\begin{table}[!htbp]
\caption{Simulation - $ATT(t) = 0.0,\ t\geq 1 $ }
\centering
\begin{tabular}{rlccccccccc}
\cmidrule[0.5pt](l){2-11}
 & $T,\T $ & \multicolumn{1}{c}{DiD} & \multicolumn{1}{c}{SC} & \multicolumn{1}{c}{BA} & \multicolumn{1}{c}{DiD} & \multicolumn{1}{c}{SC} & \multicolumn{1}{c}{BA} & \multicolumn{1}{c}{DiD} & \multicolumn{1}{c}{SC} & \multicolumn{1}{c}{BA} \\ \cmidrule[0.2pt](l){2-11}
\multirow{12}{*}{ \STAB{\rotatebox[origin=c]{90}{\underline{DGP SC-BA}}}} 
&  & \multicolumn{3}{c}{MB} & \multicolumn{3}{c}{MAD} & \multicolumn{3}{c}{RMSE} \\ \cmidrule[0.2pt](l){3-5} \cmidrule[0.2pt](l){6-8} \cmidrule[0.2pt](l){9-11}
&25    &0.002 &0.002 &0.001 &0.14  &0.096 &0.175 &0.216 &0.151 &0.280  \\ 
       &50     &-0.001 &0.000      &-0.001 &0.101  &0.07   &0.129  &0.153  &0.107  &0.202  \\ 
       &100    &0.000      &0.000      &-0.001 &0.072  &0.051  &0.094  &0.108  &0.077  &0.142  \\ 
      &200   &0.001 &0.001 &0.002 &0.051 &0.036 &0.066 &0.076 &0.053 &0.100   \\ 
      &400   &0.000     &0.000     &0.000     &0.037 &0.025 &0.047 &0.054 &0.038 &0.070  \\ 
\cmidrule[0.2pt](l){3-5} \cmidrule[0.2pt](l){6-8} \cmidrule[0.2pt](l){9-11}

& & \multicolumn{3}{c}{Rej. 1\%} & \multicolumn{3}{c}{Rej. 5\%} & \multicolumn{3}{c}{Rej. 10\%} \\ \cmidrule[0.2pt](l){3-5} \cmidrule[0.2pt](l){6-8} \cmidrule[0.2pt](l){9-11}

&25    &0.012 &0.012 &0.013 &0.057 &0.056 &0.056 &0.116 &0.110  &0.114 \\ 
      &50    &0.012 &0.009 &0.011 &0.052 &0.053 &0.055 &0.105 &0.104 &0.110  \\ 
      &100   &0.008 &0.010  &0.010  &0.050  &0.053 &0.053 &0.101 &0.105 &0.106 \\ 
      &200   &0.008 &0.009 &0.010  &0.049 &0.045 &0.049 &0.100   &0.096 &0.101 \\ 
      &400   &0.011 &0.010  &0.009 &0.049 &0.049 &0.047 &0.099 &0.099 &0.097 \\  \cmidrule[0.5pt](l){2-11}

\multirow{12}{*}{  \STAB{\rotatebox[origin=c]{90}{\underline{DGP BA}}}} 
&  & \multicolumn{3}{c}{MB} & \multicolumn{3}{c}{MAD} & \multicolumn{3}{c}{RMSE} \\ \cmidrule[0.2pt](l){3-5} \cmidrule[0.2pt](l){6-8} \cmidrule[0.2pt](l){9-11}
       &25     &0.002  &-0.998 &0.001  &0.14   &0.997  &0.175  &0.216  &1.009  &0.280   \\ 
       &50     &-0.001 &-1.000     &-0.001 &0.101  &1.000      &0.129  &0.153  &1.006  &0.202  \\ 
       &100    &0.000      &-1.000     &-0.001 &0.072  &1.000      &0.094  &0.108  &1.003  &0.142  \\ 
       &200    &0.001  &-0.999 &0.002  &0.051  &0.999  &0.066  &0.076  &1.001  &0.100    \\ 
      &400   &0.000     &-1.000    &0.000     &0.037 &1.000     &0.047 &0.054 &1.001 &0.070  \\
\cmidrule[0.2pt](l){3-5} \cmidrule[0.2pt](l){6-8} \cmidrule[0.2pt](l){9-11}

& & \multicolumn{3}{c}{Rej. 1\%} & \multicolumn{3}{c}{Rej. 5\%} & \multicolumn{3}{c}{Rej. 10\%} \\ \cmidrule[0.2pt](l){3-5} \cmidrule[0.2pt](l){6-8} \cmidrule[0.2pt](l){9-11}

      &25    &0.012 &0.990 &0.013 &0.057 &0.997 &0.056 &0.116 &0.999 &0.114 \\ 
      &50    &0.012 &1.000     &0.011 &0.052 &1.000     &0.055 &0.105 &1.000     &0.110  \\ 
      &100   &0.008 &1.000     &0.010  &0.050  &1.000     &0.053 &0.101 &1.000     &0.106 \\ 
      &200   &0.008 &1.000     &0.010  &0.049 &1.000     &0.049 &0.100   &1.000     &0.101 \\ 
      &400   &0.011 &1.000     &0.009 &0.049 &1.000     &0.047 &0.099 &1.000     &0.097 \\ 
      \cmidrule[0.5pt](l){2-11}

\multirow{12}{*}{ \STAB{\rotatebox[origin=c]{90}{\underline{DGP SC}}}} 
&  & \multicolumn{3}{c}{MB} & \multicolumn{3}{c}{MAD} & \multicolumn{3}{c}{RMSE} \\ \cmidrule[0.2pt](l){3-5} \cmidrule[0.2pt](l){6-8} \cmidrule[0.2pt](l){9-11}
      &25    &0.002 &0.002 &1.246 &0.140 &0.096 &1.248 &0.216 &0.151 &1.277 \\ 
       &50     &-0.001 &0.000      &1.328  &0.101  &0.070  &1.329  &0.153  &0.107  &1.343  \\ 
      &100   &0.000     &0.000     &1.371 &0.072 &0.051 &1.372 &0.108 &0.077 &1.378 \\ 
      &200   &0.001 &0.001 &1.395 &0.051 &0.036 &1.393 &0.076 &0.053 &1.398 \\ 
      &400   &0.000     &0.000     &1.404 &0.037 &0.025 &1.404 &0.054 &0.038 &1.406 \\
\cmidrule[0.2pt](l){3-5} \cmidrule[0.2pt](l){6-8} \cmidrule[0.2pt](l){9-11}

& & \multicolumn{3}{c}{Rej. 1\%} & \multicolumn{3}{c}{Rej. 5\%} & \multicolumn{3}{c}{Rej. 10\%} \\ \cmidrule[0.2pt](l){3-5} \cmidrule[0.2pt](l){6-8} \cmidrule[0.2pt](l){9-11}

      &25    &0.012 &0.012 &0.919 &0.057 &0.056 &0.965 &0.116 &0.110  &0.979 \\ 
      &50    &0.012 &0.009 &0.992 &0.052 &0.053 &0.996 &0.105 &0.104 &0.998 \\ 
      &100   &0.008 &0.010  &1.000     &0.050  &0.053 &1.000     &0.101 &0.105 &1.000     \\ 
      &200   &0.008 &0.009 &1.000     &0.049 &0.045 &1.000     &0.100   &0.096 &1.000     \\ 
      &400   &0.011 &0.010  &1.000     &0.049 &0.049 &1.000     &0.099 &0.099 &1.000     \\
      \cmidrule[0.5pt](l){2-11}
\end{tabular}\label{Tab:Sim_I}
\end{table}

\begin{table}[!htbp]
\caption{Simulation - $ATT(t) = 0.0,\ t\geq 1 $}
\centering
\begin{tabular}{rlccccccccc}
\cmidrule[0.5pt](l){2-11}
 & $T,\T $ & \multicolumn{1}{c}{DiD} & \multicolumn{1}{c}{SC} & \multicolumn{1}{c}{BA} & \multicolumn{1}{c}{DiD} & \multicolumn{1}{c}{SC} & \multicolumn{1}{c}{BA} & \multicolumn{1}{c}{DiD} & \multicolumn{1}{c}{SC} & \multicolumn{1}{c}{BA} \\ \cmidrule[0.2pt](l){2-11}
\multirow{12}{*}{ \STAB{\rotatebox[origin=c]{90}{\underline{DGP PT-NA (A)}}} } 
&  & \multicolumn{3}{c}{MB} & \multicolumn{3}{c}{MAD} & \multicolumn{3}{c}{RMSE} \\ \cmidrule[0.2pt](l){3-5} \cmidrule[0.2pt](l){6-8} \cmidrule[0.2pt](l){9-11}
      &25    &0.177 &0.089 &0.176 &0.270 &0.179 &0.294 &0.397 &0.265 &0.436 \\ 
      &50    &0.109 &0.055 &0.109 &0.185 &0.127 &0.201 &0.273 &0.187 &0.302 \\ 
      &100   &0.063 &0.030 &0.062 &0.128 &0.087 &0.139 &0.190 &0.129 &0.212 \\ 
      &200   &0.037 &0.018 &0.037 &0.089 &0.061 &0.099 &0.132 &0.090 &0.147 \\ 
      &400   &0.023 &0.011 &0.023 &0.063 &0.043 &0.068 &0.092 &0.064 &0.102 \\ 
\cmidrule[0.2pt](l){3-5} \cmidrule[0.2pt](l){6-8} \cmidrule[0.2pt](l){9-11}

& & \multicolumn{3}{c}{Rej. 1\%} & \multicolumn{3}{c}{Rej. 5\%} & \multicolumn{3}{c}{Rej. 10\%} \\ \cmidrule[0.2pt](l){3-5} \cmidrule[0.2pt](l){6-8} \cmidrule[0.2pt](l){9-11}

      &25    &0.030  &0.029 &0.027 &0.096 &0.086 &0.092 &0.162 &0.142 &0.155 \\ 
      &50    &0.020  &0.021 &0.017 &0.077 &0.074 &0.076 &0.140 &0.130 &0.133 \\ 
      &100   &0.019 &0.018 &0.016 &0.069 &0.059 &0.070  &0.126 &0.111 &0.128 \\ 
      &200   &0.015 &0.012 &0.013 &0.065 &0.056 &0.062 &0.117 &0.110  &0.119 \\ 
      &400   &0.012 &0.012 &0.011 &0.056 &0.054 &0.056 &0.107 &0.105 &0.108 \\  \cmidrule[0.5pt](l){2-11}

\multirow{12}{*}{\STAB{\rotatebox[origin=c]{90}{\underline{DGP PT-NA (B)}}}} 
&  & \multicolumn{3}{c}{MB} & \multicolumn{3}{c}{MAD} & \multicolumn{3}{c}{RMSE} \\ \cmidrule[0.2pt](l){3-5} \cmidrule[0.2pt](l){6-8} \cmidrule[0.2pt](l){9-11}
     &25    &0.001 &0.287 &0.000     &0.241 &0.294 &0.269 &0.356 &0.381 &0.399 \\ 
      &50    &0.001 &0.246 &0.001 &0.168 &0.248 &0.187 &0.250  &0.304 &0.282 \\ 
       &100    &-0.002 &0.205  &-0.002 &0.121  &0.206  &0.134  &0.180   &0.241  &0.202  \\ 
       &200    &-0.001 &0.175  &-0.001 &0.085  &0.175  &0.095  &0.126  &0.196  &0.142  \\ 
      &400   &0.001 &0.148 &0.001 &0.06  &0.148 &0.067 &0.089 &0.161 &0.099 \\
\cmidrule[0.2pt](l){3-5} \cmidrule[0.2pt](l){6-8} \cmidrule[0.2pt](l){9-11}

& & \multicolumn{3}{c}{Rej. 1\%} & \multicolumn{3}{c}{Rej. 5\%} & \multicolumn{3}{c}{Rej. 10\%} \\ \cmidrule[0.2pt](l){3-5} \cmidrule[0.2pt](l){6-8} \cmidrule[0.2pt](l){9-11}

      &25    &0.016 &0.111 &0.016 &0.065 &0.246 &0.061 &0.118 &0.347 &0.119 \\ 
      &50    &0.013 &0.145 &0.011 &0.056 &0.308 &0.052 &0.105 &0.416 &0.107 \\ 
      &100   &0.012 &0.187 &0.010  &0.053 &0.382 &0.057 &0.102 &0.502 &0.108 \\ 
      &200   &0.010  &0.283 &0.011 &0.051 &0.508 &0.049 &0.103 &0.628 &0.102 \\ 
      &400   &0.010  &0.415 &0.009 &0.048 &0.648 &0.047 &0.099 &0.758 &0.101 \\ 
      \cmidrule[0.5pt](l){2-11}

\multirow{12}{*}{\STAB{\rotatebox[origin=c]{90}{\underline{DGP GARCH(1,1)}}}} 
&  & \multicolumn{3}{c}{MB} & \multicolumn{3}{c}{MAD} & \multicolumn{3}{c}{RMSE} \\ \cmidrule[0.2pt](l){3-5} \cmidrule[0.2pt](l){6-8} \cmidrule[0.2pt](l){9-11}
       &25     &0.004  &0.001  &-0.003 &0.141  &0.098  &0.177  &0.214  &0.154  &0.275  \\ 
      &50    &0.001 &0.000     &0.002 &0.102 &0.072 &0.130  &0.152 &0.107 &0.199 \\ 
      &100   &0.000     &0.000     &0.000     &0.071 &0.051 &0.092 &0.108 &0.077 &0.141 \\ 
      &200   &0.000     &0.000     &0.000     &0.052 &0.037 &0.066 &0.077 &0.054 &0.099 \\ 
      &400   &0.000     &0.000     &0.000     &0.036 &0.026 &0.047 &0.054 &0.038 &0.071 \\ 
\cmidrule[0.2pt](l){3-5} \cmidrule[0.2pt](l){6-8} \cmidrule[0.2pt](l){9-11}

& & \multicolumn{3}{c}{Rej. 1\%} & \multicolumn{3}{c}{Rej. 5\%} & \multicolumn{3}{c}{Rej. 10\%} \\ \cmidrule[0.2pt](l){3-5} \cmidrule[0.2pt](l){6-8} \cmidrule[0.2pt](l){9-11}

      &25    &0.017 &0.019 &0.014 &0.064 &0.072 &0.058 &0.120  &0.123 &0.111 \\ 
      &50    &0.012 &0.012 &0.012 &0.055 &0.055 &0.055 &0.108 &0.107 &0.106 \\ 
      &100   &0.011 &0.013 &0.009 &0.056 &0.055 &0.051 &0.106 &0.107 &0.100   \\ 
      &200   &0.011 &0.011 &0.009 &0.053 &0.051 &0.046 &0.099 &0.103 &0.096 \\ 
      &400   &0.009 &0.010  &0.010  &0.049 &0.052 &0.048 &0.098 &0.105 &0.100   \\ 
      \cmidrule[0.5pt](l){2-11}
\end{tabular}\label{Tab:Sim_II}
\end{table}

\begin{table}[!htbp]
\caption{Simulation - $ATT(t) = 0.0,\ t\geq 1 $}
\centering
\begin{tabular}{rlccccccccc}
\cmidrule[0.5pt](l){2-11}
 & $T,\T $ & \multicolumn{1}{c}{DiD} & \multicolumn{1}{c}{SC} & \multicolumn{1}{c}{BA} & \multicolumn{1}{c}{DiD} & \multicolumn{1}{c}{SC} & \multicolumn{1}{c}{BA} & \multicolumn{1}{c}{DiD} & \multicolumn{1}{c}{SC} & \multicolumn{1}{c}{BA} \\ \cmidrule[0.2pt](l){2-11}
 \multirow{12}{*}{ \STAB{\rotatebox[origin=c]{90}{\underline{DGP MA(1)}}}} 
&  & \multicolumn{3}{c}{MB} & \multicolumn{3}{c}{MAD} & \multicolumn{3}{c}{RMSE} \\ \cmidrule[0.2pt](l){3-5} \cmidrule[0.2pt](l){6-8} \cmidrule[0.2pt](l){9-11}
       & 25  &  0.005 &  0.003 & -0.002 &  0.174 &  0.119 &  0.218 &  0.269 &  0.190 &  0.347 \\ 
& 50  & -0.001 &  0.000 & -0.002 &  0.128 &  0.088 &  0.164 &  0.192 &  0.134 &  0.251 \\ 
& 100 & -0.001 &  0.000 & -0.001 &  0.091 &  0.065 &  0.118 &  0.135 &  0.096 &  0.178 \\ 
& 200 &  0.001 &  0.001 &  0.001 &  0.064 &  0.045 &  0.083 &  0.095 &  0.067 &  0.125 \\ 
& 400 &  0.000 &  0.000 &  0.000 &  0.045 &  0.032 &  0.058 &  0.067 &  0.048 &  0.088 \\
\cmidrule[0.2pt](l){3-5} \cmidrule[0.2pt](l){6-8} \cmidrule[0.2pt](l){9-11}

& & \multicolumn{3}{c}{Rej. 1\%} & \multicolumn{3}{c}{Rej. 5\%} & \multicolumn{3}{c}{Rej. 10\%} \\ \cmidrule[0.2pt](l){3-5} \cmidrule[0.2pt](l){6-8} \cmidrule[0.2pt](l){9-11}

      &25    &0.024 &0.022 &0.017 &0.078 &0.074 &0.068 &0.142 &0.131 &0.130  \\ 
      &50    &0.016 &0.014 &0.016 &0.071 &0.063 &0.068 &0.128 &0.121 &0.124 \\ 
      &100   &0.014 &0.013 &0.011 &0.061 &0.061 &0.057 &0.119 &0.116 &0.117 \\ 
      &200   &0.014 &0.012 &0.013 &0.057 &0.058 &0.056 &0.107 &0.109 &0.112 \\ 
      &400   &0.011 &0.012 &0.009 &0.052 &0.050  &0.052 &0.103 &0.104 &0.110  \\
      \cmidrule[0.5pt](l){2-11}
      
\multirow{12}{*}{ \STAB{\rotatebox[origin=c]{90}{\underline{DGP AR(1)}}}} 
&  & \multicolumn{3}{c}{MB} & \multicolumn{3}{c}{MAD} & \multicolumn{3}{c}{RMSE} \\ \cmidrule[0.2pt](l){3-5} \cmidrule[0.2pt](l){6-8} \cmidrule[0.2pt](l){9-11}
             &25     &0.003  &-0.021 &0.003  &0.162  &0.112  &0.208  &0.252  &0.183  &0.324  \\ 
       &50     &-0.001 &-0.012 &0.000      &0.111  &0.077  &0.145  &0.168  &0.119  &0.217  \\ 
       &100    &0.000      &-0.007 &-0.001 &0.076  &0.054  &0.100    &0.113  &0.082  &0.147  \\ 
       &200    &0.001  &-0.003 &0.001  &0.053  &0.037  &0.068  &0.078  &0.056  &0.102  \\ 
       &400    &0.000      &-0.002 &0.000      &0.037  &0.026  &0.047  &0.054  &0.039  &0.071  \\ 
\cmidrule[0.2pt](l){3-5} \cmidrule[0.2pt](l){6-8} \cmidrule[0.2pt](l){9-11}

& & \multicolumn{3}{c}{Rej. 1\%} & \multicolumn{3}{c}{Rej. 5\%} & \multicolumn{3}{c}{Rej. 10\%} \\ \cmidrule[0.2pt](l){3-5} \cmidrule[0.2pt](l){6-8} \cmidrule[0.2pt](l){9-11}

      &25    &0.024 &0.033 &0.021 &0.083 &0.088 &0.082 &0.141 &0.148 &0.143 \\ 
      &50    &0.018 &0.022 &0.02  &0.072 &0.069 &0.071 &0.129 &0.126 &0.133 \\ 
      &100   &0.014 &0.015 &0.014 &0.063 &0.066 &0.061 &0.115 &0.123 &0.117 \\ 
      &200   &0.013 &0.013 &0.015 &0.055 &0.059 &0.062 &0.105 &0.111 &0.114 \\ 
      &400   &0.011 &0.012 &0.012 &0.051 &0.053 &0.059 &0.100   &0.102 &0.116 \\  \cmidrule[0.5pt](l){2-11}

\multirow{12}{*}{ \STAB{\rotatebox[origin=c]{90}{\underline{DGP U-R}}}} 
&  & \multicolumn{3}{c}{MB} & \multicolumn{3}{c}{MAD} & \multicolumn{3}{c}{RMSE} \\ \cmidrule[0.2pt](l){3-5} \cmidrule[0.2pt](l){6-8} \cmidrule[0.2pt](l){9-11}
       &25     &0.003  &0.002  &-0.002 &0.143  &0.097  &0.178  &0.220   &0.156  &0.286  \\ 
       &50     &-0.001 &0.000      &-0.002 &0.103  &0.071  &0.131  &0.155  &0.108  &0.203  \\ 
       &100    &-0.001 &0.000      &-0.001 &0.073  &0.052  &0.094  &0.109  &0.078  &0.143  \\ 
      &200   &0.001 &0.000     &0.001 &0.051 &0.036 &0.066 &0.077 &0.054 &0.100   \\ 
      &400   &0.000     &0.000     &0.000     &0.036 &0.026 &0.047 &0.054 &0.038 &0.071 \\ 
\cmidrule[0.2pt](l){3-5} \cmidrule[0.2pt](l){6-8} \cmidrule[0.2pt](l){9-11}

& & \multicolumn{3}{c}{Rej. 1\%} & \multicolumn{3}{c}{Rej. 5\%} & \multicolumn{3}{c}{Rej. 10\%} \\ \cmidrule[0.2pt](l){3-5} \cmidrule[0.2pt](l){6-8} \cmidrule[0.2pt](l){9-11}

      &25    &0.013 &0.015 &0.010  &0.056 &0.056 &0.047 &0.110  &0.105 &0.101 \\ 
      &50    &0.010  &0.010  &0.009 &0.052 &0.048 &0.051 &0.108 &0.099 &0.101 \\ 
      &100   &0.010  &0.010  &0.008 &0.050  &0.052 &0.044 &0.102 &0.103 &0.097 \\ 
      &200   &0.011 &0.009 &0.009 &0.050  &0.052 &0.048 &0.097 &0.100   &0.097 \\ 
      &400   &0.009 &0.010  &0.008 &0.045 &0.047 &0.047 &0.097 &0.096 &0.101 \\ 
      \cmidrule[0.5pt](l){2-11}
\end{tabular}\label{Tab:Sim_III}
\end{table}

\begin{table}[!htbp]
\caption{Simulation - $ATT(t) = 0.0,\ t\geq 1 $}
\centering
\begin{tabular}{rlccccccccc}
\cmidrule[0.5pt](l){2-11}
 & $T,\T $ & \multicolumn{1}{c}{DiD} & \multicolumn{1}{c}{SC} & \multicolumn{1}{c}{BA} & \multicolumn{1}{c}{DiD} & \multicolumn{1}{c}{SC} & \multicolumn{1}{c}{BA} & \multicolumn{1}{c}{DiD} & \multicolumn{1}{c}{SC} & \multicolumn{1}{c}{BA} \\ \cmidrule[0.2pt](l){2-11}

\multirow{12}{*}{ \STAB{\rotatebox[origin=c]{90}{\underline{DGP Q-T}}}} 
&  & \multicolumn{3}{c}{MB} & \multicolumn{3}{c}{MAD} & \multicolumn{3}{c}{RMSE} \\ \cmidrule[0.2pt](l){3-5} \cmidrule[0.2pt](l){6-8} \cmidrule[0.2pt](l){9-11}
      &25     &0.003  &0.002  &25.998 &0.141  &0.096  &25.998 &0.217  &0.153  &25.999 \\ 
       &50     &-0.001 &0.000      &50.998 &0.103  &0.07   &50.998 &0.154  &0.107  &50.998 \\ 
        &100     &-0.001  &0.000       &100.998 &0.073   &0.052   &100.998 &0.108   &0.077   &100.999 \\ 
        &200     &0.001   &0.001   &201.001 &0.050    &0.036   &201     &0.076   &0.054   &201.001 \\ 
        &400     &0.000       &0.000       &401     &0.036   &0.026   &401.001 &0.054   &0.038   &401     \\ 
\cmidrule[0.2pt](l){3-5} \cmidrule[0.2pt](l){6-8} \cmidrule[0.2pt](l){9-11}

& & \multicolumn{3}{c}{Rej. 1\%} & \multicolumn{3}{c}{Rej. 5\%} & \multicolumn{3}{c}{Rej. 10\%} \\ \cmidrule[0.2pt](l){3-5} \cmidrule[0.2pt](l){6-8} \cmidrule[0.2pt](l){9-11}

      &25    &0.013 &0.015 &1.000     &0.060  &0.058 &1.000     &0.113 &0.105 &1.000     \\ 
      &50    &0.012 &0.011 &1.000     &0.052 &0.049 &1.000     &0.104 &0.100   &1.000     \\ 
      &100   &0.010  &0.010  &1.000     &0.051 &0.051 &1.000     &0.103 &0.103 &1.000     \\ 
      &200   &0.012 &0.010  &1.000     &0.051 &0.051 &1.000     &0.098 &0.100   &1.000     \\ 
      &400   &0.010  &0.010  &1.000     &0.046 &0.047 &1.000     &0.097 &0.096 &1.000     \\
      \cmidrule[0.5pt](l){2-11}

\multirow{12}{*}{ \STAB{\rotatebox[origin=c]{90}{\underline{DGP T-T}}}} 
&  & \multicolumn{3}{c}{MB} & \multicolumn{3}{c}{MAD} & \multicolumn{3}{c}{RMSE} \\ \cmidrule[0.2pt](l){3-5} \cmidrule[0.2pt](l){6-8} \cmidrule[0.2pt](l){9-11}
      & 25  &  0.003 &  25.005 &  0.008 &  0.270 &  25.008 &  0.344 &  0.426 &  25.007 &  0.571 \\ 
    & 50  & -0.005 &  49.995 &  0.005 &  0.200 &  49.997 &  0.251 &  0.306 &  49.996 &  0.400 \\ 
    & 100 & -0.002 & 100.000 & -0.002 &  0.140 & 100.002 &  0.183 &  0.214 & 100.000 &  0.282 \\ 
    & 200 &  0.003 & 200.003 &  0.002 &  0.100 & 200.004 &  0.130 &  0.151 & 200.003 &  0.199 \\ 
    & 400 &  0.000 & 400.000 & -0.001 &  0.073 & 399.999 &  0.093 &  0.108 & 400.000 &  0.140 \\ 
\cmidrule[0.2pt](l){3-5} \cmidrule[0.2pt](l){6-8} \cmidrule[0.2pt](l){9-11}

& & \multicolumn{3}{c}{Rej. 1\%} & \multicolumn{3}{c}{Rej. 5\%} & \multicolumn{3}{c}{Rej. 10\%} \\ \cmidrule[0.2pt](l){3-5} \cmidrule[0.2pt](l){6-8} \cmidrule[0.2pt](l){9-11}

      & 25  & 0.015 & 1.000 & 0.016 & 0.066 & 1.000 & 0.068 & 0.124 & 1.000 & 0.132 \\ 
& 50  & 0.013 & 1.000 & 0.014 & 0.060 & 1.000 & 0.062 & 0.118 & 1.000 & 0.118 \\ 
& 100 & 0.012 & 1.000 & 0.011 & 0.052 & 1.000 & 0.053 & 0.102 & 1.000 & 0.105 \\ 
& 200 & 0.009 & 1.000 & 0.011 & 0.048 & 1.000 & 0.049 & 0.099 & 1.000 & 0.101 \\ 
& 400 & 0.010 & 1.000 & 0.009 & 0.050 & 1.000 & 0.049 & 0.102 & 1.000 & 0.103 \\
      \cmidrule[0.5pt](l){2-11}
      
\end{tabular}\label{Tab:Sim_IV}
\end{table}

\Cref{Tab:Sim_I,Tab:Sim_II,Tab:Sim_III,Tab:Sim_IV} present the mean bias (MB), the median absolute deviation (MAD), the root-mean-squared error (RMSE), and the empirical rejection rates of the null hypothesis, $ \Hyp_o: ATT_{\omega,T} = 0.0 $ at conventional nominal levels $1\%$, $5\%$, and $10\%$ for the DiD, SC, and BA estimators across all 11 DGPs. Results are based on 10,000 Monte Carlo replications for each sample size $\T=T \in \{ 25,50,100,200,400\}$. Standard errors used throughout the text are heteroskedasticity and auto-correlation robust using the \citet{newey-west-1987-simple} procedure. For all competing estimators, lagged $X_t$ are controlled for in DGP AR(1) while first differences are applied to $X_t$ before estimation in DGP U-R. The uniform weighting scheme is used throughout.

All three estimators perform reasonably under scenarios where they are expected to. The poor performance of the SC in DGP BA confirms its sensitivity to differences in pre-treatment average outcomes, whereas the poor performance of the BA in DGP SC and Q-Trend confirms its sensitivity to common shocks, which the BA cannot disentangle from treatment effects. Although all estimators appear to suffer size distortion at small sample sizes under DGPs PT-NA (A), MA(1), and AR(1), these improve with the sample. The SC only uses data from the post-treatment periods so it fails to harness identifying variation from pre-treatment periods to counter bias occurring in the post-treatment periods. This explains the poor performance of the SC in DGP PT-NA (B) while both the T-DiD and BA perform reasonably. On the whole, one observes the robustness of the proposed T-DiD to several interesting and empirically relevant settings through the simulation results presented in \Cref{Tab:Sim_I,Tab:Sim_II,Tab:Sim_III,Tab:Sim_IV}.

\subsection{Detecting significant $ATT_{\omega,T}$s}\label{App_Sub:Pow_Curve_ATT}
To ensure that the good size control of the T-DiD is not achieved at the expense of power, the DGP in \Cref{Sect:Sim} (see \eqref{eqn:DGP_Sim2}) with $ATT(t)=ATT, \ ATT \in [0.0,1.0] $ is used. The hypotheses are $\Hyp_o: ATT_{\omega,T} = 0.0 \text{ vs } \Hyp_a: ATT_{\omega,T} \neq 0.0 $. Thus, the goal is to ensure that hypothesis tests based on the DiD estimate control size under the null and have the power to detect non-trivial effects of treatment or, generally, deviations away from a posited null hypothesis on the size of treatment effects.

\begin{figure}[!htbp]
\centering 
\begin{subfigure}{0.32\textwidth}
\centering
\caption{10\%}
\includegraphics[width=1\textwidth]{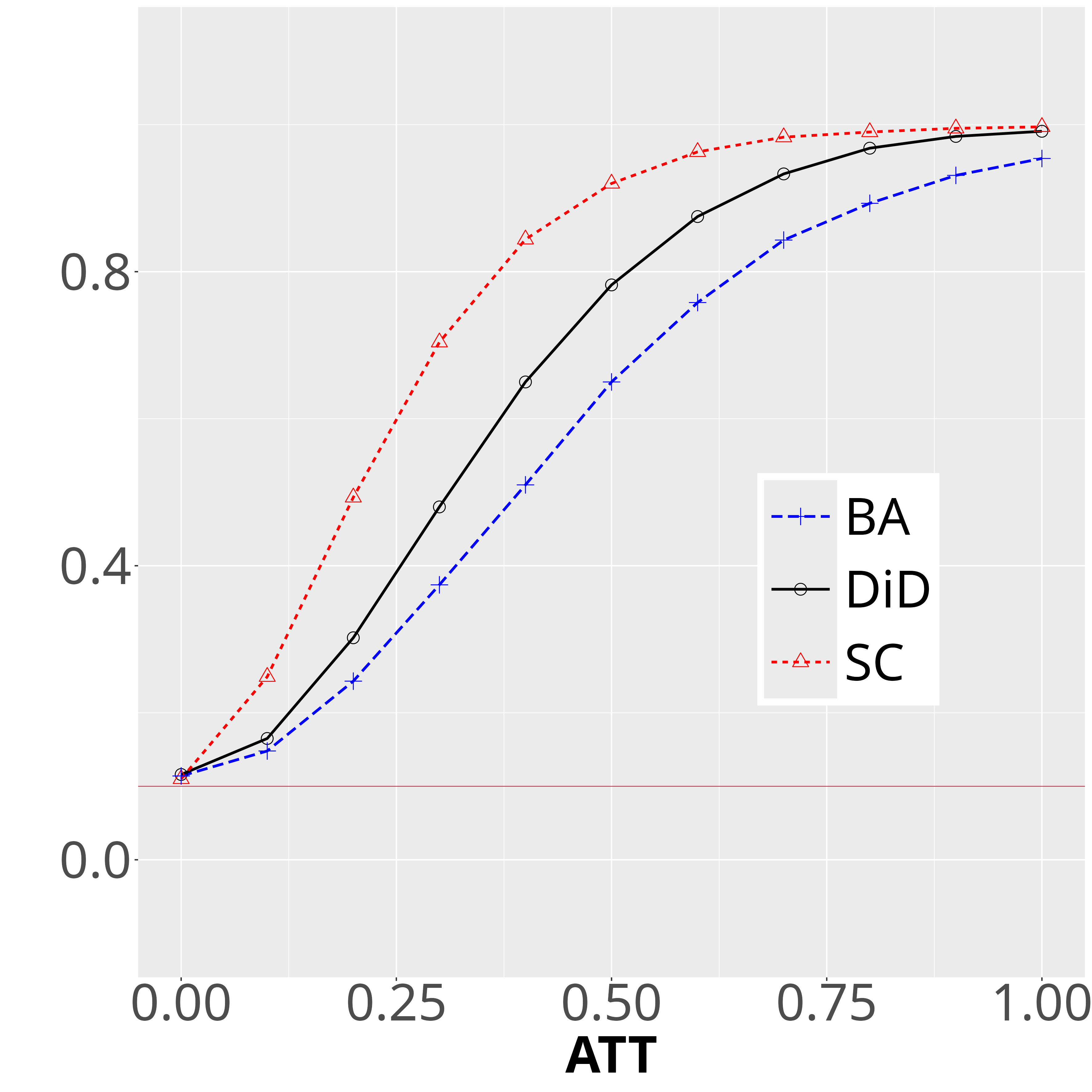}
\end{subfigure}
\begin{subfigure}{0.32\textwidth}
\centering
\caption{5\%}
\includegraphics[width=1\textwidth]{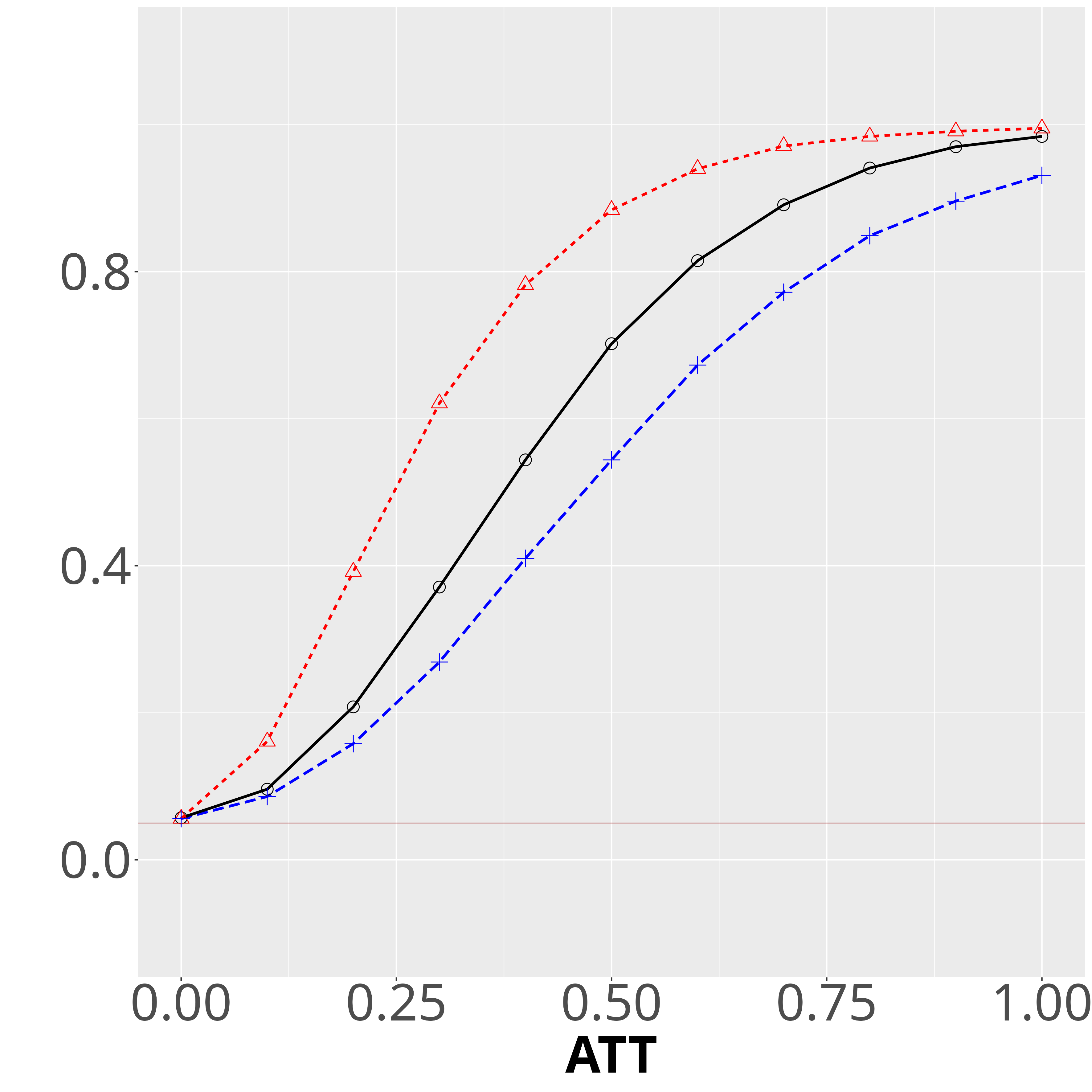}
\end{subfigure}
\begin{subfigure}{0.32\textwidth}
\centering
\caption{1\%}
\includegraphics[width=1\textwidth]{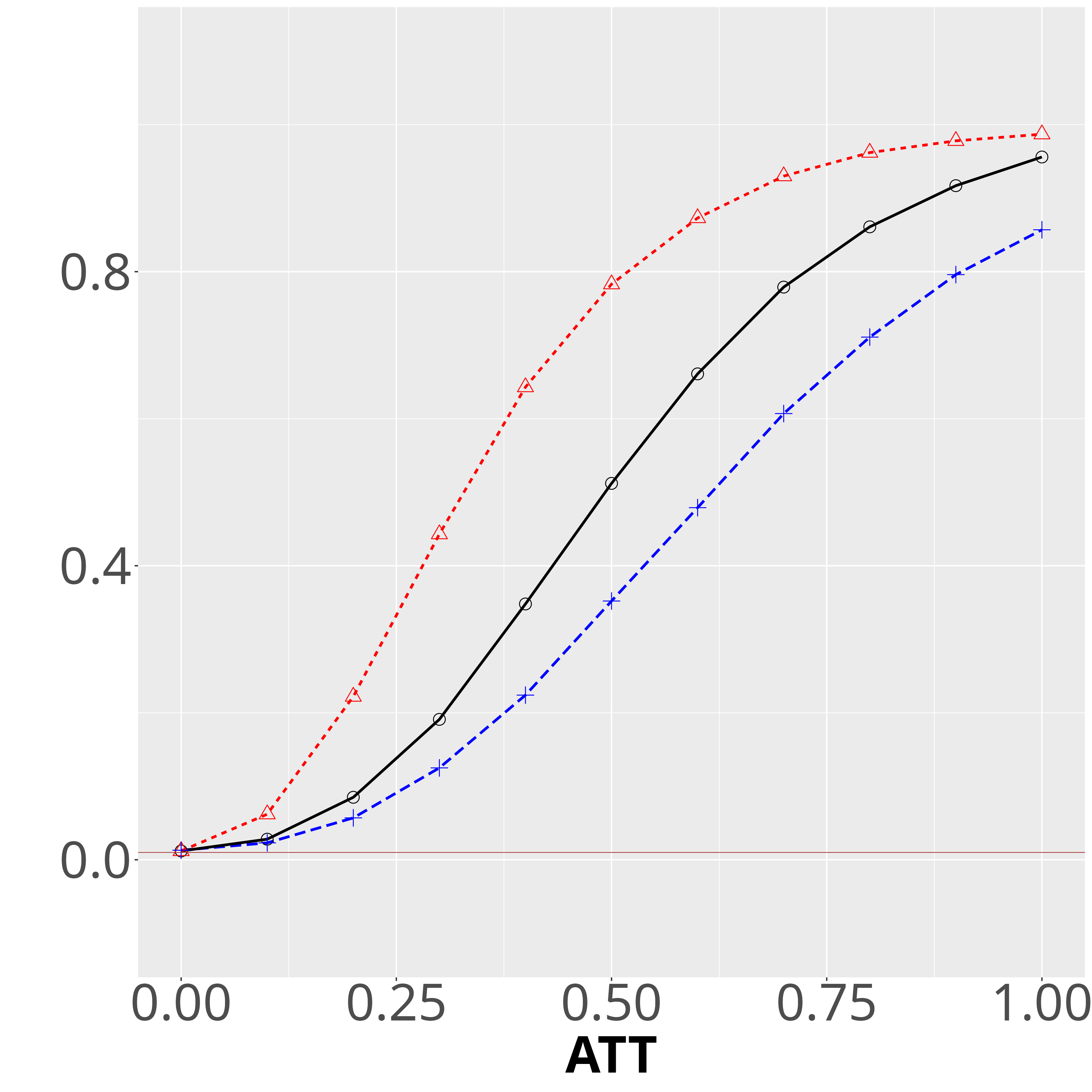}
\end{subfigure}
\caption{Power Curves - DGP SC-BA, $\T,T=25$.}
\label{Fig:DGP_SC_BA}
\end{figure}

\begin{figure}[!htbp]
\centering 
\begin{subfigure}{0.32\textwidth}
\centering
\caption{10\%}
\includegraphics[width=1\textwidth]{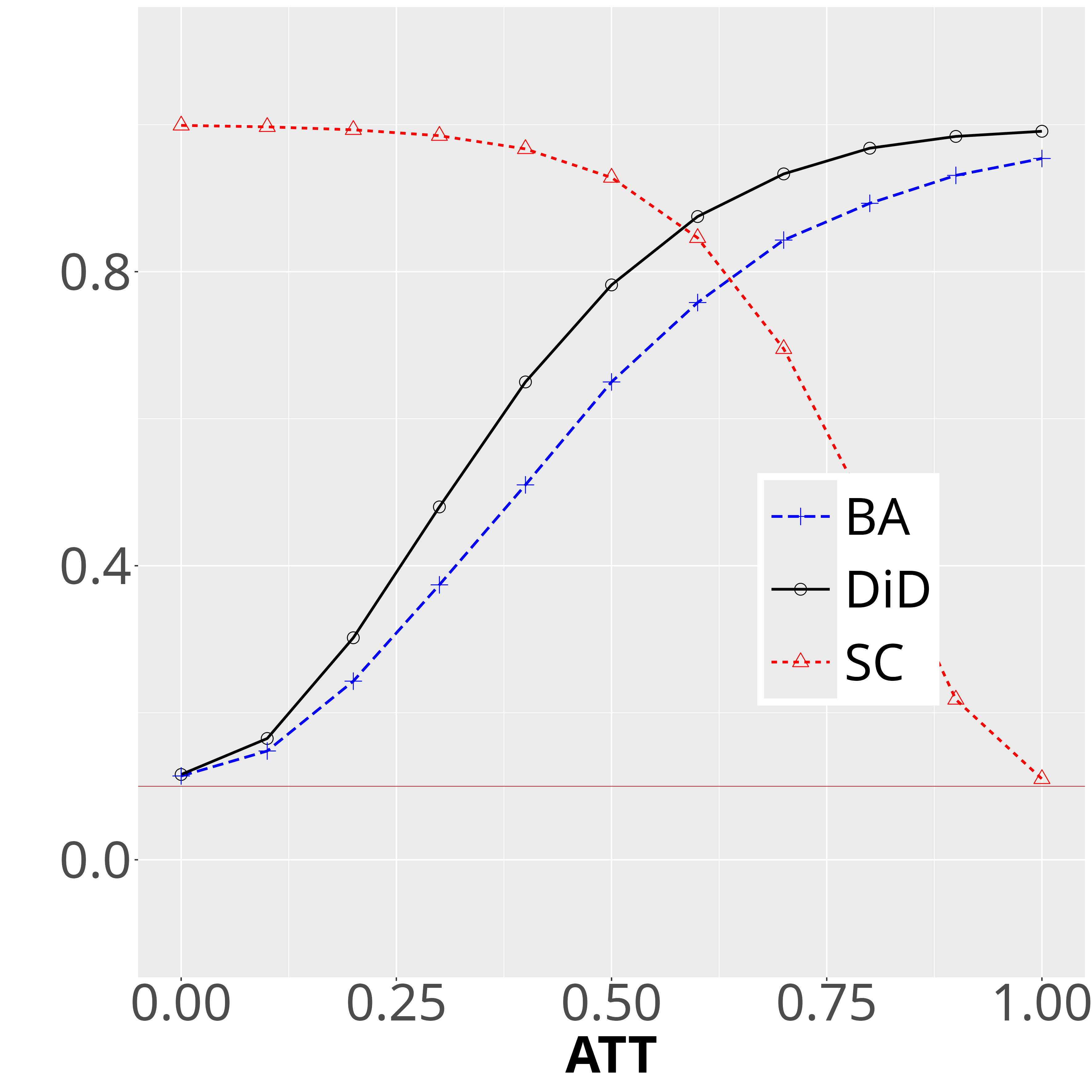}
\end{subfigure}
\begin{subfigure}{0.32\textwidth}
\centering
\caption{5\%}
\includegraphics[width=1\textwidth]{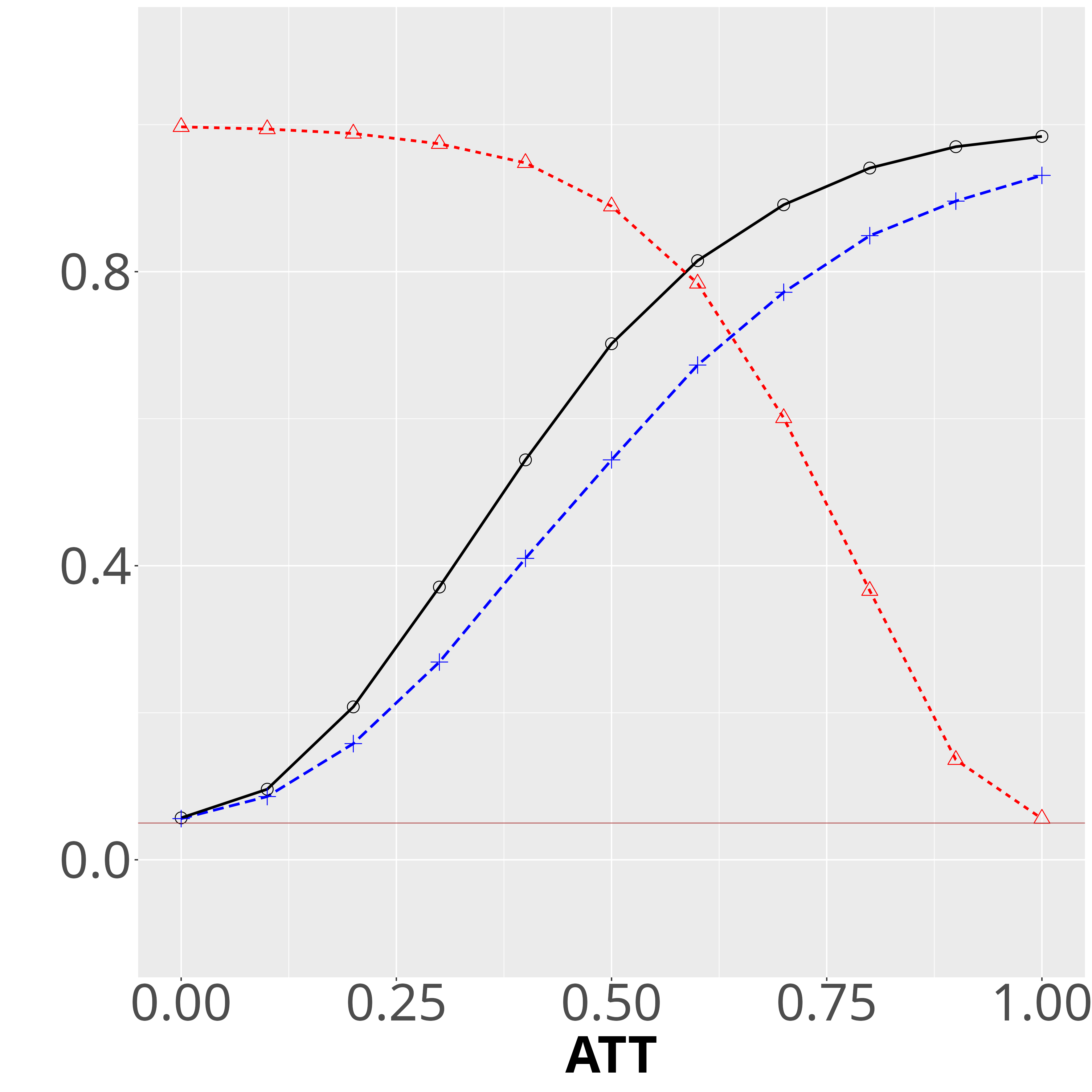}
\end{subfigure}
\begin{subfigure}{0.32\textwidth}
\centering
\caption{1\%}
\includegraphics[width=1\textwidth]{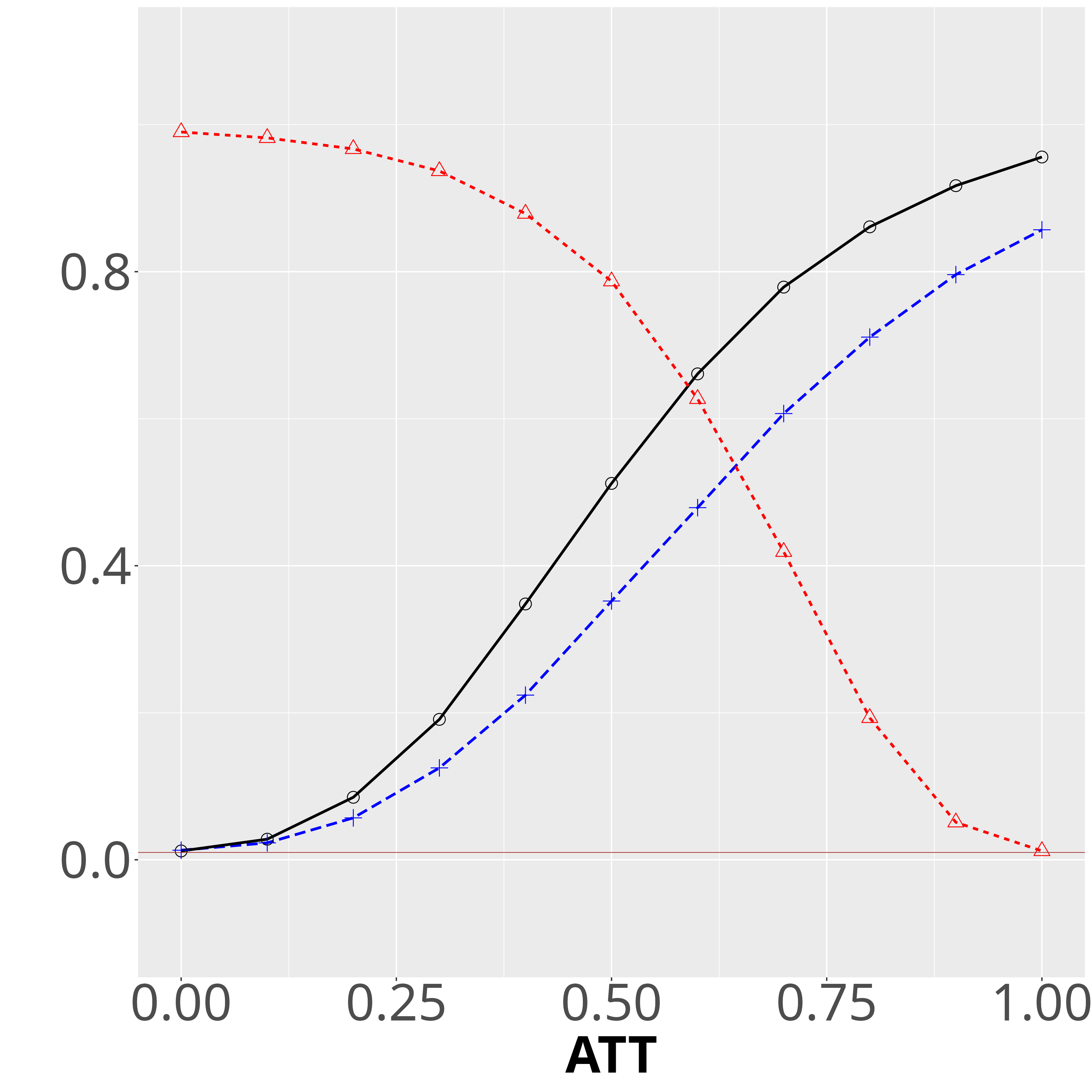}
\end{subfigure}
\caption{Power Curves - DGP BA, $\T,T=25$.}
\label{Fig:DGP_BA}
\end{figure}

\begin{figure}[!htbp]
\centering 
\begin{subfigure}{0.32\textwidth}
\centering
\caption{10\%}
\includegraphics[width=1\textwidth]{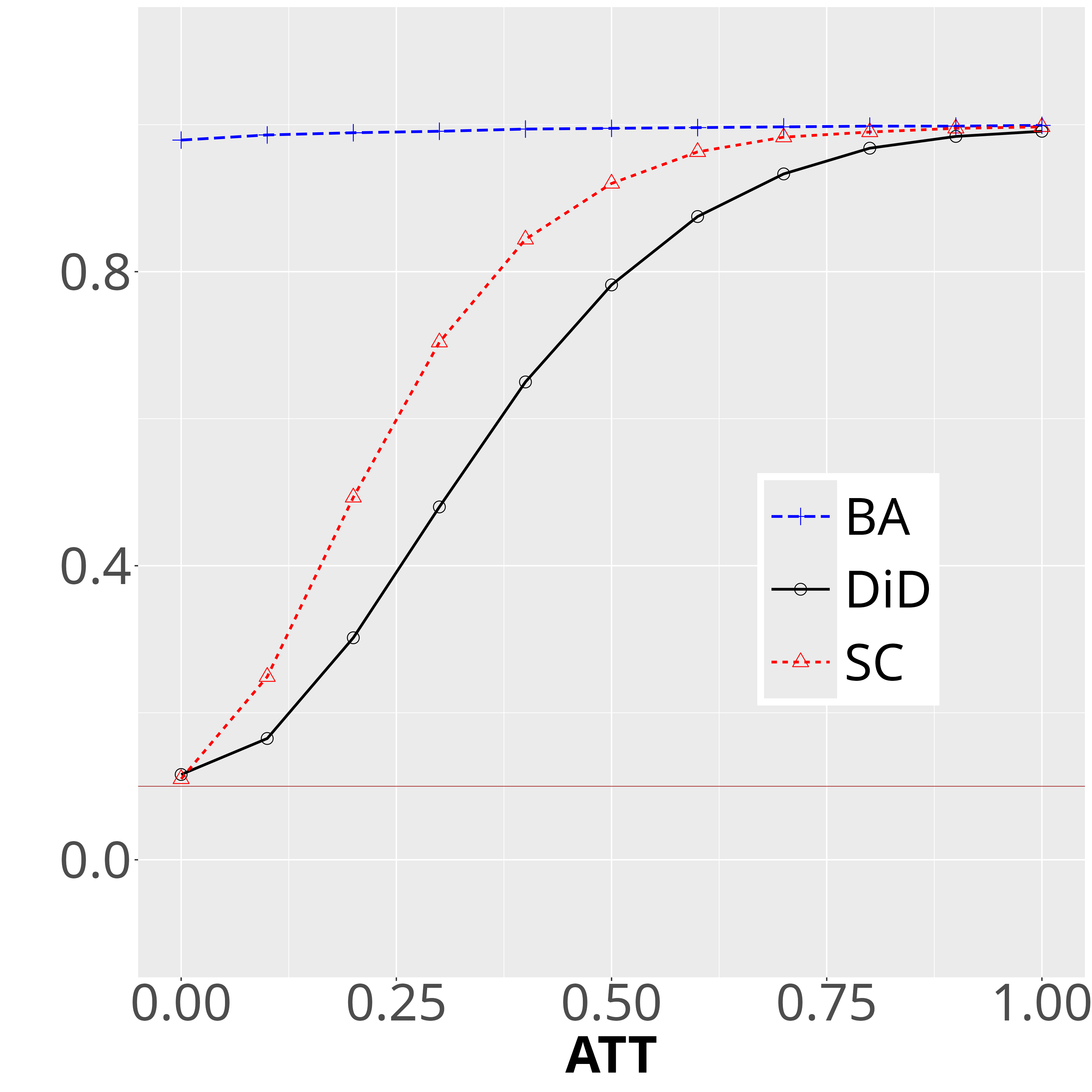}
\end{subfigure}
\begin{subfigure}{0.32\textwidth}
\centering
\caption{5\%}
\includegraphics[width=1\textwidth]{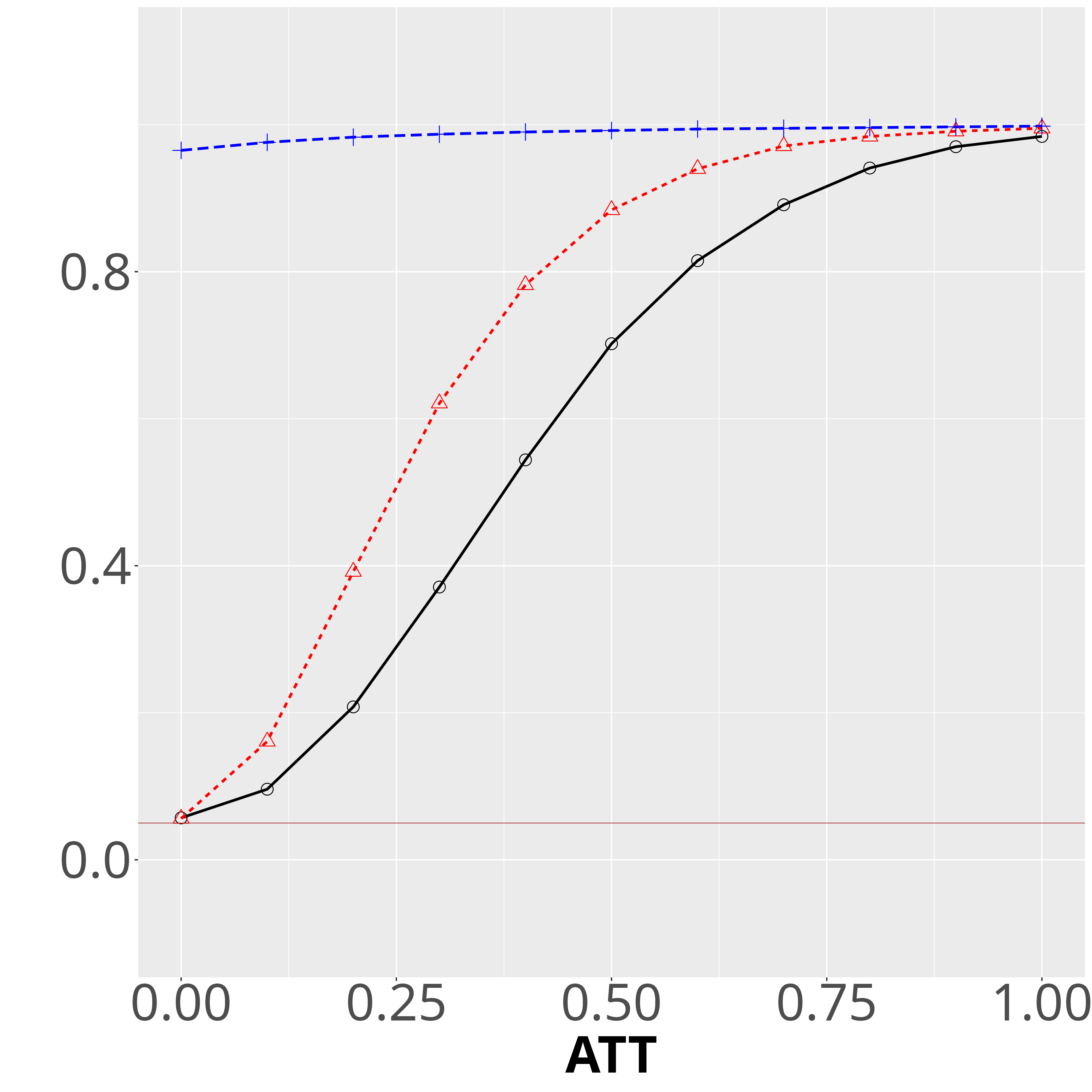}
\end{subfigure}
\begin{subfigure}{0.32\textwidth}
\centering
\caption{1\%}
\includegraphics[width=1\textwidth]{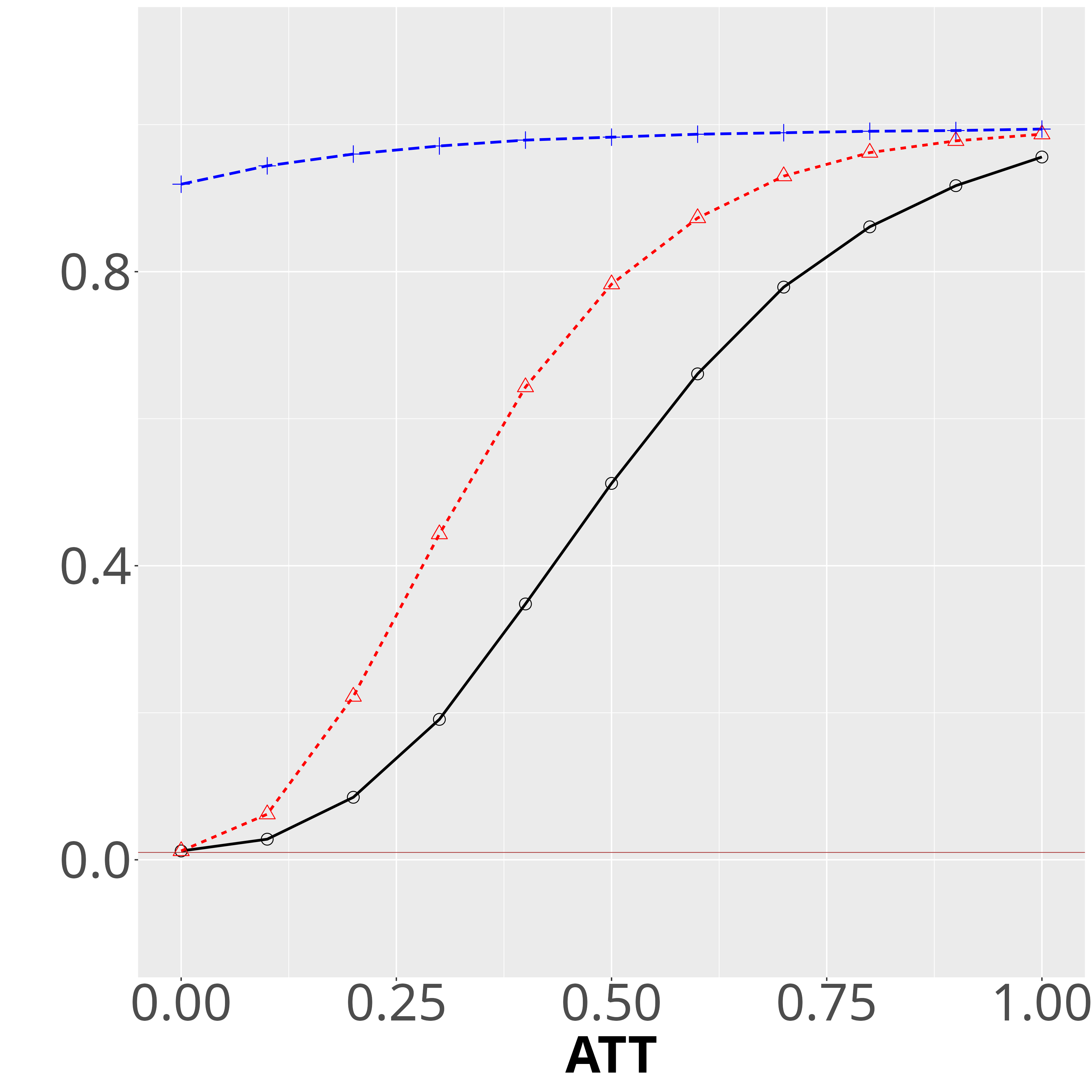}
\end{subfigure}
\caption{Power Curves - DGP SC, $\T,T=25$.}
\label{Fig:DGP_SC}
\end{figure}

\begin{figure}[!htbp]
\centering 
\begin{subfigure}{0.32\textwidth}
\centering
\caption{10\%}
\includegraphics[width=1\textwidth]{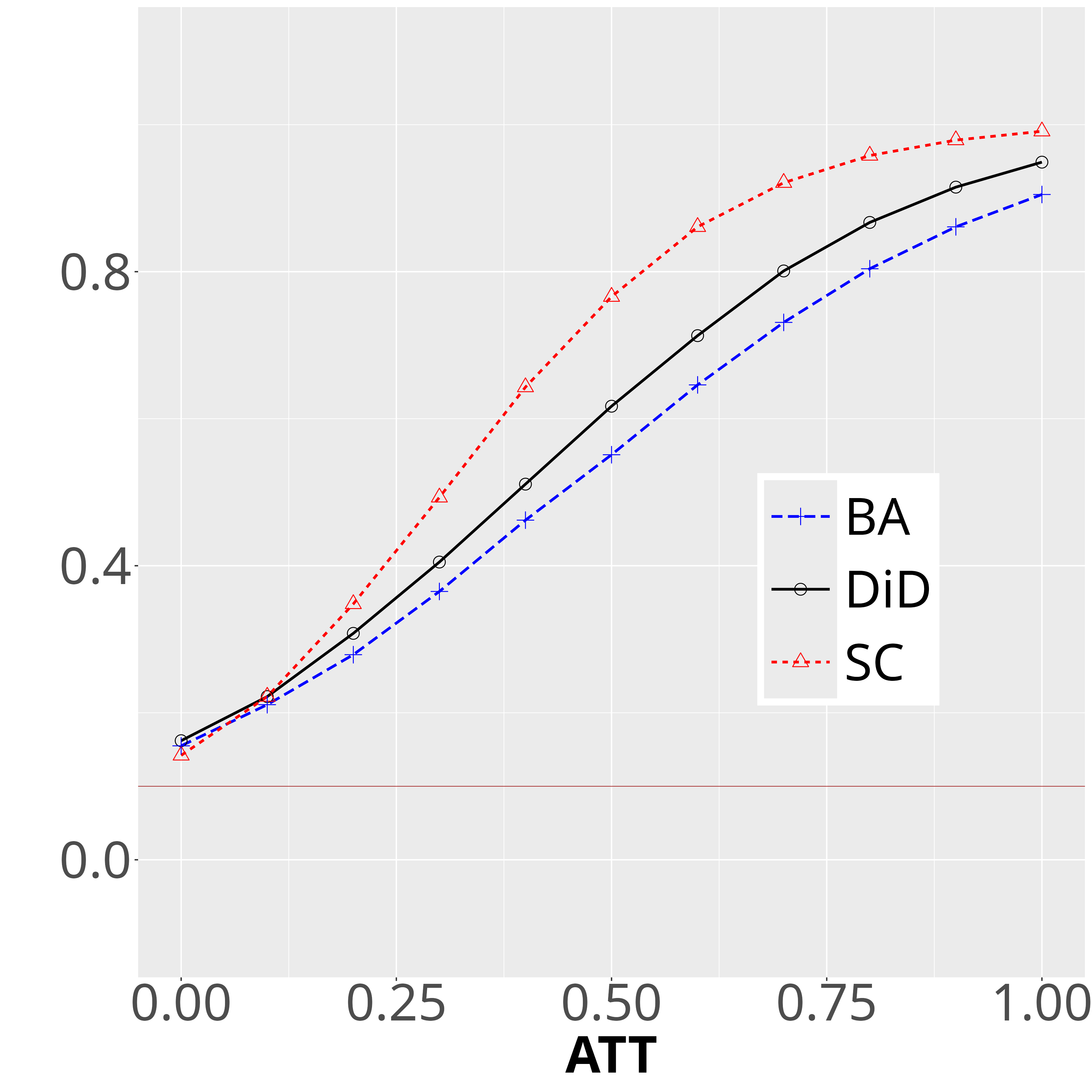}
\end{subfigure}
\begin{subfigure}{0.32\textwidth}
\centering
\caption{5\%}
\includegraphics[width=1\textwidth]{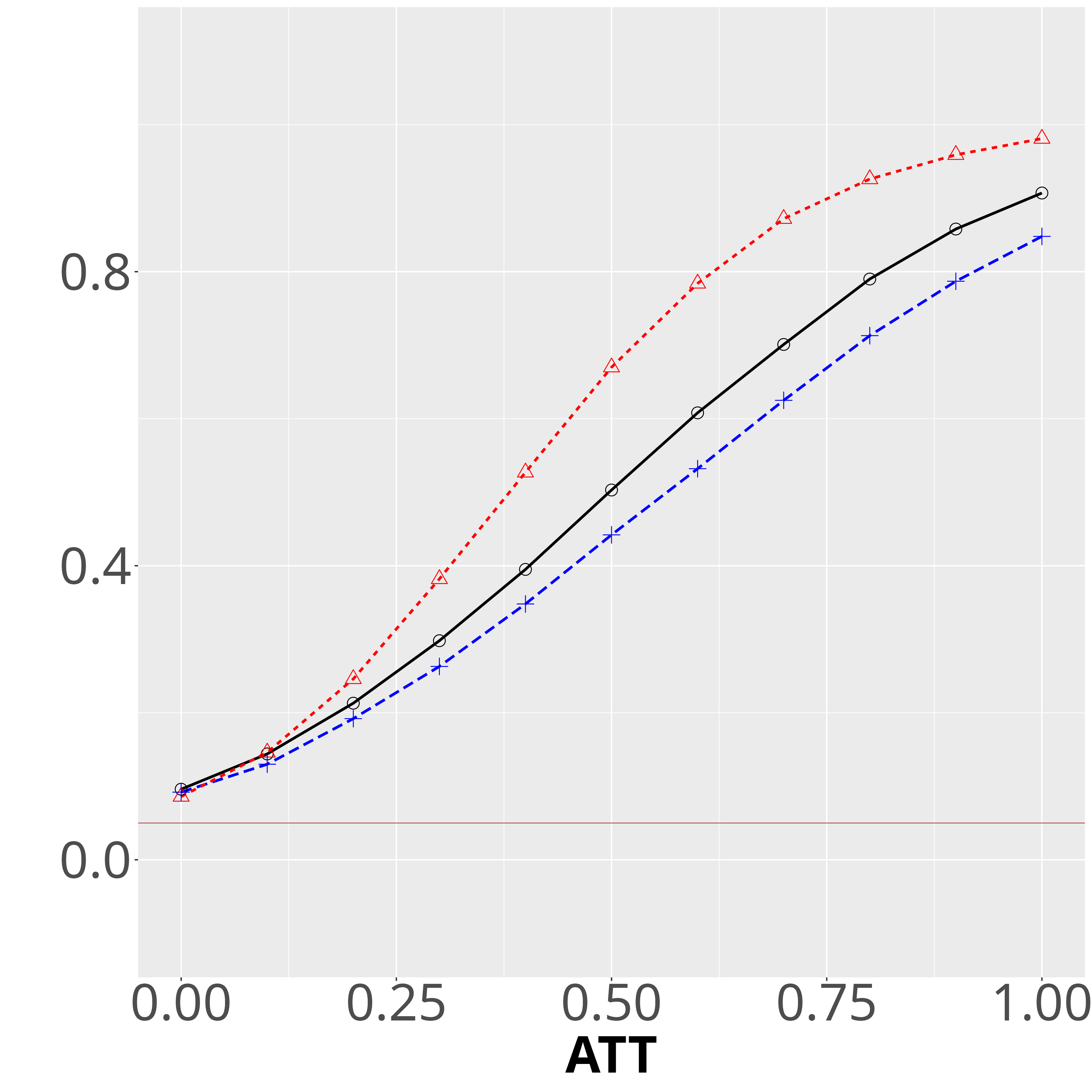}
\end{subfigure}
\begin{subfigure}{0.32\textwidth}
\centering
\caption{1\%}
\includegraphics[width=1\textwidth]{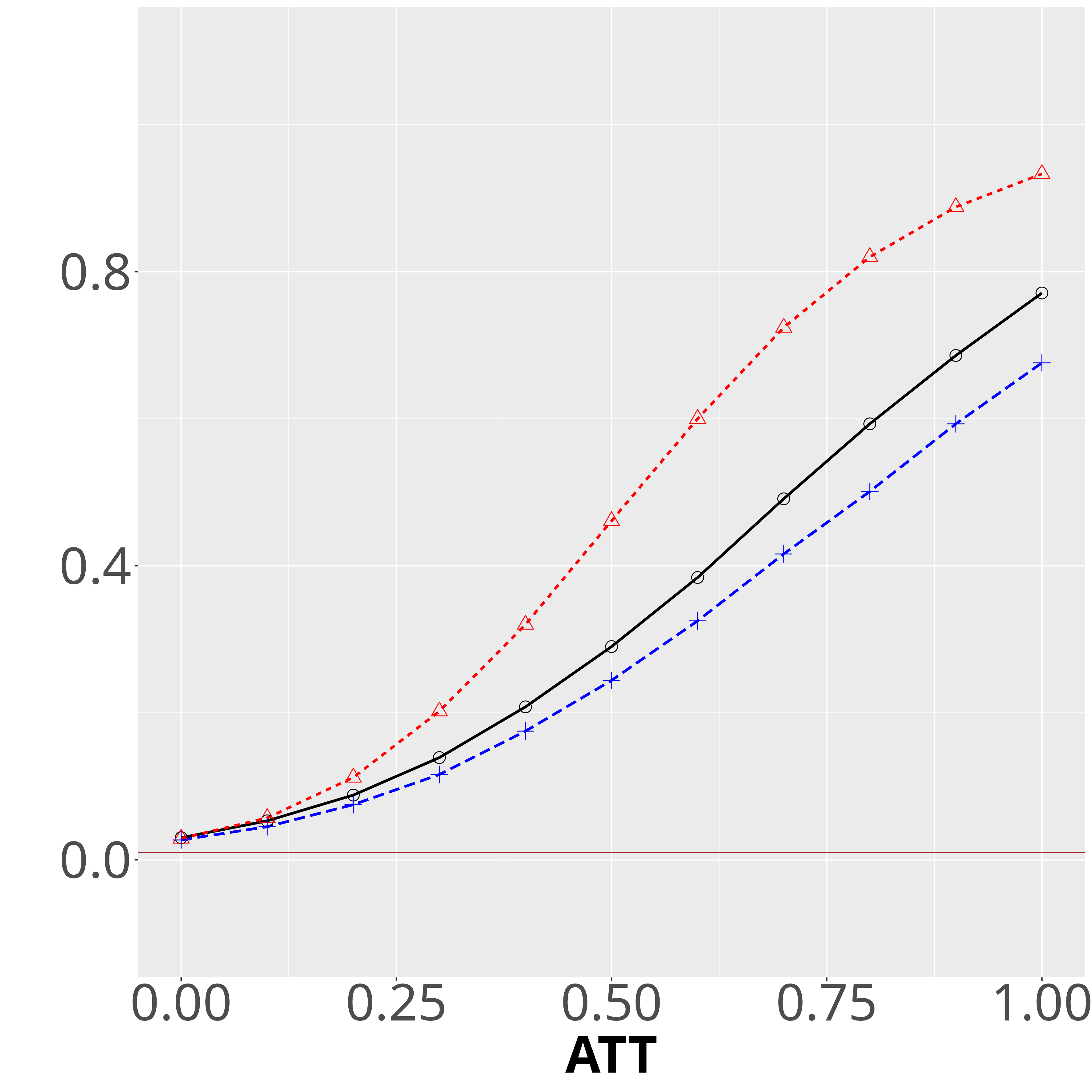}
\end{subfigure}
\caption{Power Curves - DGP PT-NA (A), $\T,T=25$.}
\label{Fig:DGP_PT_NA_A}
\end{figure}

\begin{figure}[!htbp]
\centering 
\begin{subfigure}{0.32\textwidth}
\centering
\caption{10\%}
\includegraphics[width=1\textwidth]{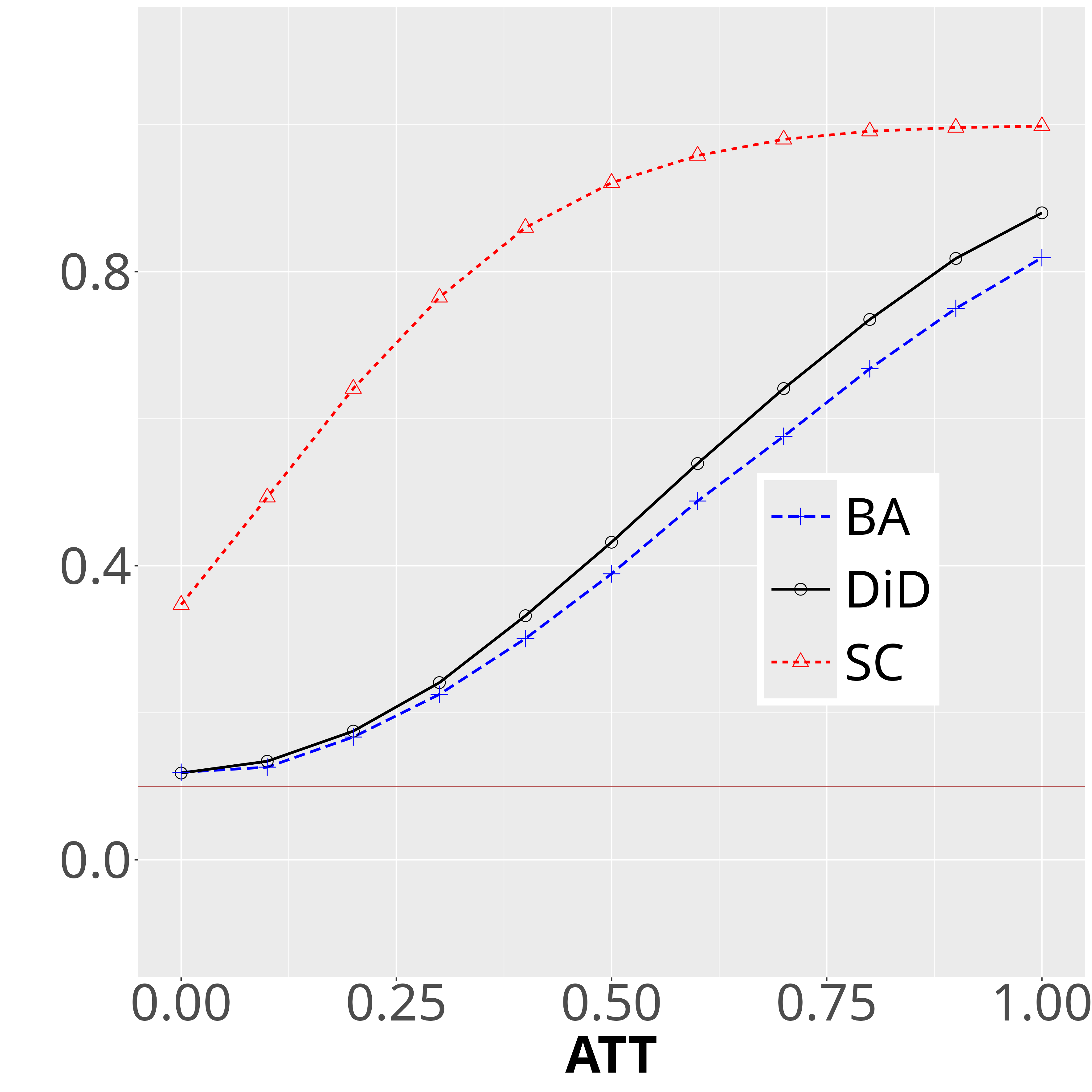}
\end{subfigure}
\begin{subfigure}{0.32\textwidth}
\centering
\caption{5\%}
\includegraphics[width=1\textwidth]{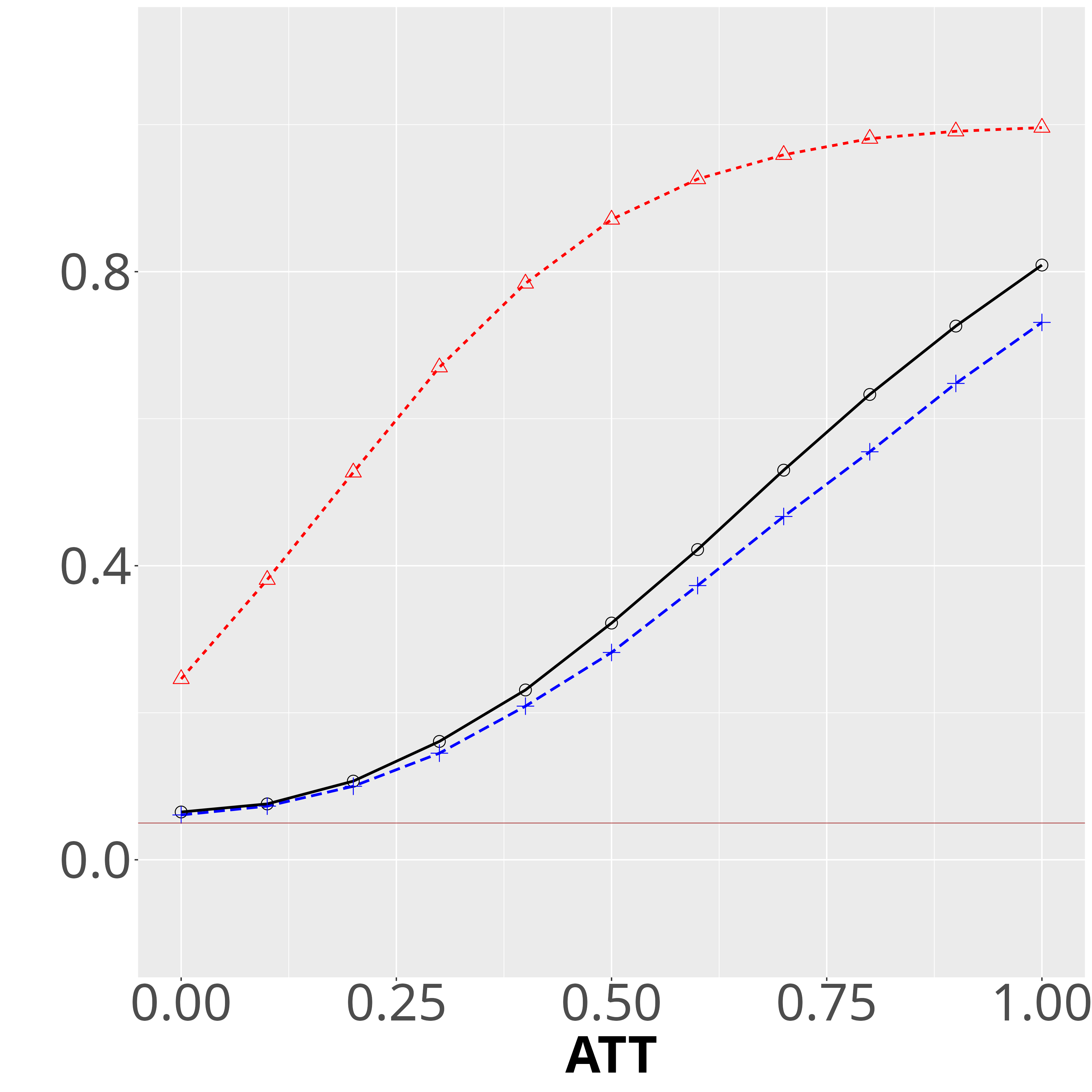}
\end{subfigure}
\begin{subfigure}{0.32\textwidth}
\centering
\caption{1\%}
\includegraphics[width=1\textwidth]{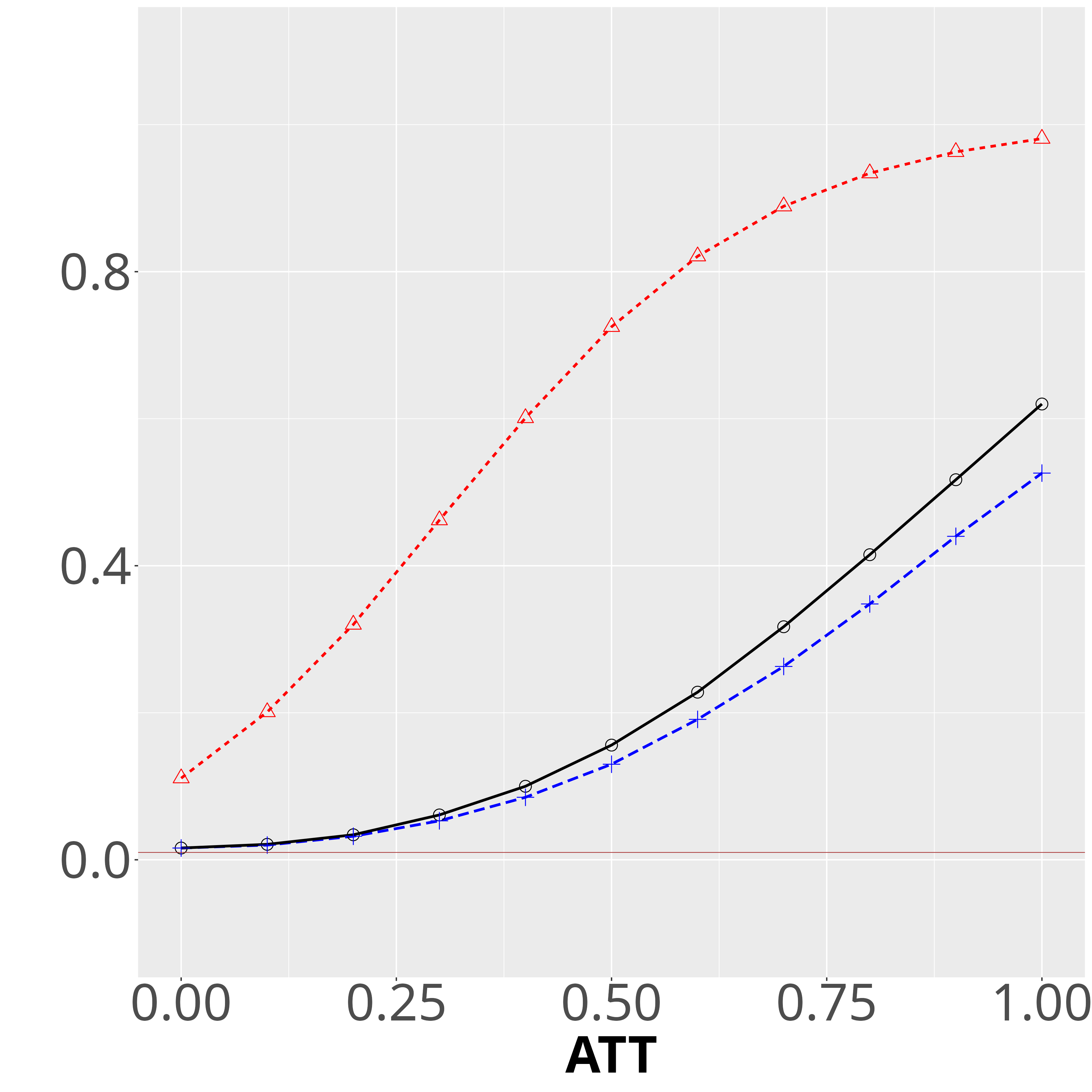}
\end{subfigure}
\caption{Power Curves - DGP PT-NA (B), $\T,T=25$.}
\label{Fig:DGP_PT_NA_B}
\end{figure}

\begin{figure}[!htbp]
\centering 
\begin{subfigure}{0.32\textwidth}
\centering
\caption{10\%}
\includegraphics[width=1\textwidth]{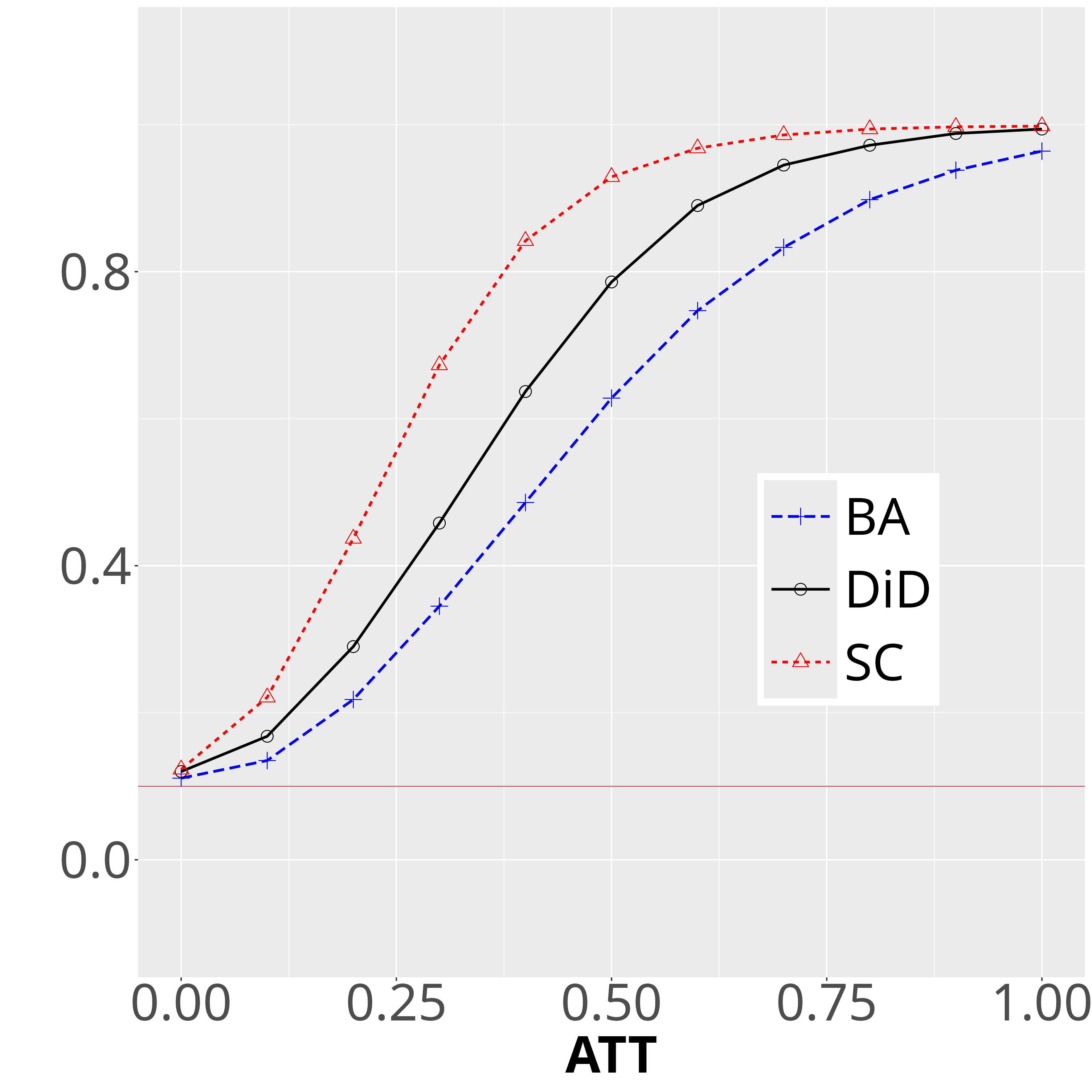}
\end{subfigure}
\begin{subfigure}{0.32\textwidth}
\centering
\caption{5\%}
\includegraphics[width=1\textwidth]{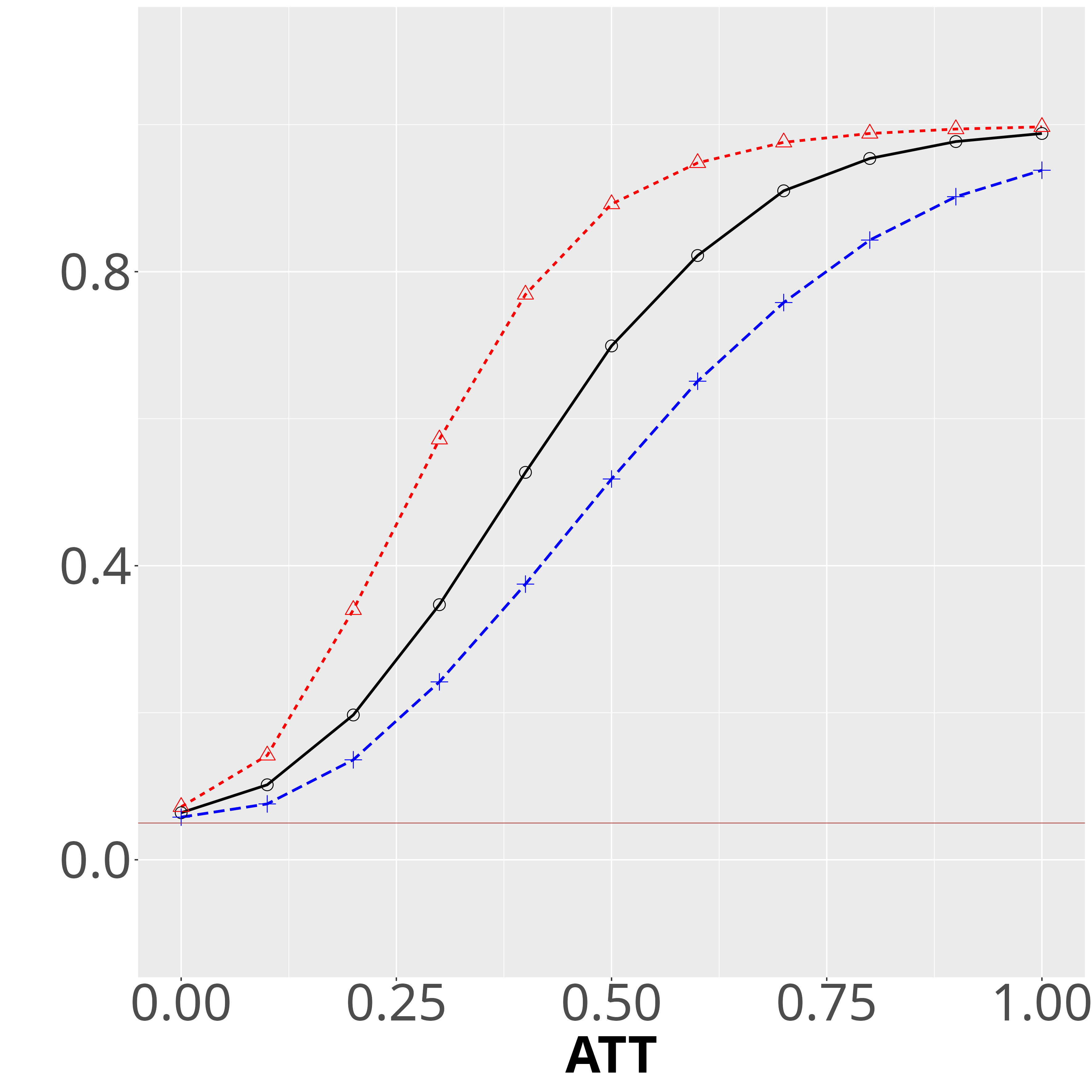}
\end{subfigure}
\begin{subfigure}{0.32\textwidth}
\centering
\caption{1\%}
\includegraphics[width=1\textwidth]{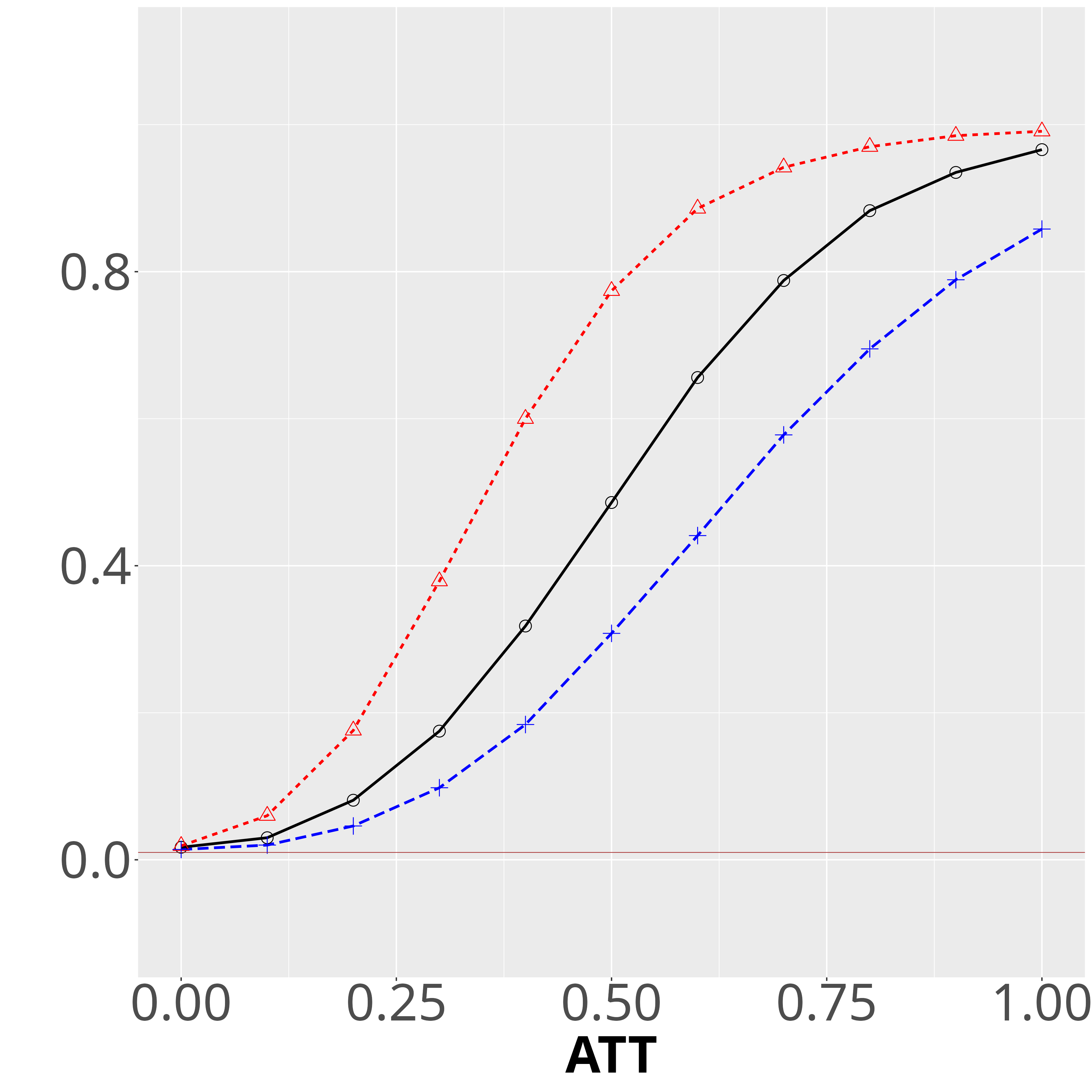}
\end{subfigure}
\caption{Power Curves - DGP GARCH(1,1), $\T,T=25$.}
\label{Fig:DGP_GARCH}
\end{figure}

\begin{figure}[!htbp]
\centering 
\begin{subfigure}{0.32\textwidth}
\centering
\caption{10\%}
\includegraphics[width=1\textwidth]{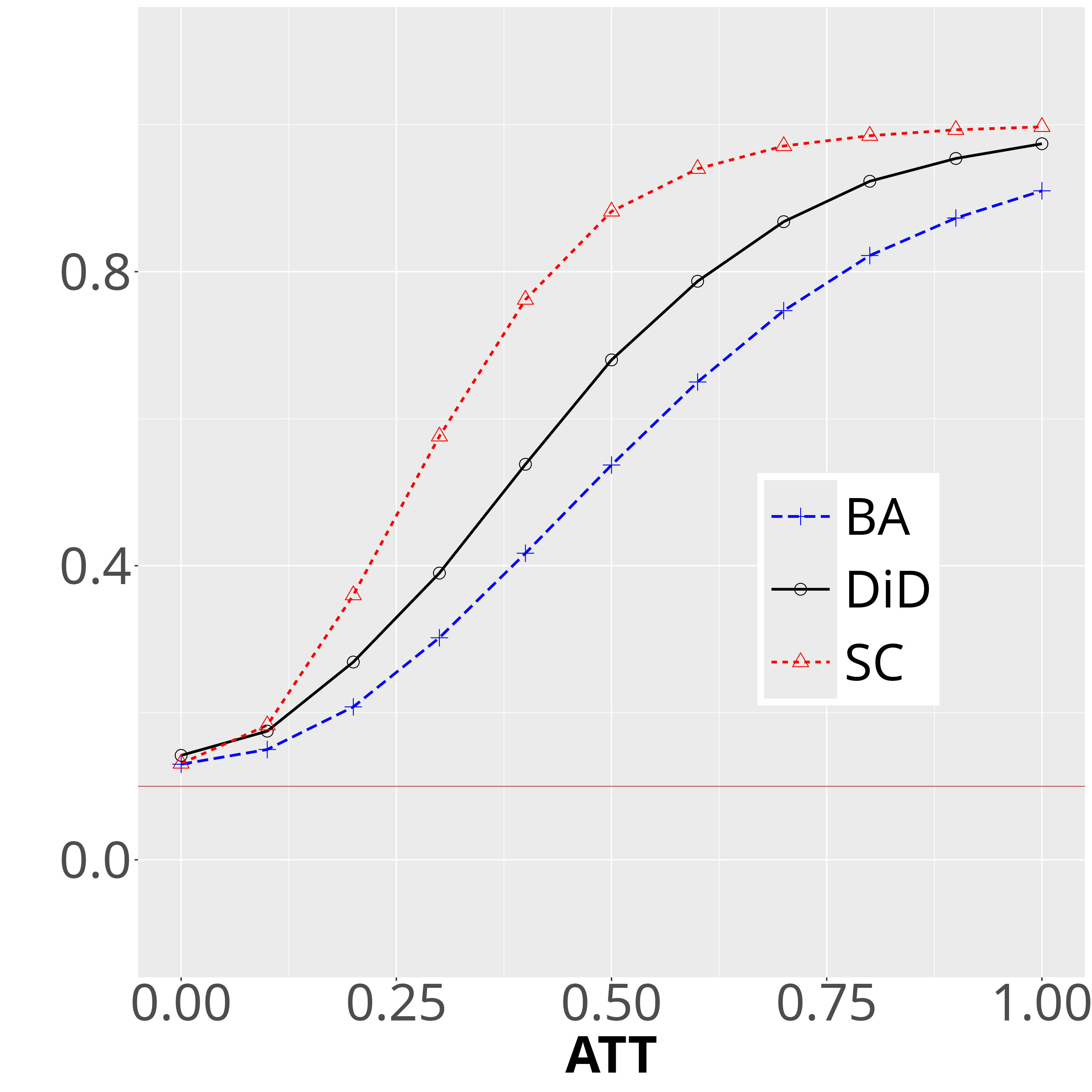}
\end{subfigure}
\begin{subfigure}{0.32\textwidth}
\centering
\caption{5\%}
\includegraphics[width=1\textwidth]{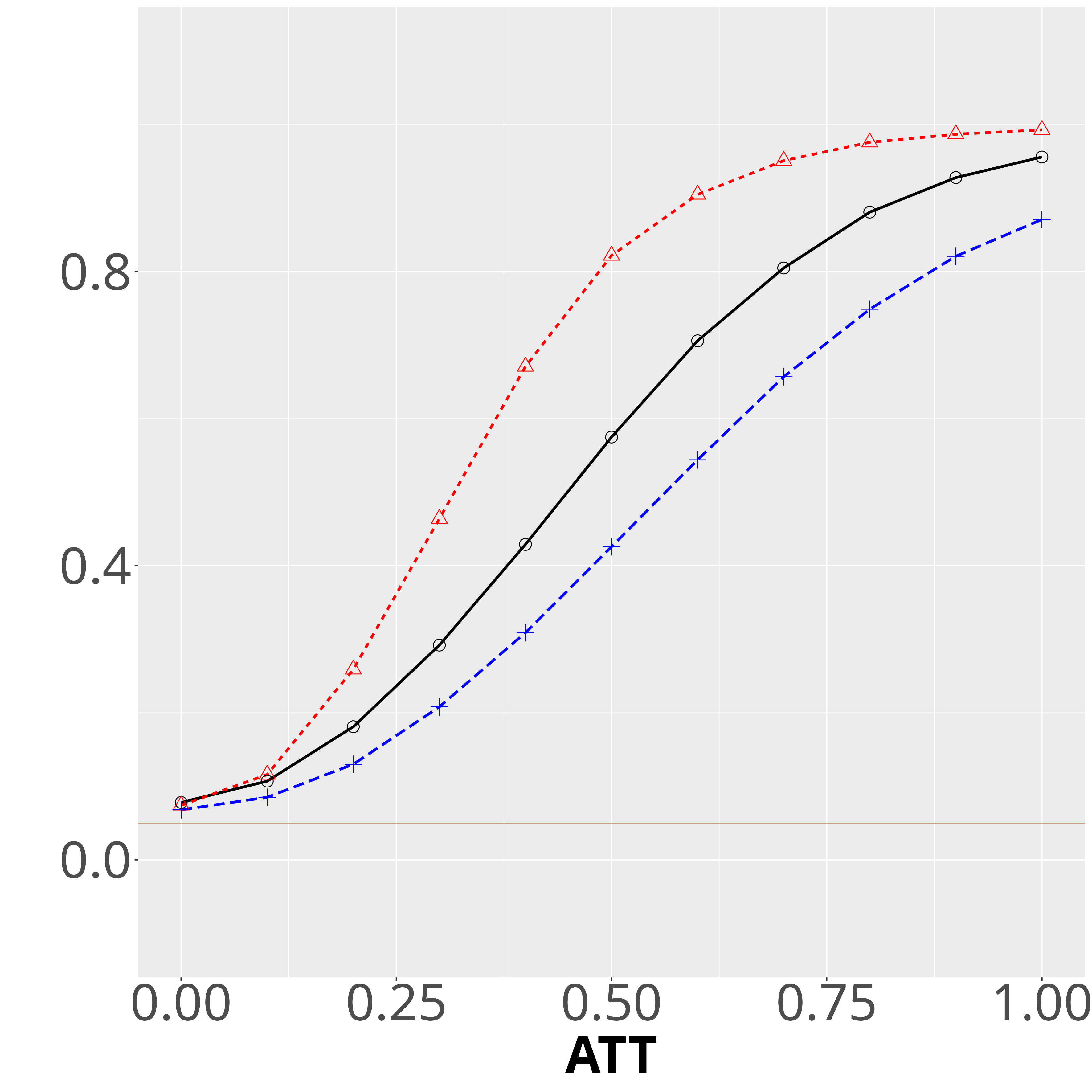}
\end{subfigure}
\begin{subfigure}{0.32\textwidth}
\centering
\caption{1\%}
\includegraphics[width=1\textwidth]{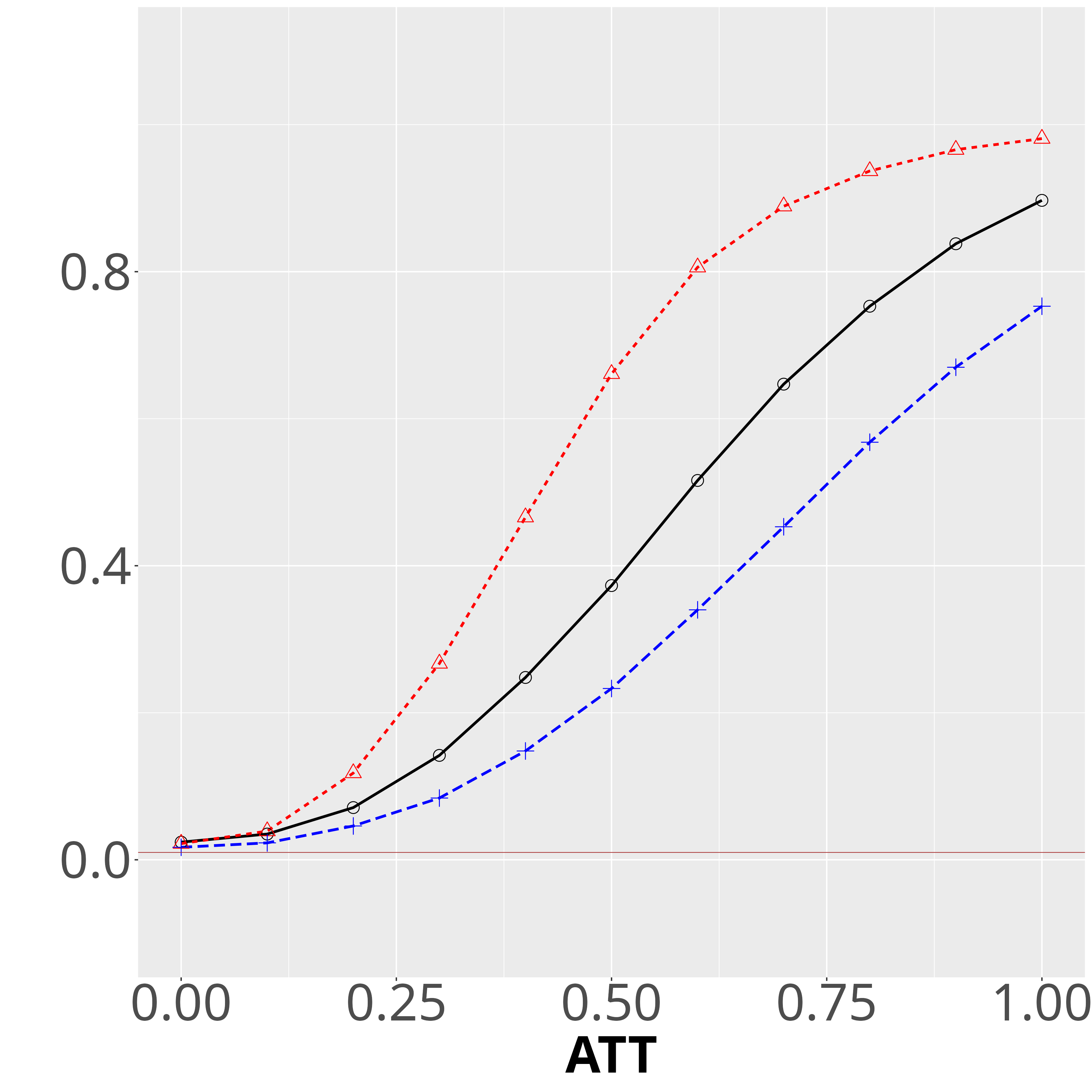}
\end{subfigure}
\caption{Power Curves - DGP MA(1), $\T,T=25$.}
\label{Fig:DGP_MA}
\end{figure}

\begin{figure}[!htbp]
\centering 
\begin{subfigure}{0.32\textwidth}
\centering
\caption{10\%}
\includegraphics[width=1\textwidth]{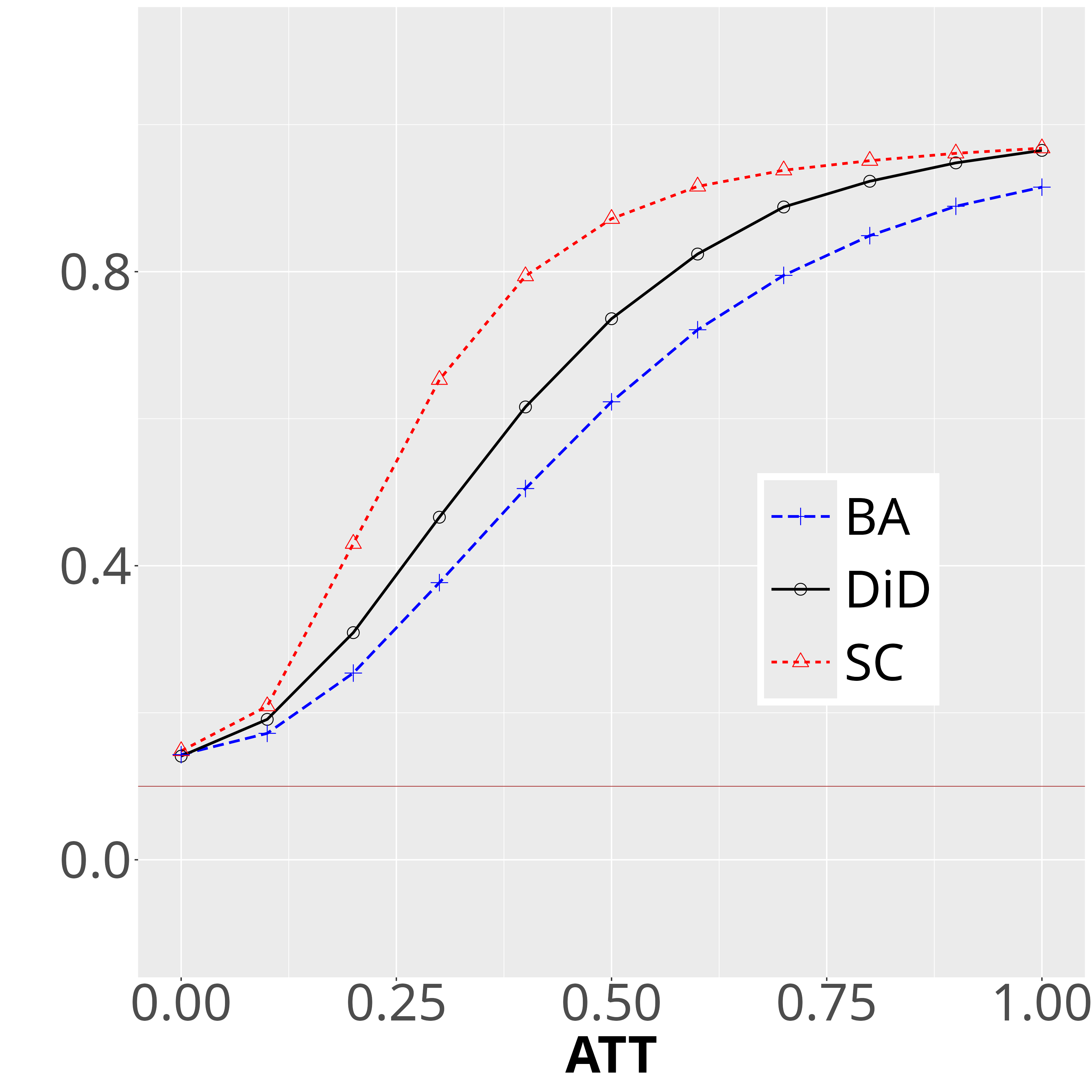}
\end{subfigure}
\begin{subfigure}{0.32\textwidth}
\centering
\caption{5\%}
\includegraphics[width=1\textwidth]{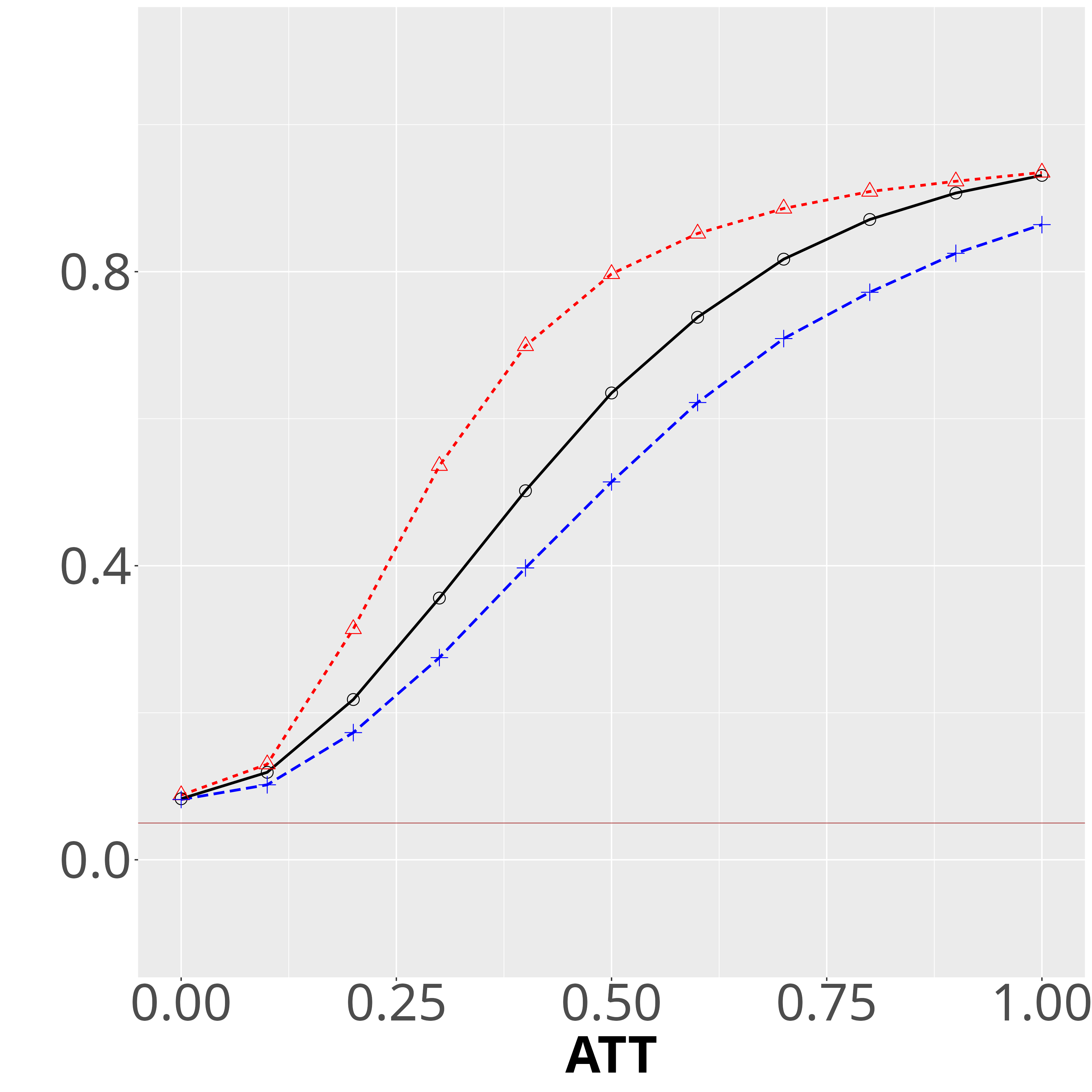}
\end{subfigure}
\begin{subfigure}{0.32\textwidth}
\centering
\caption{1\%}
\includegraphics[width=1\textwidth]{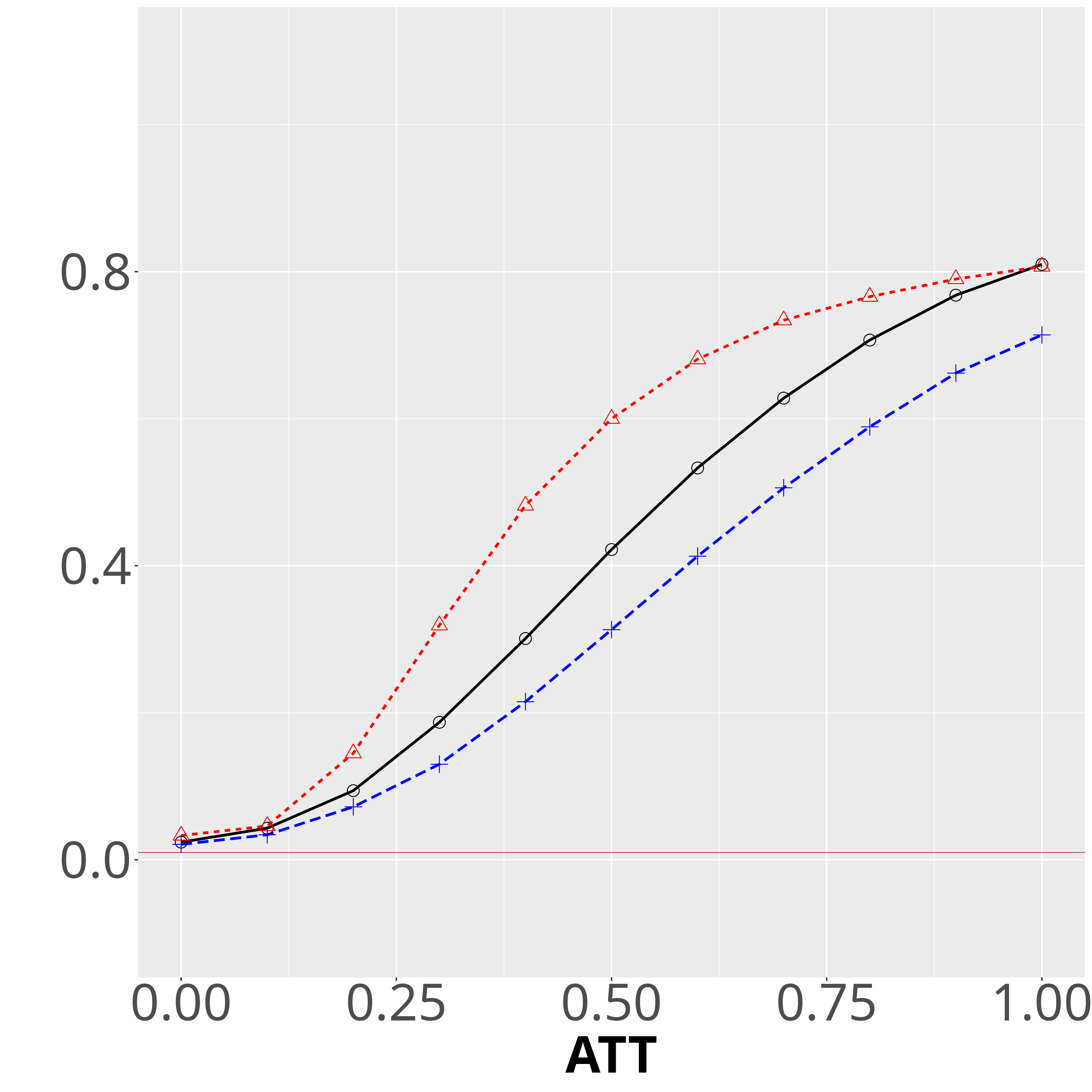}
\end{subfigure}
\caption{Power Curves - DGP AR(1), $\T,T=25$.}
\label{Fig:DGP_AR}
\end{figure}

\begin{figure}[!htbp]
\centering 
\begin{subfigure}{0.32\textwidth}
\centering
\caption{10\%}
\includegraphics[width=1\textwidth]{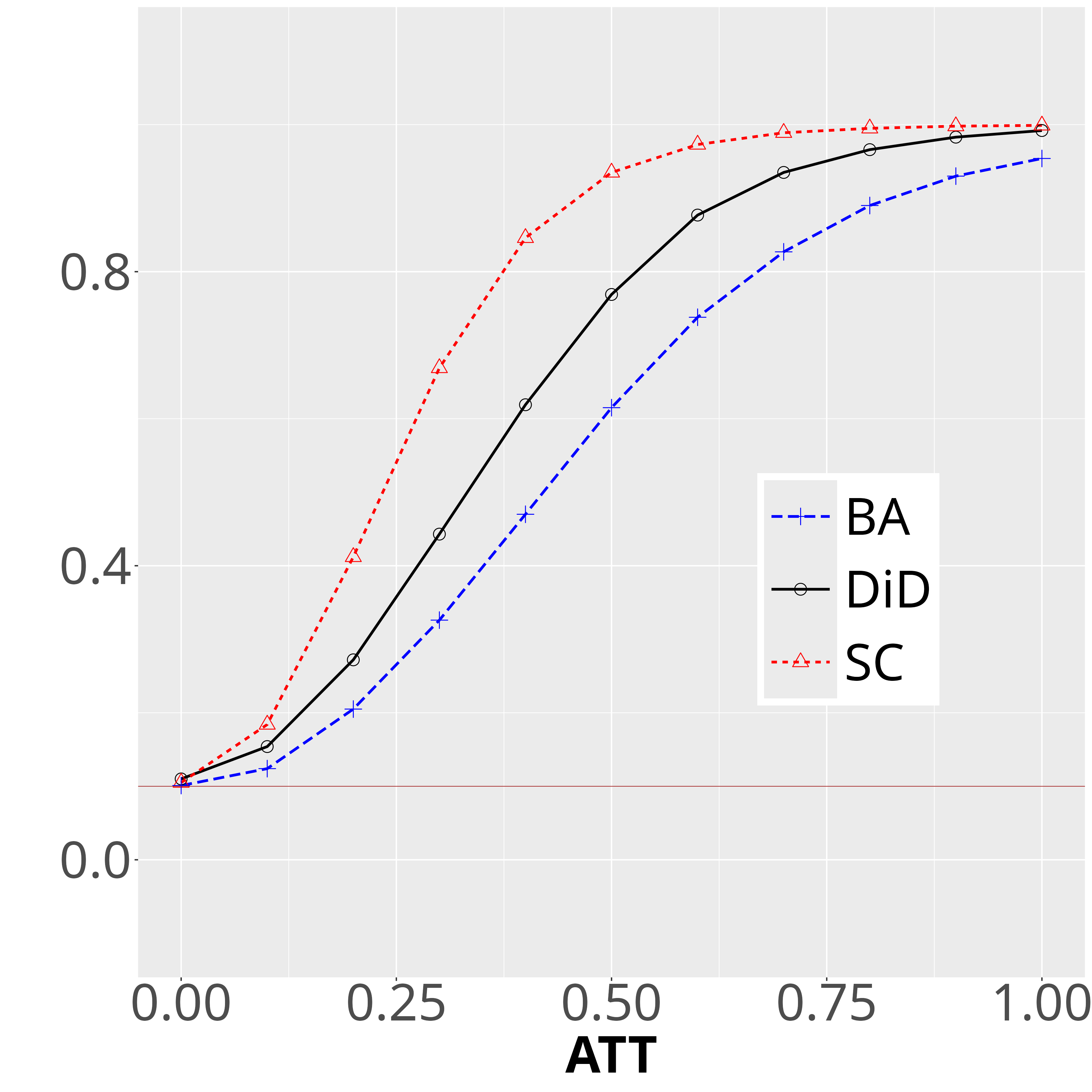}
\end{subfigure}
\begin{subfigure}{0.32\textwidth}
\centering
\caption{5\%}
\includegraphics[width=1\textwidth]{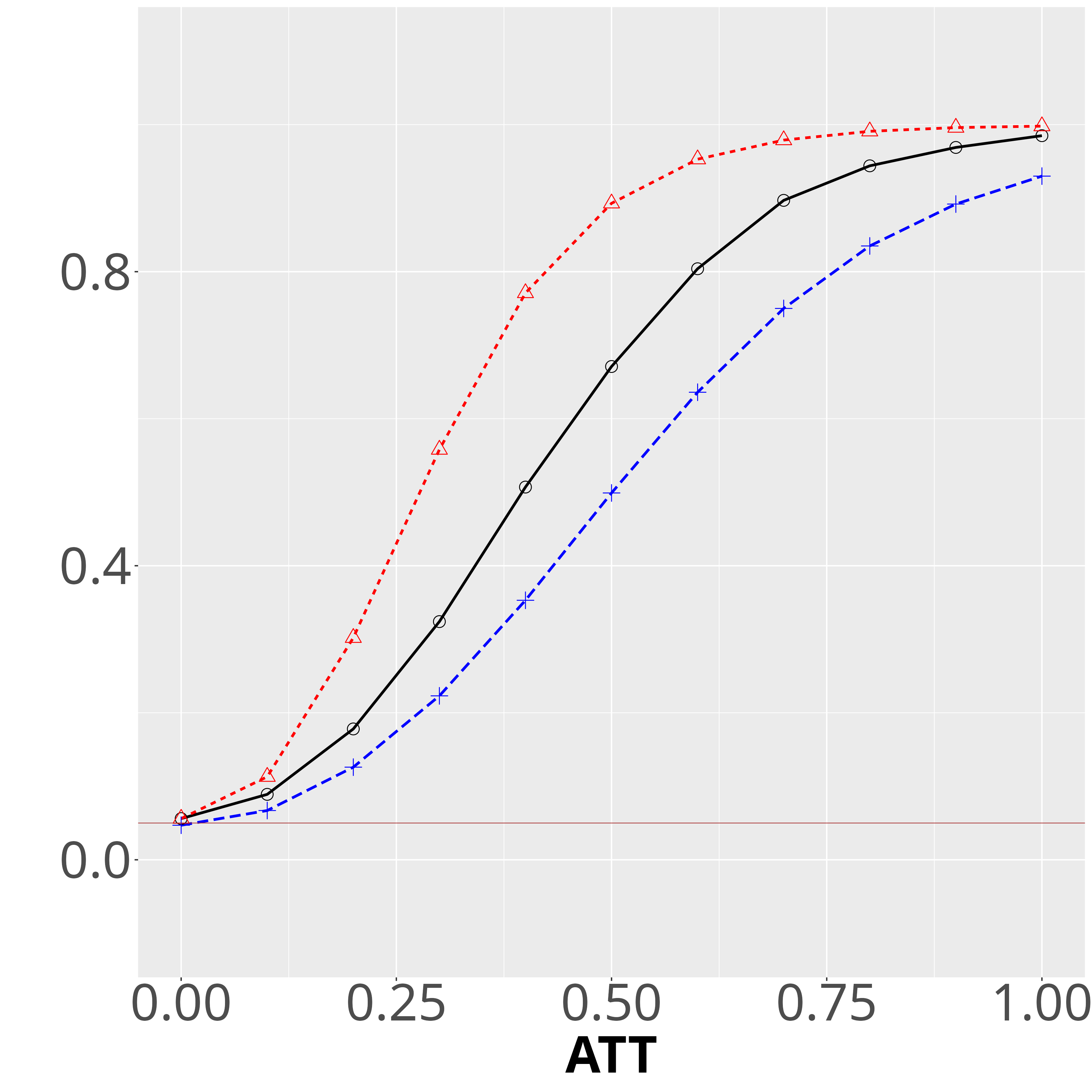}
\end{subfigure}
\begin{subfigure}{0.32\textwidth}
\centering
\caption{1\%}
\includegraphics[width=1\textwidth]{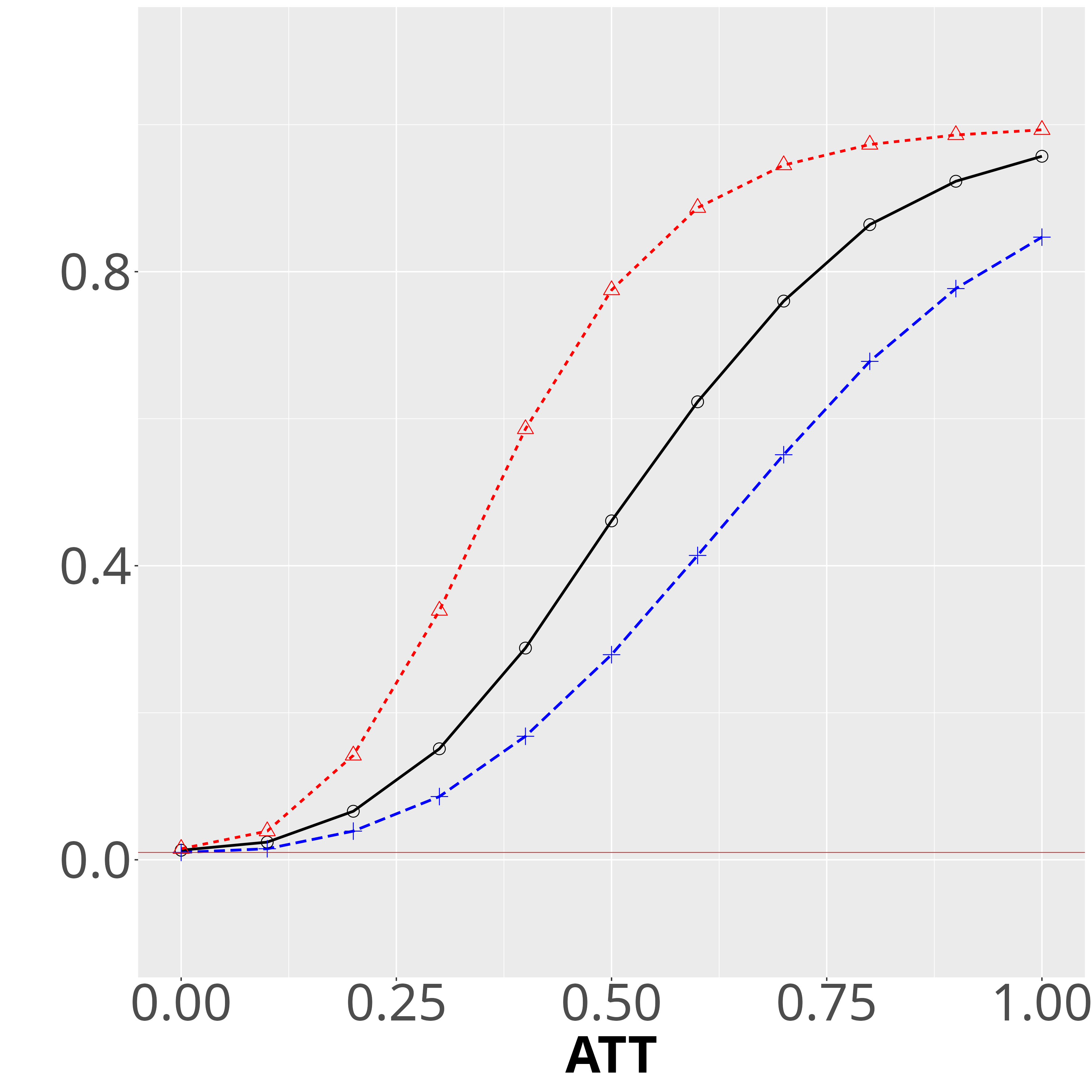}
\end{subfigure}
\caption{Power Curves - DGP U-R, $\T,T=25$.}
\label{Fig:DGP_Unit_Root}
\end{figure}

\begin{figure}[!htbp]
\centering 
\begin{subfigure}{0.32\textwidth}
\centering
\caption{10\%}
\includegraphics[width=1\textwidth]{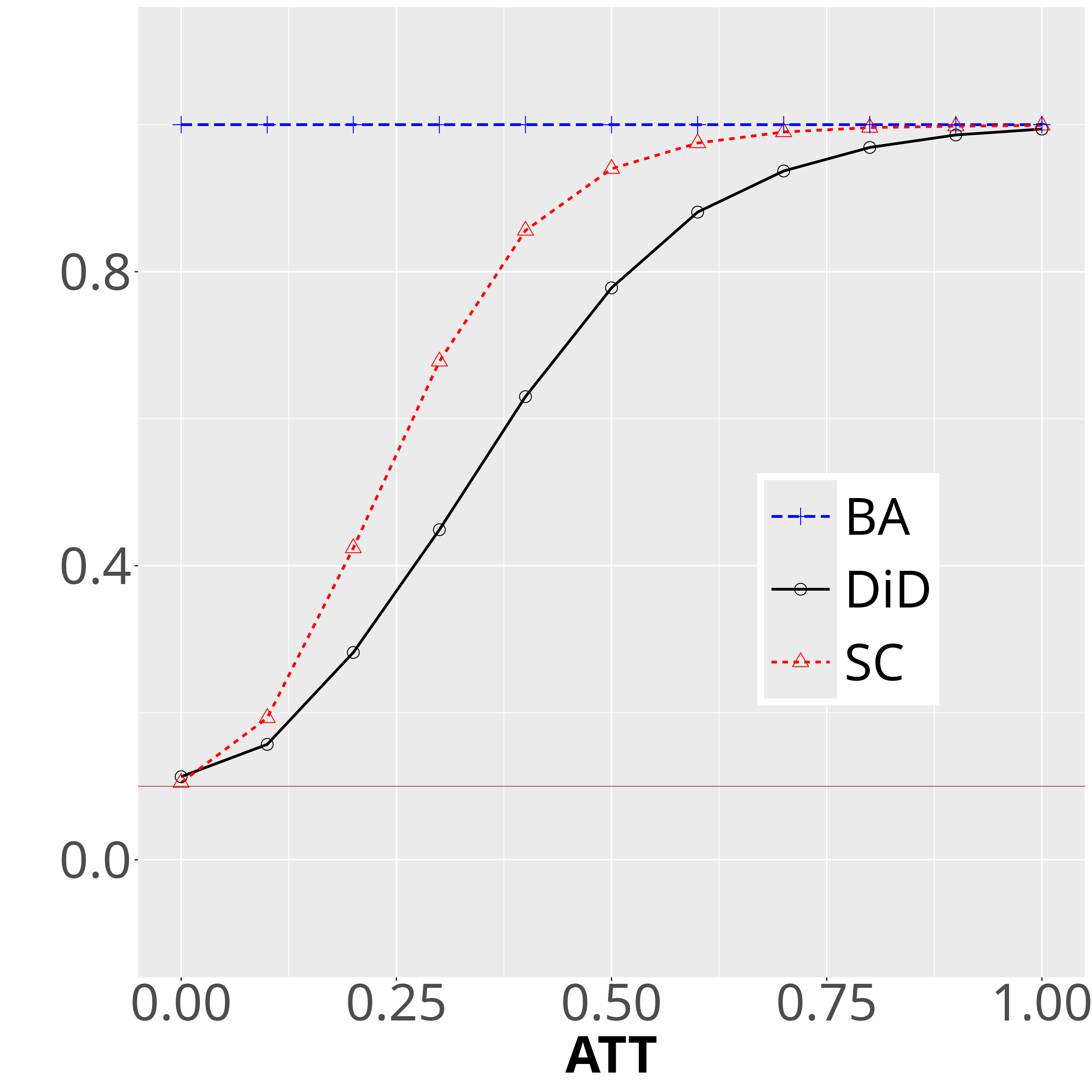}
\end{subfigure}
\begin{subfigure}{0.32\textwidth}
\centering
\caption{5\%}
\includegraphics[width=1\textwidth]{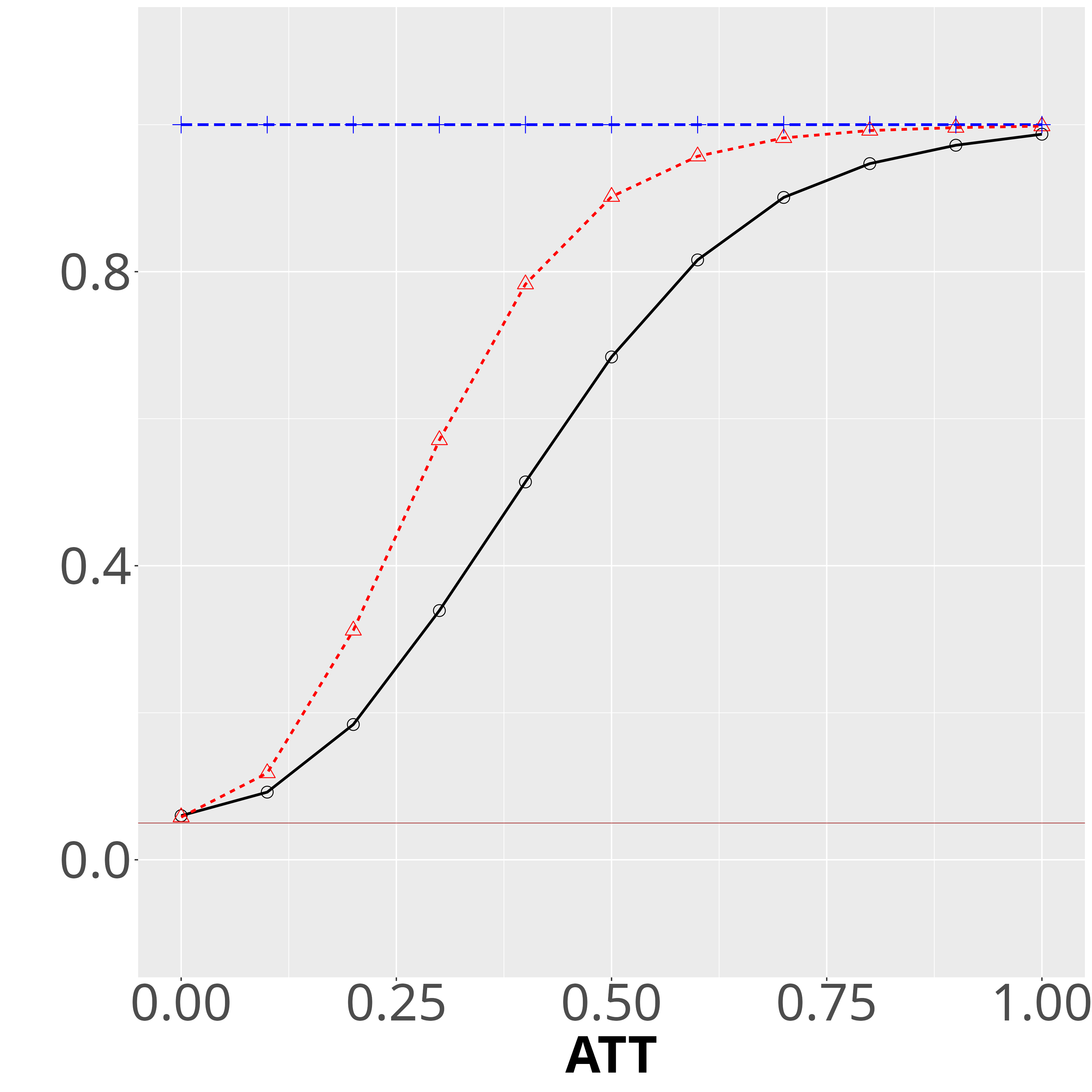}
\end{subfigure}
\begin{subfigure}{0.32\textwidth}
\centering
\caption{1\%}
\includegraphics[width=1\textwidth]{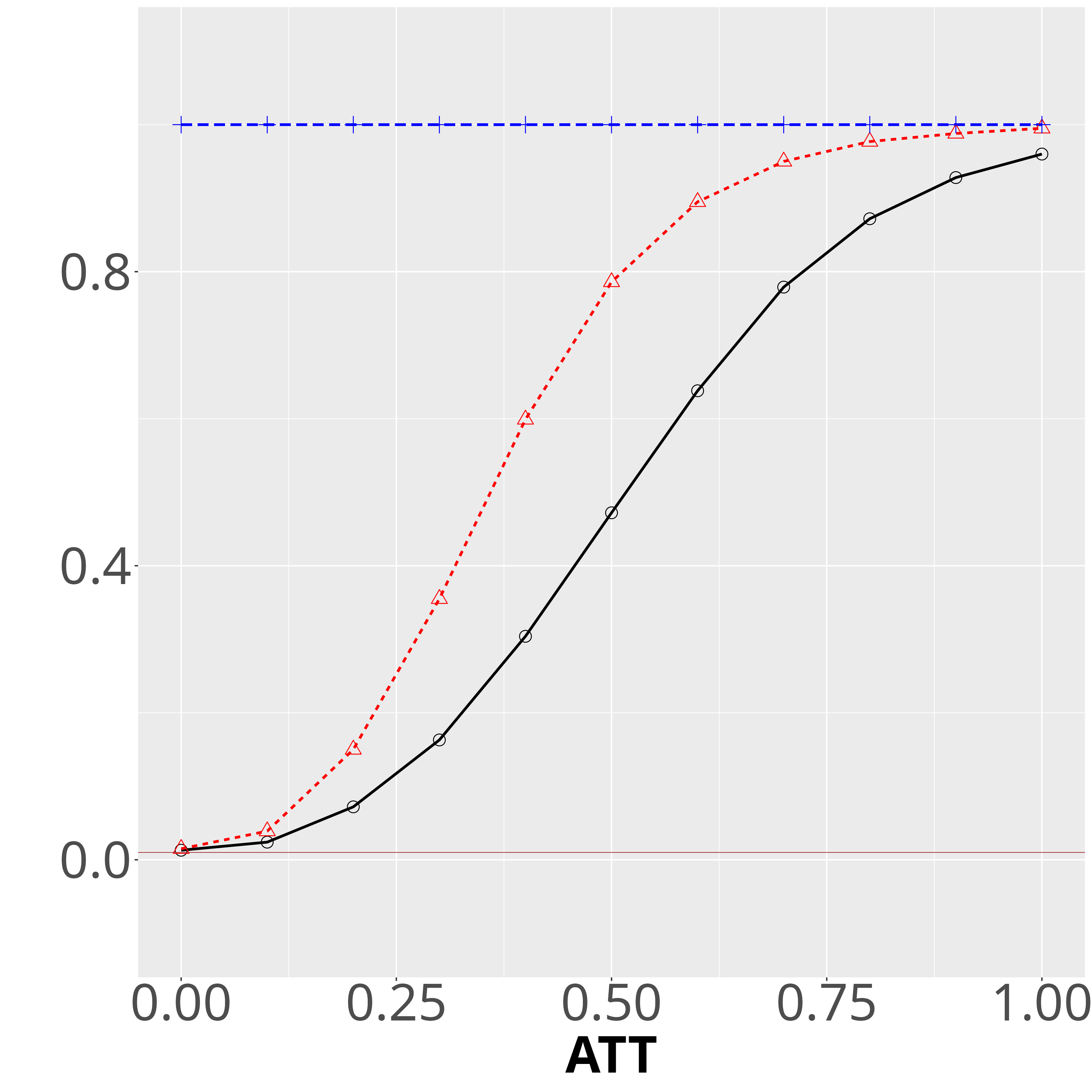}
\end{subfigure}
\caption{Power Curves - DGP Q-T, $\T,T=25$.}
\label{Fig:DGP_Q_Trend}
\end{figure}

\begin{figure}[!htbp]
\centering 
\begin{subfigure}{0.32\textwidth}
\centering
\caption{10\%}
\includegraphics[width=1\textwidth]{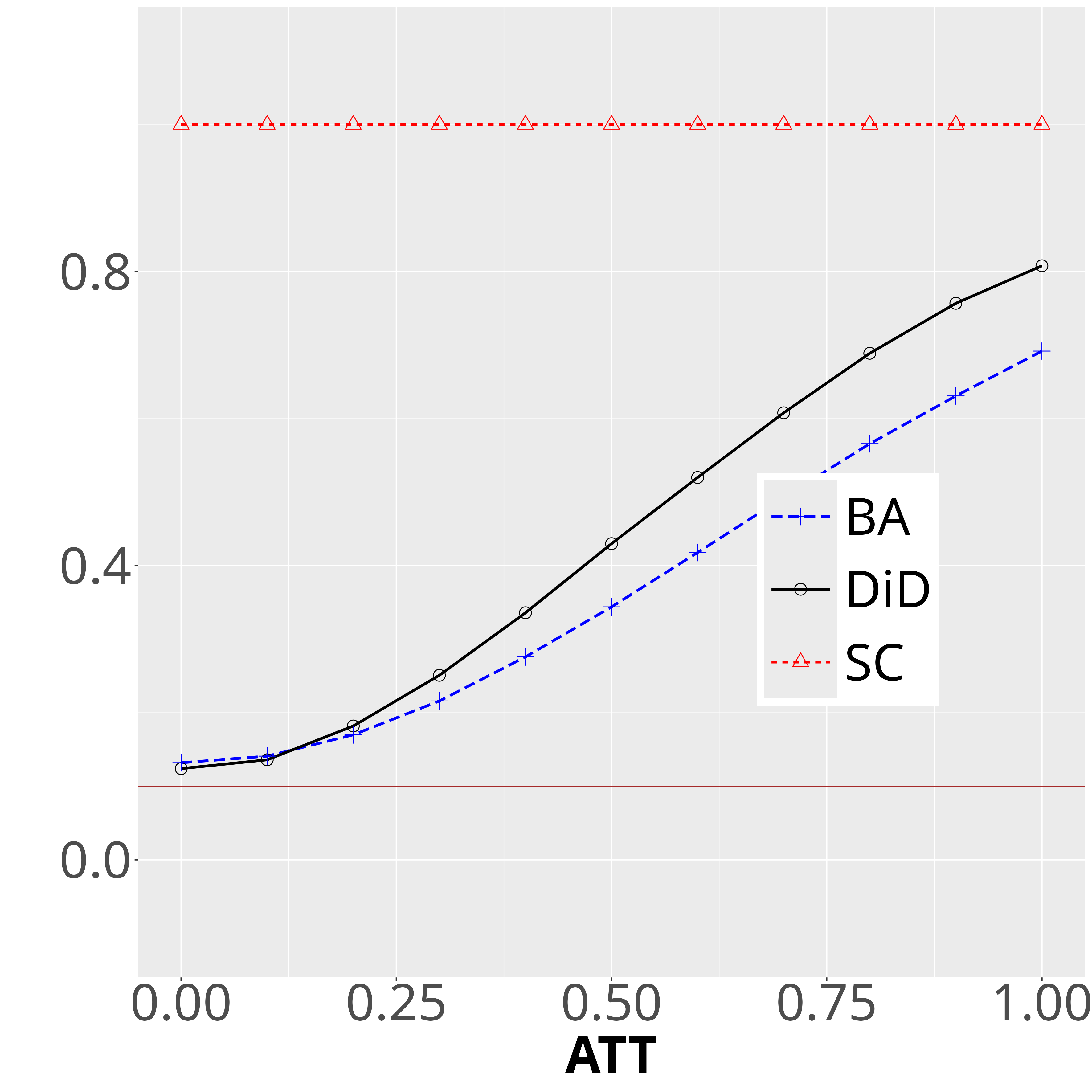}
\end{subfigure}
\begin{subfigure}{0.32\textwidth}
\centering
\caption{5\%}
\includegraphics[width=1\textwidth]{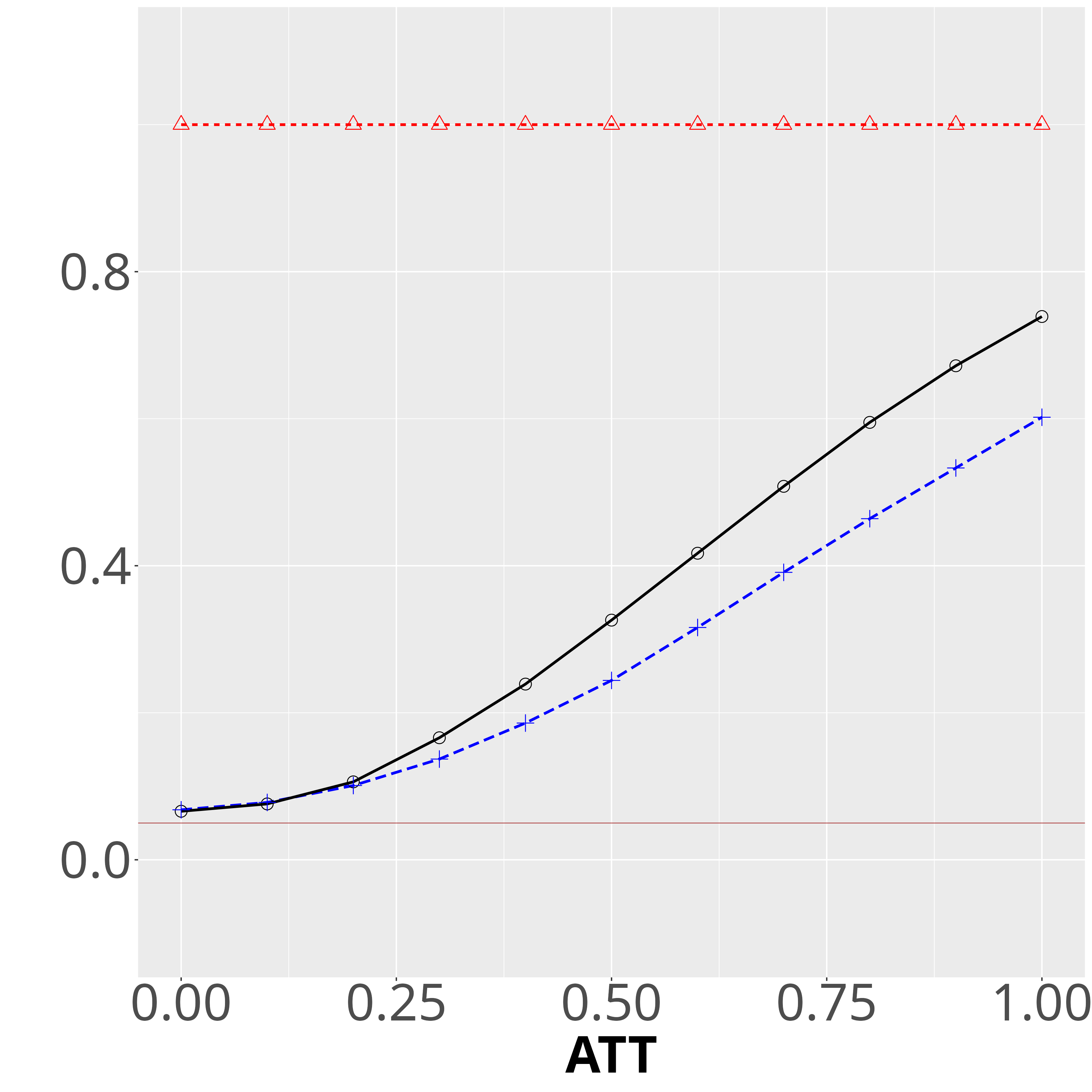}
\end{subfigure}
\begin{subfigure}{0.32\textwidth}
\centering
\caption{1\%}
\includegraphics[width=1\textwidth]{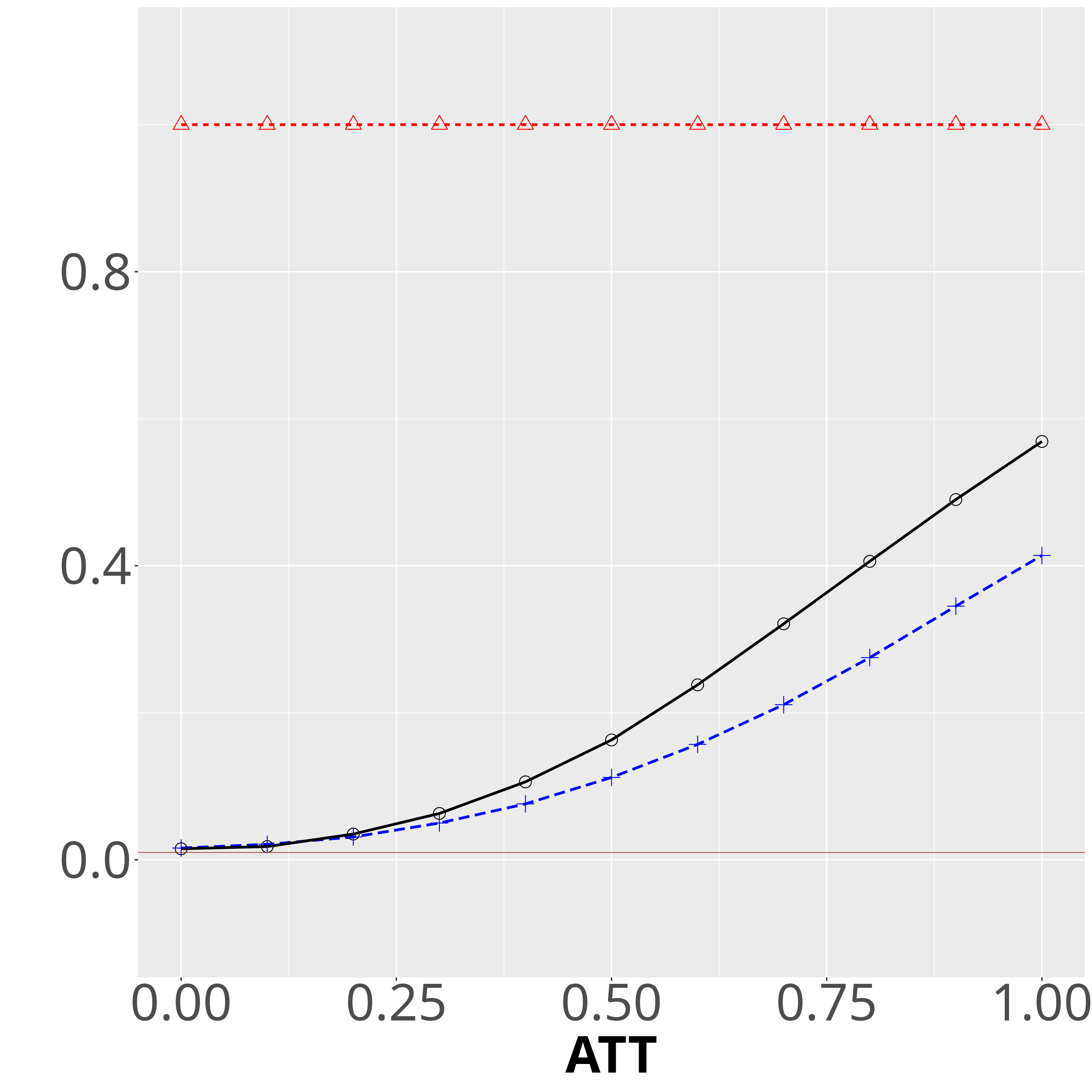}
\end{subfigure}
\caption{Power Curves - DGP T-T, $\T,T=25$.}
\label{Fig:DGP_T_Trend}
\end{figure}

\Cref{Fig:DGP_SC_BA,Fig:DGP_BA,Fig:DGP_SC,Fig:DGP_PT_NA_A,Fig:DGP_PT_NA_B,Fig:DGP_GARCH,Fig:DGP_MA,Fig:DGP_AR,Fig:DGP_Unit_Root,Fig:DGP_Q_Trend,Fig:DGP_T_Trend} present power curves corresponding to estimator-based $t$-tests of the hypotheses $ \Hyp_o: \ ATT_{\omega,T} = 0.0 \text{ vs. } \Hyp_a: ATT_{\omega,T} \neq 0.0 $. Overall, the power curves continue to reflect the robustness of the T-DiD across interesting settings and its ability to detect significant treatment effects. One also observes the failure of either the SC or BA to meaningfully control size under $\Hyp_o$, have power under $\Hyp_a$, or both whenever any of the stringent conditions under which they hold is violated.

\subsection{Testing identification}\label{App_Sub:Id_test}
This section examines the empirical sizes and power performance of the family of over-identifying restrictions and pre-trends tests. As both types of tests are based on estimates of $ATT$ on some (sub-)sample, this gives rise to over-identifying restrictions tests id.BA, id.DiD, and id.SC and pre-tests pt.BA, pt.DiD, and pt.SC, where the prefixes ``id" and ``pt" denote over-identifying restrictions and pre-trends tests, respectively, using the prefixed estimator SC, DiD, or BA. Pre-trends tests are conducted using approximately half of the pre-treatment periods as a ``pre-treatment" period and the other half as the ``post-treatment" period.

Modifications of DGPs PT-NA (A) and (B) via the term $\nu_t$ provide the following scenarios for examining over-identifying restrictions and pre-trends identification tests using simulations; 
\begin{enumerate}
    \item DGP idTest I : $\nu_t \sim \mathcal{N}\big(2.5h|t|^{-0.25},1\big) $ if $t\leq -1$ and $\nu_t=0$ otherwise, $\alpha_0 = -\alpha_1 = 0.5$; 
    \item DGP idTest II : $\nu_t \sim \mathcal{N}\big(2.5h|t|^{-0.25},1\big) $ if $t\geq 1$ and $\nu_t=0$ otherwise, $\alpha_0 = \alpha_1 = 0.5$, $ \varphi(t) = \sqrt{2}\indicator{t\geq 4} $; and
    \item DGP idTest III : $\nu_t \sim \mathcal{N}\big(2.5( 1 - h(1-\mathrm{sgn}(t))|t|^{-0.25},1\big) $, $\alpha_0 = -\alpha_1 = 0.5$, $ \varphi(t) = \sqrt{2}\indicator{t\geq 4} $.
\end{enumerate} $ATT=0.0$; thus, a ``treated" unit is as good as a ``control" unit. For the comparison of tests of identification, the $t$-test version of the over-identifying restrictions test is used since there are only two candidate controls. Thus, both units are candidate controls for the over-identifying restrictions tests; one unit serves as the  ``treated unit" while the other serves as a ``control unit" for pre-trends tests.  

In all three DGPs above, identification in the T-DiD context holds when $h=0$. Both id.DiD and pt.DiD should control size in DGPs idTest I and idTest II while only id.DiD is expected to control size in DGP idTest III. In fact, both \Cref{ass:parallel_trends,ass:limited_anticip} are individually violated in DGP idTest III, but DiD-identification holds -- see \Cref{rem:weak_suff_idcond}.  For $h\in (0,1]$ in (1) DGP idTest I, violations of DiD identification occur in the pre-treatment period (which can be picked up by the pt.DiD), (2) DGP idTest II, violations occur in the post-treatment period, which cannot be picked up by the pt.DiD, and (3) violations occur in both pre- and post-treatment periods.

\begin{figure}[!htbp]
\centering 
\begin{subfigure}{0.32\textwidth}
\centering
\caption{10\%}
\includegraphics[width=1\textwidth]{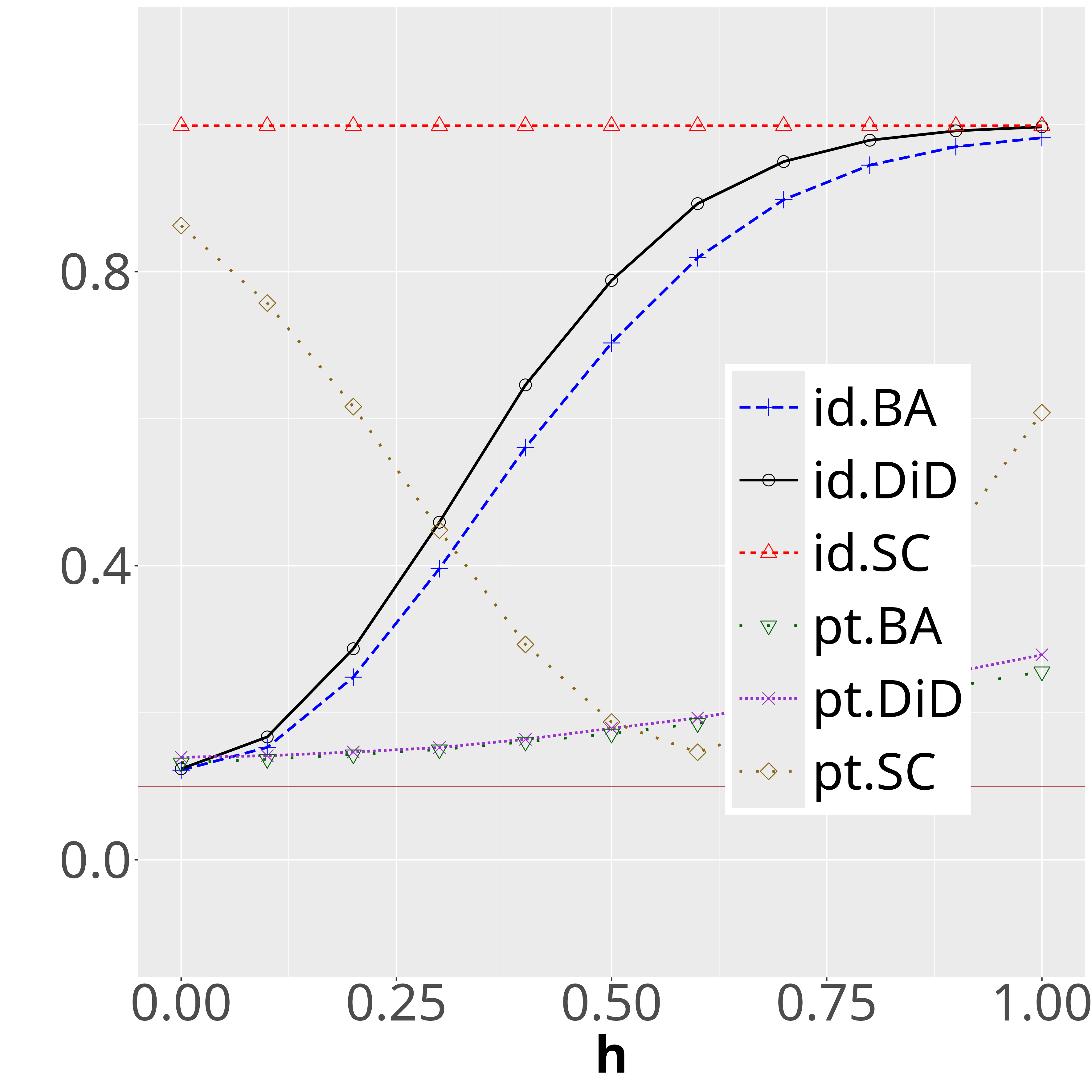}
\end{subfigure}
\begin{subfigure}{0.32\textwidth}
\centering
\caption{5\%}
\includegraphics[width=1\textwidth]{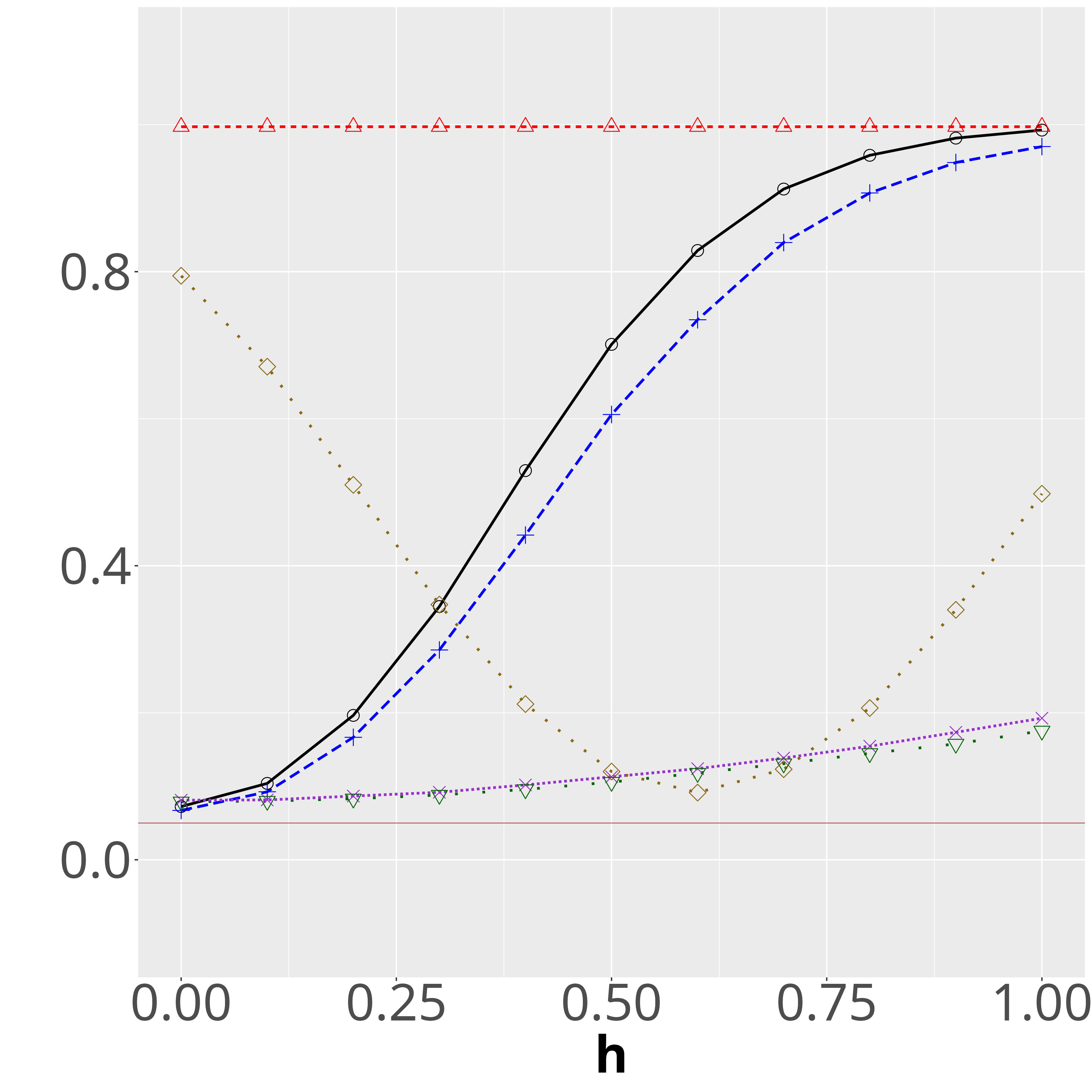}
\end{subfigure}
\begin{subfigure}{0.32\textwidth}
\centering
\caption{1\%}
\includegraphics[width=1\textwidth]{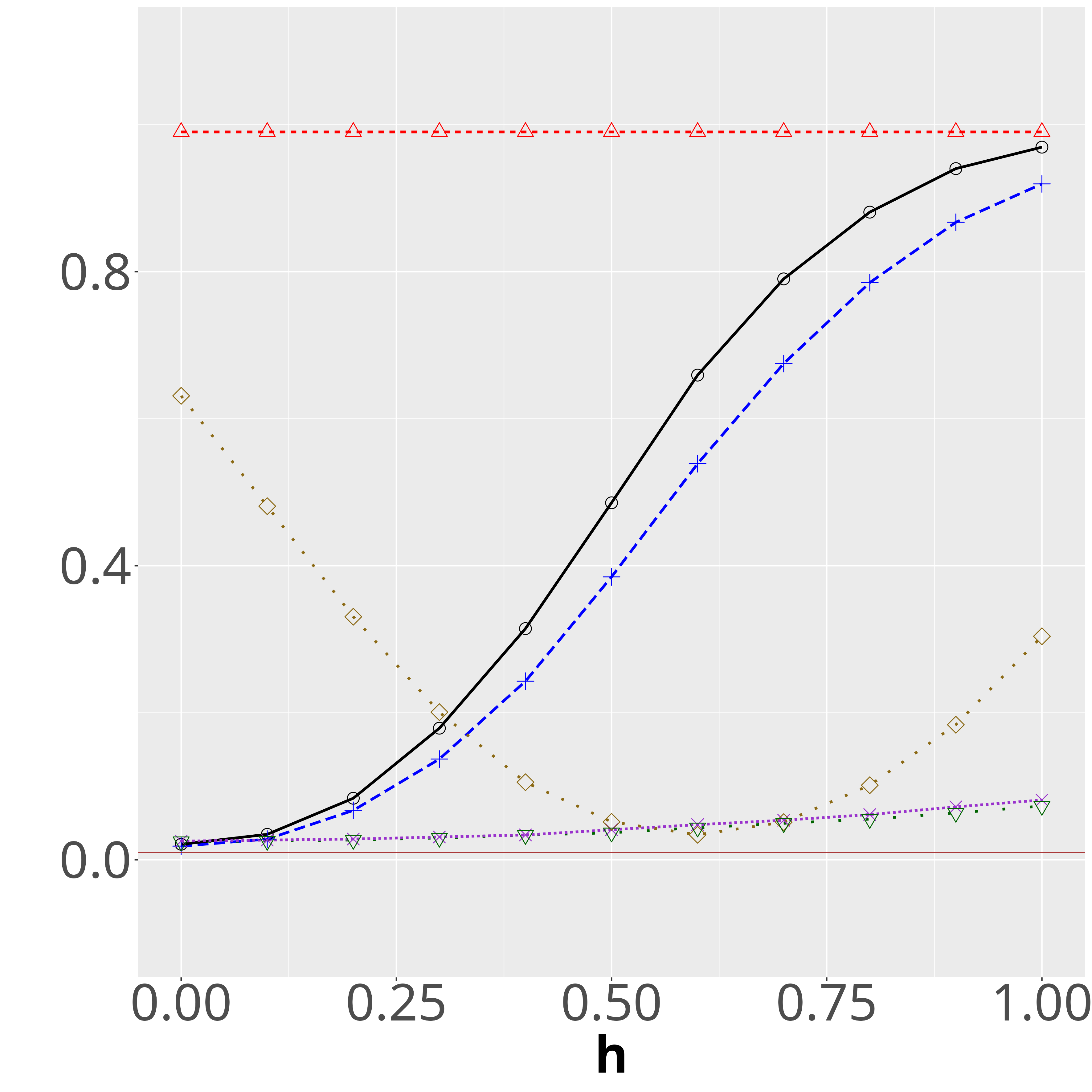}
\end{subfigure}
\caption{Power Curves - DGP idTest I, $\T,T=25$.}
\label{Fig:idtest_1}
\end{figure}

\begin{figure}[!htbp]
\centering 
\begin{subfigure}{0.32\textwidth}
\centering
\caption{10\%}
\includegraphics[width=1\textwidth]{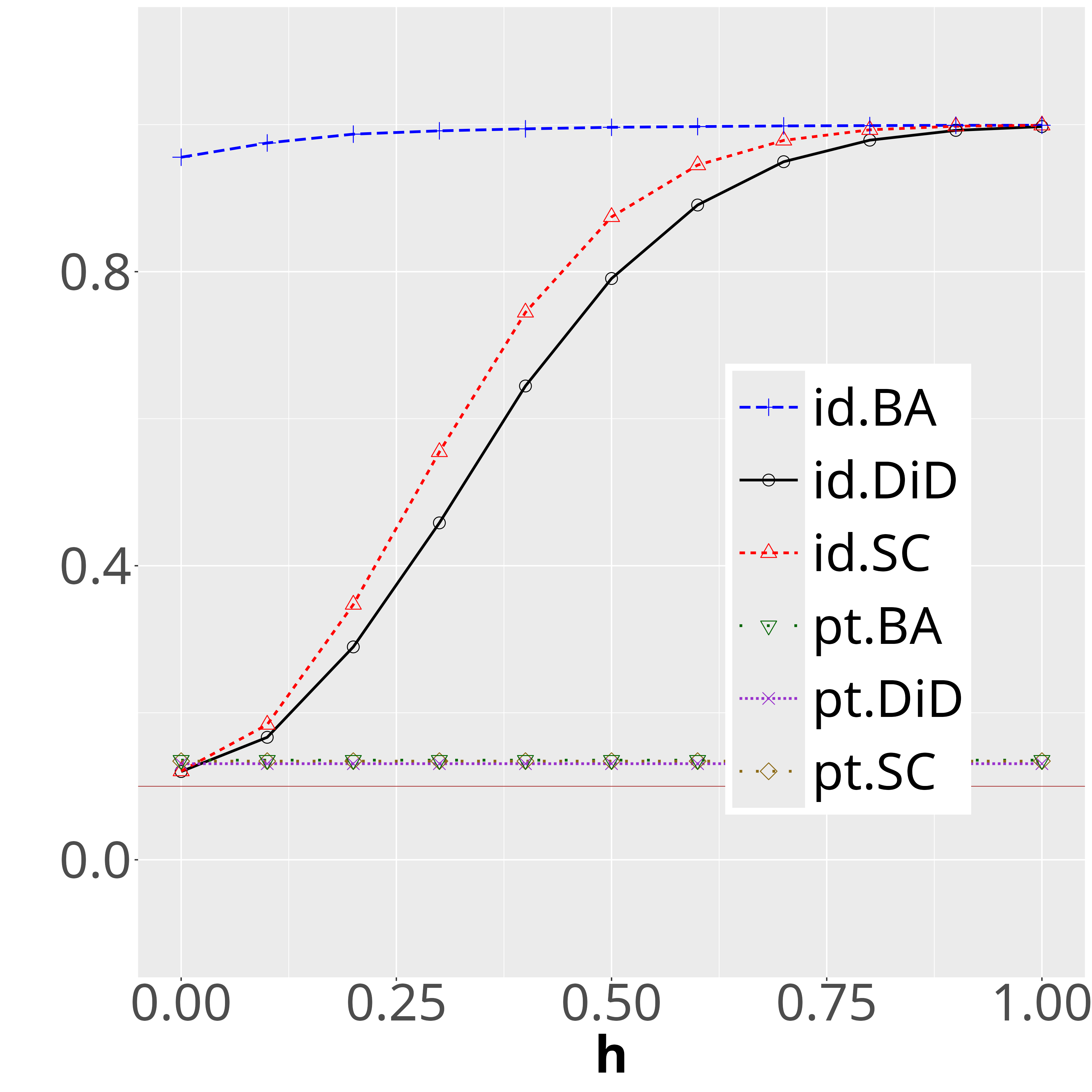}
\end{subfigure}
\begin{subfigure}{0.32\textwidth}
\centering
\caption{5\%}
\includegraphics[width=1\textwidth]{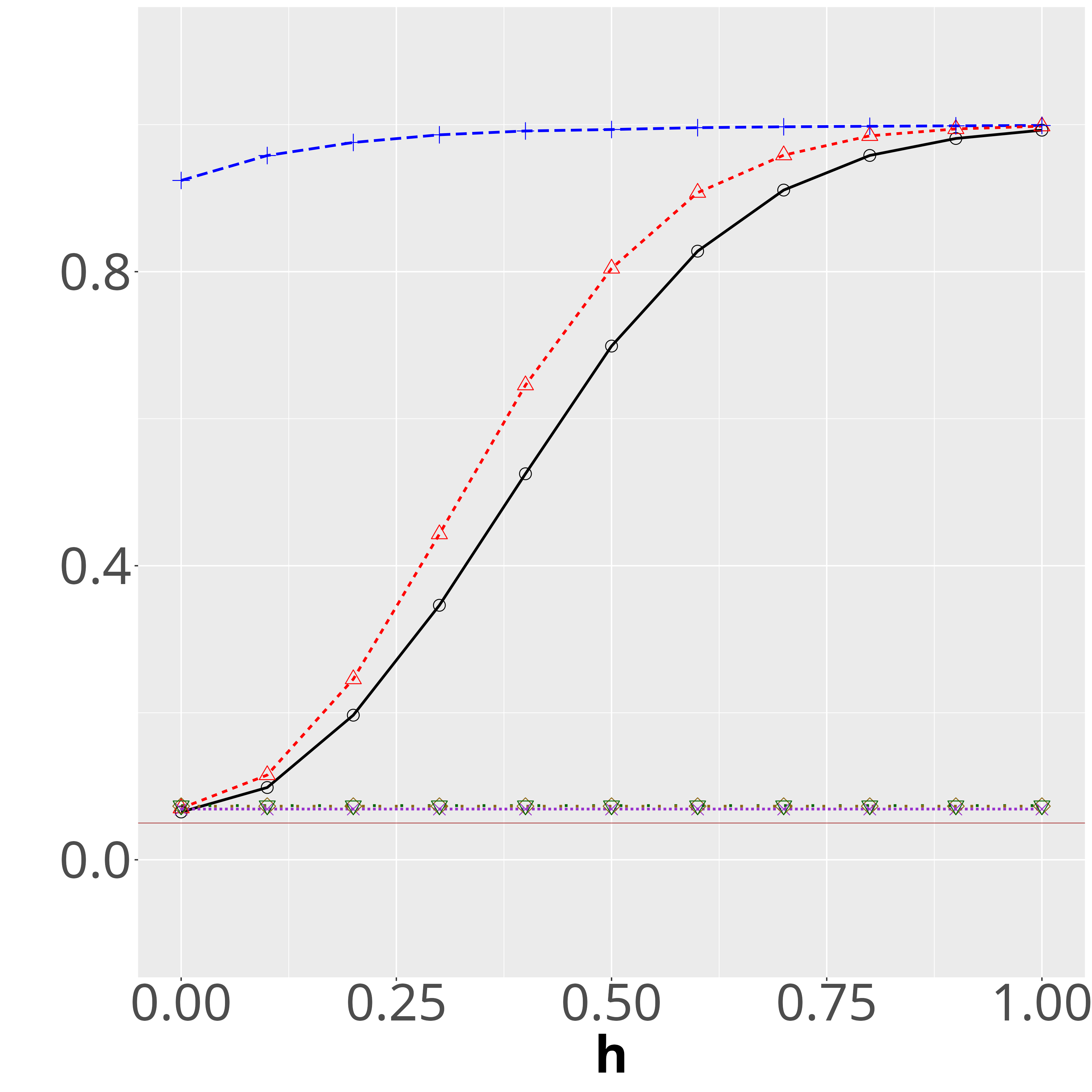}
\end{subfigure}
\begin{subfigure}{0.32\textwidth}
\centering
\caption{1\%}
\includegraphics[width=1\textwidth]{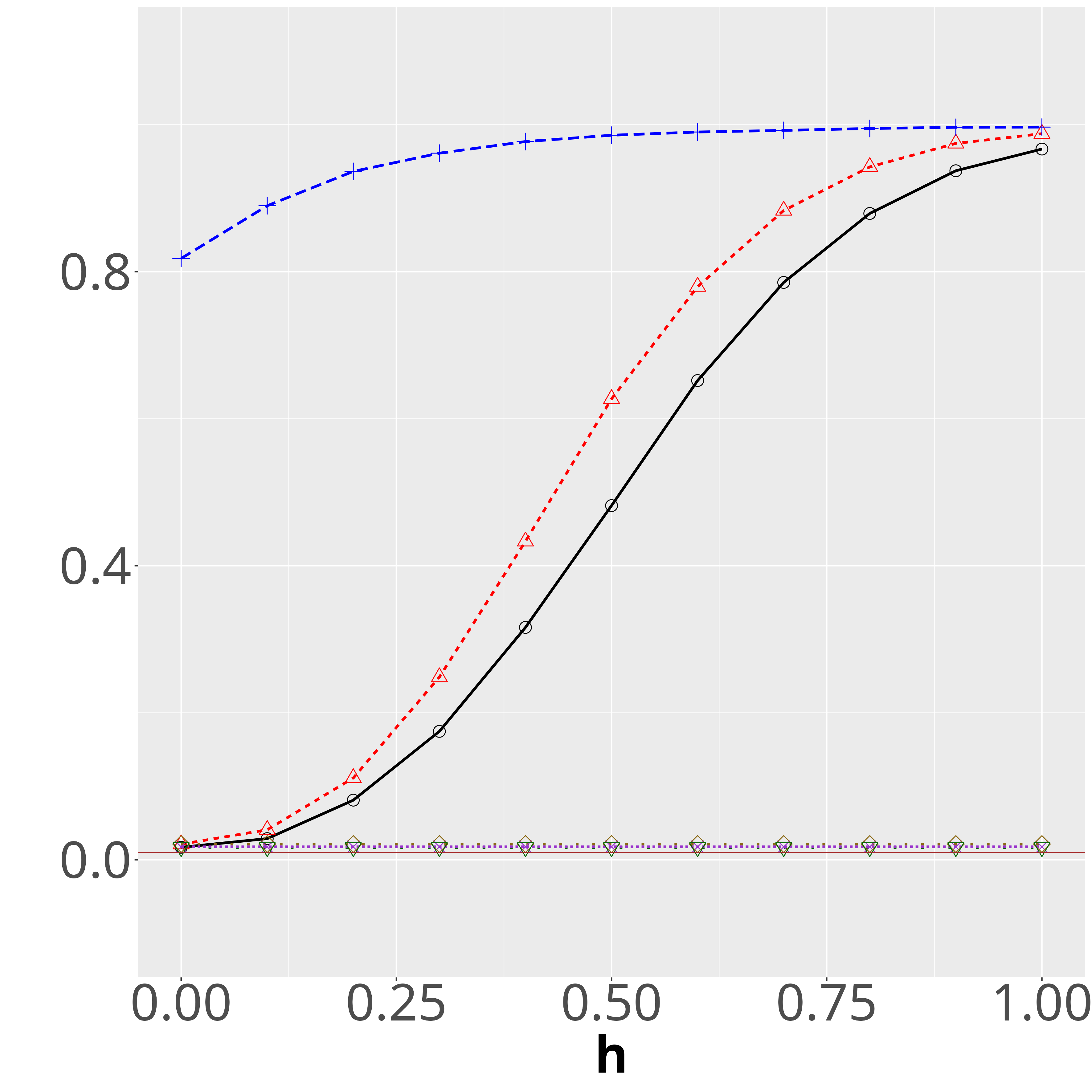}
\end{subfigure}
\caption{Power Curves - DGP idTest II, $\T,T=25$.}
\label{Fig:idtest_2}
\end{figure}

\begin{figure}[!htbp]
\centering 
\begin{subfigure}{0.32\textwidth}
\centering
\caption{10\%}
\includegraphics[width=1\textwidth]{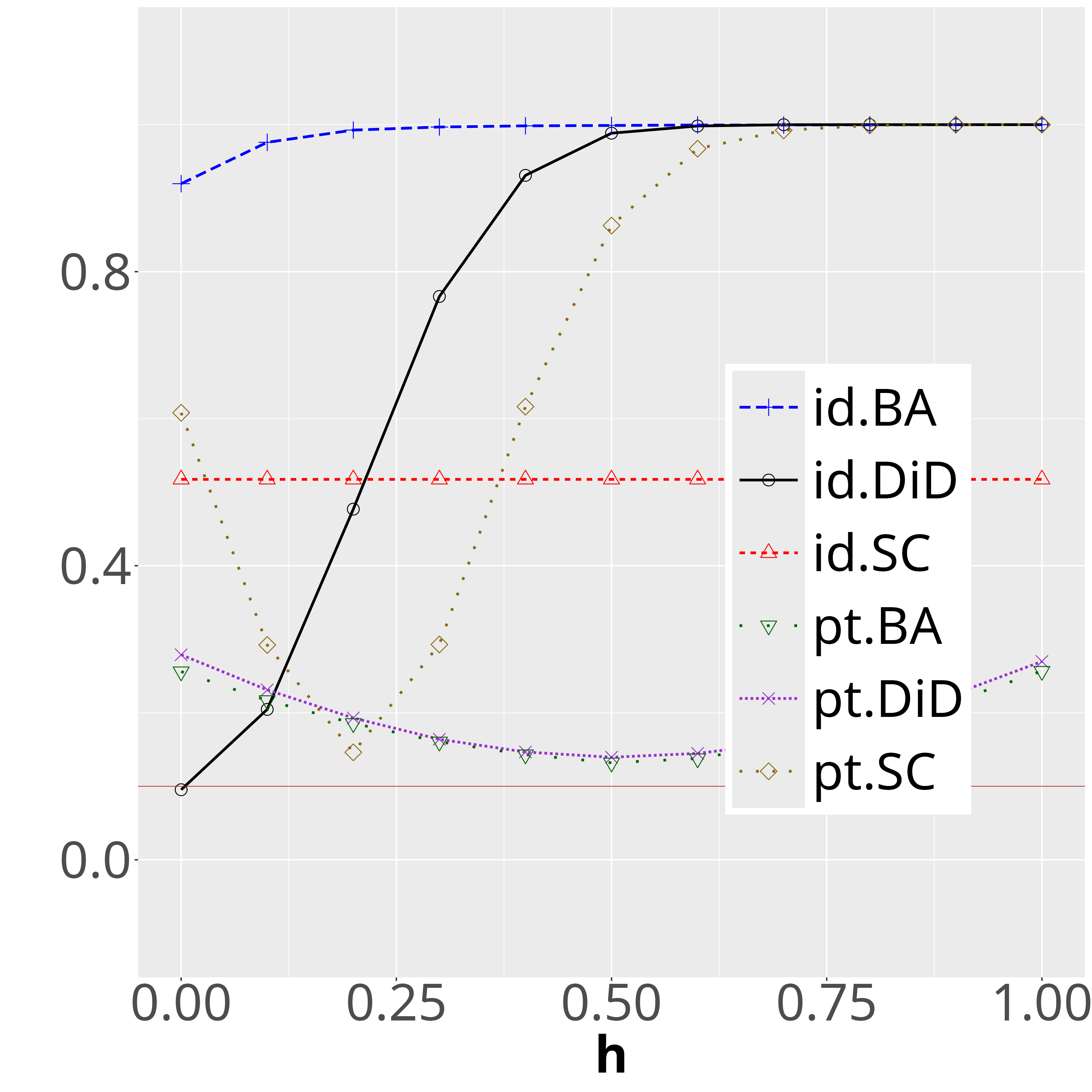}
\end{subfigure}
\begin{subfigure}{0.32\textwidth}
\centering
\caption{5\%}
\includegraphics[width=1\textwidth]{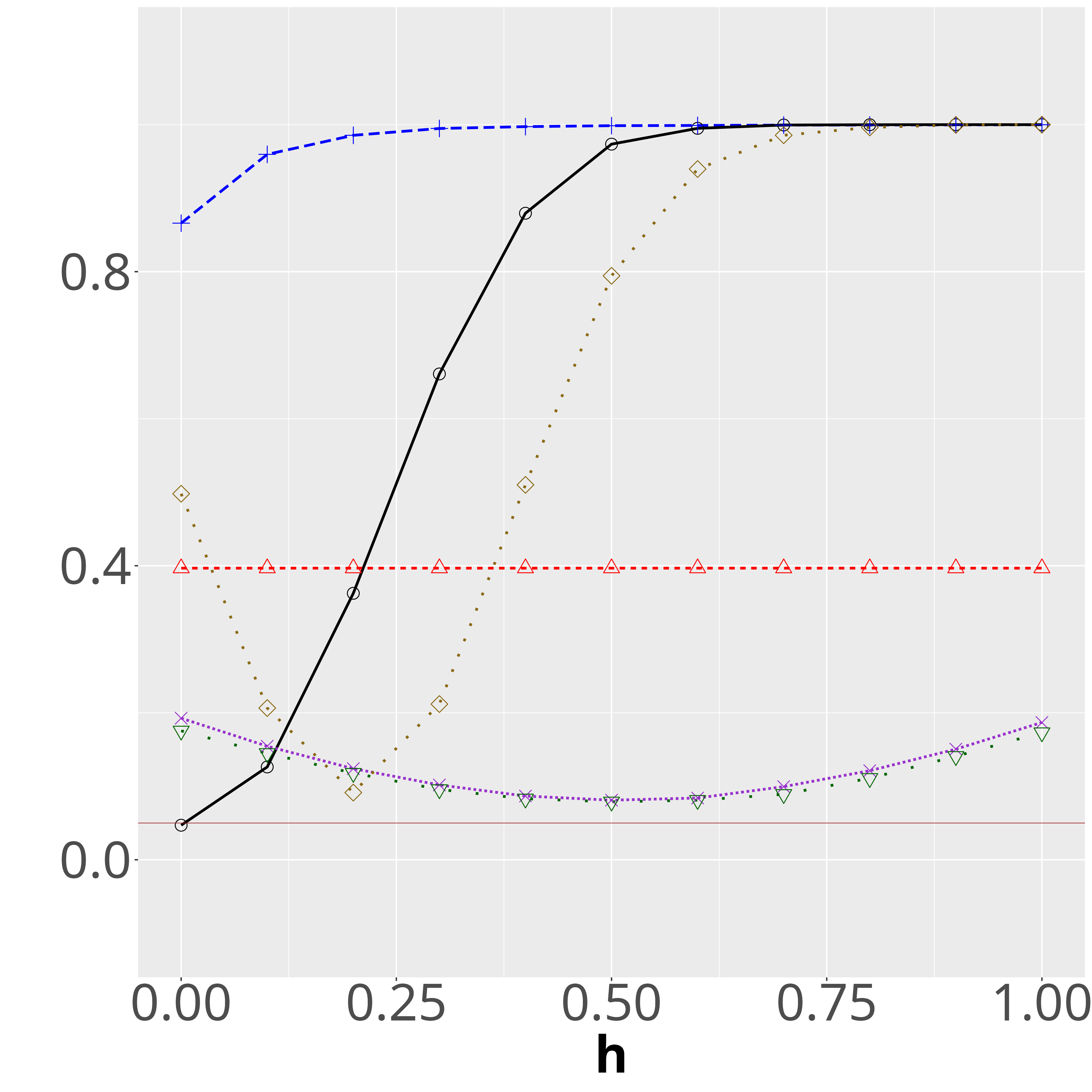}
\end{subfigure}
\begin{subfigure}{0.32\textwidth}
\centering
\caption{1\%}
\includegraphics[width=1\textwidth]{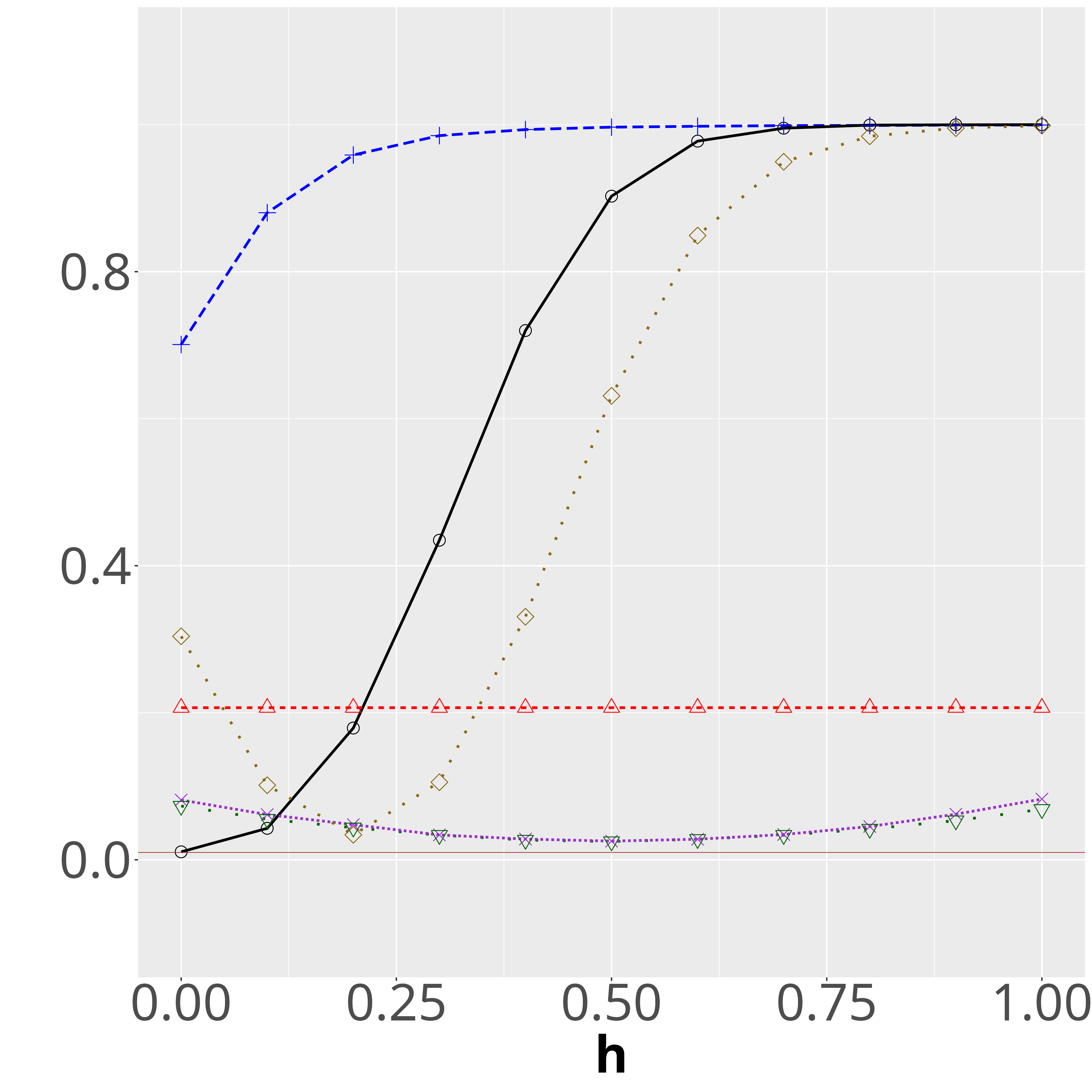}
\end{subfigure}
\caption{Power Curves - DGP idTest III, $\T,T=25$.}
\label{Fig:idtest_3}
\end{figure}

\Cref{Fig:idtest_1,Fig:idtest_2,Fig:idtest_3} present power curves to compare all six candidate tests of identification. As expected, over-identifying restrictions tests based on the BA and SC continue to bear the drawbacks of the respective estimators. For example, id.BA jointly controls size meaningfully and has non-trivial power in \Cref{Fig:idtest_1} but not in \Cref{Fig:idtest_2} whereas the opposite holds for id.SC. All pre-tests have trivial power, fail to control size meaningfully, or both in \Cref{Fig:idtest_1,Fig:idtest_2,Fig:idtest_3}. In sum, this short simulation exercise demonstrates the reliability of the DiD-based over-identifying restrictions test of identification even in scenarios, such as DGP idTest II where pre-tests do not help detect violations of ATT. 

\subsection{Heterogeneous Treatment Effects}\label{App_Sub:Het_Treat}
This section provides more simulation results. The DGPs remain the same as in \Cref{App_Sub:Bias_Size}, but $ATT(t)$ is heterogeneous in $t$ and $ATT_{\omega,T}$ varies with $T$:
\begin{equation*}
    ATT(t)=
    \begin{cases}
        \sin(t)/\pi & \text{ if } t\geq 1\\
        0 & \text{ otherwise}
    \end{cases}.
\end{equation*} With the choice of the uniform weighting scheme, the sequence of ATT parameters (indexed by $T$) is given by $ATT_{\omega,T} = (\pi T)^{-1}\sum_{t=1}^T \sin(t)$. This scenario helps to examine the performance of the DiD when the parameter of interest varies with $T$.

\begin{table}[!htbp]
\caption{Simulation - $ATT(t) = \sin(t)/\pi,\ t\geq 1 $}
\label{Tab:Sim_I_b}
\centering
\begin{tabular}{rlccccccccc}
\cmidrule[0.5pt](l){2-11}
 & $T,\T $ & \multicolumn{1}{c}{DiD} & \multicolumn{1}{c}{SC} & \multicolumn{1}{c}{BA} & \multicolumn{1}{c}{DiD} & \multicolumn{1}{c}{SC} & \multicolumn{1}{c}{BA} & \multicolumn{1}{c}{DiD} & \multicolumn{1}{c}{SC} & \multicolumn{1}{c}{BA} \\ \cmidrule[0.2pt](l){2-11}
\multirow{12}{*}{ \STAB{\rotatebox[origin=c]{90}{\underline{DGP SC-BA}}}} 
&  & \multicolumn{3}{c}{MB} & \multicolumn{3}{c}{MAD} & \multicolumn{3}{c}{RMSE} \\ \cmidrule[0.2pt](l){3-5} \cmidrule[0.2pt](l){6-8} \cmidrule[0.2pt](l){9-11}
      &25    &0.002 &0.002 &0.001 &0.140  &0.096 &0.175 &0.216 &0.151 &0.280  \\ 
       &50     &-0.001 &0.000      &-0.001 &0.101  &0.070   &0.129  &0.153  &0.107  &0.202  \\ 
       &100    &0.000      &0.000      &-0.001 &0.072  &0.051  &0.094  &0.108  &0.077  &0.142  \\ 
      &200   &0.001 &0.001 &0.002 &0.051 &0.036 &0.066 &0.076 &0.053 &0.100   \\ 
      &400   &0.000     &0.000     &0.000     &0.037 &0.025 &0.047 &0.054 &0.038 &0.070  \\ 
\cmidrule[0.2pt](l){3-5} \cmidrule[0.2pt](l){6-8} \cmidrule[0.2pt](l){9-11}

& & \multicolumn{3}{c}{Rej. 1\%} & \multicolumn{3}{c}{Rej. 5\%} & \multicolumn{3}{c}{Rej. 10\%} \\ \cmidrule[0.2pt](l){3-5} \cmidrule[0.2pt](l){6-8} \cmidrule[0.2pt](l){9-11}

      &25    &0.010  &0.007 &0.011 &0.048 &0.035 &0.050  &0.100   &0.077 &0.103 \\ 
      &50    &0.009 &0.004 &0.009 &0.043 &0.035 &0.049 &0.092 &0.077 &0.102 \\ 
      &100   &0.006 &0.005 &0.008 &0.042 &0.035 &0.048 &0.089 &0.08  &0.096 \\ 
      &200   &0.006 &0.004 &0.008 &0.041 &0.031 &0.044 &0.089 &0.071 &0.091 \\ 
      &400   &0.008 &0.005 &0.008 &0.041 &0.035 &0.043 &0.089 &0.076 &0.090  \\ \cmidrule[0.5pt](l){2-11}

\multirow{12}{*}{ \STAB{\rotatebox[origin=c]{90}{\underline{DGP BA}}}} 
&  & \multicolumn{3}{c}{MB} & \multicolumn{3}{c}{MAD} & \multicolumn{3}{c}{RMSE} \\ \cmidrule[0.2pt](l){3-5} \cmidrule[0.2pt](l){6-8} \cmidrule[0.2pt](l){9-11}
       &25     &0.002  &-0.998 &0.001  &0.140   &0.997  &0.175  &0.216  &1.009  &0.280   \\ 
       &50     &-0.001 &-1.000     &-0.001 &0.101  &1.000      &0.129  &0.153  &1.006  &0.202  \\ 
       &100    &0.000      &-1.000     &-0.001 &0.072  &1.000      &0.094  &0.108  &1.003  &0.142  \\ 
       &200    &0.001  &-0.999 &0.002  &0.051  &0.999  &0.066  &0.076  &1.001  &0.100    \\ 
      &400   &0.000     &-1.000    &0.000     &0.037 &1.000     &0.047 &0.054 &1.001 &0.070  \\
\cmidrule[0.2pt](l){3-5} \cmidrule[0.2pt](l){6-8} \cmidrule[0.2pt](l){9-11}

& & \multicolumn{3}{c}{Rej. 1\%} & \multicolumn{3}{c}{Rej. 5\%} & \multicolumn{3}{c}{Rej. 10\%} \\ \cmidrule[0.2pt](l){3-5} \cmidrule[0.2pt](l){6-8} \cmidrule[0.2pt](l){9-11}

      &25    &0.010  &0.989 &0.011 &0.048 &0.997 &0.050  &0.100   &0.999 &0.103 \\ 
      &50    &0.009 &1.000     &0.009 &0.043 &1.000     &0.049 &0.092 &1.000     &0.102 \\ 
      &100   &0.006 &1.000     &0.008 &0.042 &1.000     &0.048 &0.089 &1.000     &0.096 \\ 
      &200   &0.006 &1.000     &0.008 &0.041 &1.000     &0.044 &0.089 &1.000     &0.091 \\ 
      &400   &0.008 &1.000     &0.008 &0.041 &1.000     &0.043 &0.089 &1.000     &0.090  \\ 
      \cmidrule[0.5pt](l){2-11}

\multirow{12}{*}{ \STAB{\rotatebox[origin=c]{90}{\underline{DGP SC}}}} 
&  & \multicolumn{3}{c}{MB} & \multicolumn{3}{c}{MAD} & \multicolumn{3}{c}{RMSE} \\ \cmidrule[0.2pt](l){3-5} \cmidrule[0.2pt](l){6-8} \cmidrule[0.2pt](l){9-11}
      &25    &0.002 &0.002 &1.246 &0.140  &0.096 &1.248 &0.216 &0.151 &1.277 \\ 
       &50     &-0.001 &0.000      &1.328  &0.101  &0.07   &1.329  &0.153  &0.107  &1.343  \\ 
      &100   &0.000     &0.000     &1.371 &0.072 &0.051 &1.372 &0.108 &0.077 &1.378 \\ 
      &200   &0.001 &0.001 &1.395 &0.051 &0.036 &1.393 &0.076 &0.053 &1.398 \\ 
      &400   &0.000     &0.000     &1.404 &0.037 &0.025 &1.404 &0.054 &0.038 &1.406 \\
\cmidrule[0.2pt](l){3-5} \cmidrule[0.2pt](l){6-8} \cmidrule[0.2pt](l){9-11}

& & \multicolumn{3}{c}{Rej. 1\%} & \multicolumn{3}{c}{Rej. 5\%} & \multicolumn{3}{c}{Rej. 10\%} \\ \cmidrule[0.2pt](l){3-5} \cmidrule[0.2pt](l){6-8} \cmidrule[0.2pt](l){9-11}

      &25    &0.010  &0.007 &0.921 &0.048 &0.035 &0.965 &0.100   &0.077 &0.979 \\ 
      &50    &0.009 &0.004 &0.992 &0.043 &0.035 &0.996 &0.092 &0.077 &0.998 \\ 
      &100   &0.006 &0.005 &0.999 &0.042 &0.035 &1.000     &0.089 &0.080  &1.000     \\ 
      &200   &0.006 &0.004 &1.000     &0.041 &0.031 &1.000     &0.089 &0.071 &1.000     \\ 
      &400   &0.008 &0.005 &1.000     &0.041 &0.035 &1.000     &0.089 &0.076 &1.000     \\
      \cmidrule[0.5pt](l){2-11}
\end{tabular}
\end{table}

\begin{table}[!htbp]
\caption{Simulation - $ATT(t) = \sin(t)/\pi,\ t\geq 1 $}
\label{Tab:Sim_II_b}
\centering
\begin{tabular}{rlccccccccc}
\cmidrule[0.5pt](l){2-11}
 & $T,\T $ & \multicolumn{1}{c}{DiD} & \multicolumn{1}{c}{SC} & \multicolumn{1}{c}{BA} & \multicolumn{1}{c}{DiD} & \multicolumn{1}{c}{SC} & \multicolumn{1}{c}{BA} & \multicolumn{1}{c}{DiD} & \multicolumn{1}{c}{SC} & \multicolumn{1}{c}{BA} \\ \cmidrule[0.2pt](l){2-11}
\multirow{12}{*}{ \STAB{\rotatebox[origin=c]{90}{\underline{DGP PT-NA (A)}}}} 
&  & \multicolumn{3}{c}{MB} & \multicolumn{3}{c}{MAD} & \multicolumn{3}{c}{RMSE} \\ \cmidrule[0.2pt](l){3-5} \cmidrule[0.2pt](l){6-8} \cmidrule[0.2pt](l){9-11}
      &25    &0.177 &0.089 &0.176 &0.270  &0.179 &0.294 &0.397 &0.265 &0.436 \\ 
      &50    &0.109 &0.055 &0.109 &0.185 &0.127 &0.201 &0.273 &0.187 &0.302 \\ 
      &100   &0.063 &0.03  &0.062 &0.128 &0.087 &0.139 &0.19  &0.129 &0.212 \\ 
      &200   &0.037 &0.018 &0.037 &0.089 &0.061 &0.099 &0.132 &0.090  &0.147 \\ 
      &400   &0.023 &0.011 &0.023 &0.063 &0.043 &0.068 &0.092 &0.064 &0.102 \\ 
\cmidrule[0.2pt](l){3-5} \cmidrule[0.2pt](l){6-8} \cmidrule[0.2pt](l){9-11}

& & \multicolumn{3}{c}{Rej. 1\%} & \multicolumn{3}{c}{Rej. 5\%} & \multicolumn{3}{c}{Rej. 10\%} \\ \cmidrule[0.2pt](l){3-5} \cmidrule[0.2pt](l){6-8} \cmidrule[0.2pt](l){9-11}

      &25    &0.027 &0.025 &0.026 &0.089 &0.075 &0.086 &0.154 &0.132 &0.147 \\ 
      &50    &0.019 &0.017 &0.016 &0.072 &0.067 &0.072 &0.133 &0.12  &0.128 \\ 
      &100   &0.016 &0.014 &0.015 &0.064 &0.052 &0.068 &0.12  &0.099 &0.124 \\ 
      &200   &0.013 &0.010  &0.011 &0.061 &0.048 &0.058 &0.113 &0.100   &0.114 \\ 
      &400   &0.011 &0.010  &0.011 &0.054 &0.047 &0.053 &0.102 &0.096 &0.104 \\  \cmidrule[0.5pt](l){2-11}

\multirow{12}{*}{ \STAB{\rotatebox[origin=c]{90}{\underline{DGP PT-NA (B)}}}} 
&  & \multicolumn{3}{c}{MB} & \multicolumn{3}{c}{MAD} & \multicolumn{3}{c}{RMSE} \\ \cmidrule[0.2pt](l){3-5} \cmidrule[0.2pt](l){6-8} \cmidrule[0.2pt](l){9-11}
      &25    &0.001 &0.287 &0.000     &0.241 &0.294 &0.269 &0.356 &0.381 &0.399 \\ 
      &50    &0.001 &0.246 &0.001 &0.168 &0.248 &0.187 &0.25  &0.304 &0.282 \\ 
       &100    &-0.002 &0.205  &-0.002 &0.121  &0.206  &0.134  &0.18   &0.241  &0.202  \\ 
       &200    &-0.001 &0.175  &-0.001 &0.085  &0.175  &0.095  &0.126  &0.196  &0.142  \\ 
      &400   &0.001 &0.148 &0.001 &0.06  &0.148 &0.067 &0.089 &0.161 &0.099 \\
\cmidrule[0.2pt](l){3-5} \cmidrule[0.2pt](l){6-8} \cmidrule[0.2pt](l){9-11}

& & \multicolumn{3}{c}{Rej. 1\%} & \multicolumn{3}{c}{Rej. 5\%} & \multicolumn{3}{c}{Rej. 10\%} \\ \cmidrule[0.2pt](l){3-5} \cmidrule[0.2pt](l){6-8} \cmidrule[0.2pt](l){9-11}

      &25    &0.015 &0.1   &0.014 &0.06  &0.227 &0.058 &0.112 &0.327 &0.115 \\ 
      &50    &0.012 &0.13  &0.01  &0.051 &0.288 &0.05  &0.100   &0.397 &0.105 \\ 
      &100   &0.011 &0.17  &0.009 &0.05  &0.361 &0.055 &0.097 &0.482 &0.105 \\ 
      &200   &0.009 &0.259 &0.01  &0.048 &0.486 &0.045 &0.099 &0.610  &0.096 \\ 
      &400   &0.009 &0.392 &0.008 &0.045 &0.633 &0.045 &0.095 &0.749 &0.098 \\   \cmidrule[0.5pt](l){2-11}

\multirow{12}{*}{ \STAB{\rotatebox[origin=c]{90}{\underline{DGP GARCH(1,1)}}}} 
&  & \multicolumn{3}{c}{MB} & \multicolumn{3}{c}{MAD} & \multicolumn{3}{c}{RMSE} \\ \cmidrule[0.2pt](l){3-5} \cmidrule[0.2pt](l){6-8} \cmidrule[0.2pt](l){9-11}
      &25     &0.004  &0.001  &-0.003 &0.141  &0.098  &0.177  &0.214  &0.154  &0.275  \\ 
      &50    &0.001 &0.000     &0.002 &0.102 &0.072 &0.13  &0.152 &0.107 &0.199 \\ 
      &100   &0.000     &0.000     &0.000     &0.071 &0.051 &0.092 &0.108 &0.077 &0.141 \\ 
      &200   &0.000     &0.000     &0.000     &0.052 &0.037 &0.066 &0.077 &0.054 &0.099 \\ 
      &400   &0.000     &0.000     &0.000     &0.036 &0.026 &0.047 &0.054 &0.038 &0.071 \\  
\cmidrule[0.2pt](l){3-5} \cmidrule[0.2pt](l){6-8} \cmidrule[0.2pt](l){9-11}

& & \multicolumn{3}{c}{Rej. 1\%} & \multicolumn{3}{c}{Rej. 5\%} & \multicolumn{3}{c}{Rej. 10\%} \\ \cmidrule[0.2pt](l){3-5} \cmidrule[0.2pt](l){6-8} \cmidrule[0.2pt](l){9-11}

      &25    &0.013 &0.012 &0.010  &0.055 &0.051 &0.05  &0.106 &0.098 &0.102 \\ 
      &50    &0.010  &0.006 &0.009 &0.045 &0.037 &0.048 &0.094 &0.081 &0.097 \\ 
      &100   &0.008 &0.007 &0.008 &0.047 &0.039 &0.046 &0.094 &0.081 &0.091 \\ 
      &200   &0.008 &0.005 &0.007 &0.045 &0.036 &0.041 &0.087 &0.079 &0.088 \\ 
      &400   &0.006 &0.005 &0.008 &0.041 &0.035 &0.042 &0.086 &0.08  &0.092 \\
      \cmidrule[0.5pt](l){2-11}
\end{tabular}
\end{table}

\begin{table}[!htbp]
\caption{Simulation - $ATT(t) = \sin(t)/\pi,\ t\geq 1 $}
\label{Tab:Sim_III_b}
\centering
\begin{tabular}{rlccccccccc}
\cmidrule[0.5pt](l){2-11}
 & $T,\T $ & \multicolumn{1}{c}{DiD} & \multicolumn{1}{c}{SC} & \multicolumn{1}{c}{BA} & \multicolumn{1}{c}{DiD} & \multicolumn{1}{c}{SC} & \multicolumn{1}{c}{BA} & \multicolumn{1}{c}{DiD} & \multicolumn{1}{c}{SC} & \multicolumn{1}{c}{BA} \\ \cmidrule[0.2pt](l){2-11}
 \multirow{12}{*}{ \STAB{\rotatebox[origin=c]{90}{\underline{DGP MA(1)}}}} 
&  & \multicolumn{3}{c}{MB} & \multicolumn{3}{c}{MAD} & \multicolumn{3}{c}{RMSE} \\ \cmidrule[0.2pt](l){3-5} \cmidrule[0.2pt](l){6-8} \cmidrule[0.2pt](l){9-11}
       &25     &0.005  &0.003  &-0.002 &0.174  &0.119  &0.218  &0.269  &0.19   &0.347  \\ 
       &50     &-0.001 &0.000      &-0.002 &0.128  &0.088  &0.164  &0.192  &0.134  &0.251  \\ 
       &100    &-0.001 &0.000      &-0.001 &0.091  &0.065  &0.118  &0.135  &0.096  &0.178  \\ 
      &200   &0.001 &0.001 &0.001 &0.064 &0.045 &0.083 &0.095 &0.067 &0.125 \\ 
      &400   &0.000     &0.000     &0.000     &0.045 &0.032 &0.058 &0.067 &0.048 &0.088 \\ 
\cmidrule[0.2pt](l){3-5} \cmidrule[0.2pt](l){6-8} \cmidrule[0.2pt](l){9-11}

& & \multicolumn{3}{c}{Rej. 1\%} & \multicolumn{3}{c}{Rej. 5\%} & \multicolumn{3}{c}{Rej. 10\%} \\ \cmidrule[0.2pt](l){3-5} \cmidrule[0.2pt](l){6-8} \cmidrule[0.2pt](l){9-11}

      &25    &0.022 &0.018 &0.015 &0.072 &0.061 &0.065 &0.132 &0.113 &0.122 \\ 
      &50    &0.014 &0.009 &0.015 &0.066 &0.051 &0.064 &0.121 &0.102 &0.118 \\ 
      &100   &0.012 &0.010  &0.010  &0.058 &0.052 &0.054 &0.114 &0.103 &0.113 \\ 
      &200   &0.013 &0.010  &0.012 &0.054 &0.051 &0.054 &0.105 &0.101 &0.109 \\ 
      &400   &0.011 &0.011 &0.009 &0.051 &0.049 &0.053 &0.103 &0.100   &0.110  \\
      \cmidrule[0.5pt](l){2-11}
      
\multirow{12}{*}{ \STAB{\rotatebox[origin=c]{90}{\underline{DGP AR(1)}}}} 
&  & \multicolumn{3}{c}{MB} & \multicolumn{3}{c}{MAD} & \multicolumn{3}{c}{RMSE} \\ \cmidrule[0.2pt](l){3-5} \cmidrule[0.2pt](l){6-8} \cmidrule[0.2pt](l){9-11}
& 25  & -0.007 & -0.029 & -0.007 & 0.160 & 0.108 & 0.206 & 0.248 & 0.177 & 0.321 \\ 
& 50  & -0.006 & -0.016 & -0.005 & 0.110 & 0.075 & 0.144 & 0.166 & 0.116 & 0.216 \\ 
& 100 & -0.003 & -0.009 & -0.003 & 0.076 & 0.053 & 0.100 & 0.112 & 0.080 & 0.146 \\ 
& 200 & -0.001 & -0.004 &  0.000 & 0.052 & 0.036 & 0.068 & 0.078 & 0.055 & 0.101 \\ 
& 400 & -0.001 & -0.002 & -0.001 & 0.036 & 0.026 & 0.047 & 0.054 & 0.038 & 0.071 \\  
\cmidrule[0.2pt](l){3-5} \cmidrule[0.2pt](l){6-8} \cmidrule[0.2pt](l){9-11}

& & \multicolumn{3}{c}{Rej. 1\%} & \multicolumn{3}{c}{Rej. 5\%} & \multicolumn{3}{c}{Rej. 10\%} \\ \cmidrule[0.2pt](l){3-5} \cmidrule[0.2pt](l){6-8} \cmidrule[0.2pt](l){9-11}

& 25  & 0.019 & 0.021 & 0.017 & 0.068 & 0.065 & 0.071 & 0.124 & 0.115 & 0.130 \\ 
& 50  & 0.013 & 0.013 & 0.015 & 0.061 & 0.049 & 0.065 & 0.113 & 0.096 & 0.122 \\ 
& 100 & 0.012 & 0.010 & 0.012 & 0.054 & 0.046 & 0.055 & 0.103 & 0.093 & 0.111 \\ 
& 200 & 0.011 & 0.009 & 0.014 & 0.048 & 0.043 & 0.058 & 0.092 & 0.087 & 0.109 \\ 
& 400 & 0.009 & 0.007 & 0.011 & 0.044 & 0.038 & 0.056 & 0.089 & 0.078 & 0.111 \\ 
\cmidrule[0.5pt](l){2-11}

\multirow{12}{*}{ \STAB{\rotatebox[origin=c]{90}{\underline{DGP U-R}}}} 
&  & \multicolumn{3}{c}{MB} & \multicolumn{3}{c}{MAD} & \multicolumn{3}{c}{RMSE} \\ \cmidrule[0.2pt](l){3-5} \cmidrule[0.2pt](l){6-8} \cmidrule[0.2pt](l){9-11}
       &25     &-0.008 &-0.009 &-0.013 &0.144  &0.099  &0.178  &0.22   &0.156  &0.286  \\ 
       &50     &-0.006 &-0.005 &-0.007 &0.103  &0.072  &0.131  &0.155  &0.108  &0.203  \\ 
       &100    &-0.003 &-0.003 &-0.004 &0.073  &0.053  &0.094  &0.109  &0.078  &0.143  \\ 
       &200    &-0.001 &-0.001 &-0.001 &0.051  &0.036  &0.066  &0.077  &0.054  &0.100    \\ 
       &400    &-0.001 &-0.001 &-0.001 &0.036  &0.026  &0.047  &0.054  &0.038  &0.071  \\  
\cmidrule[0.2pt](l){3-5} \cmidrule[0.2pt](l){6-8} \cmidrule[0.2pt](l){9-11}

& & \multicolumn{3}{c}{Rej. 1\%} & \multicolumn{3}{c}{Rej. 5\%} & \multicolumn{3}{c}{Rej. 10\%} \\ \cmidrule[0.2pt](l){3-5} \cmidrule[0.2pt](l){6-8} \cmidrule[0.2pt](l){9-11}

      &25    &0.011 &0.010  &0.008 &0.049 &0.043 &0.042 &0.097 &0.086 &0.095 \\ 
      &50    &0.008 &0.005 &0.007 &0.045 &0.035 &0.044 &0.096 &0.077 &0.094 \\ 
      &100   &0.007 &0.006 &0.006 &0.043 &0.037 &0.039 &0.091 &0.08  &0.089 \\ 
      &200   &0.009 &0.005 &0.008 &0.042 &0.038 &0.043 &0.086 &0.078 &0.091 \\ 
      &400   &0.007 &0.005 &0.007 &0.039 &0.034 &0.041 &0.085 &0.073 &0.094 \\ 
      \cmidrule[0.5pt](l){2-11}
\end{tabular}
\end{table}

\begin{table}[!htbp]
\caption{Simulation - $ATT(t) = \sin(t)/\pi,\ t\geq 1 $}
\label{Tab:Sim_IV_b}
\centering
\setlength{\tabcolsep}{3pt}
\begin{tabular}{rlccccccccc}
\cmidrule[0.5pt](l){2-11}
 & $T,\T $ & \multicolumn{1}{c}{DiD} & \multicolumn{1}{c}{SC} & \multicolumn{1}{c}{BA} & \multicolumn{1}{c}{DiD} & \multicolumn{1}{c}{SC} & \multicolumn{1}{c}{BA} & \multicolumn{1}{c}{DiD} & \multicolumn{1}{c}{SC} & \multicolumn{1}{c}{BA} \\ \cmidrule[0.2pt](l){2-11}

\multirow{12}{*}{ \STAB{\rotatebox[origin=c]{90}{\underline{DGP Q-T}}}} 
&  & \multicolumn{3}{c}{MB} & \multicolumn{3}{c}{MAD} & \multicolumn{3}{c}{RMSE} \\ \cmidrule[0.2pt](l){3-5} \cmidrule[0.2pt](l){6-8} \cmidrule[0.2pt](l){9-11}
       &25     &0.003  &0.002  &25.998 &0.141  &0.096  &25.998 &0.217  &0.153  &25.999 \\ 
       &50     &-0.001 &0.000      &50.998 &0.103  &0.070   &50.998 &0.154  &0.107  &50.998 \\ 
        &100     &-0.001  &0.000       &100.998 &0.073   &0.052   &100.998 &0.108   &0.077   &100.999 \\ 
        &200     &0.001   &0.001   &201.001 &0.050    &0.036   &201.000     &0.076   &0.054   &201.001 \\ 
        &400     &0.000       &0.000       &401.000     &0.036   &0.026   &401.001 &0.054   &0.038   &401.000     \\  
\cmidrule[0.2pt](l){3-5} \cmidrule[0.2pt](l){6-8} \cmidrule[0.2pt](l){9-11}

& & \multicolumn{3}{c}{Rej. 1\%} & \multicolumn{3}{c}{Rej. 5\%} & \multicolumn{3}{c}{Rej. 10\%} \\ \cmidrule[0.2pt](l){3-5} \cmidrule[0.2pt](l){6-8} \cmidrule[0.2pt](l){9-11}

      &25    &0.011 &0.009 &1.000     &0.050  &0.043 &1.000     &0.101 &0.083 &1.000     \\ 
      &50    &0.009 &0.005 &1.000     &0.046 &0.034 &1.000     &0.092 &0.074 &1.000     \\ 
      &100   &0.008 &0.005 &1.000     &0.043 &0.036 &1.000     &0.092 &0.079 &1.000     \\ 
      &200   &0.009 &0.005 &1.000     &0.043 &0.036 &1.000     &0.087 &0.076 &1.000     \\ 
      &400   &0.007 &0.005 &1.000     &0.038 &0.034 &1.000     &0.084 &0.073 &1.000     \\
      \cmidrule[0.5pt](l){2-11}

\multirow{12}{*}{ \STAB{\rotatebox[origin=c]{90}{\underline{DGP T-T}}}} 
&  & \multicolumn{3}{c}{MB} & \multicolumn{3}{c}{MAD} & \multicolumn{3}{c}{RMSE} \\ \cmidrule[0.2pt](l){3-5} \cmidrule[0.2pt](l){6-8} \cmidrule[0.2pt](l){9-11}
       &25     &0.075  &25.080  &0.080   &0.271  &25.083 &0.346  &0.432  &25.082 &0.576  \\ 
       &50     &0.032  &50.033 &0.042  &0.202  &50.035 &0.252  &0.308  &50.034 &0.403  \\ 
        &100     &0.017   &100.019 &0.017   &0.140    &100.021 &0.182   &0.215   &100.019 &0.283   \\ 
        &200     &0.012   &200.012 &0.011   &0.101   &200.013 &0.131   &0.151   &200.012 &0.200     \\ 
        &400     &0.002   &400.002 &0.001   &0.073   &400.001 &0.093   &0.108   &400.002 &0.140   \\   
\cmidrule[0.2pt](l){3-5} \cmidrule[0.2pt](l){6-8} \cmidrule[0.2pt](l){9-11}

& & \multicolumn{3}{c}{Rej. 1\%} & \multicolumn{3}{c}{Rej. 5\%} & \multicolumn{3}{c}{Rej. 10\%} \\ \cmidrule[0.2pt](l){3-5} \cmidrule[0.2pt](l){6-8} \cmidrule[0.2pt](l){9-11}

      &25    &0.012 &1.000     &0.016 &0.057 &1.000     &0.064 &0.112 &1.000     &0.124 \\ 
      &50    &0.011 &1.000     &0.011 &0.049 &1.000     &0.054 &0.104 &1.000     &0.108 \\ 
      &100   &0.009 &1.000     &0.009 &0.044 &1.000     &0.046 &0.091 &1.000     &0.097 \\ 
      &200   &0.007 &1.000     &0.009 &0.040  &1.000     &0.046 &0.087 &1.000     &0.093 \\ 
      &400   &0.008 &1.000     &0.008 &0.043 &1.000     &0.043 &0.090  &1.000     &0.096 \\
      \cmidrule[0.5pt](l){2-11}
\end{tabular}
\end{table}

Generally, \Cref{Tab:Sim_I_b,Tab:Sim_II_b,Tab:Sim_III_b,Tab:Sim_IV_b} confirm simulation results on homogeneous treatment effects. The DiD performs well across all settings considered in the DGPs, whereas the SC and BA are very sensitive to deviations from the strong identification conditions they require. $ATT(t)$ is heterogeneous across post-treatment periods, and the target parameter varies with $T$ and does not impact the performance of the DiD.

\subsection{Non-constant $ \lambda_n$ }
The next set of simulation results concerns the sensitivity to $ \lambda_n $. To this end, the sequence $ \lambda_n $ is specified as $ \lambda_n \in \{ 0.5,0.4,0.6,0.3,0.7 \} $ varying in $n \in \{50, 100, 200, 400, 800 \} $ over the closed interval $ \lambda_n \in [0.3, \, 0.7] $ with $T= \lambda_n n $, $ \T = n-T $, and $ATT(t)=0.0, \, t\geq 1  $. The results are presented in \Cref{Tab:Sim_I_e,Tab:Sim_II_e,Tab:Sim_III_e,Tab:Sim_IV_e}. Comparing these results to those in \Cref{Tab:Sim_I,Tab:Sim_II,Tab:Sim_III,Tab:Sim_IV}, one observes little difference in the performance of all estimators. The performance of the T-DiD remains robust across all the DGPs.

\begin{table}[!htbp]
\caption{Simulation – $ATT(t)=0.0,\ t\geq 1$, non-constant $\lambda_n$}
\centering
\begin{tabular}{rlcccccccccc}
\cmidrule[0.5pt](l){2-12}
 & $n$ & $\lambda_n$ & \multicolumn{1}{c}{DiD} & \multicolumn{1}{c}{SC} & \multicolumn{1}{c}{BA} &
   \multicolumn{1}{c}{DiD} & \multicolumn{1}{c}{SC} & \multicolumn{1}{c}{BA} &
   \multicolumn{1}{c}{DiD} & \multicolumn{1}{c}{SC} & \multicolumn{1}{c}{BA} \\ 
\cmidrule[0.2pt](l){2-12}

\multirow{12}{*}{\STAB{\rotatebox[origin=c]{90}{\underline{DGP SC–BA}}}}
&  &  & \multicolumn{3}{c}{MB} & \multicolumn{3}{c}{MAD} & \multicolumn{3}{c}{RMSE} \\ 
\cmidrule[0.2pt](l){4-6}\cmidrule[0.2pt](l){7-9}\cmidrule[0.2pt](l){10-12}
& 25  & 0.5 & 0.002 & 0.002 & 0.001 & 0.140 & 0.096 & 0.175 & 0.216 & 0.151 & 0.280 \\ 
& 50  & 0.4 & 0.000 & 0.000 & -0.002 & 0.103 & 0.079 & 0.132 & 0.157 & 0.121 & 0.204 \\ 
& 100 & 0.6 & 0.001 & 0.000 & -0.002 & 0.074 & 0.047 & 0.095 & 0.112 & 0.070 & 0.147 \\ 
& 200 & 0.3 & -0.001 & -0.001 & 0.000 & 0.055 & 0.046 & 0.071 & 0.084 & 0.070 & 0.109 \\ 
& 400 & 0.7 & 0.000 & 0.000 & 0.001 & 0.040 & 0.022 & 0.051 & 0.059 & 0.032 & 0.076 \\ 
\cmidrule[0.2pt](l){4-6}\cmidrule[0.2pt](l){7-9}\cmidrule[0.2pt](l){10-12}
&  &  & \multicolumn{3}{c}{Rej.\ 1\%} & \multicolumn{3}{c}{Rej.\ 5\%} & \multicolumn{3}{c}{Rej.\ 10\%} \\ 
\cmidrule[0.2pt](l){4-6}\cmidrule[0.2pt](l){7-9}\cmidrule[0.2pt](l){10-12}
& 25  & 0.5 & 0.012 & 0.012 & 0.013 & 0.057 & 0.056 & 0.056 & 0.116 & 0.110 & 0.114 \\ 
& 50  & 0.4 & 0.011 & 0.011 & 0.010 & 0.056 & 0.053 & 0.052 & 0.111 & 0.111 & 0.105 \\ 
& 100 & 0.6 & 0.010 & 0.010 & 0.011 & 0.055 & 0.052 & 0.055 & 0.110 & 0.104 & 0.110 \\ 
& 200 & 0.3 & 0.011 & 0.010 & 0.009 & 0.055 & 0.053 & 0.049 & 0.108 & 0.104 & 0.102 \\ 
& 400 & 0.7 & 0.011 & 0.010 & 0.009 & 0.048 & 0.050 & 0.050 & 0.094 & 0.101 & 0.098 \\ 
\cmidrule[0.5pt](l){2-12}

\multirow{12}{*}{\STAB{\rotatebox[origin=c]{90}{\underline{DGP BA}}}}
&  &  & \multicolumn{3}{c}{MB} & \multicolumn{3}{c}{MAD} & \multicolumn{3}{c}{RMSE} \\ 
\cmidrule[0.2pt](l){4-6}\cmidrule[0.2pt](l){7-9}\cmidrule[0.2pt](l){10-12}
& 25  & 0.5 & 0.002 & -0.998 & 0.001 & 0.140 & 0.997 & 0.175 & 0.216 & 1.009 & 0.280 \\ 
& 50  & 0.4 & 0.000 & -1.000 & -0.002 & 0.103 & 0.999 & 0.132 & 0.157 & 1.007 & 0.204 \\ 
& 100 & 0.6 & 0.001 & -1.000 & -0.002 & 0.074 & 0.998 & 0.095 & 0.112 & 1.002 & 0.147 \\ 
& 200 & 0.3 & -0.001 & -1.001 & 0.000 & 0.055 & 1.001 & 0.071 & 0.084 & 1.003 & 0.109 \\ 
& 400 & 0.7 & 0.000 & -1.000 & 0.001 & 0.040 & 1.000 & 0.051 & 0.059 & 1.001 & 0.076 \\ 
\cmidrule[0.2pt](l){4-6}\cmidrule[0.2pt](l){7-9}\cmidrule[0.2pt](l){10-12}
&  &  & \multicolumn{3}{c}{Rej.\ 1\%} & \multicolumn{3}{c}{Rej.\ 5\%} & \multicolumn{3}{c}{Rej.\ 10\%} \\ 
\cmidrule[0.2pt](l){4-6}\cmidrule[0.2pt](l){7-9}\cmidrule[0.2pt](l){10-12}
& 25  & 0.5 & 0.012 & 0.990 & 0.013 & 0.057 & 0.997 & 0.056 & 0.116 & 0.999 & 0.114 \\ 
& 50  & 0.4 & 0.011 & 0.999 & 0.010 & 0.056 & 1.000 & 0.052 & 0.111 & 1.000 & 0.105 \\ 
& 100 & 0.6 & 0.010 & 1.000 & 0.011 & 0.055 & 1.000 & 0.055 & 0.110 & 1.000 & 0.110 \\ 
& 200 & 0.3 & 0.011 & 1.000 & 0.009 & 0.055 & 1.000 & 0.049 & 0.108 & 1.000 & 0.102 \\ 
& 400 & 0.7 & 0.011 & 1.000 & 0.009 & 0.048 & 1.000 & 0.050 & 0.094 & 1.000 & 0.098 \\ 
\cmidrule[0.5pt](l){2-12}

\multirow{12}{*}{\STAB{\rotatebox[origin=c]{90}{\underline{DGP SC}}}}
&  &  & \multicolumn{3}{c}{MB} & \multicolumn{3}{c}{MAD} & \multicolumn{3}{c}{RMSE} \\ 
\cmidrule[0.2pt](l){4-6}\cmidrule[0.2pt](l){7-9}\cmidrule[0.2pt](l){10-12}
& 25  & 0.5 & 0.002 & 0.002 & 1.246 & 0.140 & 0.096 & 1.248 & 0.216 & 0.151 & 1.277 \\ 
& 50  & 0.4 & 0.000 & 0.000 & 1.306 & 0.103 & 0.079 & 1.303 & 0.157 & 0.121 & 1.322 \\ 
& 100 & 0.6 & 0.001 & 0.000 & 1.377 & 0.074 & 0.047 & 1.380 & 0.112 & 0.070 & 1.385 \\ 
& 200 & 0.3 & -0.001 & -0.001 & 1.379 & 0.055 & 0.046 & 1.377 & 0.084 & 0.070 & 1.383 \\ 
& 400 & 0.7 & 0.000 & 0.000 & 1.407 & 0.040 & 0.022 & 1.408 & 0.059 & 0.032 & 1.409 \\ 
\cmidrule[0.2pt](l){4-6}\cmidrule[0.2pt](l){7-9}\cmidrule[0.2pt](l){10-12}
&  &  & \multicolumn{3}{c}{Rej.\ 1\%} & \multicolumn{3}{c}{Rej.\ 5\%} & \multicolumn{3}{c}{Rej.\ 10\%} \\ 
\cmidrule[0.2pt](l){4-6}\cmidrule[0.2pt](l){7-9}\cmidrule[0.2pt](l){10-12}
& 25  & 0.5 & 0.012 & 0.012 & 0.919 & 0.057 & 0.056 & 0.965 & 0.116 & 0.110 & 0.979 \\ 
& 50  & 0.4 & 0.011 & 0.011 & 0.992 & 0.056 & 0.053 & 0.998 & 0.111 & 0.111 & 0.999 \\ 
& 100 & 0.6 & 0.010 & 0.010 & 0.999 & 0.055 & 0.052 & 1.000 & 0.110 & 0.104 & 1.000 \\ 
& 200 & 0.3 & 0.011 & 0.010 & 1.000 & 0.055 & 0.053 & 1.000 & 0.108 & 0.104 & 1.000 \\ 
& 400 & 0.7 & 0.011 & 0.010 & 1.000 & 0.048 & 0.050 & 1.000 & 0.094 & 0.101 & 1.000 \\ 
\cmidrule[0.5pt](l){2-12}
\end{tabular}
\label{Tab:Sim_I_e}
\end{table}

\begin{table}[!htbp]
\caption{Simulation -- $ATT(t)=0.0,\ t\geq 1$, non-constant $\lambda_n$}
\centering
\begin{tabular}{rlcccccccccc}
\cmidrule[0.5pt](l){2-12}
 & $n$ & $\lambda_n$ & \multicolumn{1}{c}{DiD} & \multicolumn{1}{c}{SC} & \multicolumn{1}{c}{BA} &
   \multicolumn{1}{c}{DiD} & \multicolumn{1}{c}{SC} & \multicolumn{1}{c}{BA} &
   \multicolumn{1}{c}{DiD} & \multicolumn{1}{c}{SC} & \multicolumn{1}{c}{BA} \\
\cmidrule[0.2pt](l){2-12}

\multirow{12}{*}{ \STAB{\rotatebox[origin=c]{90}{\underline{DGP PT-NA (A)}}} }
&  &  & \multicolumn{3}{c}{MB} & \multicolumn{3}{c}{MAD} & \multicolumn{3}{c}{RMSE} \\
\cmidrule[0.2pt](l){4-6} \cmidrule[0.2pt](l){7-9} \cmidrule[0.2pt](l){10-12}
& 25  & 0.5 & 0.177 & 0.089 & 0.176 & 0.270 & 0.179 & 0.294 & 0.397 & 0.265 & 0.436 \\
& 50  & 0.4 & 0.102 & 0.064 & 0.108 & 0.178 & 0.145 & 0.202 & 0.261 & 0.211 & 0.309 \\
& 100 & 0.6 & 0.075 & 0.027 & 0.063 & 0.150 & 0.080 & 0.145 & 0.225 & 0.118 & 0.217 \\
& 200 & 0.3 & 0.035 & 0.026 & 0.042 & 0.089 & 0.080 & 0.107 & 0.130 & 0.117 & 0.159 \\
& 400 & 0.7 & 0.034 & 0.008 & 0.026 & 0.095 & 0.036 & 0.076 & 0.140 & 0.054 & 0.112 \\
\cmidrule[0.2pt](l){4-6} \cmidrule[0.2pt](l){7-9} \cmidrule[0.2pt](l){10-12}
&  &  & \multicolumn{3}{c}{Rej.\ 1\%} & \multicolumn{3}{c}{Rej.\ 5\%} & \multicolumn{3}{c}{Rej.\ 10\%} \\
\cmidrule[0.2pt](l){4-6} \cmidrule[0.2pt](l){7-9} \cmidrule[0.2pt](l){10-12}
& 25  & 0.5 & 0.030 & 0.029 & 0.027 & 0.096 & 0.086 & 0.092 & 0.162 & 0.142 & 0.155 \\
& 50  & 0.4 & 0.024 & 0.023 & 0.022 & 0.080 & 0.079 & 0.077 & 0.141 & 0.134 & 0.136 \\
& 100 & 0.6 & 0.020 & 0.015 & 0.017 & 0.072 & 0.060 & 0.067 & 0.132 & 0.111 & 0.127 \\
& 200 & 0.3 & 0.014 & 0.013 & 0.013 & 0.064 & 0.059 & 0.061 & 0.120 & 0.110 & 0.117 \\
& 400 & 0.7 & 0.012 & 0.013 & 0.012 & 0.060 & 0.052 & 0.060 & 0.113 & 0.103 & 0.112 \\
\cmidrule[0.5pt](l){2-12}

\multirow{12}{*}{\STAB{\rotatebox[origin=c]{90}{\underline{DGP PT-NA (B)}}}}
&  &  & \multicolumn{3}{c}{MB} & \multicolumn{3}{c}{MAD} & \multicolumn{3}{c}{RMSE} \\
\cmidrule[0.2pt](l){4-6} \cmidrule[0.2pt](l){7-9} \cmidrule[0.2pt](l){10-12}
& 25  & 0.5 & 0.001 & 0.287 & 0.000 & 0.241 & 0.294 & 0.269 & 0.356 & 0.381 & 0.399 \\
& 50  & 0.4 & 0.023 & 0.258 & 0.021 & 0.175 & 0.264 & 0.191 & 0.259 & 0.327 & 0.290 \\
& 100 & 0.6 & -0.020 & 0.198 & -0.023 & 0.124 & 0.198 & 0.138 & 0.187 & 0.229 & 0.209 \\
& 200 & 0.3 & 0.035 & 0.197 & 0.036 & 0.097 & 0.197 & 0.106 & 0.142 & 0.227 & 0.158 \\
& 400 & 0.7 & -0.031 & 0.136 & -0.031 & 0.069 & 0.136 & 0.076 & 0.102 & 0.146 & 0.114 \\
\cmidrule[0.2pt](l){4-6} \cmidrule[0.2pt](l){7-9} \cmidrule[0.2pt](l){10-12}
&  &  & \multicolumn{3}{c}{Rej.\ 1\%} & \multicolumn{3}{c}{Rej.\ 5\%} & \multicolumn{3}{c}{Rej.\ 10\%} \\
\cmidrule[0.2pt](l){4-6} \cmidrule[0.2pt](l){7-9} \cmidrule[0.2pt](l){10-12}
& 25  & 0.5 & 0.016 & 0.111 & 0.016 & 0.065 & 0.246 & 0.061 & 0.118 & 0.347 & 0.119 \\
& 50  & 0.4 & 0.015 & 0.133 & 0.016 & 0.059 & 0.284 & 0.059 & 0.107 & 0.394 & 0.114 \\
& 100 & 0.6 & 0.016 & 0.210 & 0.013 & 0.060 & 0.416 & 0.057 & 0.113 & 0.539 & 0.110 \\
& 200 & 0.3 & 0.012 & 0.207 & 0.012 & 0.060 & 0.414 & 0.058 & 0.117 & 0.534 & 0.113 \\
& 400 & 0.7 & 0.016 & 0.494 & 0.013 & 0.064 & 0.728 & 0.059 & 0.122 & 0.817 & 0.114 \\
\cmidrule[0.5pt](l){2-12}

\multirow{12}{*}{\STAB{\rotatebox[origin=c]{90}{\underline{DGP GARCH(1,1)}}}}
&  &  & \multicolumn{3}{c}{MB} & \multicolumn{3}{c}{MAD} & \multicolumn{3}{c}{RMSE} \\
\cmidrule[0.2pt](l){4-6} \cmidrule[0.2pt](l){7-9} \cmidrule[0.2pt](l){10-12}
& 25  & 0.5 & 0.004 & 0.001 & -0.003 & 0.141 & 0.098 & 0.177 & 0.214 & 0.154 & 0.275 \\
& 50  & 0.4 & 0.002 & 0.001 & 0.001 & 0.103 & 0.079 & 0.131 & 0.156 & 0.120 & 0.203 \\
& 100 & 0.6 & 0.000 & 0.000 & 0.000 & 0.073 & 0.046 & 0.092 & 0.110 & 0.070 & 0.142 \\
& 200 & 0.3 & -0.001 & -0.001 & 0.000 & 0.056 & 0.047 & 0.071 & 0.083 & 0.070 & 0.109 \\
& 400 & 0.7 & 0.000 & 0.000 & 0.000 & 0.039 & 0.022 & 0.051 & 0.059 & 0.032 & 0.077 \\
\cmidrule[0.2pt](l){4-6} \cmidrule[0.2pt](l){7-9} \cmidrule[0.2pt](l){10-12}
&  &  & \multicolumn{3}{c}{Rej.\ 1\%} & \multicolumn{3}{c}{Rej.\ 5\%} & \multicolumn{3}{c}{Rej.\ 10\%} \\
\cmidrule[0.2pt](l){4-6} \cmidrule[0.2pt](l){7-9} \cmidrule[0.2pt](l){10-12}
& 25  & 0.5 & 0.017 & 0.019 & 0.014 & 0.064 & 0.072 & 0.058 & 0.120 & 0.123 & 0.111 \\
& 50  & 0.4 & 0.014 & 0.014 & 0.013 & 0.057 & 0.059 & 0.056 & 0.108 & 0.110 & 0.105 \\
& 100 & 0.6 & 0.011 & 0.010 & 0.008 & 0.054 & 0.053 & 0.049 & 0.102 & 0.105 & 0.098 \\
& 200 & 0.3 & 0.011 & 0.010 & 0.011 & 0.051 & 0.054 & 0.051 & 0.103 & 0.101 & 0.099 \\
& 400 & 0.7 & 0.011 & 0.009 & 0.011 & 0.052 & 0.053 & 0.050 & 0.101 & 0.101 & 0.102 \\
\cmidrule[0.5pt](l){2-12}
\end{tabular}
\label{Tab:Sim_II_e}
\end{table}

\begin{table}[!htbp]
\caption{Simulation -- $ATT(t)=0.0,\ t\geq 1$, non-constant $\lambda_n$}
\centering
\begin{tabular}{rlcccccccccc}
\cmidrule[0.5pt](l){2-12}
 & $n$ & $\lambda_n$ & \multicolumn{1}{c}{DiD} & \multicolumn{1}{c}{SC} & \multicolumn{1}{c}{BA} &
   \multicolumn{1}{c}{DiD} & \multicolumn{1}{c}{SC} & \multicolumn{1}{c}{BA} &
   \multicolumn{1}{c}{DiD} & \multicolumn{1}{c}{SC} & \multicolumn{1}{c}{BA} \\ 
\cmidrule[0.2pt](l){2-12}

\multirow{12}{*}{\STAB{\rotatebox[origin=c]{90}{\underline{DGP MA(1)}}}}
&  &  & \multicolumn{3}{c}{MB} & \multicolumn{3}{c}{MAD} & \multicolumn{3}{c}{RMSE} \\ 
\cmidrule[0.2pt](l){4-6}\cmidrule[0.2pt](l){7-9}\cmidrule[0.2pt](l){10-12}
& 25  & 0.5 &  0.005 &  0.003 & -0.002 & 0.174 & 0.119 & 0.218 & 0.269 & 0.190 & 0.347 \\
& 50  & 0.4 &  0.000 &  0.001 & -0.003 & 0.121 & 0.097 & 0.164 & 0.183 & 0.150 & 0.255 \\
& 100 & 0.6 &  0.001 &  0.000 & -0.002 & 0.105 & 0.059 & 0.118 & 0.158 & 0.088 & 0.182 \\
& 200 & 0.3 & -0.001 & -0.001 &  0.000 & 0.063 & 0.058 & 0.089 & 0.095 & 0.087 & 0.136 \\
& 400 & 0.7 &  0.000 &  0.000 &  0.001 & 0.068 & 0.028 & 0.063 & 0.101 & 0.040 & 0.096 \\
\cmidrule[0.2pt](l){4-6}\cmidrule[0.2pt](l){7-9}\cmidrule[0.2pt](l){10-12}
&  &  & \multicolumn{3}{c}{Rej.\ 1\%} & \multicolumn{3}{c}{Rej.\ 5\%} & \multicolumn{3}{c}{Rej.\ 10\%} \\ 
\cmidrule[0.2pt](l){4-6}\cmidrule[0.2pt](l){7-9}\cmidrule[0.2pt](l){10-12}
& 25  & 0.5 & 0.024 & 0.022 & 0.017 & 0.078 & 0.074 & 0.068 & 0.142 & 0.131 & 0.130 \\ 
& 50  & 0.4 & 0.017 & 0.014 & 0.016 & 0.070 & 0.062 & 0.066 & 0.130 & 0.123 & 0.127 \\ 
& 100 & 0.6 & 0.016 & 0.012 & 0.014 & 0.066 & 0.060 & 0.063 & 0.123 & 0.115 & 0.121 \\ 
& 200 & 0.3 & 0.015 & 0.012 & 0.013 & 0.059 & 0.060 & 0.058 & 0.114 & 0.113 & 0.112 \\ 
& 400 & 0.7 & 0.013 & 0.012 & 0.014 & 0.059 & 0.055 & 0.059 & 0.107 & 0.104 & 0.111 \\ 
\cmidrule[0.5pt](l){2-12}

\multirow{12}{*}{\STAB{\rotatebox[origin=c]{90}{\underline{DGP AR(1)}}}}
&  &  & \multicolumn{3}{c}{MB} & \multicolumn{3}{c}{MAD} & \multicolumn{3}{c}{RMSE} \\ 
\cmidrule[0.2pt](l){4-6}\cmidrule[0.2pt](l){7-9}\cmidrule[0.2pt](l){10-12}
& 25  & 0.5 & 0.003 & -0.021 & 0.003 & 0.162 & 0.112 & 0.208 & 0.252 & 0.183 & 0.324 \\
& 50  & 0.4 & 0.000 & -0.014 & -0.001 & 0.115 & 0.086 & 0.148 & 0.172 & 0.137 & 0.222 \\
& 100 & 0.6 & 0.001 & -0.005 & -0.001 & 0.077 & 0.049 & 0.101 & 0.116 & 0.074 & 0.151 \\
& 200 & 0.3 & -0.001 & -0.006 & 0.001 & 0.058 & 0.048 & 0.074 & 0.086 & 0.073 & 0.111 \\
& 400 & 0.7 & 0.000 & -0.001 & 0.001 & 0.040 & 0.022 & 0.051 & 0.059 & 0.033 & 0.077 \\
\cmidrule[0.2pt](l){4-6}\cmidrule[0.2pt](l){7-9}\cmidrule[0.2pt](l){10-12}
&  &  & \multicolumn{3}{c}{Rej.\ 1\%} & \multicolumn{3}{c}{Rej.\ 5\%} & \multicolumn{3}{c}{Rej.\ 10\%} \\ 
\cmidrule[0.2pt](l){4-6}\cmidrule[0.2pt](l){7-9}\cmidrule[0.2pt](l){10-12}
& 25  & 0.5 & 0.024 & 0.033 & 0.021 & 0.083 & 0.088 & 0.082 & 0.141 & 0.148 & 0.143 \\ 
& 50  & 0.4 & 0.018 & 0.024 & 0.019 & 0.071 & 0.079 & 0.073 & 0.131 & 0.135 & 0.132 \\ 
& 100 & 0.6 & 0.015 & 0.015 & 0.016 & 0.062 & 0.065 & 0.066 & 0.120 & 0.118 & 0.124 \\ 
& 200 & 0.3 & 0.014 & 0.015 & 0.016 & 0.056 & 0.063 & 0.064 & 0.107 & 0.115 & 0.117 \\ 
& 400 & 0.7 & 0.011 & 0.011 & 0.016 & 0.053 & 0.056 & 0.061 & 0.102 & 0.103 & 0.112 \\ 
\cmidrule[0.5pt](l){2-12}

\multirow{12}{*}{\STAB{\rotatebox[origin=c]{90}{\underline{DGP U--R}}}}
&  &  & \multicolumn{3}{c}{MB} & \multicolumn{3}{c}{MAD} & \multicolumn{3}{c}{RMSE} \\ 
\cmidrule[0.2pt](l){4-6}\cmidrule[0.2pt](l){7-9}\cmidrule[0.2pt](l){10-12}
& 25  & 0.5 & 0.003 & 0.002 & -0.002 & 0.143 & 0.097 & 0.178 & 0.220 & 0.156 & 0.286 \\
& 50  & 0.4 & 0.000 & 0.001 & -0.003 & 0.106 & 0.079 & 0.132 & 0.160 & 0.122 & 0.207 \\
& 100 & 0.6 & 0.001 & 0.000 & -0.002 & 0.074 & 0.048 & 0.094 & 0.112 & 0.071 & 0.147 \\
& 200 & 0.3 & -0.001 & -0.001 & 0.000 & 0.056 & 0.047 & 0.072 & 0.084 & 0.070 & 0.109 \\
& 400 & 0.7 & 0.000 & 0.000 & 0.001 & 0.039 & 0.022 & 0.050 & 0.059 & 0.032 & 0.077 \\
\cmidrule[0.2pt](l){4-6}\cmidrule[0.2pt](l){7-9}\cmidrule[0.2pt](l){10-12}
&  &  & \multicolumn{3}{c}{Rej.\ 1\%} & \multicolumn{3}{c}{Rej.\ 5\%} & \multicolumn{3}{c}{Rej.\ 10\%} \\ 
\cmidrule[0.2pt](l){4-6}\cmidrule[0.2pt](l){7-9}\cmidrule[0.2pt](l){10-12}
& 25  & 0.5 & 0.013 & 0.015 & 0.010 & 0.056 & 0.056 & 0.047 & 0.110 & 0.105 & 0.101 \\ 
& 50  & 0.4 & 0.011 & 0.010 & 0.011 & 0.055 & 0.051 & 0.052 & 0.111 & 0.103 & 0.102 \\ 
& 100 & 0.6 & 0.012 & 0.009 & 0.007 & 0.052 & 0.052 & 0.052 & 0.103 & 0.103 & 0.103 \\ 
& 200 & 0.3 & 0.012 & 0.009 & 0.009 & 0.052 & 0.051 & 0.048 & 0.100 & 0.099 & 0.102 \\ 
& 400 & 0.7 & 0.010 & 0.010 & 0.011 & 0.050 & 0.051 & 0.051 & 0.098 & 0.097 & 0.097 \\ 
\cmidrule[0.5pt](l){2-12}
\end{tabular}
\label{Tab:Sim_III_e}
\end{table}

\begin{table}[!htbp]
\caption{Simulation -- $ATT(t)=0.0,\ t\geq 1$, non-constant $\lambda_n$}
\centering
\begin{tabular}{rlcccccccccc}
\cmidrule[0.5pt](l){2-12}
 & $n$ & $\lambda_n$ & \multicolumn{1}{c}{DiD} & \multicolumn{1}{c}{SC} & \multicolumn{1}{c}{BA} &
   \multicolumn{1}{c}{DiD} & \multicolumn{1}{c}{SC} & \multicolumn{1}{c}{BA} &
   \multicolumn{1}{c}{DiD} & \multicolumn{1}{c}{SC} & \multicolumn{1}{c}{BA} \\ 
\cmidrule[0.2pt](l){2-12}

\multirow{12}{*}{\STAB{\rotatebox[origin=c]{90}{\underline{DGP Q--T}}}}
&  &  & \multicolumn{3}{c}{MB} & \multicolumn{3}{c}{MAD} & \multicolumn{3}{c}{RMSE} \\ 
\cmidrule[0.2pt](l){4-6}\cmidrule[0.2pt](l){7-9}\cmidrule[0.2pt](l){10-12}
& 25  & 0.5 & 0.003 & 0.002 & 25.998 & 0.141 & 0.096 & 25.998 & 0.217 & 0.153 & 25.999 \\
& 50  & 0.4 & 0.000 & 0.000 & 49.645 & 0.106 & 0.078 & 49.643 & 0.159 & 0.120 & 49.645 \\
& 100 & 0.6 & 0.000 & 0.000 & 106.371 & 0.073 & 0.047 & 106.373 & 0.111 & 0.070 & 106.371 \\
& 200 & 0.3 & -0.001 & -0.001 & 158.174 & 0.056 & 0.047 & 158.169 & 0.084 & 0.070 & 158.174 \\
& 400 & 0.7 & 0.000 & 0.000 & 571.987 & 0.039 & 0.022 & 571.990 & 0.059 & 0.032 & 571.987 \\
\cmidrule[0.2pt](l){4-6}\cmidrule[0.2pt](l){7-9}\cmidrule[0.2pt](l){10-12}
&  &  & \multicolumn{3}{c}{Rej.\ 1\%} & \multicolumn{3}{c}{Rej.\ 5\%} & \multicolumn{3}{c}{Rej.\ 10\%} \\ 
\cmidrule[0.2pt](l){4-6}\cmidrule[0.2pt](l){7-9}\cmidrule[0.2pt](l){10-12}
& 25  & 0.5 & 0.013 & 0.015 & 1.000 & 0.060 & 0.058 & 1.000 & 0.113 & 0.105 & 1.000 \\
& 50  & 0.4 & 0.010 & 0.009 & 1.000 & 0.053 & 0.049 & 1.000 & 0.109 & 0.099 & 1.000 \\
& 100 & 0.6 & 0.011 & 0.010 & 1.000 & 0.054 & 0.051 & 1.000 & 0.102 & 0.104 & 1.000 \\
& 200 & 0.3 & 0.012 & 0.009 & 1.000 & 0.052 & 0.052 & 1.000 & 0.101 & 0.100 & 1.000 \\
& 400 & 0.7 & 0.009 & 0.010 & 1.000 & 0.048 & 0.049 & 1.000 & 0.098 & 0.098 & 1.000 \\
\cmidrule[0.5pt](l){2-12}

\multirow{12}{*}{\STAB{\rotatebox[origin=c]{90}{\underline{DGP T--T}}}}
&  &  & \multicolumn{3}{c}{MB} & \multicolumn{3}{c}{MAD} & \multicolumn{3}{c}{RMSE} \\ 
\cmidrule[0.2pt](l){4-6}\cmidrule[0.2pt](l){7-9}\cmidrule[0.2pt](l){10-12}
& 25  & 0.5 & 0.003 & 25.005 & 0.008 & 0.270 & 25.008 & 0.344 & 0.426 & 25.007 & 0.571 \\
& 50  & 0.4 & -0.002 & 59.995 & 0.000 & 0.195 & 59.998 & 0.242 & 0.296 & 59.995 & 0.385 \\
& 100 & 0.6 & 0.004 & 80.002 & -0.005 & 0.139 & 80.002 & 0.176 & 0.212 & 80.002 & 0.275 \\
& 200 & 0.3 & -0.004 & 280.000 & -0.001 & 0.091 & 280.001 & 0.119 & 0.138 & 280.000 & 0.179 \\
& 400 & 0.7 & 0.000 & 240.000 & 0.000 & 0.065 & 240.000 & 0.085 & 0.098 & 240.000 & 0.127 \\
\cmidrule[0.2pt](l){4-6}\cmidrule[0.2pt](l){7-9}\cmidrule[0.2pt](l){10-12}
&  &  & \multicolumn{3}{c}{Rej.\ 1\%} & \multicolumn{3}{c}{Rej.\ 5\%} & \multicolumn{3}{c}{Rej.\ 10\%} \\ 
\cmidrule[0.2pt](l){4-6}\cmidrule[0.2pt](l){7-9}\cmidrule[0.2pt](l){10-12}
& 25  & 0.5 & 0.015 & 1.000 & 0.016 & 0.066 & 1.000 & 0.068 & 0.124 & 1.000 & 0.132 \\ 
& 50  & 0.4 & 0.012 & 1.000 & 0.014 & 0.060 & 1.000 & 0.064 & 0.117 & 1.000 & 0.121 \\ 
& 100 & 0.6 & 0.012 & 1.000 & 0.010 & 0.057 & 1.000 & 0.056 & 0.114 & 1.000 & 0.108 \\ 
& 200 & 0.3 & 0.012 & 1.000 & 0.010 & 0.053 & 1.000 & 0.055 & 0.107 & 1.000 & 0.104 \\ 
& 400 & 0.7 & 0.011 & 1.000 & 0.010 & 0.053 & 1.000 & 0.050 & 0.108 & 1.000 & 0.099 \\ 
\cmidrule[0.5pt](l){2-12}
\end{tabular}
\label{Tab:Sim_IV_e}
\end{table}

\newpage
\begingroup
\setstretch{1.25}
\setlength\bibitemsep{1.0 pt}
\printbibliography
\endgroup
\end{refsection}
 
\end{document}